\newcommand\bibstyle@comma{\bibpunct(),a,,}
\newcommand\bibstyle@semicolon{\bibpunct();a,,}
\pretocmd\cite{\citestyle{comma}}\relax\relax
\pretocmd\citep{\citestyle{semicolon}}\relax\relax
\numberwithin{equation}{section}
\definecolor{lightblue}{HTML}{044E9E}
\definecolor{lightpurple}{HTML}{9C8FC3}
\def\RR{\mathbb R}
\def\ZZ{\mathbb Z}
\newcommand{\Var}{\operatorname{Var}} 
\newcommand{\rank}{\operatorname{rk}} 
\newcommand{\vecop}{\operatorname{vec}} 
\newcommand{\vechop}{\operatorname{vech}}
\newcommand{\diag}{\operatorname{diag}} 
\newcommand{\tr}{\operatorname{tr}} 
\newcommand{\E}{\operatorname{E}} 
\newcommand{\col}{\operatorname{col}}
\newcommand{\distr}{\operatorname{d}} 
\newcommand{\prob}{\operatorname{p}} 
\newcommand{\Dim}{d} 
\newcommand{\npf}{\operatorname{r}_p}
\NewDocumentCommand{\evaluat}{sO{\big}mm}{%
  \IfBooleanTF{#1}
   {\mleft. #3 \mright|_{#4}}
   {#3#2|_{#4}}%
}
\newcommand{\md}[1]{{\color{black} #1}}
\newtheorem{proposition}{Proposition}[section]
\newtheorem{lemma}{Lemma}[section]
\newtheorem{corollary}{Corollary}[section]
\theoremstyle{definition}
\newtheorem{remark}{Remark}[section]
\newtheorem{example}{Example}[section]
\newtheorem{Massump}{Assumption}
\newtheorem{CRassump}{Assumption}
\newtheorem{Wassump}{Assumption}
\newtheorem{Passump}{Assumption}
\xpatchcmd{\proof}{\@addpunct{.}}{\@addpunct{:}}{}{}
\DeclareFontFamily{U}{mathx}{\hyphenchar\font45}
\DeclareFontShape{U}{mathx}{m}{n}{<-> mathx10}{}
\DeclareSymbolFont{mathx}{U}{mathx}{m}{n}
\DeclareMathAccent{\widebar}{0}{mathx}{"73}
\newcommand{\mockalph}[1]{}
\def\spacingset#1{\renewcommand{\baselinestretch}%
{#1}\small\normalsize} \spacingset{1}
\newtheorem*{assumptionBIC*}{\assumptionnumber}
\providecommand{\assumptionnumber}{}
\begin{document}

\title{Testing common structure in high-dimensional factor models: change-point and two-sample procedures%
\thanks{AMS subject classification. Primary: 62H25, 62M10. Secondary: 62F03.}%
\
\thanks{Keywords: High-dimensional time series, factor models, multi-task learning, change-point analysis.}%
\
\thanks{The authors would like to thank the participants of the NBER-NSF Time Series Conference 2023 in Montreal for their helpful comments and suggestions as well as the two anonymous reviewers and the associate editor. Marie D\"uker's research was supported by NSF grant DMS-1934985, and Vladas Pipiras acknowledges NSF grants DMS-2113662 and DMS-2134107.}
}
\author{
Marie-Christine D\"uker \\ Technical University of Munich                             \and
Vladas Pipiras               \\ University of North Carolina}
\date{\today}

\maketitle

\bigskip

\begin{abstract}
\noindent
This work proposes a novel procedure to test for common structures across two high-dimensional factor models.
The introduced test allows to uncover whether two factor models are driven by the same loading matrix up to some linear transformation. 
The test can be used to discover inter-individual relationships between two datasets. 
In addition, it can be applied to test for structural changes over time in the loading matrix of an individual factor model. The test aims to reduce the set of possible alternatives in a classical change-point setting.
The theoretical results establish the asymptotic behavior of the introduced test statistic. The theory is supported by a simulation study showing promising results in empirical test size and power. 
Two real data applications are considered: the first investigates changes in the loadings of the celebrated US macroeconomic dataset of Stock and Watson, and the second examines similarities of the loadings of macroeconomic indicators for the US and South Korea.
\end{abstract}

\section{Introduction}
High-dimensional factor models are used to describe data which are driven by a relatively small number of latent time series. It is assumed that the temporal and cross-sectional dependence in the data can be attributed to these so-called common factors. The co-movements of the observed component time series are assumed to be driven by these factors and their dynamic structures.
This kind of behavior is reasonable to assume in many applications \citep{sargent1977business} and provides an effective tool for dimension reduction in high-dimensional time series. Originally introduced in finance \citep{Markowitz,cox1976survey,ross1977capital,feng2020taming}, factor models have found their way into a wide range of disciplines including economics \citep{stock2002forecasting,stock2009forecasting,han2018estimation}, psychometrics \citep{timmerman2003four,song2014analyzing} and genomics \citep{carvalho2008high}.
\par
This work focuses on discovering common structures in two high-dimensional factor models from their respective time series data. In many of the disciplines mentioned above, one is not only interested in studying intra-individual differences for one factor model but also inter-individual similarities across factor models. Examples include resemblances between macroeconomic indices of different countries or behavioral/neurological measurements for different individuals.
While it is unlikely for two high-dimensional series to share the exact same loading matrix, one can ask whether they share a common structure in the form of linear combinations of loading matrices. We propose a novel procedure allowing one to test whether the loadings in one series can be expressed as a linear combination of the other.
\par
The questions of interest in high-dimensional factor model analysis are manyfold and have led to several lines of research in the statistics/econometrics literature. 
In particular, the estimation of the loading matrix and the latent factors is well understood and their estimators' theoretical properties have been studied in several works; see \cite{bai2003inferential, bai2008large, lam2011estimation, doz2011two, doz2012quasi}. Other questions of interest include estimating the number of factors \citep{bai2002determining,onatski2009testing,lam2012factor,li2017determining}, high-dimensional covariance estimation and sparsity \citep{fan2008high, fan2023bridging}, using a factor model approach to describe multivariate count data \citep{wedel2003factor,jung2011dynamic,wang2018modelling} and extensions to networks \citep{brauning2016dynamic,brauning2020dynamic}.
\par
The questions considered in this work are closely related to change-point analysis in high-dimensional factor models. Recent literature has proposed ways to test for changes in the loadings over time. Changes in the loading matrix of a factor model are observationally equivalent to changes in the second-order structure of their factors. 
This way, a high-dimensional change-point problem can be reduced to a low-dimensional one.
Han and Inoue \citep{han2015tests,han2015tests_WP} proposed a Wald-type test, and \cite{bai2022likelihood} proposed a likelihood ratio test.
Similar approaches have been pursued in \cite{chen2014detecting,ma2018estimation,duan2023quasi}. More recently, \cite{barigozzi2018simultaneous} studied the problem of multiple change-points and proposed a wavelet-based approach.
\cite{baltagi2021estimating} generalized the method in \cite{han2015tests} to test for multiple breaks in the loading matrix. (We use the terms ``change" and ``break" throughout the paper interchangeably.) \cite{su2017time} established a method to estimate the latent factors and time-varying factor loadings simultaneously.
\par
Much less attention has been given to multi-subject analysis. In many applications, the objective is to estimate several unknowns corresponding to related models; this problem is often referred to as multi-task learning in the machine learning literature. When multivariate repeated measurements are collected from multiple subjects, there is substantial interest in learning the same model parameters across all subjects. For factor models, this has led to different methods of estimating loading matrices across subjects; see \cite{timmerman2003four,de2012clusterwise,song2014analyzing}.
Multi-subject learning is slightly better understood for multiple regression problems 
\citep{jalali2013dirty,ollier2014joint,gross2016data} and multiple vector autoregression models \citep{fisher2022penalized}, where one assumes the presence of common effects in the regression vectors across individuals.
\par
Even though there is substantial interest in multi-subject models in the applied sciences, the statistical literature on multi-subject factor models is scarce. We put forward the framework of discovering interconnections for two-subject factor models. A modification of our test can also address situations when the two factor models are driven by different numbers of factors.
Our method can also be applied in change-point settings, assuming that structural changes occur over time. The existing literature tests no change in the loadings against the alternative hypothesis that a non-negligible portion of the cross sections have a break in their loadings. Our test allows to discover what kind of change occurs and effectively reduces the number of alternatives.
\par
The rest of this paper is organized as follows. Section \ref{se:HTP} formally introduces the hypothesis testing problem for two-subject factor models and for changes in the loading matrix over time. Section \ref{sec:test_statistic} motivates and introduces the test statistic including its theoretical properties. Section \ref{sec:alternative} studies the behavior under the alternative.
Section \ref{sec:simulations_study} provides simulation results giving some insights into the numerical performance of our test. A first  application to real data can be found in Section \ref{sec:data_application} and we conclude with Section \ref{sec:conclusion}. 
The supplementary document \cite{Supplement} contains the appendices for this article. Appendix \ref{sec:data_application_2} provides a second data application. Some variations of our approach and a discussion are presented in Appendix \ref{se:extensionsapplications}.
Appendix \ref{app:alternative_derivations} provides a detailed discussion on the test's behavior under the alternative.
Appendices \ref{se:appB}--\ref{se:matrixnorminequalities} contain the proofs.
\par
\textit{Notation:} For the reader's convenience, notation used throughout the paper is collected here. Let $A \in \RR^{\Dim \times \Dim}$. For symmetric $A$, write $\lambda_{1}(A)\geq\ldots\geq\lambda_{\Dim}(A)$ for its eigenvalues. The maximum and minimum eigenvalues of  $A$ are then denoted by $\lambda_{\max}(A)$ and $\lambda_{\min}(A)$, respectively. A range of different norms is used, including the maximum, the spectral and the Frobenius norm, defined respectively as $\| A \|_{\max}=\max_{1 \leq i,j \leq \Dim} |A_{ij}|$, $\| A \|=\sqrt{\lambda_{\max}(A'A)}$ and $\| A \|_{F}^2 = \sum_{i,j=1}^d |A_{ij}|^2$ for a matrix $A =(A_{ij})_{i,j=1,\dots,\Dim} \in \RR^{\Dim \times \Dim}$.  
For the vectorized version of a matrix $A$, we write $\vecop(A)$. The vec operator transforms a
matrix into a vector by stacking its columns one underneath the other. Similarly, $\vechop(A)$ denotes the vector that is obtained from $\vecop(A)$ by eliminating all superdiagonal elements of $A$.

\section{Hypothesis testing problem} \label{se:HTP}
The hypothesis testing problem studied in this paper is motivated by two different subject areas.
On the one hand, we study multi-task learning which is concerned with discovering structures across multiple models; see Section \ref{sec:multi_factor}. On the other hand, we are interested in structural changes in the loadings of factor models over time; see Section \ref{se:applicationCPs}. Finally, we formulate the hypothesis testing problem in Section \ref{sec:unify}.

\subsection{Two-subject factor model} \label{sec:multi_factor}
Multi-subject learning studies similarities and differences in model structure across individuals (subjects). The concept is formalized here for two subjects. 

Let $X_{t}^{1}= (X_{1,t}^{1}, \dots, X_{d,t}^{1})'$, $t \in \ZZ$, and $X_{t}^{2}= (X_{1,t}^{2}, \dots, X_{d,t}^{2})'$, $t \in \ZZ$, denote two $d$-dimensional time series. Here, the prime denotes transpose. Both series are assumed to follow a factor model representation such that
\begin{equation} \label{eq:two-subject_factor_model}
\begin{aligned}
X_{t}^{1} = \Lambda_{1} F_{t}^{1} + \varepsilon_{t}^{1}, \hspace{0.2cm} t =1, \dots, T_{1}, \\
X_{t}^{2} = \Lambda_{2} F_{t}^{2} + \varepsilon_{t}^{2}, \hspace{0.2cm} t =1, \dots, T_{2}, 
\end{aligned}
\end{equation}
where $\Lambda_{1},\Lambda_{2}$ are $d \times r$ loading matrices with $r<d$, $F^{k}_t = (F^{k}_{1,t}, \dots, F^{k}_{r,t})'$, $t \in \ZZ$, $k=1,2$, are $r$-vector time series of latent factors and $\varepsilon^{k}_t = (\varepsilon^{k}_{1,t}, \dots, \varepsilon^{k}_{d,t})'$, $t \in \ZZ$, $k=1,2$, are idiosyncratic errors which are assumed to be independent of the factors $F^{k}_{t}$. We will refer to \eqref{eq:two-subject_factor_model} as a two-subject factor model.
A typical assumption on the factors is to suppose that they follow a stable VAR$(p)$ model 
\begin{equation*}\label{e:dfm-basics-var} 
	F_t  = \Psi_1 F_{t-1} + \ldots + \Psi_p F_{t-p} + e_t,\quad t\in\ZZ,
\end{equation*}
where, for notational simplicity, we suppress dependence on $k$. 

There has been a lot of research on how to estimate the loading matrices $\Lambda_1$ and $\Lambda_2$ jointly by imposing some common structure. A typical objective function is
\begin{equation} \label{eq:psy_problem}
\sum_{k=1}^2 \sum_{t=1}^{T_k} \| X_{t}^{k} - \Lambda F_{t}^{k} \|^2_{F} 
\hspace{0.2cm}
\text{ subject to }
\hspace{0.2cm}
\frac{1}{T_k} \sum_{t=1}^{T_k} F^k_t F^{k'}_{t} = D_{k} \Sigma D_{k}
\hspace{0.2cm}
\text{ for }
\hspace{0.2cm}
k =1,2,
\end{equation}
and diagonal matrices $D_k$, $k=1,2$, with positive diagonal entries, which may be different for each subject and some nonsingular matrix $\Sigma$ which is assumed to be shared across subjects; see \cite{timmerman2003four,de2012clusterwise}.
Note that the constraint in \eqref{eq:psy_problem} is equivalent to having 
\begin{equation} \label{eq:psy_problem2}
\Lambda_k = \Lambda D_{k} 
\hspace{0.2cm}
\text{ and } 
\hspace{0.2cm}
\frac{1}{T_k} \sum_{t=1}^{T_k} F^k_t F^{k'}_{t} = \Sigma
\hspace{0.2cm}
\text{ for }
\hspace{0.2cm}
k =1,2.
\end{equation}
While the problem \eqref{eq:psy_problem} has drawn some interest in terms of optimization, there is no literature verifying that the constraint of having some common structure is a reasonable assumption. Our goal here is to develop a test for whether the data admit this kind of structure in the first place.
The hypothesis we are interested in is whether the two loading matrices $\Lambda_{1}, \Lambda_{2}$ in \eqref{eq:two-subject_factor_model} share common structure similar to \eqref{eq:psy_problem2}. More specifically, can one of the matrices be represented as a linear combination of the other? 

\subsection{Structural changes over time} \label{se:applicationCPs}
The second motivation for our testing problem comes from the literature on testing for structural changes in loading matrices of a single factor model over time. Since \cite{han2015tests} proposed a classical change-point test to test the null hypothesis of no change against the alternative of structural changes in the factor loadings by applying a CUSUM type statistic, the related literature has grown significantly.

Consider the following factor model that allows for a structural break in the factor loadings:
\begin{equation} \label{eq:cp_model}
X_{t} = 
\begin{cases}
\Lambda_{1} F_{t} + \varepsilon_{t} \hspace{0.2cm} 1 \leq t \leq \lfloor \pi T \rfloor, \\
\Lambda_{2} F_{t} + \varepsilon_{t} \hspace{0.2cm} \lfloor \pi T \rfloor + 1 \leq t \leq T,
\end{cases}
\end{equation}
where, for $\pi \in (0,1)$, $\lfloor \pi T \rfloor$ is a possibly unknown break date. The change-point literature typically distinguishes between three alternatives, expressed in terms of column spaces. Recall that the column space of a matrix $A \in \RR^{d\times r}$ is the set of all linear combinations of the columns in $A$.  More formally, for $A = (a_1, \dots, a_r)$, $\col(A) = \left\{ \sum_{i=1}^{r} \mu_i a_i ~|~ \mu_i \in \RR \right\}$. The three alternatives are as follows:
\begin{enumerate}[label=\textit{Type \arabic*:}, ref=\arabic*, align=left]
\item The columns of $\Lambda_{1}$ and $\Lambda_{2}$ are linearly independent, that is, $\col(\Lambda_1) \cap \col(\Lambda_2) = \{0\}$. Note that this requires $2r \leq d$.
\label{item:Type1}
\item Loadings undergo an invertible linear transformation, i.e. the columns of $\Lambda_{1}$ and $\Lambda_{2}$ are linearly dependent or, equivalently, $\col(\Lambda_1) = \col(\Lambda_2)$.
\label{item:Type2}
\item Some (but not all) columns of $\Lambda_{1}$ and $\Lambda_{2}$ are linearly independent, that is,  $\col(\Lambda_1) \cap \col(\Lambda_2) \neq \{0\}$ and  $\col(\Lambda_1) \neq \col(\Lambda_2)$.
\label{item:Type3}
\end{enumerate}
An invertible linear transformation as described by Type \ref{item:Type2} can be written as $\Lambda_2 = \Lambda_1 \Phi$ for some nonsingular matrix $\Phi$ (this assumes that $\Lambda_1$ and $\Lambda_2$ have full column rank). Suppose further that $\E F_t F'_t = \Sigma_{F} $ for all $t=1,\dots,T$.
Then, a linear transformation by the nonsingular matrix $\Phi$ is observationally equivalent to no change in the loadings and a change in the second order structure of the factors:
\begin{equation} \label{eq:type3}
\E F_t F'_t = 
\begin{cases}
\Sigma_{F} 			&\hspace{0.2cm} 1 \leq t \leq \lfloor \pi T \rfloor, \\
\Phi \Sigma_{F} \Phi'  	&\hspace{0.2cm} \lfloor \pi T \rfloor + 1 \leq t \leq T.
\end{cases}
\end{equation}

Note that the different types of breaks are not fully characterized by Types \ref{item:Type1}--\ref{item:Type3}. For instance, given $\Sigma_F = I_r$ and $\Phi$ being an orthogonal matrix, we have  $\Sigma_{F} = \Phi \Sigma_{F} \Phi' $. Therefore, this is not considered a change. Whenever a Type \ref{item:Type2} change is meant, we need to assume $\Sigma_{F} \neq \Phi \Sigma_{F} \Phi' $. This is formalized in Assumption 10 in \cite{han2015tests}. 

While the change-point literature tends to formalize these three alternatives, there is currently no way of detecting them separately. We propose a test for changes up to an invertible linear transformation in the loadings.
To be more precise, suppose a regular change-point test as that of \cite{han2015tests} rejects the hypothesis of no change and one estimates a break point $\pi^{*}$. Then, the occurring change can follow any of the three types above.  
We aim to test for the hypothesis of a change of Type \ref{item:Type2} against the alternative of any other kind of changes. 
In other words, can one test whether the change in the loadings follows a Type \ref{item:Type2} break?

\subsection{Hypothesis testing problem} \label{sec:unify}
The formulation of a Type \ref{item:Type2} break in \eqref{eq:type3} resembles the constraint of interest in the multi-task learning literature as formalized in \eqref{eq:psy_problem}. While multi-task learning aims to find structural similarities, the change-point problem seeks for structural differences. Both problems reduce to the same hypothesis. 

For the remainder of the paper, we focus on the two-subject model \eqref{eq:two-subject_factor_model} but have the change-point problem \eqref{eq:cp_model} in mind as another application. We consider
\begin{equation} \label{eq:model_unified}
\begin{aligned}
X_{t}^{1} = \Lambda_{1} F^{1}_{t} + \varepsilon_{t}^{1}, \hspace{0.2cm} t =1, \dots, T_{1}, \\
X_{t}^{2} = \Lambda_{2} F^{2}_{t} + \varepsilon_{t}^{2}, \hspace{0.2cm} t =1, \dots, T_{2}.
\end{aligned}
\end{equation}
Then, the null hypothesis can be formalized as
\begin{equation} \label{eq:H01}
H_{0} : \Lambda_{k}  = \Lambda \Phi_{k} \text{ for } k =1, 2.
\end{equation}
Here, the matrix $\Lambda \in \RR^{d \times r}$ is a matrix of full column rank and the matrices $\Phi_{k} \in \RR^{r\times r}$, $k=1,2$, are nonsingular. Set $\Lambda'_{k} = (\lambda_{k,1}, \dots, \lambda_{k,d})$ with $\lambda_{k,i} \in \RR^{r}$ for $k=1,2$.
We want to test against the alternative hypothesis that there at least one cross section that has linearly independent loadings, i.e.,
\begin{equation*} \label{eq:H1}
H_{1} : \lambda_{1,i}'  \neq \lambda_{2,i}' \Phi \hspace{0.2cm} \text{ for some } i =1, \dots, d,
\end{equation*}
and all nonsingular matrices $\Phi \in \RR^{r\times r}$. 

Before moving on to proposing a test for the stated hypothesis problem, we pause here to get some intuition for what \eqref{eq:H01} means for data which might follow this model. Consider the following simple example.
\begin{example} \label{example:1}
Let $\Phi_{1} = I_{r}$ and $\Phi_{2} = \diag(\phi_{1} , \dots, \phi_{r} )$ such that
\begin{equation} \label{eq:simple_example}
\Lambda_{1}  = \Lambda 
\hspace{0.2cm}
\text{ and }
\hspace{0.2cm}
\Lambda_{2}  = \Lambda \diag(\phi_{1} , \dots, \phi_{r} ).
\end{equation}
In addition, we assume that $\E F^k_{t} F^{k'}_{t} = I_{r}$, $k=1,2$.
Recall that \eqref{eq:simple_example} is observationally equivalent to differences in the second-order structure of the factors, i.e. 
\begin{equation*}
\E F^1_{t} F^{1'}_{t} = I_{r}
\hspace{0.2cm}
\text{ and }
\hspace{0.2cm}
\E F^2_{t} F^{2'}_{t} = \diag(\phi^2_{1} , \dots, \phi^2_{r} ).
\end{equation*}
This setting means that only the volatility of the individual factors changes over time. We suspect that this type of change explains why our test does not reject the null in our data application; see Section \ref{sec:data_application} for more details.
\end{example}

\section{Testing procedure} \label{sec:test_statistic}
In this section, we detail our method to test the null hypothesis \eqref{eq:H01}. In particular, we give a motivation for our test statistic before formulating it.

We first recall estimation of loadings and factors in individual factor models based on principal component analysis (PCA).
In order to estimate the loadings and factors in the two-subject factor model \eqref{eq:model_unified}, we concatenate the series across time and apply PCA following \cite{bai2003inferential}. For $k=1,2$, set
\begin{equation} \label{eq:concatenated_each_series}
\begin{pmatrix}
X_{1}^{k'} \\
\vdots \\
X_{T_k}^{k'} 
\end{pmatrix}
=
\begin{pmatrix}
F_{1}^{k'}  \\
\vdots \\
F_{T_k}^{k'} 
\end{pmatrix}
\Lambda'_{k}
+
\begin{pmatrix}
\varepsilon_{1}^{k'}  \\
\vdots \\
\varepsilon_{T_k}^{k'} 
\end{pmatrix}
\end{equation}
and for short $X^{k'} = F^{k} \Lambda'_{k} + \varepsilon^{k}$. Write $\widehat{\Sigma}^{k}_{T_{k}} = X^{k'} X^{k}$ and let $\widehat{Q}^{k}_{r}$ be the $r$ eigenvectors corresponding to the $r$ largest eigenvalues of $\widehat{\Sigma}^{k}_{T_{k}}$. Then, the PCA estimators are
\begin{equation} \label{eq:PCAestimates_individual_series}
\widehat{\Lambda}'_{k} = \frac{1}{T_k} \widehat{F}^{k'} X^{k'}, \hspace{0.2cm}
\widehat{F}^{k} = \sqrt{T_k} \widehat{Q}_{r}^{k}.
\end{equation}
These estimators are consistent under appropriate assumptions outlined in Section \ref{se:assumptions} below; see \cite{bai2003inferential}.

\subsection{Motivation} \label{se:motivation}
The null \eqref{eq:H01} states that the columns of $\Lambda_{k}$ lie in the subspace generated by the columns of $\Lambda$. Therefore, the null hypothesis can be reformulated as
\begin{equation*} \label{eq:H02}
H_{0} : P_{0} \Lambda_{k} = \Lambda_{k}, \hspace{0.2cm} k =1, 2,
\end{equation*}
where $P_{0}$ is the projection matrix onto the subspace generated by the columns of $\Lambda$. Since $\Lambda$ has linearly independent columns, the projection matrix is given by $P_{0} = \Lambda ( \Lambda' \Lambda )^{-1} \Lambda'$.

This observation inspires the following test. Define the projection matrices of the individual loadings as
\begin{equation} \label{eq:true_projections_individual_series}
P_{k} = \Lambda_{k} ( \Lambda'_{k} \Lambda_{k} )^{-1} \Lambda'_{k}, \hspace{0.2cm} k =1,2.
\end{equation}
The projection matrices $P_{k}$ in \eqref{eq:true_projections_individual_series} can be estimated through the PCA estimator \eqref{eq:PCAestimates_individual_series} of the loading matrices as
\begin{equation} \label{eq:def_Phat12}
\widehat{P}_{k} = \widehat{\Lambda}_{k} ( \widehat{\Lambda}'_{k} \widehat{\Lambda}_{k})^{-1} \widehat{\Lambda}'_{k}, \hspace{0.2cm} k =1, 2.
\end{equation}

Note that projection matrices are invariant under any transformation with a nonsingular matrix $\Phi \in \RR^{r \times r}$, i.e.
\begin{equation} \label{eq:projection_invariance}
\Lambda_{k}\Phi ( \Phi' \Lambda'_{k} \Lambda_{k} \Phi )^{-1} \Phi' \Lambda'_{k} = \Lambda_{k} ( \Lambda'_{k} \Lambda_{k} )^{-1} \Lambda'_{k}, \hspace{0.2cm} k =1,2.
\end{equation}
In particular, under the hypothesis \eqref{eq:H01}, 
\begin{equation} \label{eq:projection_invariance_for_each_P}
P_{k} 
= \Lambda_{k} ( \Lambda'_{k} \Lambda_{k} )^{-1} \Lambda'_{k} 
= \Lambda \Phi_{k} ( \Phi'_{k} \Lambda' \Lambda \Phi_{k} )^{-1} \Phi'_{k} \Lambda'
= \Lambda ( \Lambda' \Lambda )^{-1} \Lambda'
, \hspace{0.2cm} k =1,2.
\end{equation}
Introduce the notation
\begin{equation*}
X^{k}_{1:T_{k}} = (X_{1}^{k}, \dots, X_{T_{k}}^{k} ) \in \RR^{d \times T_{k} }, \hspace{0.2cm} k =1,2.
\end{equation*}
The general idea is to split the series $X^{k}_{1:T_{k}}$ into two halves and to apply an appropriate transformation such that there is no change in the loadings under the null, but there is a change under the alternative. For notational simplicity we suppose throughout this section that $T_1,T_2$ are even. Otherwise, one simply considers $\lfloor T_1 /2 \rfloor, \lfloor T_2/2 \rfloor$. We propose the following transformation
\begin{equation} \label{eq:YT2def}
Y_{1:T_{2}} = \mleft( \widehat{P}_{1} X^{2}_{1:T_{2}/2}, \widehat{P}_{2} X^{2}_{ (T_{2}/2 +1) :T_{2}} \mright),
\end{equation}
where $\widehat{P}_{1}, \widehat{P}_{2}$ are as in \eqref{eq:def_Phat12}.
Note that the roles of the first and the second series are interchangeable. 
To motivate our choice of transforming the series $X^{2}_{1:T_{2}}$ by $\widehat{P}_{1}$ and $\widehat{P}_{2}$, we consider the transformation at the population level. Then, by \eqref{eq:projection_invariance_for_each_P}, 
\begin{equation*}
\begin{aligned}
P_{1} X_{t}^2 
&\approx P_{1} \Lambda_{2} F_{t}^2 = P_{1} \Lambda \Phi_{2} F_{t}^2 
= \Lambda ( \Lambda' \Lambda )^{-1} \Lambda' \Lambda \Phi_{2} F_{t}^2 = \Lambda \Phi_{2} F_{t}^2 = \Lambda_{2} F_{t}^2, 
\hspace{0.2cm} \text{ for } t =1,\dots, T_{2}/2, \\
P_{2} X_{t}^2 
&\approx P_{2} \Lambda_{2} F_{t}^2 = P_{2} \Lambda \Phi_{2} F_{t}^2 
= \Lambda ( \Lambda' \Lambda )^{-1} \Lambda' \Lambda \Phi_{2} F_{t}^2 = \Lambda_{2} F_{t}^2,
\hspace{0.2cm} \text{ for } t =T_{2}/2+1,\dots, T_{2}
\end{aligned}
\end{equation*}
and therefore
\begin{equation} \label{eq:motivation_null_population}
\begin{aligned}
P_{1} X_{t}^2 
&\approx  \Lambda_{2} F_{t}^2, 
\hspace{0.2cm} \text{ for } t =1,\dots, T_{2}/2, \\
P_{2} X_{t}^2 
&\approx \Lambda_{2} F_{t}^2,
\hspace{0.2cm} \text{ for } t =T_{2}/2+1,\dots, T_{2}.
\end{aligned}
\end{equation}
Since $P_{1} = P_{2}$ under the null hypothesis, the idiosyncratic errors of the transformed series $Y_{1:T_2}$ are still stationary. 
\begin{equation} \label{eq:motivation_null_population_errors}
\begin{aligned}
P_{1} \E (\varepsilon_{t}^2 \varepsilon_{t}^{2'}) P_{1}' = P_{0} \Sigma_{\varepsilon} P_{0}', \\
P_{2} \E (\varepsilon_{t}^2 \varepsilon_{t}^{2'}) P_{2}' = P_{0} \Sigma_{\varepsilon} P_{0}'.
\end{aligned}
\end{equation}

Due to \eqref{eq:motivation_null_population}, the series $Y_{1:T_{2}} $ in \eqref{eq:YT2def} is expected to have the same loadings and factors under the null as $X^{2}_{1:T_{2}} $. However, given \eqref{eq:motivation_null_population_errors}, the errors have a different covariance matrix than the original series.

In contrast, when the null hypothesis is not satisfied, we get
\begin{equation} \label{eq:motivation_alternative_population}
\begin{aligned}
P_{1} X_{t}^2 
&\approx P_{1} \Lambda_{2} F_{t}^2
= \Lambda_{1} ( \Lambda'_{1} \Lambda_{1} )^{-1} \Lambda'_{1} \Lambda_{2} F_{t}^2 \neq \Lambda_{2} F_{t}^2, 
\hspace{0.2cm} \text{ for } t =1,\dots, T_{2}/2, \\
P_{2} X_{t}^2 
&\approx P_{2} \Lambda_{2} F_{t}^2
= \Lambda_{2} ( \Lambda'_{2} \Lambda_{2} )^{-1} \Lambda'_{2} \Lambda_{2} F_{t}^2 = \Lambda_{2} F_{t}^2,
\hspace{0.2cm} \text{ for } t =T_{2}/2+1,\dots, T_{2}.
\end{aligned}
\end{equation}
The illustrated behavior of $Y_{1:T_{2}}$ at the population level in \eqref{eq:motivation_null_population} and \eqref{eq:motivation_alternative_population} suggests that for the transformed series $Y_{1:T_{2}}$ our hypothesis testing problem is equivalent to testing for a change at time $T_{2}/2$ in the loadings. The roles of $X^{1}_{1:T_{1}}$ and $X^{2}_{1:T_{2}}$ are interchangeable. Depending on which series we base our test on, we can consider either 
\begin{equation} \label{eq:transfomedseriesnotationchange1}
\begin{aligned}
Y_{t}=
\begin{cases}
\widehat{P}_{1} X_{t}^2, \hspace{0.2cm} \text{ for } t =1,\dots, T_{2}/2,
\\
\widehat{P}_{2} X_{t}^2, \hspace{0.2cm} \text{ for } t =T_{2}/2+1,\dots, T_{2}
\end{cases}
\end{aligned}
\end{equation}
or
\begin{equation}  \label{eq:transfomedseriesnotationchange2}
\begin{aligned}
Y_{t}=
\begin{cases}
\widehat{P}_{1} X_{t}^1, \hspace{0.2cm} \text{ for } t =1,\dots, T_{1}/2,
\\
\widehat{P}_{2} X_{t}^1, \hspace{0.2cm} \text{ for } t =T_{1}/2+1,\dots, T_{1}.
\end{cases}
\end{aligned}
\end{equation}
Under the hypothesis, i.e., when there is no change in the loadings, the true loadings and factors of the transformed series  \eqref{eq:transfomedseriesnotationchange1} and \eqref{eq:transfomedseriesnotationchange2} should respectively coincide with the true loadings and factors of the series $X_{t}^2$ and $X_{t}^1$.

To avoid complicated notation, we base our test on the second series $X_{t}^2$ and write from now on 
\begin{equation} \label{eq:F=F^2}
F_t 
\hspace{0.2cm}
\mbox{instead of} 
\hspace{0.2cm}
F_t^2. 
\end{equation}
We keep $\Lambda_2$ to denote the corresponding loading matrix and to avoid any confusion with $\Lambda$ as used to characterize the null hypothesis in \eqref{eq:H01}.
We also assume from now on that 
$T=T_1 = T_2$.
Several quantities are defined above before and after the ``change-point" $T/2$. We shall sometimes distinguish between them below by using superscript $b$ for ``before" and $a$ for ``after". For example, in view of \eqref{eq:F=F^2}, we shall write 
\begin{equation} \label{eq:FaFb}
F^b = (F_{1}, \dots, F_{T/2})', \hspace{0.2cm}
F^a = (F_{T/2+1}, \dots, F_{T})'.
\end{equation}

In order to test for changes in the loadings of the transformed series $\{Y_t\}_{t=1,\dots,T}$ we can now employ a change-point test as introduced in \cite{han2015tests}. This is outlined in more detail in the next section. See also \cite{BAEK2021107067} for an application of the test in \cite{han2015tests} to testing for changes in the autocovariances of the latent factors.

\subsection{The test statistic} \label{sec3.2}
Recall from \eqref{eq:YT2def} the transformed series $Y_{1:T} = \mleft( \widehat{P}_{1} X^{2}_{1:T/2}, \widehat{P}_{2} X^{2}_{ (T/2 +1) :T} \mright)$.
To employ the change-point test of \cite{han2015tests}, we first concatenate \eqref{eq:transfomedseriesnotationchange1} across time to get for short 
\begin{equation} \label{eq:defofY}
Y := Y'_{1:T} = 
\begin{pmatrix}
F_{1:T/2}' \Lambda'_2 \widehat{P}_{1} \\
F_{(T/2+1):T}' \Lambda'_2 \widehat{P}_{2}
\end{pmatrix}
+
\begin{pmatrix}
\varepsilon'_{1:T/2} \widehat{P}_{1} \\
\varepsilon'_{(T/2+1):T} \widehat{P}_{2} 
\end{pmatrix}
=:
\begin{pmatrix}
F^b \Lambda'_2 \widehat{P}_{1} \\
F^a \Lambda'_2 \widehat{P}_{2}
\end{pmatrix}
+
\begin{pmatrix}
\varepsilon^b \widehat{P}_{1} \\
\varepsilon^a \widehat{P}_{2} 
\end{pmatrix}
\end{equation}
with $F' := F_{1:T} = (F_{1}, \dots, F_{T})$, $F_{t} \in \RR^r$. The use of superscripts $a$ and $b$ was discussed before in \eqref{eq:FaFb}.
We follow the same procedure as in \eqref{eq:concatenated_each_series} but now for the transformed series \eqref{eq:defofY}.
The PCA estimators are then based on $\widehat{\Sigma}_{T} = Y Y'$ such that
\begin{equation} \label{eq:PCAestimates_individual_G}
\widehat{F} = \sqrt{T} \widehat{Q}_{r}, \hspace{0.2cm}
\widehat{\Lambda}' = \frac{1}{T} \widehat{F}' Y,
\end{equation}
where $\widehat{Q}_{r}$ are the $r$ eigenvectors corresponding to the $r$ largest eigenvalues of $\widehat{\Sigma}_{T}$.

We base the proposed test statistic on the pre- and post-sample means of $\widehat{F}_{t} \widehat{F}_{t}'$, where $\widehat{F}_{t}$ is the PCA estimator of the factors of our transformed series as stated in \eqref{eq:PCAestimates_individual_G}. Define
\begin{equation} \label{eq:def:V(G)}
V(\widehat{F}) = \vechop \mleft( \frac{1}{\sqrt{T}} \sum_{t=1}^{T/2} \widehat{F}_{t} \widehat{F}_{t}' - \frac{1}{\sqrt{T}} \sum_{t=T/2 + 1}^{T} \widehat{F}_{t} \widehat{F}_{t}' \mright).
\end{equation}
To normalize appropriately, we use the following long-run variance estimator as suggested in Section 2.4 of \cite{han2015tests},
\begin{equation} \label{eq:def:Omega(G)}
\Omega(\widehat{F}) = \widehat{\Gamma}_{0}(\widehat{F}) + \sum_{j=1}^{T-1} \kappa \mleft( \frac{j}{b_{T}} \mright) \Big( \widehat{\Gamma}_{j}(\widehat{F}) + \widehat{\Gamma}_{j}'(\widehat{F}) \Big),
\end{equation}
where $\kappa$ is a kernel function, $b_{T}$ the corresponding bandwidth and 
\begin{equation} \label{eq:def:Gamma(G)}
\widehat{\Gamma}_{j}(\widehat{F}) = \frac{1}{T} \sum_{t=1+j}^{T} \vechop (\widehat{F}_{t} \widehat{F}_{t}' - I_{r}) \vechop (\widehat{F}_{t-j} \widehat{F}_{t-j}' - I_{r})'.
\end{equation}
Then, the Wald type test statistic can be defined as
\begin{equation} \label{eq:def:Wald}
W(\widehat{F}) = V(\widehat{F})' \Omega^{-1}(\widehat{F}) V(\widehat{F}).
\end{equation}
Our main theoretical contribution is to prove that the PCA estimators of the transformed series $Y$ are still consistent estimators for the loadings and factors. 

We show in Appendix \ref{se:appB} that $\widehat{F}$ behaves as $FH$ or $FH_{0}$. Here, 
\begin{equation} \label{eq:HconvH0}
H := (\Lambda'_2 \Lambda_2/d) (F' \widehat{F} /T) \widehat{V}^{-1}_{r} \overset{\prob}{\to} H_{0}, \text{ as } d,T\to\infty,
\end{equation}
where $\widehat{V}_{r}$ is an $r \times r$ diagonal matrix with the $r$ largest eigenvalues of $\frac{1}{Td} \widehat{\Sigma}_{T}$ on the diagonal. The convergence \eqref{eq:HconvH0} is shown in Appendix \ref{se:appB}.
We also use $\Omega(FH_0)$ which is $\Omega(\cdot)$ as in \eqref{eq:def:Omega(G)} but evaluated at $FH_0$. Both quantities are shown to be consistent estimators of the true long-run variance 
\begin{equation} \label{eq:Omega}
\Omega = \lim_{ T \to \infty } \Var \mleft( \vechop \mleft( \frac{1}{\sqrt{T}} \mleft( \sum_{t=1}^{T} H'_{0} F_{t} F_{t}' H_{0} - I_{r} \mright) \mright) \mright).
\end{equation}

\begin{remark} \label{re:othertests}
We base our test on a CUSUM like statistic as considered in \cite{han2015tests}.
However, it is certainly conceivable to use our transformation in combination with other change-point tests. For example, \cite{bai2022likelihood} considered a likelihood ratio based test. \md{See also Section \ref{se:roadmap} for a general strategy on how to infer the asymptotic behavior for a test statistic based on a transformed series.}
\end{remark}

\subsection{Assumptions} \label{se:assumptions}

To study the limiting distribution of our Wald type test statistic, we state some assumptions. Those assumptions are standard in the factor model literature and ensure that the PCA estimators used in our test statistic are consistent. 

In the factor model literature, there are different approaches of proving consistency of the PCA estimators. Those different approaches involve different assumptions. We follow here the approach in \cite{doz2011two}. The assumptions therein are slightly different than those introduced by \cite{stock2002forecasting,bai2002determining} and \cite{bai2003inferential} but have a similar role. 

Note that all assumptions need to be satisfied for both series in \eqref{eq:model_unified}. To avoid unnecessarily complicated notation we state the assumptions for a universal factor model
\begin{equation*}
X_{t} = \Upsilon F_{t} + \varepsilon_{t}.
\end{equation*}

\begin{Massump} \label{ass:1}
Suppose $\{ X_t\}_{t=1,\dots,T}$ is stationary with $\E X_t =0$ and $\Var(X_{j,t}) \leq M$ for all $j=1,\dots, d$ and $t=1,\dots,T$, where the constant $M$ does not depend on $d$.
\end{Massump}

\begin{Massump} \label{ass:2}
$\{F_t\}$ and $\{\varepsilon_t\}$ are independent and admit Wold representations:
\begin{enumerate}[label=(\roman*)]
\item
$F_t = \sum_{k=0}^{\infty}C_k a_{t-k}$, with $\sum_{k=0}^{\infty}\|C_k\|<\infty$ and a white noise series $\{a_t\}$ with finite fourth moments, and $\{F_t\}$ has positive definite covariance matrix $\E F_t F_t' = \Sigma_F$.
\item
$\varepsilon_t = \sum_{k=0}^{\infty} D_k b_{t-k}$, $\sum_{k=0}^{\infty}\|D_k\|<\infty$, and $\{b_t = (b_{1,t},\dots, b_{d,t})' \}$ is a white noise series such that $ \E b_{j,t}^4 \leq M$ for all $j=1,\dots,\Dim$ and $t=1,\dots,T$.
\end{enumerate}
\end{Massump}
Assumption \ref{ass:2}(i) implies that $ \frac{1}{T} \sum_{t=1}^{T} F_t F_t' \overset{\prob}{\to} \Sigma_{F}$ as $T \to \infty$; see Theorem 2 in \cite{Hannan1970:Multiple}, p. 203. The convergence aligns with Assumption A in \cite{bai2003inferential}. 
A convenient way to parameterize the dynamics would be to further assume that the common factors follow a vector autoregression model. 

The following Assumption \ref{ass:3} imposes more restrictions on the correlation structure of idiosyncratic errors of the factor models. In particular, it collects the assumptions which go beyond the assumptions necessary to prove consistent estimation of the factors. Instead, they are needed to infer convergence of the long-run variance matrix in Proposition \ref{prop3}. These assumptions could also be inferred from more restrictive assumptions on the matrices $D_k$ in Assumption \ref{ass:2}.
\begin{Massump} \label{ass:3}
 There exists a positive constant $M$ such that for all $d,T$, 
\begin{enumerate}[label=(\roman*)]
\item
$\frac{1}{d} \sum_{i,j=1}^{d} | \E(\varepsilon_{i,t_1} \varepsilon_{j,t_2}) | \leq M$ for all $t_{1},t_{2} =1,\dots, T$;
\item
$\frac{1}{d^2} \E \| e'_t \varepsilon \Upsilon \|^4 \leq M$ for each $t = 1,\dots, T$ with $e_{t}$ denoting the $t$th unit vector in $\RR^{T}$ and $\varepsilon = (\varepsilon_1, \dots, \varepsilon_T)'$;
\item
$\E ( \varepsilon^4_{i,t} ) \leq M$ for all $i=1,\dots,d$ and $t = 1,\dots, T$.
\end{enumerate}
\end{Massump}

For the following assumptions, let $\Pi_{\Upsilon}$ denote the diagonal matrix whose diagonal entries are the eigenvalues of $\Upsilon' \Upsilon$ in decreasing order.

\begin{CRassump} \label{ass:C1}
There is a positive definite diagonal matrix $\widetilde{\Pi}$ such that $\| \Pi_{\Upsilon}/d - \widetilde{\Pi} \| \to 0$ with $\| \Pi_{\Upsilon}/d - \widetilde{\Pi} \| = \mathcal{O}\mleft(\frac{1}{\sqrt{d}}\mright)$. In particular, one can infer that
\begin{enumerate}[label=(\roman*)]
\item $\liminf_{d\to\infty} \frac{\lambda_{\min}(\Upsilon'\Upsilon) }{d} > 0$;
\item $\limsup_{d\to\infty}  \frac{\lambda_{\max}(\Upsilon'\Upsilon)}{d}   = \limsup_{d\to\infty} \frac{\|\Upsilon\|^2}{d} <\infty$.
\end{enumerate}
\end{CRassump}

\begin{CRassump} \label{ass:C2}
The eigenvalues of $\Pi_{\Upsilon}$ are distinct.
\end{CRassump}

\begin{CRassump} \label{ass:C4}
Suppose $\limsup_{d\to\infty} \sum_{h \in \ZZ}\|\Gamma_{\varepsilon}(h) \|<\infty$ (with $\Gamma_{Z}(h)=\E Z_{t+h}Z_t'$ denoting the autocovariance function of a stationary mean-zero series $\{Z_t\}$).
\end{CRassump}

\begin{CRassump} \label{ass:C3}
Suppose there is a $\widebar{\lambda}$ such that $\| \Upsilon \|_{\max} \leq \widebar{\lambda} < \infty$.
\end{CRassump}

%

\subsection{Limiting distribution under the null} \label{sec:Mainresultshypothesis}
In this section, we present our theoretical results, in particular, the asymptotic behavior of our test statistic \eqref{eq:def:Wald} under the null hypothesis. For that we need a few more assumptions.
 
\begin{Wassump} \label{ass:9} (Asymptotics)
\begin{enumerate}[label=(\roman*)]
\item
Suppose $\Omega$ in \eqref{eq:Omega} is positive definite and
\begin{equation*}
\| \Omega(FH_{0}) - \Omega \| = o_{\prob}(1).
\end{equation*}
\item For $W(\cdot)$ in \eqref{eq:def:Wald}, suppose
\begin{equation*}
W(FH_{0}) \overset{\distr}{\to} \chi^{2}(r(r+1)/2),
\end{equation*}
where $\chi^{2}(r(r+1)/2)$ denotes the chi-squared distribution with $r(r+1)/2$ degrees of freedom.
\end{enumerate}
\end{Wassump}

Theorem 3 in \cite{andrews1993tests} provides conditions under which Assumption \ref{ass:9}(ii) is satisfied.

\begin{Wassump} \label{ass:10}
Suppose $\kappa$ in \eqref{eq:def:Omega(G)} is the Bartlett kernel $(\kappa(x) = (1-|x|)\mathds{1}_{\{|x|\leq 1\}})$ and the bandwidth $b_{T}$ satisfies $b_{T} = O\mleft( T^{1/3} \mright)$ and $\frac{T^{\frac{2}{3}}}{d} \to 0$, as $d,T \to \infty$.
\end{Wassump}
The following proposition states the asymptotic behavior of our test statistic. 
\begin{proposition} \label{prop1}
Recall the definition of $W(\cdot)$ in \eqref{eq:def:Wald} and suppose Assumptions \ref{ass:1}--\ref{ass:3}, \ref{ass:C1}--\ref{ass:C3}, \ref{ass:9},\ref{ass:10}, and $\frac{\sqrt{T}}{d} \to 0$, as $d,T \to \infty$. Then, under the null hypothesis \eqref{eq:H01},
\begin{equation*}
W(\widehat{F}) \overset{\distr}{\to} \chi^{2}(r(r+1)/2),
\end{equation*}
where $\chi^{2}(r(r+1)/2)$ denotes the chi-squared distribution with $r(r+1)/2$ degrees of freedom.
\end{proposition}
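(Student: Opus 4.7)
The strategy is to reduce $W(\widehat{F})$ to $W(FH_{0})$, whose limit is $\chi^{2}(r(r+1)/2)$ by Assumption~\ref{ass:9}(ii), via continuous mapping and Slutsky. The decisive structural observation is that under the null \eqref{eq:H01}, identity \eqref{eq:projection_invariance_for_each_P} yields $P_{0,1} = P_{0,2} = P_{0}$, and since $\Lambda_{2} = \Lambda \Phi_{2}$ has columns in $\col(\Lambda)$, one has $P_{0} \Lambda_{2} = \Lambda_{2}$. Hence $\tfrac{1}{2}(P_{0} + I_{d}) \Lambda_{2} F_{t} = \Lambda_{2} F_{t}$ on both halves of the sample, so at the population level the transformation \eqref{eq:YT2def} leaves the signal of $Y$ unchanged and only alters the idiosyncratic covariance, which by \eqref{eq:motivation_null_population_errors} is identical on the two halves and hence stationary.

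First I would invoke PCA consistency for each individual series \eqref{eq:PCAestimates_individual_series}, under Assumptions \ref{ass:1}--\ref{ass:3} and \ref{ass:C1}--\ref{ass:C3}, to get $\| \widehat{P}_{k} - P_{0} \| = O_{\prob}(1/\sqrt{\min(d,T)})$ for $k=1,2$. Combined with $P_{0} \Lambda_{2} = \Lambda_{2}$, this writes $Y$ as an approximate factor model with loading $\Lambda_{2}$, factors $F$, stationary transformed errors, and a remainder of the same order. Second, I would apply the standard PCA rates to $Y$ --- precisely the content attributed to Appendix~\ref{se:appB} --- to obtain $\widehat{F} = FH + U$ with $H \overset{\prob}{\to} H_{0}$ as in \eqref{eq:HconvH0} and $\frac{1}{T}\sum_{t} \|U_{t}\|^{2} = O_{\prob}(1/\min(d,T))$. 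A preliminary check is that the loading $\Lambda_{2}$ and the transformed errors still satisfy Assumptions \ref{ass:C1}--\ref{ass:C3} and \ref{ass:3}; the loading conditions transfer verbatim (the signal of $Y$ still uses $\Lambda_{2}$), and the error conditions follow from $\|(P_{0} + I_{d}) \Sigma_{\varepsilon} (P_{0} + I_{d})\| \leq 4\|\Sigma_{\varepsilon}\|$.

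Third I would establish $V(\widehat{F}) = V(FH_{0}) + o_{\prob}(1)$ and $\Omega(\widehat{F}) = \Omega(FH_{0}) + o_{\prob}(1)$, the principal technical step, paralleling the arguments in \cite{han2015tests}. Using the decomposition
\[
\widehat{F}_{t}\widehat{F}_{t}' - H_{0}'F_{t}F_{t}'H_{0} = U_{t}F_{t}'H + H'F_{t}U_{t}' + U_{t}U_{t}' + (H - H_{0})' F_{t} F_{t}' H + H_{0}' F_{t} F_{t}' (H - H_{0}),
\]
the problem reduces to bounding cross-products. A naive Cauchy--Schwarz bound on $\frac{1}{\sqrt{T}} \sum_{t} U_{t} F_{t}'$ yields only $O_{\prob}(\sqrt{T/\min(d,T)})$, which is not $o_{\prob}(1)$ when $d$ and $T$ are of comparable size; instead, one must expand $U_{t}$ into the four components as in \cite{bai2003inferential} and exploit weak dependence of $\{F_{t}\}$ and $\{\varepsilon_{t}\}$ to obtain the required sharper bound under $\sqrt{T}/d \to 0$. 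The analogue for $\Omega(\widehat{F}) - \Omega(FH_{0})$ carries an additional factor $b_{T}$ from kernel smoothing, controlled by Assumption~\ref{ass:10} ($b_{T} = O(T^{1/3})$ and $T^{2/3}/d \to 0$).

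Finally, since $\Omega(FH_{0}) \overset{\prob}{\to} \Omega$ with $\Omega$ positive definite by Assumption~\ref{ass:9}(i), continuous mapping and Slutsky yield $W(\widehat{F}) = W(FH_{0}) + o_{\prob}(1) \overset{\distr}{\to} \chi^{2}(r(r+1)/2)$ by Assumption~\ref{ass:9}(ii). The principal obstacle is step three: beating the naive Cauchy--Schwarz rate for $V(\widehat{F}) - V(FH_{0})$ requires exploiting cancellations in the four-term PCA error decomposition and careful control of its interaction with the remainder produced in step one, while $\Omega(\widehat{F})$ is even more delicate owing to the bandwidth $b_{T}$ --- which is exactly what forces the condition $T^{2/3}/d \to 0$.
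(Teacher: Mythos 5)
Your plan is correct and follows essentially the same route as the paper: reduce to $W(FH_{0})$ via Assumption \ref{ass:9}(ii), establish Propositions \ref{prop2} and \ref{prop3} from the PCA rate $\frac{1}{T}\|\widehat{F}-\widetilde{F}\widetilde{H}\|_{F}^{2}=\mathcal{O}_{\prob}(\delta_{dT}^{-2})$ together with the sharper cross-product bound $\frac{1}{T}\|(\widehat{F}-\widetilde{F}\widetilde{H})'F\|_{F}=\mathcal{O}_{\prob}(\delta_{dT}^{-2})$ that beats Cauchy--Schwarz and is exactly where $\sqrt{T}/d\to 0$ (and, for the long-run variance, $T^{2/3}/d\to 0$) enters, then conclude by Slutsky. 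The only bookkeeping difference is that, since the two halves of $Y$ are transformed by different matrices $\widehat{P}^{1}_I$ and $\widehat{P}^{2}_I$, the paper works with half-specific rotations $H_{1},H_{2}$ (its $\widetilde{F}\widetilde{H}$) and separately proves $H_{1},H_{2}\to H_{0}$ in Lemma \ref{le:convergenceH12toH}, whereas you use a single $H$; this is a notational simplification, not a change of substance.
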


Note that Proposition \ref{prop1} requires $\frac{\sqrt{T}}{d} \to 0$, as $d,T \to \infty$. This assumption on the relationship between dimension and sample size is typical in the setting of testing for structural changes in high-dimensional factor models. In particular, it is imposed in \cite{han2015tests}. 
The assumption is also covered by the assumptions studied in random matrix theory. Random matrix theory typically distinguishes between $\frac{T}{d} \to c>0$ and $\frac{T}{d} \to 0$ (also known as the ultra-high dimensional regime), both imply our assumption. 

The proof of Proposition \ref{prop1} requires convergence of $V(\cdot)$ in  \eqref{eq:def:V(G)} and of the long-run variance $\widehat{\Omega}(\cdot)$ in \eqref{eq:def:Omega(G)}. The two propositions below formalize that.

 \begin{proposition} \label{prop2}
Recall the definition of $V(\cdot)$ in  \eqref{eq:def:V(G)} and suppose Assumptions \ref{ass:1}--\ref{ass:3}(i), \ref{ass:C1}--\ref{ass:C3} and $\frac{\sqrt{T}}{d} \to 0$, as $d,T \to \infty$. Then, under the null hypothesis \eqref{eq:H01},
 \begin{equation*}
\| V(\widehat{F}) - V(FH_{0}) \| \overset{\prob}{\to} 0.
 \end{equation*}
 \end{proposition}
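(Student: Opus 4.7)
The plan is to exploit the PCA-consistency results for the transformed series $Y$ (developed in Appendix B), namely that $\widehat{F}_t \approx H'F_t$ with $H \overset{\prob}{\to} H_0$, together with the standard Bai-type rates
\[
\frac{1}{T}\sum_{t=1}^{T}\|\widehat{F}_t - H'F_t\|^2 = O_{\prob}(\delta_{dT}^{-2}), \qquad \|H-H_0\| = O_{\prob}(\delta_{dT}^{-1}),
\]
where $\delta_{dT}^{-2} := \max(d^{-1},T^{-1})$. Under the null, \eqref{eq:motivation_null_population} guarantees that $Y$ admits the same loading matrix $\Lambda_2$ and factors $F_t$ as $X^2$ up to leading order, so these rates carry over to the PCA performed on $Y$; the transformations $\widehat{P}^k_I$ only alter the error covariances, which remain positive definite by \eqref{eq:motivation_null_population_errors}.

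Algebraically, I decompose
\[
\widehat{F}_t\widehat{F}_t' - H_0'F_tF_t'H_0 = A_t + B_t + B_t' + C_t + C_t',
\]
where $A_t = (\widehat{F}_t - H'F_t)(\widehat{F}_t - H'F_t)'$, $B_t = (\widehat{F}_t - H'F_t)(H'F_t)'$ and $C_t = (H-H_0)'F_tF_t'H$. Multiplying by $1/\sqrt{T}$, summing over each half of the sample, and taking the difference reduces the claim to showing that each of these half-sum differences is $o_{\prob}(1)$ in operator norm. For $A_t$, Cauchy--Schwarz gives $\|T^{-1/2}\sum_t A_t\| = O_{\prob}(\sqrt{T}\,\delta_{dT}^{-2}) = O_{\prob}(\sqrt{T}/\min(d,T)) = o_{\prob}(1)$ under $\sqrt{T}/d \to 0$. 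For $C_t$, the half-sum differences telescope to $\sqrt{T/2}\,(H-H_0)'(\bar\Sigma^{b}_F - \bar\Sigma^{a}_F)H + (\text{sym.})$; stationarity of $\{F_t\}$ together with Assumption \ref{ass:2} yields $\bar\Sigma^{b}_F - \bar\Sigma^{a}_F = O_{\prob}(T^{-1/2})$, so the product is $O_{\prob}(\delta_{dT}^{-1}) = o_{\prob}(1)$.

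The main obstacle will be the cross term $T^{-1/2}\sum_t B_t$, since a direct Cauchy--Schwarz yields only $O_{\prob}(\sqrt{T/\min(d,T)})$, which is insufficient when $T \gtrsim d$. Here I would substitute the standard Bai (2003) four-term identity expressing $\widehat{F}_t - H'F_t$ as $\widehat{V}_r^{-1}T^{-1}\sum_s \widehat{F}_s(\gamma_{ts}+\zeta_{ts}+\eta_{ts}+\xi_{ts})$, where $\gamma,\zeta,\eta,\xi$ collect idiosyncratic autocovariances, $\varepsilon$--$F$ cross products and $\varepsilon$--$\varepsilon$ interactions for the transformed errors. Each of the four resulting double sums $T^{-3/2}\sum_{t,s}\widehat{F}_s(\cdot)_{ts} F_t'H$, evaluated at the half-sum difference, can be bounded using Assumptions \ref{ass:1}--\ref{ass:3} and $\E\|F_t\|^4<\infty$, producing rates of order $\sqrt{T}/d$ or $1/\sqrt{T}$. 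Since the transformation $\widehat{P}^k_I$ leaves these moment and summability conditions essentially intact --- its discrepancy from $(P_{0,k}+I_d)/2$ is itself $o_{\prob}(1)$ in spectral norm by PCA consistency of $\widehat{\Lambda}_k$ --- all contributions vanish under the ratio condition $\sqrt{T}/d \to 0$, completing the proof.
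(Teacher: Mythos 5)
Your overall architecture coincides with the paper's: the claim reduces to (a) an $L^2$ rate $\mathcal{O}_{\prob}(\delta_{dT}^{-2})$ for $\widehat F_t$ minus a rotation of $F_t$, (b) a \emph{quadratically} improved rate for the cross term $T^{-1}(\widehat F - FH)'F$, and (c) convergence of the rotation to $H_0$ at rate $\delta_{dT}^{-1}$ --- exactly the paper's Propositions \ref{prop2.1}, \ref{prop2.1.0} and Lemma \ref{le:convergenceH12toH} (the analogues of Lemmas 1, 2 and 6 in Han and Inoue), after which the argument of their Theorem 1 applies. You correctly identify the cross term as the step where plain Cauchy--Schwarz fails and a Bai-type expansion is required. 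Two minor points: the paper centers each half at its own rotation ($H_1$ on the first half, $H_2$ on the second) because the two halves of $Y$ are built with different transformations --- your single-$H$ centering still works since $H_1-H_0$ and $H_2-H_0$ are $\mathcal{O}_{\prob}(\delta_{dT}^{-1})$ --- and your identity for the rotation term should read $(H-H_0)'F_tF_t'H + H_0'F_tF_t'(H-H_0)$ rather than $C_t+C_t'$, which changes nothing in the bounds.

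The genuine gap is the single sentence by which you transfer the Bai/Han--Inoue rates to the transformed series: ``its discrepancy from $(P_{0,k}+I_d)/2$ is itself $o_{\prob}(1)$ in spectral norm'' is not sufficient, and establishing this transfer is precisely the paper's technical contribution in Appendix \ref{se:appB}. In the expansion of $\widehat F_t - H'F_t$ for $Y$, the projection errors enter through products such as $\Lambda_2'\widehat P_I^{1'}\widehat P_I^{2}\Lambda_2 - \Lambda_2'\Lambda_2$ and $\varepsilon^{b}\bigl(\widehat P_I^{1'}\widehat P_I^{1}\Lambda_2-\Lambda_2\bigr)$, which are multiplied by factors of order $d\sqrt T$ or $dT$ before the $(dT)^{-1}$ normalization; a rate-free $o_{\prob}(1)$ bound on $\widehat P^k_I - P^k_I$ therefore cannot deliver the $\mathcal{O}_{\prob}(\delta_{dT}^{-2})$ rates that your step (b) needs (and without which $\sqrt T$ times the cross term does not vanish). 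What is actually required --- and what the paper proves in \eqref{al:qqppqqpp2} and \eqref{eq:crucial_addedlemma} using the eigenvector perturbation bounds \ref{Doz1}--\ref{Doz2} --- is the quantitative estimate $\|\widehat P_1\Lambda_2-\Lambda_2\| = \sqrt d\,\bigl(\mathcal{O}_{\prob}(1/d)+\mathcal{O}_{\prob}(1/\sqrt T)\bigr)$, and this uses the null hypothesis in an essential way: only under \eqref{eq:H01} does $P_{0,1}\Lambda_2=\Lambda_2$, so that $\widehat P_1\Lambda_2-\Lambda_2$ is a pure estimation error rather than an $O(\sqrt d)$ bias. Your sketch must be supplemented by these bounds before the four double sums in your Bai decomposition can be controlled.
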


\begin{proposition} \label{prop3}
Recall the definition of $\Omega(\cdot)$ in \eqref{eq:def:Omega(G)} and suppose Assumptions \ref{ass:1}--\ref{ass:3}, \ref{ass:C1}--\ref{ass:C3}, \ref{ass:10} and $\frac{\sqrt{T}}{d} \to 0$, as $d,T \to \infty$. Then, under the null hypothesis \eqref{eq:H01},
\begin{equation*}
\| \Omega(\widehat{F}) - \Omega(FH_{0}) \| \overset{\prob}{\to} 0.
\end{equation*}
\end{proposition}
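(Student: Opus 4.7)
The plan is to bound $\|\Omega(\widehat F) - \Omega(FH_0)\|$ lag-by-lag, exploiting two facts: (a) the Bartlett kernel has compact support, so the sum in \eqref{eq:def:Omega(G)} is effectively truncated at $j = b_T = O(T^{1/3})$; and (b) the estimated factors $\widehat F_t$ are close to $H'F_t$ on average at the standard PCA rate established in Appendix B, which together with \eqref{eq:HconvH0} brings them close to $H_0'F_t$. Since $|\kappa(j/b_T)| \leq 1$ and $\kappa(j/b_T) = 0$ for $j > b_T$,
\begin{equation*}
\|\Omega(\widehat F) - \Omega(FH_0)\| \;\leq\; (1 + 2b_T)\, \max_{0 \leq j \leq b_T}\, \bigl\|\widehat\Gamma_j(\widehat F) - \widehat\Gamma_j(FH_0)\bigr\|.
\end{equation*}
It therefore suffices to show $b_T\,\Delta_T \overset{\prob}{\to} 0$, where $\Delta_T := \max_{0 \leq j \leq b_T}\|\widehat\Gamma_j(\widehat F) - \widehat\Gamma_j(FH_0)\|$.

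For each fixed $j \leq b_T$, set $\xi_t := \widehat F_t - H'F_t$ and write
\begin{equation*}
\widehat F_t \widehat F_t' - I_r \;=\; (H'F_tF_t'H - I_r) + H'F_t\xi_t' + \xi_tF_t'H + \xi_t\xi_t'.
\end{equation*}
Inserting this (and the analogous expansion at lag $t-j$) into $\widehat\Gamma_j(\widehat F)$ and using bilinearity of the outer product of $\vechop$'s expresses $\widehat\Gamma_j(\widehat F) - \widehat\Gamma_j(FH)$ as a finite sum of cross terms, each containing at least one factor $\xi_s$. Applying Cauchy-Schwarz inside every such sum reduces the control to the two ingredients
\begin{equation*}
\frac{1}{T}\sum_{t=1}^T \|\xi_t\|^2 \;=\; O_p(\delta_{dT}^{-2}), \qquad \frac{1}{T}\sum_{t=1}^T \|F_t\|^4 \;=\; O_p(1),
\end{equation*}
with $\delta_{dT}^{-2} = O(d^{-1} + T^{-1})$ being the Bai-Ng type rate proved in Appendix B for the transformed series $Y$, and the second bound being a consequence of Assumption \ref{ass:2}(i) via the Wold representation of $\{F_t\}$. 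A parallel step, invoking \eqref{eq:HconvH0} to replace $H$ by $H_0$ and using Assumptions \ref{ass:3} and \ref{ass:C4} for the moment bookkeeping over lags, handles the residual difference $\widehat\Gamma_j(FH) - \widehat\Gamma_j(FH_0)$.

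Combining these estimates yields, uniformly in $j \leq b_T$, the bound
\begin{equation*}
\bigl\|\widehat\Gamma_j(\widehat F) - \widehat\Gamma_j(FH_0)\bigr\| \;=\; O_p(\delta_{dT}^{-1}) + o_p(1),
\end{equation*}
so that $b_T \Delta_T = O_p\bigl(T^{1/3}(d^{-1/2} + T^{-1/2})\bigr)$, which is $o_p(1)$ under Assumption \ref{ass:10}, i.e.\ $b_T = O(T^{1/3})$ together with $T^{2/3}/d \to 0$. The main obstacle is precisely this uniform-in-lag control: the truncated sum inflates each per-lag error by a factor of order $b_T = T^{1/3}$, and the per-lag differences decay only at the $\delta_{dT}^{-1}$ rate. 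This is exactly why Assumption \ref{ass:10} strengthens the $\sqrt T/d\to 0$ used for Proposition \ref{prop2} to the more stringent $T^{2/3}/d\to 0$: the extra power on $T$ is calibrated to absorb the $b_T$ factor coming from the Bartlett-kernel truncation.
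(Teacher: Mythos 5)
Your proposal is correct and follows essentially the same route as the paper, which simply invokes the argument of Theorem 2 in Han and Inoue (2015) once the key ingredients — the rate $\frac{1}{T}\|\widehat F - \widetilde F\widetilde H\|_F^2 = \mathcal{O}_{\prob}(\delta_{dT}^{-2})$ and the convergence $H_1, H_2 \to H_0$ at rate $\delta_{dT}^{-1}$ — are established in Appendix \ref{se:appB}; your lag-by-lag Cauchy--Schwarz decomposition, the Bartlett truncation at $b_T$, and the observation that $T^{2/3}/d \to 0$ is exactly what absorbs the $b_T$ inflation are all the intended mechanics. The only cosmetic blemishes are that the rotation is block-diagonal ($H_1$ before and $H_2$ after $T/2$, rather than a single $H$, though both converge to $H_0$ at the same rate) and that the stray unquantified $o_{\prob}(1)$ in your uniform per-lag bound should be replaced by the $\mathcal{O}_{\prob}(\delta_{dT}^{-1})$ you in fact use in the final rate computation.
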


The proofs of the results in this section can be found in the appendix.

\subsection{Proof idea} \label{se:roadmap}

Recall from Section \ref{se:motivation} that our testing procedure transforms the data such that under the null hypothesis, there is no change in the loadings, while there is one under the alternative. The transformation was motivated by the goal of applying an available change-point test as the one introduced in \cite{han2015tests} to the transformed data. This is also reflected in the proofs which are adapted from those in \cite{han2015tests}. It turns out to be sufficient to prove that the PCA estimators of the transformed series are consistent and subsequently apply the results in \cite{han2015tests}.  

As noted in Remark \ref{re:othertests}, instead of applying a CUSUM based statistic, one could potentially use other change-point tests, for instance, a likelihood ratio based test. The general strategy of how to derive the asymptotic behavior of the test statistic remains the same, that is, one proves that the PCA estimators of the transformed series are consistent. Subsequently, one can infer the asymptotic distribution of the respective test statistic.
The last step depends on the test statistic and is a priori not immediate even if the proof turns out to be similar; in the case considered here, this is moved to the supplementary document \cite{Supplement}.

We outline the proof idea here and refer to Appendices \ref{se:appB}--\ref{se:matrixnorminequalities} for more details.
Set $\delta_{dT} = \min\{ \sqrt{d}, \sqrt{T}\}$ and recall the notation $F^b, F^a, \widehat{F},H$ and $F' = (F^{b'}, F^{a'})$ from Section \ref{sec3.2}. We further introduce $\widetilde{F} \widetilde{H} = (H_{1}' F^{b'},  H_{2}' F^{a'} )'$ for matrices $H_1,H_2$ specified in \eqref{def:H1H2} in Appendix \ref{se:appB}. 
Then, in order to prove Propositions \ref{prop1}--\ref{prop3}, it suffices to establish the following to be stated more formally in Appendix \ref{se:appB}: 
\begin{itemize}
\item Proposition \ref{prop2.1} states that
\begin{equation} \label{eq:key1}
\frac{1}{T} \| \widehat{F} - \widetilde{F} \widetilde{H} \|_{F}^2 = \mathcal{O}_{\prob}\mleft(\frac{1}{\delta_{dT}^2}\mright), \text{ as } d,T\to\infty,
\end{equation}
which is analogous to Lemma 1 in \cite{han2015tests}.
\item Proposition \ref{prop2.1.0} states that 
\begin{equation} \label{eq:key2}
\frac{1}{T} \| ( \widehat{F} - \widetilde{F} \widetilde{H})'F \|_{F} = \mathcal{O}_{\prob}\mleft(\frac{1}{\delta_{dT}^2}\mright), \text{ as } d,T\to\infty,
\end{equation}
which is analogous to Lemma 2 in \cite{han2015tests}.
\item Lemma \ref{le:convergenceH12toH} states 
\begin{equation} \label{eq:key3}
H_1 - H_0 = \mathcal{O}_{\prob}\mleft(\frac{1}{\delta_{dT}}\mright)
\text{ and }
H_2 - H_0 = \mathcal{O}_{\prob}\mleft(\frac{1}{\delta_{dT}}\mright),
\text{ as } d,T\to\infty,
\end{equation}
which is analogous to Lemma 6 in \cite{han2015tests}.
\end{itemize}
Note that $H_1, H_2$ have the same limit as $H$ which was introduced in \eqref{eq:HconvH0}. In particular, that means  asymptotically $\widetilde{F} \widetilde{H} \approx FH \approx FH_0$ with high probability.

The asymptotics \eqref{eq:key1} and \eqref{eq:key2} are the key results to proving the convergence of our test statistic under the null hypothesis. The bottom line is that our main contribution is to show that \eqref{eq:key1} and \eqref{eq:key2} are satisfied for the PCA estimator based on the transformed series $Y$.

Inferring Propositions \ref{prop1}--\ref{prop3} from Propositions \ref{prop2.1} and \ref{prop2.1.0} follows arguments in \cite{han2015tests}. Therefore, we omit those in the article. For completeness, we provide a supplementary document \cite{Supplement} with the detailed proofs.
Recall also that our estimation procedure and our proofs for consistent estimation of the underlying factor model follow the representation in \cite{doz2011two}. In contrast, \cite{han2015tests} follow \cite{bai2003inferential}. While both lead to the same results, they differ in their assumptions. The supplementary document \cite{Supplement} also has the purpose of clarifying that the arguments in \cite{han2015tests} go through by using the assumptions in \cite{doz2011two}.

\section{Power against the alternative hypothesis}\label{sec:alternative}
We consider the alternative that a portion of the cross sections have linearly independent loadings. In order to study the behavior of our test statistic under the alternative, we follow \cite{han2015tests} and \cite{duan2023quasi} and rewrite the model in terms of so-called pseudo-factors.

Specifically, let $Z_t$ denote the transformed series under the alternative, which admits the representation
\begin{align} \label{eq:rep_with_pseudo_factors_main}
\begin{pmatrix}
Z^{b} \\
Z^{a}
\end{pmatrix}
&=  
\begin{pmatrix}
G^{b} B' \\
G^{a} C'
\end{pmatrix}
\Theta' + \eta
=: G \Theta' + \eta,
\end{align}
where $Z^b,Z^a$ are the pre- and post-break samples, $G^b,G^a$ denote pseudo-factors, and $\Theta$ is a common loading matrix. This representation is observationally equivalent to a factor model without breaks in $\Theta$, but the matrices $B,C$ carry information about the structural break. 

Depending on the relationship between the pre- and post-break loading spaces, two representative alternatives arise:
\begin{itemize}
    \item \textit{TA 1 (Type~\ref{item:Type1} break).} The column spaces before and after the break are disjoint. In this case, the pseudo-factor representation implies that $(B,C)$ has column rank $2r$.
    \item \textit{TA 2 (Type~\ref{item:Type3} break).} The loading spaces partially overlap. Then $(B,C)$ has column rank strictly between $r$ and $2r$.
\end{itemize}
The explicit derivations of these two cases are provided in Appendix~\ref{app:alternative_derivations}. 

\medskip
Based on \eqref{eq:rep_with_pseudo_factors_main}, we define the PCA estimator $\widehat{G}$ of the pseudo-factors analogously to Section~\ref{sec3.2}. We further define
\begin{equation} \label{eq:Jdef_main}
J = (\Theta' \Theta /d) (G' \widehat{G} /T) \widehat{V}^{-1}_{\npf},
\end{equation}
where $\widehat{V}_{\npf}$ is the diagonal matrix of the $\npf$ largest eigenvalues of $\frac{1}{Td}\widehat{\Sigma}_T$. Then, one can show that $J \overset{\prob}{\to} J_0$ for some nonrandom matrix $J_0$.

For the two types of alternatives, we obtain population covariance matrices
\begin{equation} \label{eq:Di_main}
D_1 := \frac{1}{T}\E (BG^{b'}G^bB'), 
\qquad 
D_2 := \frac{1}{T}\E (CG^{a'}G^aC'),
\end{equation}
which differ in structure between TA~1 and TA~2 (see Appendix~\ref{app:alternative_derivations} for explicit forms). We then set
\begin{equation} \label{defJDDJ_main}
\mathcal{C} := J_0'(D_1 - D_2)J_0.
\end{equation}

Let $\Pi_{\Theta}$ denote the diagonal matrix of eigenvalues of $\Theta'\Theta$ in decreasing order. We impose the following assumptions.

\begin{Passump} \label{ass:P1_main} (Conditions on the break.)
\begin{enumerate}[label=(\roman*)]
\item There is a positive definite diagonal matrix $\widetilde{\Pi}$ such that $\| \Pi_{\Theta}/d - \widetilde{\Pi} \| = \mathcal{O}(d^{-1/2})$.
\item The eigenvalues of $\Pi_{\Theta}$ are distinct.
\end{enumerate}
\end{Passump}

\begin{Passump} \label{ass:P2_main} (Conditions on the long-run variance estimator.)
Recall $\mathcal{C}$ from \eqref{defJDDJ_main}.
\begin{enumerate}[label=(\roman*)]
\item Suppose
\[
\liminf_{T \to \infty} \Big\{ \vechop(\mathcal{C})' \Big( b_{T/2}\Omega(GJ_{0})^{-1} \Big) \vechop(\mathcal{C}) \Big\} \overset{\prob}{\to} c > 0,
\]
where $b_T$ is the bandwidth parameter.
\item Assumption \ref{ass:9}(i) holds for $\Omega(GJ_{0})$.
\end{enumerate}
\end{Passump}

\begin{proposition} \label{propAlternative0}
Under Assumptions \ref{ass:1}--\ref{ass:3}, \ref{ass:C1}--\ref{ass:C3}, and \ref{ass:P1_main}--\ref{ass:P2_main}:
\begin{enumerate}[label=(\roman*)]
\item For $\mathcal{C} \neq 0$ in \eqref{defJDDJ_main},
\[
\frac{1}{T} \sum_{t=1}^{T/2} \widehat{G}_{t} \widehat{G}_{t}' - \frac{1}{T} \sum_{t=T/2 + 1}^{T} \widehat{G}_{t} \widehat{G}_{t}' 
\overset{\prob}{\to} \mathcal{C}.
\]
\item The test statistic $W(\widehat{G})$ defined in \eqref{eq:def:Wald} is consistent under the alternative; i.e. $W(\widehat{G}) \overset{\prob}{\to} \infty$.
\end{enumerate}
\end{proposition}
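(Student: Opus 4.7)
The strategy is to port the null-case analysis of Propositions \ref{prop1}--\ref{prop3} to the pseudo-factor representation \eqref{eq:rep_with_pseudo_factors} $Y \approx G\Theta' + \widetilde{\varepsilon}$, in which the alternative hypothesis looks like a no-change factor model with $\npf$ factors. The PCA estimator $\widehat{G}$, computed now with $\npf$ factors, should then be close to $GJ$ in direct analogy with the null-case fact that $\widehat{F}$ behaves like $FH$, and $J\overset{\prob}{\to}J_{0}$ parallels \eqref{eq:HconvH0}. Assumption \ref{ass:P1} plays the same role for $\Theta$ as Assumptions \ref{ass:C1}--\ref{ass:C2} do for the original loading, and the pseudo-factor series $\{G_{t}\}$ is stationary on each of the two halves with second moments given in \eqref{eq:type1_pop_cov} or \eqref{eq:type2_pop_cov}.

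For part (i), once the estimation bound $\widehat{G}_{t}\approx J'G_{t}$ is established, write
\begin{equation*}
\frac{1}{T}\sum_{t=1}^{T/2} \widehat{G}_{t}\widehat{G}_{t}' - \frac{1}{T}\sum_{t=T/2+1}^{T}\widehat{G}_{t}\widehat{G}_{t}'
= J' \Bigl( \frac{1}{T}\sum_{t=1}^{T/2} G_{t}G_{t}' - \frac{1}{T}\sum_{t=T/2+1}^{T} G_{t}G_{t}' \Bigr) J + o_{\prob}(1).
\end{equation*}
Decompose the inner sums via $G_{t}=B G^{b}_{t}$ for $t\leq T/2$ and $G_{t}=C G^{a}_{t}$ for $t>T/2$, and apply the standard law of large numbers on each half (granted by Assumption \ref{ass:2} for $F$ and by construction for the $\star$ components). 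The two sample averages converge to $D_{1}$ and $D_{2}$ respectively, and combined with $J\overset{\prob}{\to}J_{0}$ we obtain $J_{0}'(D_{1}-D_{2})J_{0}=\mathcal{C}$.

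For part (ii), part (i) yields $V(\widehat{G}) = \sqrt{T}\,\vechop(\mathcal{C}) + o_{\prob}(\sqrt{T})$. Writing
\begin{equation*}
W(\widehat{G}) = \frac{T}{b_{T/2}} \cdot \frac{V(\widehat{G})'}{\sqrt{T}} \bigl( b_{T/2}\,\Omega(\widehat{G})^{-1} \bigr) \frac{V(\widehat{G})}{\sqrt{T}},
\end{equation*}
using an analogue of Proposition \ref{prop3} applied to the pseudo-factor model to replace $\Omega(\widehat{G})$ by $\Omega(GJ_{0})$, and invoking Assumption \ref{ass:P2}(i) to bound the quadratic form $\vechop(\mathcal{C})'(b_{T/2}\Omega(GJ_{0})^{-1})\vechop(\mathcal{C})$ below by some $c>0$ eventually in probability, yields
\begin{equation*}
W(\widehat{G}) \geq \frac{cT}{b_{T/2}}\,(1+o_{\prob}(1)).
\end{equation*}
Since Assumption \ref{ass:10} enforces $b_{T/2}=O(T^{1/3})$ while $T/b_{T/2}\to\infty$, we conclude $W(\widehat{G})\overset{\prob}{\to}\infty$.

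The main obstacle is justifying the pseudo-factor consistency $\widehat{G}\approx GJ$ and the long-run variance convergence $\|\Omega(\widehat{G}) - \Omega(GJ_{0})\|\overset{\prob}{\to}0$, on which the argument above relies. The transformed series $Y=G\Theta'+\widetilde{\varepsilon}$ departs from the null setting in two ways: the loading $\Theta$ now encodes the break (so one must verify that $\Theta'\Theta/d$ has a well-separated diagonal limit, which is precisely Assumption \ref{ass:P1}), and the error $\widetilde{\varepsilon}=\widehat{P}_{I}^{k}\varepsilon$ has different pre- and post-break covariances, so the moment conditions of Assumption \ref{ass:3} must be reinterpreted for the piecewise-stationary error sequence. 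Once the analogues of \eqref{eq:HconvH0} and Proposition \ref{prop3} are in hand, obtained by rerunning the proofs with $r,F,\Lambda_{2},H$ replaced by $\npf,G,\Theta,J$, the two parts of the proposition follow by the routine arguments above.
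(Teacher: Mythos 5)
Your proposal is correct and follows essentially the same route as the paper's proof (given in the supplementary material): it uses exactly the machinery the paper sets up in Section \ref{sec:alternative} --- the pseudo-factor representation \eqref{eq:rep_with_pseudo_factors}, the consistency $\widehat{G}\approx GJ$ with $J\overset{\prob}{\to}J_0$ as the analogue of \eqref{eq:HconvH0}, the limits $D_1,D_2$ and $\mathcal{C}=J_0'(D_1-D_2)J_0$ for part (i), and the rate argument $W(\widehat{G})\gtrsim cT/b_{T/2}\to\infty$ via Assumption \ref{ass:P2} for part (ii), all obtained by rerunning the null-hypothesis arguments with $(r,F,\Lambda_2,H)$ replaced by $(\npf,G,\Theta,J)$ under Assumption \ref{ass:P1}. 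The only cosmetic slip is invoking Assumption \ref{ass:10} for $b_{T/2}=o(T)$, which is not among the proposition's stated hypotheses, though the required divergence of $T/b_{T/2}$ is implicit in the bandwidth scaling built into Assumption \ref{ass:P2}(i).
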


The derivations of \eqref{eq:rep_with_pseudo_factors_main} and the explicit forms of $D_1,D_2$ are provided in Appendix~\ref{app:alternative_derivations}.
We refer to Appendix \ref{app:C} 
in the supplementary material for the proof of Proposition \ref{propAlternative0}.

\section{Simulation study} \label{sec:simulations_study}
We examine our testing procedure in a simulation study considering the two-subject setting (Section \ref{se:sim:two-subjects}) and the change-point analysis (Section \ref{se:sim:change}). 

\subsection{Two subjects} \label{se:sim:two-subjects}
We generate data for two subjects with $r$ factors. Recall the notation 
\begin{equation} \label{eq:two-subject_factor_mdoel_for_DGPs}
\begin{aligned}
X_{t}^{1} = \Lambda_{1} F_{t}^{1} + \varepsilon_{t}^{1}, \hspace{0.2cm} t =1, \dots, T, \\
X_{t}^{2} = \Lambda_{2} F_{t}^{2} + \varepsilon_{t}^{2}, \hspace{0.2cm} t =1, \dots, T, 
\end{aligned}
\end{equation}
where $\Lambda_{k} = (\lambda_{k,ij})_{i=1,\dots,\Dim;j=1,\dots,r}$, $k=1,2$. Under the null hypothesis, there are nonsingular matrices $\Phi_k \in \RR^{r \times r}$ such that $\Lambda_k = \Lambda \Phi_k$, $k=1,2$, and some common loading matrix $\Lambda = (\lambda_{ij})_{i=1,\dots,\Dim;j=1,\dots,r}$ with full column rank.
We consider the following data generating models for the null hypothesis (NDGP):
\begin{enumerate}[label=NDGP\arabic*:, align=left]
\item 
Independent factors and errors ($r=3$):
Choose $\lambda_{ij} \sim \mathcal{N}(0,1)$. Set 
$\Phi_1 = I_r$ and $\Phi_2$ is randomly generated with eigenvalues in the interval $(0.75,1.25)$.
The factors are $F_{i,t}^{k} \sim \mathcal{N}(0,1)$.
For the errors, suppose $\varepsilon_{i,t}^{k} \sim \mathcal{N}(0,1)$, $k=1,2$.
\item 
Temporal correlation in the errors ($r=3$):
Choose $\lambda_{ij} \sim \mathcal{N}(0,1)$. The matrices $\Phi_k$, $k=1,2$ are randomly generated with eigenvalues in the interval $(0.75,1.25)$.
The factors are $F_{i,t}^{k} \sim \mathcal{N}(0,1)$.
For the errors, suppose $\varepsilon_{i,t}^{k} = \sigma_i v_{i,t}$, $\sigma_i \sim \mathcal{U}(0.5,1.5)$, $v_{i,t} = 0.5 v_{i,t-1} + e_{i,t}$, 
$e_{i,t}\sim \mathcal{N}(0,1)$.
\item 
Temporal correlation in the factors ($r=3$):
Choose $\lambda_{ij} \sim \mathcal{N}(0,1)$. The matrices $\Phi_k$, $k=1,2$ are randomly generated with eigenvalues in the interval $(0.75,1.25)$.
The factors are generated as $F_{i,t}^{k} = 0.5 F_{i,t-1}^{k} + e_{i,t}$, $e_{i,t} \sim \mathcal{N}(0,1-0.5^2)$.
For the errors, suppose $\varepsilon_{i,t}^{k} \sim \mathcal{N}(0,1)$, $k=1,2$.
\item 
Different DGPs for each subject ($r=3$):
Choose $\lambda_{i,j} \sim \mathcal{N}(0,1)$.
The matrices $\Phi_k$, $k=1,2$, are randomly generated with eigenvalues in the interval $(0.75,1.25)$. 
The factors are $F_{i,t}^{k} \sim \mathcal{N}(0,1)$, $k=1,2$.
For the errors, suppose $\varepsilon_{i,t}^{1} \sim \mathcal{N}(0,1)$
and $\varepsilon_{i,t}^{2} = \sigma_i v_{i,t}$, $\sigma_i \sim \mathcal{U}(0.5,1.5)$, $v_{i,t} = 0.5 v_{i,t-1} + e_{i,t}$, 
$e_{i,t}\sim \mathcal{N}(0,1)$.
\end{enumerate}
The corresponding empirical sizes over $1000$ replications can be found in Table \ref{tab:twosubjectssize}. NDGP1 does not allow for any serial correlation of factors or idiosyncratic errors, while NDGP2--NDGP4 allow for cross-sectional heteroskedasticity. NDGP3 assumes that the factors follow an AR$(1)$ model. All simulations show reasonable results for the test sizes.

\begin{table}[ht]
    \centering
    \begin{tabular}{|c||c|c|c|c|c|}
    \hline
         $(d,T)$ & NDGP1 & NDGP2 & NDGP3 & NDGP4 \\ \hline \hline
        (200,200) & \gradient{0.0275} & \gradient{0.0295} & \gradient{0.0505} & \gradient{0.031}   \\ \hline
        (500,500) & \gradient{0.0325} & \gradient{0.052} & \gradient{0.0495} & \gradient{0.051}    \\ \hline
    \end{tabular}
        \caption{Empirical test size for the setting with two individuals sharing the same loading matrix up to a rotation with a nonsingular matrix.}
    \label{tab:twosubjectssize}
\end{table}

The following list provides the data generating models under the alternative (ADGP):
\begin{enumerate}[label=ADGP\arabic*:, align=left]
\item 
Additive change across all loading vectors ($r=3$):
Choose $\lambda_{ij} \sim \mathcal{N}(b/2,1)$ and $\Lambda_{1} = \Lambda$, $\Lambda_{2} = \Lambda - b$.
The factors are $F_{i,t}^{k} \sim \mathcal{N}(0,1)$.
For the errors, suppose $\varepsilon_{i,t}^{k} \sim \mathcal{N}(0, 1+b^2/4)$, $k=1,2$.
\item 
Additive change across a partial set of loading vectors ($r=3$):
Choose $\lambda_{ij} \sim \mathcal{N}(b/2,1)$ and $\Lambda_{1} = \Lambda$, 
$\lambda_{2,ij} = \lambda_{1,ij} -b$ for $i=1, \dots, a \cdot \Dim$ and $j=1,\dots,r$ and $\lambda_{2,ij} = \lambda_{1,ij}$ for $i=a \cdot \Dim+1,\dots,\Dim$ and $j=1,\dots,r$.
The factors are $F_{i,t}^{k} \sim \mathcal{N}(0,1)$.
For the errors, suppose $\varepsilon_{i,t}^{k} \sim \mathcal{N}(0, 1+b^2/4)$, $k=1,2$.
\item 
Partial change in the loadings plus an invertible linear transformation ($r=4$):
Set 
$ \Lambda_{1} = \Lambda$ and $ \Lambda_{2} = \mleft( \Pi_{2} \ \Lambda \Phi \mright)$ with 
$\lambda_{ij} \sim \mathcal{N}(0,1)$. Let $\Pi_2 = (\pi_{2,ij})_{i=1,\dots,\Dim;j=1,\dots,(r-c)} $, $\pi_{2,ij} \sim \mathcal{N}(0,2)$ and $\Phi \in \RR^{r \times c}$ randomly generated. 
The factors are $F_{i,t}^{k} \sim \mathcal{N}(0,1)$.
For the errors, suppose $\varepsilon_{i,t}^{k} \sim \mathcal{N}(0,1)$, $k=1,2$.
\item Loading matrices drawn from different distributions ($r=4$): Same as ADGP3 but with $\lambda_{2,ij} \sim \mathcal{C}auchy(0,1)$.
\end{enumerate}

\begin{table}[ht]
    \centering
    \begin{tabular}{|c||c|c||c|c||c|c||c|c|}
    \hline
         $(d,T)$ & $b$ & ADGP1 		& $a$ & ADGP2 		& $c$ & ADGP3 		& $c$ & ADGP4  \\ \hline \hline
        (200,200) & $1/3$ &\gradient{0.251}  & 0.2 & \gradient{0.3775} & 1	&\gradient{1}   		& 1 & \gradient{1}	\\ \hline
        (500,500) & $1/3$ &\gradient{0.6815}& 0.2 & \gradient{0.8465} & 1 	&\gradient{1}  		& 1 & \gradient{1}    \\ \hline \hline
        (200,200) & $2/3$ &\gradient{1}  & 0.4 & \gradient{0.9885} & 2	&\gradient{1}   	& 2 & \gradient{1} \\ \hline
        (500,500) & $2/3$ &\gradient{1} 		& 0.4 & \gradient{1}  & 2	&\gradient{1}   	& 2 & \gradient{1}  \\ \hline \hline
        (200,200) & $1$    &\gradient{1} 	& 0.6 & \gradient{1} & 3	&\gradient{1}    	& 3 & \gradient{1} \\ \hline
        (500,500) & $1$    &\gradient{1} 		& 0.6 & \gradient{1} & 3	&\gradient{1}    	& 3 & \gradient{1}  \\ \hline \hline       
    \end{tabular}
        \caption{Empirical test power for the setting with two individuals under different alternatives.}
    \label{tab:twosubjectspower}
\end{table}

The corresponding empirical powers can be found in Table \ref{tab:twosubjectspower}. 
ADGP1 focuses on an additive change in the loadings, for varying magnitudes $b \in \{ 1/3, 2/3, 1\}$. The model is very similar to DGP A1 in \cite{han2015tests} for changes in the loadings over time.
Under ADGP1, our test does not have good power when the magnitude of the break is small. With larger magnitude of the break, the test becomes more powerful, though the power is not monotonically increasing in $b$. This observation aligns with observations made for DGP A1 in \cite{han2015tests}.

Similarly, ADGP2 aligns with DGP A2 in \cite{han2015tests} and considers additive differences in a fraction of the loadings. The proportion of loadings that are different increases with $a \in \{ 0.2, 0.4, 0.6\}$ and $b$ is set to $1$. Our test struggles to achieve good power for small fraction $a =0.2$. However, it increases with sample size and $a$. 

In ADGP3, we consider partial changes up to an invertible linear transformation in the loadings concatenated with linearly independent column vectors. The proportion of the linearly independent vectors decreases with $c \in \{1,2,3\}$.

The last model, ADGP4, is the simplest to detect since both loading matrices are drawn from different distributions. Our test does not have any issues to detect those differences.

\subsection{Change-point} \label{se:sim:change}
We generate data with a change at $\lfloor \pi T \rfloor$ and  $r_1$ pre- and $r_2$ post-break factors:
\begin{equation} \label{eq:cp_model_simulations}
\begin{aligned}
X_{t} = 
\begin{cases}
\Lambda_{1} F_{t} + \varepsilon_{t}, \hspace{0.2cm} t =1, \dots, \lfloor \pi T \rfloor, \\
\Lambda_{2} F_{t} + \varepsilon_{t}, \hspace{0.2cm} t =\lfloor \pi T \rfloor + 1, \dots, T.
\end{cases}
\end{aligned}
\end{equation}
We consider the following settings for the factors and loadings $\Lambda_{1}, \Lambda_{2}$ under the null hypothesis with $r=r_{1}=r_{2}=3$:
\begin{enumerate}[label=NDGPcp\arabic*:, align=left]
\item Independent factors and errors:
Choose $\lambda_{ij} \sim \mathcal{N}(0,1)$. Set 
$\Phi_1 = I_r$ and $\Phi_2$ is randomly generated with eigenvalues in the interval $(0.75,1.25)$.
The factors are $F_{i,t} \sim \mathcal{N}(0,1)$.
For the error, suppose $\varepsilon_{i,t} \sim \mathcal{N}(0,1)$.
\item Temporal correlation in the errors: 
Choose $\lambda_{ij} \sim \mathcal{N}(0,1)$. The matrices $\Phi_k$, $k=1,2$, are randomly generated with eigenvalues in the interval $(0.75,1.25)$.
The factors are $F_{i,t} \sim \mathcal{N}(0,1)$.
For the errors, suppose $\varepsilon_{i,t} = \sigma_i v_{i,t}$, $\sigma_i \sim \mathcal{U}(0.5,1.5)$, $v_{i,t} = 0.5 v_{i,t-1} + e_{i,t}$, 
$e_{i,t}\sim \mathcal{N}(0,1)$.
\end{enumerate}
Under the alternative with $r=r_{1}=r_{2}=3$:
\begin{enumerate}[label=ADGPcp\arabic*:, align=left]
\item 
Additive change across all loading vectors:
Choose $\lambda_{ij} \sim \mathcal{N}(b/2,1)$ and $\Lambda_{1} = \Lambda$, $\Lambda_{2} = \Lambda - b$.
The factors are $F_{i,t} \sim \mathcal{N}(0,1)$.
For the errors, suppose $\varepsilon_{i,t} \sim \mathcal{N}(0, 1+b^2/4)$, with $b=1$.
\item 
Additive change across a partial set of loading vectors: 
Choose $\lambda_{ij} \sim \mathcal{N}(b/2,1)$ and $\Lambda_{1} = \Lambda$, 
$\lambda_{2,ij} = \lambda_{1,ij} -b$ for $i=1, \dots, a \cdot \Dim$ and $j=1,\dots,r$ and $\lambda_{2,ij} = \lambda_{1,ij}$ for $i=a \cdot \Dim+1,\dots,\Dim$ and $j=1,\dots,r$.
The factors are $F_{i,t} \sim \mathcal{N}(0,1)$.
For the errors, suppose $\varepsilon_{i,t} \sim \mathcal{N}(0,1+b^2/4)$, with $b=1$ and $a=0.4$.
\item 
Partial change in the loadings plus a change up to an invertible linear transformation: 
We suppose here that $r=4$
Set 
$ \Lambda_{1} = \Lambda$ and $ \Lambda_{2} = \mleft( \Pi_{2} \ \Lambda \Phi \mright)$ with 
$\lambda_{ij} \sim \mathcal{N}(0,1)$. Let $\Pi_2 = (\pi_{2,ij})_{i=1,\dots,\Dim; j=1,\dots,(r-c)} $, $\pi_{ij} \sim \mathcal{C}auchy(0,1)$ and $\Phi \in \RR^{r \times c}$ randomly generated. 
The factors are $F_{i,t} \sim \mathcal{N}(0,1)$.
For the errors, suppose $\varepsilon_{i,t} \sim \mathcal{N}(0,1)$.
\end{enumerate}

Table \ref{tab:cpsize} shows the rejection rates for the different DGPs under the null hypothesis. 
As observed in a comprehensive comparison of different change-point tests in \cite{su2017time}, the Wald type test of \cite{han2015tests} tends to under-reject the null hypothesis. This can also be observed in the simulation results for our test.

Note that the effective sample size in the change-point setting is smaller than in the two-subject model setting since the pre- and post-break samples are of length $\lfloor T\pi \rfloor$ and $T-\lfloor T\pi \rfloor$. The change-point test is then applied to a transformed series of length $\lfloor T\pi \rfloor$.
In Table \ref{tab:cpsize}, one can see that the test size increases with the sample size and with changes closer to the boundary. A smaller $\pi$ has the effect that we have more data available to create our artificial change-point by transforming the longer sequence of data $1-\lfloor T\pi \rfloor$. The estimation of the projection matrices happens based on $\lfloor T\pi \rfloor$th data points and does not seem to be impacted by fewer data points.
\begin{table}[ht]
    \centering
    \begin{tabular}{|c|c||c|c|}
    \hline
         $(d,T)$ & $\pi$ & NDGPcp1 & NDGPcp2 \\ \hline \hline
        (200,200) & 	0.5	& \gradient{0.018} & \gradient{0.0285}    \\ \hline
        (500,500) & 	0.5	& \gradient{0.035} & \gradient{0.0425}   \\ \hline \hline
        (200,200) & 	0.4	& \gradient{0.0215} & \gradient{0.034}    \\ \hline
        (500,500) & 	0.4	& \gradient{0.025} & \gradient{0.044}     \\ \hline \hline
        (200,200) & 	0.3	& \gradient{0.016} & \gradient{0.044}   \\ \hline
        (500,500) & 	0.3	& \gradient{0.037} & \gradient{0.0495}    \\ \hline \hline
        (200,200) & 	0.2	& \gradient{0.023} & \gradient{0.0585}   \\ \hline
        (500,500) & 	0.2	& \gradient{0.038} & \gradient{0.049}    \\ \hline \hline
    \end{tabular}
        \caption{Empirical test size for different change-point locations and different DGPs under the hypothesis.}
    \label{tab:cpsize}
\end{table}

For comparison, we applied the change-point test of \cite{han2015tests} to NDGPcp1 with $\pi=0.5$. The resulting values for $d,T=200$ and $d,T=500$ are respectively $0.8225$ and $0.9995$. This confirms that while our test does not reject due to a change in form of a linear transformation in the loadings, a regular test for structural changes in the loadings does.

\begin{table}[ht]
    \centering
    \begin{tabular}{|c|c||c|c|c|}
    \hline
         $(d,T)$ & $\pi$ & ADGPcp1 & ADGPcp2 			& ADGPcp3 \\ \hline \hline
        (200,200) & 	0.5	& \gradient{0.975} & \gradient{0.662}	 & \gradient{0.712}    \\ \hline
        (500,500) & 	0.5	& \gradient{1} & \gradient{0.9905}	    & \gradient{1} \\ \hline \hline
        (200,200) & 	0.4	& \gradient{0.999} & \gradient{0.799}& \gradient{0.9985}    \\ \hline
        (500,500) & 	0.4	& \gradient{1} & \gradient{0.998}     	 & \gradient{1} \\ \hline \hline
        (200,200) & 	0.3	& \gradient{1} & \gradient{0.8685}	 & \gradient{1} \\ \hline
        (500,500) & 	0.3	& \gradient{1} & \gradient{0.9995}	 & \gradient{1} \\ \hline \hline
        (200,200) & 	0.2	& \gradient{1} & \gradient{0.9275}	& \gradient{1}  \\ \hline
        (500,500) & 	0.2	& \gradient{1} & \gradient{0.9995}	& \gradient{1}  \\ \hline \hline
    \end{tabular}
        \caption{Empirical test power for different change-point locations.}
    \label{tab:cppower}
\end{table}
In Table \ref{tab:cppower}, one can find the rejections rates under the alternative.
For ADGPcp2 the test suffers under the relatively small sample size $T=200$. However, we achieve good power for the other two DGPs.

\section{Data application: US macroeconomic dataset of Stock and Watson} \label{sec:data_application}
To illustrate our procedure, we study the US macroeconomic dataset of \cite{stock2009forecasting}. The original dataset contains 108 monthly and 79 quarterly time series of US nominal and real variables, including prices, interest rates, money and credit aggregates, stock prices, exchange rates. We use here preprocessed data following the suggestions in Table A.1 in \cite{stock2009forecasting}. 
The US quarterly data are taken from the DRI/McGraw-Hill Basic Economics database of 1999.\footnote{https://dataverse.unc.edu/dataset.xhtml?persistentId=hdl:1902.29/D-17267} Following \cite{stock2009forecasting}, we then removed some high level aggregates related by identities to the lower level sub-aggregates and ended up with $d=109$ time series, spanning from 1959:Q3--2006:Q4 and having a length of $T=190$.

This dataset has been studied in many methodological papers concerning factor models. We refer to Section 6 in \cite{baltagi2021estimating} for a detailed analysis of the dataset, including the estimation of multiple change-points in the loading matrices and the number of factors in each sub-interval. 
\cite{baltagi2021estimating} detected two change-points 1979:Q1 and 1983:Q4.
In particular, \cite{baltagi2021estimating} hypothesized that the first break could be due to the impact of the Iranian revolution on the oil price and US inflation.
The second break could be due to the great moderation, which is also considered in \cite{chen2014detecting} and \cite{ma2018estimation}.
The corresponding estimated number of factors in the three regimes are $r_{1} = 3, r_{2} = 3$ and $r_{3} = 4$, respectively.
Our test allows one now to study what type of breaks occur. 

First, we apply our test modified for the analysis of change-points to the first and second regime. Our test rejects the null hypothesis at a $5\%$ significance level. 
In other words, we reject the hypothesis that the loading matrices in the first and second regimes can be expressed as linear combinations of each other. This gives evidence that a non-negligible portion of the cross sections have linearly independent loadings.

Secondly, we apply our test for different numbers of factors as presented. For this modification of our test, we refer the reader to Appendix \ref{se:Different numbers of factors}. The test is here adapted to the change-point setting to the second and third regimes. In this case, our test does not reject the null hypothesis at a $5\%$ significance level. That is, there is enough evidence to conclude that there is a change in the loading matrices, but not that those loadings are linearly independent.
As mentioned above, the second break has been considered in many works, including \cite{chen2014detecting,ma2018estimation,baltagi2021estimating}. It has been hypothesized to be due to the great moderation. The great moderation refers to a period of decreased macroeconomic volatility experienced in the United States starting in the 1980s. 
For example, during this period, the standard deviation of quarterly real gross domestic product declined by half and the standard deviation of inflation declined by two-thirds, according to figures reported by former US Federal Reserve Chair Ben Bernanke.\footnote{The Federal Reserve Board. ``Remarks by Governor Ben S. Bernanke. https://www.federalreserve.gov/boarddocs/speeches/2004/20040220/".}
Our test might indicate that the changes in volatility happened in a somewhat more ``structured'' way alluding to similarities between the pre- and post-break volatilities of the factors up to transformation with an invertible matrix. We also refer to Example \ref{example:1} for one possibility of what a more ``structured'' change in volatility can look like.


\section{Conclusions}\label{sec:conclusion}
This work advances the analysis of factor models in several directions. On the one hand, the proposed hypothesis test is motivated by learning relationships across multiple subjects, also known as multi-subject learning. In particular, it allows discovering whether structural differences between two models are due to a linear transformation. 
On the other hand, our test pushes forward the framework of testing for structural changes over time in high-dimensional factor models. Our test allows to further characterize what type of change occurs. 

The questions of interest required the development of a new hypothesis test and its theoretical investigation under the null hypothesis and the alternative. Our simulation study indicates a good numerical performance of our test. In addition, we present interesting new findings on the celebrated US macroeconomic data set of Stock and Watson.

Despite our contributions, there is still a lot of room for future work on partial testing, in particular, on finding an efficient way of testing whether a partial set of vectors in the loadings is shared across individuals or between pre- and post-break point. Another direction is the investigation of other change-point tests. Since our procedure does not rely on a specific change-point test, one could use our proposed data transformation to apply other such tests. Furthermore, it would be of interest to test whether the rotational transformation matrix is of a certain form, for example diagonal.

\small
\bibliographystyle{plainnat}
\bibliography{multiFactor_bib}

\newpage

\makeatletter
\gdef\@thanks{}
\makeatother

\title{Supplementary material for ``Testing common structure in high-dimensional factor models:
change-point and two-sample procedures"
\thanks{AMS subject classification. Primary: 62H25, 62M10. Secondary: 62F03.}
\thanks{Keywords: High-dimensional time series, factor models, multi-task learning, change-point analysis.}
}
\author{
Marie-Christine D\"uker \\ Technical University of Munich                            \and
Vladas Pipiras               \\ University of North Carolina}
\date{\today}

\maketitle

\bigskip

This supplement contains additional data analysis, methodological extensions, simulation evidence, and detailed proofs for the article ``Testing common structure in high-dimensional factor models: change-point and two-sample procedures.''

As explained in the main text, our methodology proceeds by transforming a series that is assumed to follow a high-dimensional factor model and then applying an available change-point test to the transformed data. A central theoretical step is therefore to show that the PCA estimators based on the transformed series remain consistent. These results are stated in Propositions \ref{prop2.1} and \ref{prop2.1.0}, and their proofs are provided here.

The supplement is organized as follows. Appendix \ref{sec:data_application_2} presents an additional data application for US and South Korea macroeconomic indices. Appendix \ref{se:extensionsapplications} discusses variations of the testing approach, including the case of different numbers of factors, a supremum-based approach, and alternative test statistics. Appendix \ref{app:CC} reports further simulation results for these extensions and provides additional discussion of the simulation design. Appendix \ref{app:alternative_derivations} supplements Section \ref{sec:alternative} by developing the derivations under the alternative hypothesis in greater detail.

The remaining appendices focus on the theoretical arguments underlying the main results. Appendix \ref{se:appB} establishes auxiliary results under the null hypothesis, and Appendix \ref{se:consistency_PCA_NH} gives the consistency results for the PCA estimators under the null. These results are then used to derive the asymptotic behavior of our test statistic in Propositions \ref{prop1}--\ref{prop3}. Appendix \ref{app:D} contains the detailed proofs of the main results from Appendix \ref{app:D}, while Appendix \ref{app:H} collects additional auxiliary lemmas used in those proofs. The result under the alternative hypothesis, stated in Proposition \ref{propAlternative0}, is treated in Appendix \ref{app:C}, which presents the corresponding analysis, auxiliary results, and consistency statements under the alternative. Finally, Appendix \ref{se:matrixnorminequalities} collects matrix norm inequalities used throughout the supplement.

More broadly, this supplement serves two purposes. First, it provides a complete record of the supplementary empirical results, methodological extensions, and proofs underlying the procedures proposed in the paper. Second, it clarifies how the proof strategy in \cite{han2015tests} can be adapted to our framework when the estimation of the underlying factor model is developed under the assumptions of \cite{doz2011two}. Our estimation procedure and consistency proofs follow \cite{doz2011two}, whereas \cite{han2015tests} build on \cite{bai2003inferential}. Although these approaches lead to the same type of conclusions, they rely on different assumptions, and one aim of this supplement is to show that the arguments in \cite{han2015tests} continue to hold in the setting considered here.

\appendix

\section{Additional data application (US and South Korea macroeconomic indices)} \label{sec:data_application_2}
To illustrate our two-subject procedure, we consider 65 macroeconomic monthly indices for the United States and South Korea from July 2012 to December 2024. 
The US data are taken from FRED \cite{stock2009forecasting}.
The South Korea dataset was introduced in \cite{baek2025kred} and presents a novel dataset with the same indices as the FRED data up to some minor differences. The original data consist of 71 macroeconomic indices, collected monthly for South Korea. 
Since some of the indicators do not exactly match in the way they were collected, we focus on those that are the same for both the US and South Korea. 
In addition, the data after early 2020 are affected by the COVID-19 pandemic. COVID-19 is associated with outliers in the series which account for much of the variability and affect the nature of factors. In order to study the series without this effect, we consider the period July 2012 to December 2019.  
Our resulting dataset consists of two multivariate time series of sample size $T = 90$ and $d=65$.

We also note that we performed transformations according to \cite{stock2009forecasting} to achieve stationarity.
These transformations are taking the logarithm, the differences and removing seasonality. We applied the same transformations as proposed in \cite{stock2009forecasting} to both the US and the South Korea data for the respective indices.

Since the two series are expected to have some dependence, we note that our test does not require independence across subjects. Our procedure rather aims to test whether the dependence across the two panels can be explained by similarities among the corresponding loading matrices.

We first apply an information criterion introduced in \cite{bai2002determining} to estimate the number of factors. The criterion is applied to the US data ($\widehat{r}_1 = 7$), the South Korea data ($\widehat{r}_2=6$) and after concatenating across time ($\widehat{r} = 7$). Based on those estimates, we study two different scenarios in Table \ref{ta:example2} below. We consider the case when both series are explained by $r_1=r_2=7$ factors and when $r_1=7, r_2=6$.

We consider three tests. First, we test the hypothesis $H_0: \Lambda_1 = \Lambda_2$ and then $H_0: \Lambda_1 = \Lambda_2 \Phi$ assuming that the number of factors for both series are $r_1 = r_2 = 7$. For the hypothesis $H_0: \Lambda_1 = \Lambda_2$, we concatenate the data over time and test for changes in the loadings at the concatenation point using the test by \cite{han2015tests}. Subsequently, we test for $H_0: \Lambda_1 = \Lambda_2 \Phi$ using our proposed test. Finally, we consider the test proposed in Appendix \ref{se:Different numbers of factors} for different numbers of factors ($r_1 = 7, r_2 = 6$).

Table \ref{ta:example2} shows the respective values of the test statistics.
The corresponding critical values are given in the last column. The colored cell indicates that the test statistic exceeds the respective critical value and we reject the hypothesis. One can observe that $H_0: \Lambda_1 = \Lambda_2$ is rejected but our test does not reject the hypothesis of the loadings being the same up to rotation. This kind of result could indicate that the factors driving both datasets only differ by changes in volatility. 
This is not unexpected since South Korea is typically associated with extensive price regulation (e.g., energy, utilities), a credible inflation-targeting framework, and financial stability and foreign exchange policies that damp short-run shocks. In contrast, the US’s more market-driven price setting, broader exposure to commodity cycles, and more market-based financial system allow faster pass-through of shocks, expected to lead to higher volatility.

\begin{table}[h!]
\centering
\begin{tabular}{|c||c||c|c||c||c|}
\hline
Time period	&	$H_0$	& $r_1$ & $r_2$	&  $W(\widehat{F})$& $cv_{r_2}$						\\ \hline\hline
\multirow{3}{*}{July 2012--Dec. 2019}	&	$\Lambda_1 = \Lambda_2$		& 7 & 7	& \cellcolor{lightpurple} 41.809 & \multirow{2}{*}{41.337}		\\ 
\cline{2-2} 
                    &	$\Lambda_1 = \Lambda_2 \Phi$		&  7 & 	7	& 30.416 &\\ \cline{2-2}\cline{6-6}
                         &	$\Lambda_1 = \Lambda_2 \widetilde{\Phi}$	& 	7	 & 	6	& 24.929 &	32.671	\\ \hline
\end{tabular}
\caption{The values of the test statistics for $H_0:\Lambda_1 = \Lambda_2$, $H_0:\Lambda_1 = \Lambda_2 \Phi$ and $H_0:\Lambda_1 = \Lambda_2 \widetilde{\Phi}$, $\widetilde{\Phi} \in \RR^{r_2 \times r_1}$, and the corresponding critical values $cv_{r_2} = \chi^2(r_2(r_2+1)/2)$.
The colored cell indicates that the test statistic exceeds the respective critical value and we reject the null hypothesis. The estimated numbers of factors are $\widehat{r}_1 = 7$, $\widehat{r}_2=6$ $\widehat{r} = 7$ based on information criterion introduced in \cite{bai2002determining}.} 
\label{ta:example2}
\end{table}




\section{Variations of the testing approach} \label{se:extensionsapplications}
 
In Appendix \ref{se:Different numbers of factors}, we discuss how our approach can be generalized to a setting where the two factor models are driven by different numbers of factors. In Appendix \ref{se:alternative_approach}, we present an extension of our approach using a different type of change-point test and in Appendix \ref{se:alternativetests} we present alternative ways of transforming the data. We end with a discussion of our test in Appendix \ref{se:Discussion}.
 
 \subsection{Different numbers of factors} \label{se:Different numbers of factors}
 
In this section, we allow the number of factors to be different for the two series under consideration. Suppose $\Lambda_{1} \in \RR^{\Dim \times r_{1}}$ and $\Lambda_{2} \in \RR^{\Dim \times r_{2}}$ with $r_{2} \leq r_{1}$ and 
\begin{equation} \label{eq:new1}
\Lambda_{2} = \Lambda_{1} \Phi_{2},
\end{equation}
where $\Phi_{2} \in \RR^{ r_{1} \times r_{2}} $ is a matrix of full column rank.

Note that our procedure for $r_{1} = r_{2}$ relies heavily on the invariance under invertible transformation of the projection matrices; see \eqref{eq:projection_invariance_for_each_P}. In particular, it ensures that the error terms of the new series \eqref{eq:YT2def} are stationary over time; see \eqref{eq:motivation_null_population_errors}.
This property is not satisfied when $r_{2} \neq r_{1}$, in particular, even under the null hypothesis, 
 \begin{equation*} 
P_{1} 
=
\Lambda_{1} ( \Lambda'_{1} \Lambda_{1} )^{-1} \Lambda'_{1}
\neq 
\Lambda_{1}\Phi_{2} ( \Phi'_{2} \Lambda'_{1} \Lambda_{1} \Phi_{2} )^{-1} \Phi'_{2} \Lambda'_{1} 
=P_{2}.
 \end{equation*}
 Alternatively, we can take advantage of the relationship
 \begin{equation} \label{eq:new3}
 \begin{aligned}
 P_{2} P_{1} 
 &= 
\Lambda_{2} ( \Lambda'_{2} \Lambda_{2} )^{-1} \Lambda'_{2} \Lambda_{1} ( \Lambda'_{1} \Lambda_{1} )^{-1} \Lambda'_{1}
 \\&= 
\Lambda_{2} ( \Lambda'_{2} \Lambda_{2} )^{-1} \Phi_{2}' \Lambda_{1}' \Lambda_{1} ( \Lambda'_{1} \Lambda_{1} )^{-1} \Lambda'_{1}
 = 
\Lambda_{2} ( \Lambda'_{2} \Lambda_{2} )^{-1} \Lambda'_{2}
=
P_{2}.
 \end{aligned}
 \end{equation}
Then, we can follow the same approach as in \eqref{eq:YT2def} and transform the series as
\begin{equation} \label{eq:new4}
Y_{1:T_{2}} = \mleft( (\widehat{P}_{2}\widehat{P}_{1}) X^{2}_{1:T_{2}/2}, \widehat{P}_{2} X^{2}_{ (T_{2}/2 +1) :T_{2}} \mright),
\end{equation}
where $\widehat{P}_{1}, \widehat{P}_{2}$ are as in \eqref{eq:def_Phat12} such that at the population level
\begin{equation} \label{eq:new4.1}
\begin{aligned}
P_{2} P_{1} X_{t}^2 
&\approx P_{2} P_{1} \Lambda_{2} F_{t}^2 
= \Lambda_{2} F_{t}^2
\hspace{0.2cm} \text{ for } t =1,\dots, T_{2}/2, \\
P_{2} X_{t}^2 
&\approx P_{2} \Lambda_{2} F_{t}^2 = \Lambda_{2} F_{t}^2
\hspace{0.2cm} \text{ for } t =T_{2}/2+1,\dots, T_{2}.
\end{aligned}
\end{equation}
Note that the roles of the first and the second series are interchangeable. 

Since $P_{2} P_{1} = P_{2}$ under the null hypothesis, the idiosyncratic errors of the transformed series $Y_{1:T_{2}}$ are still stationary. In addition, the errors have covariances
\begin{equation} \label{eq:new5}
\begin{aligned}
P_{2} P_{1} \E (\varepsilon_{t}^2 \varepsilon_{t}^{2'}) (P_{2} P_{1} )' = P_{2} \Sigma_{\varepsilon} P_{2}', \\
P_{2} \E (\varepsilon_{t}^2 \varepsilon_{t}^{2'}) P_{2}' = P_{2} \Sigma_{\varepsilon} P_{2}'.
\end{aligned}
\end{equation}
Due to \eqref{eq:new3}, the series $Y_{1:T_{2}} $ in \eqref{eq:new4} is expected to have the same loadings and factors under the null \eqref{eq:new1} as $X^{2}_{1:T_{2}} $. However, given \eqref{eq:new5}, the errors have a different covariance matrix than the original series.

Under the alternative, we get
\begin{equation} \label{eq:new6}
\begin{aligned}
P_{2} P_{1} X_{t}^2 
&\approx P_{2} P_{1} \Lambda_{2} F_{t}^2 
\neq \Lambda_{2} F_{t}^2
\hspace{0.2cm} \text{ for } t =1,\dots, T_{2}/2, \\
P_{2} X_{t}^2 
&\approx P_{2} \Lambda_{2} F_{t}^2 = \Lambda_{2} F_{t}^2
\hspace{0.2cm} \text{ for } t =T_{2}/2+1,\dots, T_{2}.
\end{aligned}
\end{equation}
The illustrated behavior of $Y_{1:T_{2}}$ at the population level in \eqref{eq:new4.1} and \eqref{eq:new6} suggests that for the transformed series $Y_{1:T_{2}}$ our hypothesis testing problem is equivalent to testing for a change at time $T_{2}/2$ in the loadings. 
In other words, we can proceed as in Section \ref{sec3.2} and apply an available change-point test as the one of \cite{han2015tests}.

\subsection{Supremum approach} \label{se:alternative_approach}
Our approach suggests to split the data into two halves and applies different transformations to each of them. This way, we artificially create a change in the loadings under the alternative, while there is no change under the null hypothesis. In particular, this means that in the case of a change, we know its location. We therefore apply a test comparing the pre- and post-sample means of the covariances of the factors using the exact change-point.

However, a potential alternative approach is to consider the supremum over a certain window around the artificially created change-point location $T/2$. 
The hope is to improve our test's power while not loosing too much in terms of its size. However, a trade-off between power and size is expected, at least to some degree. See also the corresponding simulation study in Appendix \ref{se:sim:alternative_approach} below.

For the supremum approach, we introduce
\begin{equation*} \label{eq:def:V(G)_pi}
V(\pi, \widehat{F}) = \vechop \mleft( 
\sqrt{T} \Bigg(
\frac{1}{ \lfloor \pi T \rfloor } \sum_{t=1}^{\lfloor \pi T \rfloor} \widehat{F}_{t} \widehat{F}_{t}' - \frac{1}{T-\lfloor \pi T \rfloor} \sum_{t=\lfloor \pi T \rfloor + 1}^{T} \widehat{F}_{t} \widehat{F}_{t}' \Bigg) \mright).
\end{equation*}
To construct a Wald type test statistic, we also need the sample long-run variance of $V(F H_{0})$.
We use the following estimator for the long-run variance \eqref{eq:Omega},
\begin{equation*} \label{eq:def:Omega(G)_pi}
\Omega(\pi, F H_{0}) = \mleft( \frac{1}{\pi} + \frac{1}{1-\pi} \mright) \Omega(F H_{0})
\end{equation*}
with $\Omega(F H_{0})$ as in \eqref{eq:def:Omega(G)}.
Then, the Wald type test statistic across a certain interval can be defined as
\begin{equation} \label{eq:def:Wald_pi}
\sup_{ \pi \in [ \pi_0, 1- \pi_0]} W(\pi,\widehat{F}) 
\hspace{0.2cm}
\text{ with }
\hspace{0.2cm}
W(\pi,\widehat{F}) = V(\pi,\widehat{F})' \Omega^{-1}(\pi,\widehat{F}) V(\pi,\widehat{F})
\end{equation}
for $\pi_0 \in (0,\frac{1}{2}]$. The test statistic is expected to satisfy the following convergence under the null hypothesis,
\begin{equation}\label{eq:con:Wald_pi}
\sup_{ \pi \in [ \pi_0, 1- \pi_0]} W(\pi,\widehat{F}) \overset{\distr}{\to} \sup_{ \pi \in [ \pi_0, 1- \pi_0]} Q_{p}(\pi)
\end{equation}
with $Q_{p}(\pi) = (B_{p}(\pi) - \pi B_{p}(1) )'(B_{p}(\pi) - \pi B_{p}(1) )/ (\pi(1-\pi))$ and $B_{p}$ is a $p$-vector of independent Brownian motions on $[0,1]$ restricted to $[ \pi_0, 1- \pi_0] \in (0,1)$, $p = r(r+1)/2$.
The respective critical values can be found in Table 1 in \cite{andrews1993tests}.

As pointed out in Section \ref{se:roadmap}, the proof of the asymptotic behavior of our test statistic can presumably be applied to any statistic for testing changes in the loadings. That is also expected for the convergence in \eqref{eq:con:Wald_pi}.

\subsection{Other test statistics} \label{se:alternativetests}
The suggested way of applying a projection based transformation is one possibility to test for rotational changes in the loadings. We propose here some other test statistics. A simulation study comparing the different statistics is provided in Appendix \ref{se:otherstatistics_sim}.

\begin{enumerate}[label=\textit{Test \arabic*:}, ref=\arabic*, align=left]
\item Test based on the transformation \eqref{eq:transfomedseriesnotationchange1}. This is the test that has been used throughout the paper and is only mentioned here for completeness.
\item The simplest alternative is to base the analysis on the first series, i.e., we use the transformation \eqref{eq:transfomedseriesnotationchange2} instead of the \eqref{eq:transfomedseriesnotationchange1} and apply then the regular change-point test.
\item A symmetrical approach combines Tests 1 and 2 by considering the sum of the two resulting test statistics, that is,
\begin{equation*}
W(\widehat{F}_1) + W(\widehat{F}_2)
\end{equation*}
with
\begin{equation*}
W(\widehat{F}) = V(\widehat{F})' \Omega^{-1}(\widehat{F}) V(\widehat{F}).
\end{equation*}
The expected asymptotic behavior of the test statistic is 
\begin{equation*}
W(\widehat{F}_1) + W(\widehat{F}_2) \overset{\distr}{\to} \chi^{2}(r(r+1)).
\end{equation*}
\end{enumerate}
Tests 1--3 are all based on sample splits. That is, given two datasets, we (i) split one of the two series into two halves and (ii) use the other series to estimate a projection matrix that is used to transform the first dataset and to create an artificial break at the splitting point. 
Our simulation study in Appendix \ref{se:otherstatistics_sim} seems to suggest that tests based on this procedure seem to perform best when the longer series is used for step (i), i.e., the sample split and the shorter series is used for (ii), i.e., estimating the projection matrix for transformation. The simulation study in Appendix \ref{se:otherstatistics_sim} considers two datasets of different length. Test 1 uses the longer series for (i), the sample split, and Test 2 uses the shorter series. Test 1 seems to perform better than Test 2 in size and power. Test 3 performs similar to Test 1 and does not seem to have any obvious advantage.

\subsection{Discussion on alternative approaches} \label{se:Discussion}
Recall the three different types of breaks in Section \ref{se:applicationCPs}. In order to characterize the behavior of our test statistic under the alternatives, we wrote Types \ref{item:Type1} and \ref{item:Type2} in terms of so-called pseudo-factors in Section \ref{sec:alternative}. Similarly, one can rewrite what happens under Type \ref{item:Type2} (our null hypothesis). Recall from Section \ref{sec:multi_factor} that Type \ref{item:Type2} describes how the loadings undergo a rotational change.

\textit{Type \ref{item:Type2}:} Suppose $\Theta_2 = \Theta_{1}C$ with $C$ being nonsingular. Then the model can be put into the form \eqref{eq:rep_with_pseudo_factors} with the matrix $(B, C)$ in \eqref{eq:rep_with_pseudo_factors} having column rank $r$:
\begin{align*}
\begin{pmatrix}
Z^{b} \\
Z^{a}
 \end{pmatrix}
&=
 \begin{pmatrix}
F^{b} \Theta_{1}' \\
F^{a}  \Theta_{2}'
 \end{pmatrix}
 + \eta
=  
 \begin{pmatrix}
F^{b} \Theta_{1}' \\
F^{a} C' \Theta_{1}'
 \end{pmatrix}
 + \eta
  \\& =  
 \begin{pmatrix}
F^{b} \\
F^{a} C'
 \end{pmatrix}
\Theta_{1}'
 + \eta
= G \Theta' + \eta
 \end{align*}
 with $\Theta = \Theta_{1}$, $B = I_{r}$, and $G^{b} = F^{b}  $, $G^{a} = F^{a}$.
Given that the different types of breaks are characterized through the column rank of the matrix $(B,C)$, a more natural way of distinguishing seems to be a rank test on the covariances of the latent factors. This would also allow us to test whether only a partial set of vectors is shared across individuals. There are several works discussing issues of testing for the rank of a matrix; e.g., \cite{donald2010rank}.
A reliable estimator for the rank, however, would still only answer the question of how many vectors are shared across individuals but not which ones. 

In particular, a rank-type test approach can be applied directly to the data instead of using the transformed series \eqref{eq:YT2def}. The idea is to concatenate the two series $\{ X_t^1 \}$ and $\{ X_t^2 \}$ over time and to do PCA estimation based on $\npf = 2r$ pseudo-factors with $r$ being the original number of factors. Based on the PCA estimator, one can then estimate the rank of the covariances of the latent factors. The rank takes values from $r$ to $2r$. The rank equal to $r$ corresponds to a rotational change in the loadings as under our null hypothesis. With increasing rank the number of column vectors that are not shared across loadings increases, where a rank of $2r$ corresponds to no column vectors being shared.
More formally, let  $\Lambda_{k} = (\Lambda \Phi_k, \Pi_k)$, $k=1,2$ with $\Lambda \in \RR^{d \times r}$, $\Phi_k \in \RR^{r \times r_k}$, $\Pi_k \in \RR^{d \times (r-r_k)}$ and $F_{1:T}^{k} = (F_{1}^{k'}, \dots, F_{T}^{k'})'$. Then, 
 \begin{align}
 Y &=  
 \begin{pmatrix}
F_{1:T}^{1}  & 0\\
0 & F_{1:T}^{2}
\end{pmatrix}
\begin{pmatrix}
\Lambda_{1}' \\
\Lambda_{2}'
\end{pmatrix}
+
\begin{pmatrix}
\varepsilon^1 \\
\varepsilon^2
\end{pmatrix}
=  
\begin{pmatrix}
F_{1:T}^{1}  & 0\\
0 & F_{1:T}^{2}
\end{pmatrix}
\begin{pmatrix}
\Phi_{1}' \Lambda' \\
\Pi_{1}' \\
\Phi_{2}' \Lambda' \\
\Pi_{2}' \\
\end{pmatrix}
+
\begin{pmatrix}
\varepsilon^1 \\
\varepsilon^2
\end{pmatrix}
\nonumber
\\& = 
\begin{pmatrix}
F_{1:s_{1},1}^{1'}\Phi_{1}'  & F_{s_{1}+1:r_{1},1}^{1'}  & 0\\
\vdots 	& \vdots & \vdots \\
F_{1:s_{1},T}^{1'}\Phi_{1}'  & F_{s_{1}+1:r_{1},T}^{1'}  & 0\\
F_{1:s_{2},1}^{2'}\Phi_{2}' & 0  & F_{s_{2}+1:r_{2},1}^{2'} \\
\vdots 	& \vdots & \vdots \\
F_{1:s_{2},T}^{2'}\Phi_{2}' & 0  & F_{s_{2}+1:r_{2},T}^{2'}
\end{pmatrix}
\begin{pmatrix}
\Lambda' \\
\Pi_{1}' \\
\Pi_{2}' \\
\end{pmatrix}
+
\begin{pmatrix}
\varepsilon^1 \\
\varepsilon^2
\end{pmatrix}
=: \mathcal{F} \widetilde{\Lambda}' + \widetilde{\varepsilon},
\label{eq:conatenated_PCA_pseudo}
\end{align}
where we used $F^{k'}_{a:b,t} = (F^{k}_{a,t}, \dots, F^{k}_{b,t})$ in \eqref{eq:conatenated_PCA_pseudo}. Then, the matrix $\E (\mathcal{F}' \mathcal{F})$ has rank $r+r_1+r_2$.

Another natural approach is to check directly whether the projection matrices associated with $\Lambda_1$ and $\Lambda_2$ are equal. 
We refer to a recent work by \cite{liao2023changes} who studied a similar problem for continuous factor models and pursued this approach. They suggested a test statistic based on the Frobenius norm of the projection discrepancy between the pre- and post-break loadings.

The work by \cite{silin2020hypothesis} is similar in nature and also studies a certain distance between two projection matrices calculated based on the eigenvectors of two covariance matrices. The suggested test is based on finite sample properties, introducing a novel resampling method as an alternative to using bootstrap. The authors do not suggest any asymptotic limit of the proposed test statistic.

In comparison to the approaches of \cite{liao2023changes} and \cite{silin2020hypothesis}, our test is low dimensional in nature. While \cite{liao2023changes} and \cite{silin2020hypothesis} base their analyses on the $d \times d$-dimensional projection matrices, we manage to reduce the problem to testing for changes in the low-dimensional factors. In particular, this allows us to derive the asymptotic behavior of our test statistic.

Other works that use similar approaches but impose slightly different assumptions on the model are \cite{pelger2022state} and \cite{koo2023disentangling}.
\cite{pelger2022state} assumes that the loadings are general functions of some recurrent state process and develop a statistical test for a change in the loading structure in different states. Similar to our null hypothesis, they allow the loadings across different states to be the same up to a rotational change.
\cite{koo2023disentangling} assume that given some pre-break loading $\Lambda_1$, the post-break loadings can be written as $\Lambda_2 \Phi + W$, where $\Phi$ is a rotational change and $W$ is a random shift. Similar to our null hypothesis, their test aims to detect a rotational change in the loadings. However, their approach is to directly estimate the pre- and post-break factors to then apply a Wald-type test statistic.

\section{Further simulation results} \label{app:CC} 
In Appendix \ref{se:sim:diffrentnumbers} we present simulation results for the modification for factors with different dimensions as given in Appendix \ref{se:Different numbers of factors}. Appendix \ref{se:sim:alternative_approach} presents some simulation results on the supremum approach discussed in Appendix \ref{se:alternative_approach} and Appendix \ref{se:otherstatistics_sim} gives results on the tests in Appendix \ref{se:alternativetests}.
We conclude this section with a discussion on the simulation design (Appendix \ref{se:sim:discussion}). All simulations are conducted at a $5\%$ significance level and the data were standardized before applying our test.

\subsection{Different numbers of factors} \label{se:sim:diffrentnumbers}
We study here a setting allowing for different numbers of factors using the test in Appendix \ref{se:Different numbers of factors}. This includes both the two-subjects test with each being modeled by a different number of factors and having a different number of factors pre- and post-break in the change-point setting. We assume for both two subject and change-point setting that $\Lambda \in \RR^{\Dim \times r_{1}}$ with $\lambda_{ij} \sim \mathcal{N}(0,1)$, $\Phi_1 \in \RR^{r_{1} \times r_{1}}$, $\Phi_2 \in \RR^{r_{1} \times r_{2}}$ and $r_1=4$, $r_2=3$.
\begin{enumerate}[label=NDGPdnf\arabic*:, align=left]
\item Two subjects, \eqref{eq:two-subject_factor_mdoel_for_DGPs}: 
Set $\Phi_1 = I_r$ and $\Phi_2$ is randomly generated with singular values in the interval $(0.75,1.25)$.
The factors are $F_{i,t}^{k} \sim \mathcal{N}(0,1)$.
For the errors, suppose $\varepsilon_{i,t}^{k} \sim \mathcal{N}(0,1)$, $k=1,2$.
\item Two subjects, \eqref{eq:two-subject_factor_mdoel_for_DGPs}: 
Set $\Phi_1 = I_r$ and $\Phi_2$ is randomly generated with singular values in the interval $(0.75,1.25)$.
The factors for the first subject are generated as $F_{i,t}^{1} = 0.5 F_{i,t-1}^{1} + e_{i,t}$, $e_{i,t} \sim \mathcal{N}(0,1-0.7^2)$.
The factors for the second subject are $F_{i,t}^{2} \sim \mathcal{N}(0,1)$.
For the errors, suppose $\varepsilon_{i,t}^{k} \sim \mathcal{N}(0,1)$, $k=1,2$.
\item Change-point, \eqref{eq:cp_model_simulations}:
$\Phi_1, \Phi_2$ are randomly generated with singular values in the interval $(0.75,1.25)$.
The factors are generated as $F_{i,t} = 0.5 F_{i,t-1} + e_{i,t}$, $e_{i,t} \sim \mathcal{N}(0,1-0.7^2)$.
For the error, suppose $\varepsilon_{i,t} \sim \mathcal{N}(0,1)$.
\item Change-point, \eqref{eq:cp_model_simulations}:
$\Phi_1, \Phi_2$ are randomly generated with singular values in the interval $(0.75,1.25)$.
The factors are generated as  $F_{i,t} \sim \mathcal{N}(0,1)$.
For the error, suppose $\varepsilon_{i,t} \sim \mathcal{N}(0,1)$.
\end{enumerate}

\begin{table}[ht]
    \centering
    \begin{tabular}{|c||c|c|c|c|c|}
    \hline
    		&\multicolumn{2}{c|}{Two-subjects}&		\multicolumn{2}{c|}{Change-point}  \\ \hline
         $(d,T)$ & NDGPdnf1 & NDGPdnf2 & NDGPdnf3 & NDGPdnf4 \\ \hline \hline
        (200,200) & \gradient{0.042} & \gradient{0.0355} & \gradient{0.013} & \gradient{0.009}  \\ \hline
        (500,500) & \gradient{0.038} & \gradient{0.0465} & \gradient{0.029} & \gradient{0.027}     \\ \hline
    \end{tabular}
        \caption{Empirical test sizes for two individuals and change-point setting assuming different numbers of factors in pre- and post-break samples.}
    \label{tab:dnf}
\end{table}

The corresponding empirical sizes can be found in Table \ref{tab:dnf}. The test performs generally well. In particular in the setting of considering two subjects, the empirical size gets close to the nominal $5\%$ significance level. On the other hand, in the change-point setting, we see slightly under- and oversized results for NDGPdnf3 and NDGPdnf4, respectively. As mentioned in Section \ref{se:sim:change}, this could be due to the fact that the effective sample size in the change-point setting is smaller than in the two-subject model setting.
We omit the power analysis for different numbers of factors.

\subsection{Supremum approach} \label{se:sim:alternative_approach}
In Appendix \ref{se:alternative_approach}, we proposed a supremum approach, by applying a change-point test not only at the artificially created change-point $T/2$ but partitioning the data at locations over a certain interval around $T/2$ and taking the supremum over the resulting test statistics. The expected behavior of the test statistic under the null hypothesis is stated in \eqref{eq:con:Wald_pi}. 

The interval for the supremum is determined by $\pi_0 = 0.45$ in  \eqref{eq:def:Wald_pi}. In Table \ref{tab:twosubjectspoweralternative}, we consider some of the DGPs from Section \ref{se:sim:two-subjects} which suffered from low power. As one can see, the power of ADGP3 improved quite a bit. There is no improvement for ADGP2 with proportion $0.2$ of the loading matrix being different from the pre-break loadings.
\begin{table}[ht]
    \centering
    \begin{tabular}{|c||c|c||c|c|}
    \hline
         $(d,T)$ & a & ADGP2 			& $c$ &ADGP3  \\ \hline \hline
        (200,200) & 0.2 & \gradient{0.281}	& 1	& \gradient{1} \\ \hline
        (500,500) & 0.2 & \gradient{0.799}	& 1	& \gradient{1} \\ \hline \hline
        (200,200) & 0.6 & \gradient{1}	  	& 3	& \gradient{1}\\ \hline
        (500,500) & 0.6 & \gradient{1}  		& 3	& \gradient{1}	\\ \hline
    \end{tabular}
        \caption{Empirical test power using the test statistic \eqref{eq:def:Wald_pi} with $\pi_0 = 0.45$.}
    \label{tab:twosubjectspoweralternative}
\end{table}

In addition, we applied the supremum approach to some of the DGPs from Section \ref{se:sim:two-subjects} under the null hypothesis. One can observe the expected trade-off of loosing size while improving power. The size increased for all of the considered NDGPs and exceeds the nominal level of $5\%$.

\begin{table}[ht]
    \centering
    \begin{tabular}{|c||c||c|}
    \hline
         $(d,T)$ & NDGP1			& NDGP2  \\ \hline \hline
        (200,200) & \gradient{0.076}	& \gradient{0.092} \\ \hline
        (500,500) & \gradient{0.105}	& \gradient{0.145} \\ \hline
    \end{tabular}
        \caption{Empirical test size using the test statistic \eqref{eq:def:Wald_pi} with $\pi_0 = 0.45$.}
    \label{tab:twosubjectssizealternative}
\end{table}

\subsection{Other statistics} \label{se:otherstatistics_sim}
We provide here a small simulation study to compare the different test statistics formulated in Appendix \ref{se:alternativetests}. 
Tables \ref{tab:twosubjectssize_othertests} and \ref{tab:twosubjectspower_othertests} show respectively the size and power of the three different tests. We used here synthetic data based on two-subject models and picked different sample sizes for the two datasets. According to the results, the first test performs well and similarly to the symmetric one (Test 3). Test 1 is based on transforming the second series, i.e., the one that is chosen to be longer. This suggests that our test is more sensitive to applying the change-point test to a smaller sample rather than estimating a projection matrix based on a smaller sample.

\begin{table}[ht] 
    \centering
    \begin{tabular}{|c||c|c|c|c|c|}
    \hline
         $(d, T_1, T_2)$ & Test 1 & Test 2 & Test 3 \\ \hline \hline
        (150,100,200) & \gradient{0.0275} & \gradient{0.0095} & \gradient{0.0245}   \\ \hline
        (450,400,500) & \gradient{0.0435} & \gradient{0.0290} & \gradient{0.0360}    \\ \hline
    \end{tabular}
        \caption{Empirical test size for different test statistics applied to NDGP1.}
    \label{tab:twosubjectssize_othertests}
\end{table}

\begin{table}[ht] 
    \centering
    \begin{tabular}{|c||c|c|c|c|c|}
    \hline
         $(d, T_1, T_2)$ & Test 1 & Test 2 & Test 3 \\ \hline \hline
        (150,100,200) & \gradient{0.8435} & \gradient{0.3705} & \gradient{0.883}   \\ \hline
        (450,400,500) & \gradient{1} & \gradient{0.993} & \gradient{1}    \\ \hline
    \end{tabular}
        \caption{Empirical test power for different test statistics applied to ADGP1 with $b=0.5$.}
    \label{tab:twosubjectspower_othertests}
\end{table}

\subsection{Discussion on simulation design} \label{se:sim:discussion}
The theoretical results on consistent estimation of the loadings rely (among others) on Assumption \ref{ass:C1}. The assumptions in there ensure that the rescaled singular values of the loading matrices are nonzero and finite, i.e., $\liminf_{d\to\infty} \frac{\lambda_{\min}(\Lambda'\Lambda) }{d} > 0$ and $\limsup_{d\to\infty}  \frac{\lambda_{\max}(\Lambda'\Lambda)}{d}<\infty$.

Our simulation study includes the assumptions that the matrices $\Phi_k$, $k=1,2$, take only eigenvalues in a certain interval. This interval can certainly be bigger than the one we considered. However, the smallest eigenvalue $\lambda_{\min}(\Phi \Lambda'\Lambda \Phi)$ can not get too close to zero. This should not be surprising since we would move closer to the regime of having a different number of factors. On the population level, a different number of factors results in different loading matrices since their dimensions mismatch. Then, our test statistic is expected to reject the null hypothesis of having the same (or the same up to transformation with a nonsingular matrix) loading matrices across time or individuals.

To illustrate these points, we consider some extreme cases with particularly small and large eigenvalues of $\Phi_{k}$.
We focus here on the two subject case.

\begin{enumerate}[label=DGP:, align=left]
\item $\Lambda_{1} = \Lambda$ and $\Lambda_{2} = \Lambda \Phi$ with $\lambda_{ij} \sim \mathcal{N}(0,1)$.
$\Phi$ is randomly generated but with fixed smallest and largest eigenvalues $\lambda_{\min}(\Phi)=e$ and $\lambda_{\max}(\Phi)=f$.
\end{enumerate}

The DGP satisfies the null hypothesis algebraically but violates the regularity condition (Assumption \ref{ass:C1}/pervasiveness) because the smallest eigenvalue of $\Lambda'_2\Lambda_2$ becomes too small; hence factor estimation is ill-posed and the test over-rejects as one can see from Table \ref{tab:discussion}.

\begin{table}[ht]
    \centering
    \begin{tabular}{|c||c|c|c|c|c|}
    \hline
         $(d,T)$& $e$	& $f$ & $\lambda_{\min}(\Phi \Lambda'\Lambda \Phi)$ & $\lambda_{\max}(\Phi \Lambda'\Lambda \Phi)$ & DGP   \\ \hline \hline
        (200,200)	 & 0.01 	& 1 &  0.140 & 14.117 & \gradient{1}  \\ \hline
        (500,500) 	 & 0.01	& 1 &  0.223 & 22.337 & \gradient{1}  \\ \hline \hline
        (200,200)	 & 0.5 	& 1 &  7.01 & 11.15 & \gradient{0.0275}  \\ \hline
        (500,500) 	 & 0.5	& 1 &  14.18 & 22.40 & \gradient{0.044}  \\ \hline \hline
        (200,200)	 & 0.5 	& 100 &  7.03 & 1411.81 & \gradient{1}  \\ \hline
        (500,500) 	 & 0.5	& 100 &  11.15 & 2234.62 & \gradient{1}  \\ \hline 
    \end{tabular}
        \caption{Rejection rate under different assumptions on the eigenvalues of the rotation matrix.}
    \label{tab:discussion}
\end{table}

\section{Derivations under the alternative (details for Section~\ref{sec:alternative})}\label{app:alternative_derivations}
In this section, we provide the details on the alternative hypothesis. In particular, we characterize different scenarios that may occur for the transformed series \eqref{eq:YT2def}.

 
The possible alternatives are of Types \ref{item:Type1} and \ref{item:Type3} as described in Section \ref{se:applicationCPs}. As in Section \ref{se:motivation}, we base our analysis on $\{ X_{t}^2 \}$, using the transformed series $Y_{1:T_2}$ as introduced in \eqref{eq:YT2def}.
For simplicity, we assume that $T=T_{1}=T_{2}$. Then,
\begin{equation} \label{eq:with_change}
Y_{t} =
\begin{cases}
\widehat{P}_{1} \Lambda_2 F_{t} + \widehat{P}_{1} \varepsilon_{t}
=:
\widetilde{\Theta}_1 F_{t} + \widetilde{\varepsilon}_{t}, \hspace{0.2cm} \text{ for } t =1,\dots, T/2,
\\
\widehat{P}_{2} \Lambda_2 F_{t} + \widehat{P}_{2} \varepsilon_{t}
=:
\widetilde{\Theta}_2 F_{t} + \widetilde{\varepsilon}_{t}, \hspace{0.2cm} \text{ for } t =T/2+1,\dots, T.
\end{cases}
\end{equation}
Recall also the behavior of $Y_{1:T}$ under the alternative at the population level from \eqref{eq:motivation_alternative_population}. Whenever we use the transformed series \eqref{eq:with_change} for estimation, we keep in mind a population model of the form
\begin{equation} \label{eq:with_change_popu}
Z_{t} =
\begin{cases}
P_1 \Lambda_{2} F_{t} + \eta_{t} 
&=:
\Theta_1 F_{t} + \eta_{t}, \hspace{0.2cm} \text{ for } t =1,\dots, T/2,
\\
\Lambda_2 F_{t} + \eta_{t}
&=:
\Theta_2 F_{t} + \eta_{t}, \hspace{0.2cm} \text{ for } t =T/2+1,\dots, T,
\end{cases}
\end{equation}
where $P_1,P_2$ are defined in \eqref{eq:true_projections_individual_series} as the population counterparts of \eqref{eq:def_Phat12} such that the loadings behave according to \eqref{eq:motivation_alternative_population} and with some error series $\eta_{t}$. The pre- and post-break covariances of $\eta_{t}$ may be different. 
We use \eqref{eq:with_change_popu} to illustrate the behavior of the estimated loadings under the alternative when \eqref{eq:with_change} is used for estimation.
As we discuss next, pre- and post-break loadings in \eqref{eq:with_change_popu} admit different types of breaks depending on the behavior of $\Lambda_1$ and $\Lambda_2$.

Recall that $\Lambda_1 = \Lambda_2 \Phi$ with nonsingular $\Phi$ under the null hypothesis or, equivalently, $\col(\Lambda_1) = \col(\Lambda_2)$. In the subsequent analysis, we aim to shed light on how the relationship of $\Lambda_1$ and $\Lambda_2$ under the alternative gets translated to $P_1 \Lambda_2$ and $\Lambda_2$ in \eqref{eq:with_change_popu}.

First, note that our null hypothesis of having a nonsingular matrix $\Phi$ such that $\Lambda_1 = \Lambda_2 \Phi$, is equivalent to $P_1 \Lambda_2 = \Lambda_2$ and $\rank( \Lambda_1' \Lambda_2)=r$. 
The null hypothesis implies the first relation since $P_{1} = P_{2}$, and the second relation since $\rank( \Lambda_1' \Lambda_2) = \rank( \Phi' \Lambda_2' \Lambda_2)$ and both $\Phi$ and $\Lambda_2' \Lambda_2$ have rank $r$. For the converse, we have
\begin{equation*}
P_1 \Lambda_2 = \Lambda_1 \left( (\Lambda_1'\Lambda_1)^{-1} \Lambda_1'\Lambda_2 \right) = \Lambda_2
\end{equation*}
and therefore $\Lambda_1 \Psi = \Lambda_2$ with nonsingular $\Psi = (\Lambda_1'\Lambda_1)^{-1} \Lambda_1'\Lambda_2$.

Then, the alternative hypothesis, i.e., $\Lambda_1 \neq \Lambda_2 \Phi$ for all nonsingular $\Phi$, is equivalent to the condition that $P_1 \Lambda_2 \neq \Lambda_2$ or $\rank( \Lambda_1' \Lambda_2)<r$. 
In fact, we will argue that \textit{(i)} 
\begin{equation} \label{eq:maineqcase1}
P_1 \Lambda_2 = \Lambda_1 \left( (\Lambda_1'\Lambda_1)^{-1} \Lambda_1'\Lambda_2 \right)  \neq \Lambda_2
\end{equation}
and discuss whether \textit{(ii)} $\col( P_1 \Lambda_2) \cap \col(\Lambda_2) = \{0\}$ and \textit{(iii)} $\col( P_1 \Lambda_2) = \col(\Lambda_2)$.
The different behaviors of the respective column spaces of $P_1 \Lambda_2$ and $\Lambda_2$ are then associated with different types of changes in the model \eqref{eq:with_change_popu}. 
That is, if \textit{(i)} and \textit{(ii)} hold, then the change in \eqref{eq:with_change_popu} is of Type \ref{item:Type1}. 
If \textit{(i)} and \textit{(iii)} hold, it is of Type \ref{item:Type2}.
If \textit{(i)} holds and \textit{(ii)} and \textit{(iii)} do not hold,  the change is of Type \ref{item:Type3}.

To investigate \textit{(i)}--\textit{(iii)}, we distinguish between Cases 1 and 2 below depending on whether $\col(\Lambda_1) \cap \col(\Lambda_2) = \{0\}$ (Type \ref{item:Type1}) or 
$\col(\Lambda_1) \cap \col(\Lambda_2) \neq \{0\}$ and  $\col(\Lambda_1) \neq \col(\Lambda_2)$ (Type \ref{item:Type3}) in the original model \eqref{eq:cp_model}.
\vspace{0.2cm}\\
\noindent
\textit{Case 1:} Suppose $\col(\Lambda_1) \cap \col(\Lambda_2) = \{0\}$.
We consider the following subcases: $\rank( \Lambda_1' \Lambda_2)=r$, $=0$ and $=s (\neq 0,r)$.
\begin{enumerate}[label=\textit{1\alph*}.,ref=\textit{1\alph*}]
\item \label{item1a}
$\rank( \Lambda_1' \Lambda_2)=r$: Then,
\begin{enumerate}[label=\textit{(\roman*)}]
\item 
$P_1 \Lambda_2 \neq \Lambda_2$: Arguing by contradiction, assume equality in \eqref{eq:maineqcase1} so that $P_1 \Lambda_2 = \Lambda_2$. This yields 
$\Lambda_1 = \Lambda_2 (\Lambda_1'\Lambda_2)^{-1} \Lambda_1'\Lambda_1$ which contradicts $\col(\Lambda_1) \cap \col(\Lambda_2) = \{0\}$.
\item 
$\col( P_1 \Lambda_2) \cap \col(\Lambda_2) = \{0\}$: By contradiction, suppose there are nonzero vectors $v_1,v_2 \in \RR^{r}$ such that 
$P_1 \Lambda_2 v_1 = \Lambda_2 v_2$. Then, $\Lambda_1 \widetilde{v}_1 = \Lambda_2 v_2$ with $ \widetilde{v}_1 = \left( (\Lambda_1'\Lambda_1)^{-1} \Lambda_1'\Lambda_2 \right)v_1$ which, since $\Lambda_1 \widetilde{v}_1 \neq 0$ because $\Lambda_1$ has full column rank, contradicts $\col(\Lambda_1) \cap \col(\Lambda_2) = \{0\}$.
\end{enumerate}
This corresponds to a change of Type \ref{item:Type1} for the model \eqref{eq:with_change_popu}.
\item \label{item1b}
$\rank( \Lambda_1' \Lambda_2)=0$: Then $\Lambda_1'\Lambda_2 = 0$ and \eqref{eq:maineqcase1} becomes \textit{(i)} $P_1 \Lambda_2 = 0 \neq \Lambda_2$. We can then infer that \textit{(ii)} $\col( P_1 \Lambda_2) \cap \col(\Lambda_2) = \{ 0 \}$.
\vspace{0.2cm}\\
This corresponds to a change of Type \ref{item:Type1} for the model \eqref{eq:with_change_popu}.
\item \label{item1c}
$\rank( \Lambda_1' \Lambda_2) = s \neq 0,r$: Then $\Lambda_1'\Lambda_2 = \alpha \beta'$ with $\alpha, \beta \in \RR^{r \times s}$ and $\rank( \alpha) = \rank( \beta) = \rank( \Lambda_1' \Lambda_2)$. 
\begin{enumerate}[label=\textit{(\roman*)}]
\item $P_1 \Lambda_2 \neq \Lambda_2$: Arguing by contradiction, equality in \eqref{eq:maineqcase1} gives $\Lambda_1 \widetilde{\alpha} = \Lambda_2 \beta ( \beta' \beta)^{-1}$ with $\widetilde{\alpha} = (\Lambda_1'\Lambda_1)^{-1} \alpha$, contradicting $\col(\Lambda_1) \cap \col(\Lambda_2) = \{0\}$. 
\item $\col( P_1 \Lambda_2) \cap \col(\Lambda_2) \neq \{0\}$: We show that there are nonzero vectors $v_1,v_2 \in \RR^{r}$ such that $P_1 \Lambda_2 v_1 = \Lambda_2 v_2$. Choose $v_1 = 2v_2$. Then, $P_1 \Lambda_2v_1 = \Lambda_1 \widetilde{\alpha} \beta' v_2 + \Lambda_2v_2 = \Lambda_2v_2$ whenever $\beta' v_2 = 0$. We can find a nonzero vector $v_2$ such that $\beta' v_2 = 0$ since $\beta$ is $r \times s$, $\beta'$ is $s\times r$ and $s <r$.
\item $\col( P_1 \Lambda_2) \neq \col(\Lambda_2)$: By contradiction, suppose there is a nonsingular matrix $\Phi$ such that $P_1 \Lambda_2 = \Lambda_2 \Phi$. Then, $\Lambda_1 \widetilde{\alpha} \beta' = \Lambda_2 \Phi $ and, choosing a vector $v \in \RR^r$ with  $ \widetilde{v} := \widetilde{\alpha} \beta' v \neq 0$, we get
$\Lambda_1 \widetilde{v} = \Lambda_2 \Phi v$, which contradicts $\col(\Lambda_1) \cap \col(\Lambda_2) = \{0\}$ or $\Lambda_1$ having full column rank.
\end{enumerate}
This corresponds to the change of Type \ref{item:Type3} for the model \eqref{eq:with_change_popu}.
\end{enumerate}

\noindent
\textit{Case 2:} Suppose $\col(\Lambda_1) \cap \col(\Lambda_2) \neq \{0\}$ and $\col(\Lambda_1) \neq \col(\Lambda_2)$. We consider the following subcases:
\begin{enumerate}[label=\textit{2\alph*}.,ref=\textit{2\alph*}]
\item \label{item2a}
$\rank( \Lambda_1' \Lambda_2)=r$: Then, 
\begin{enumerate}[label=\textit{(\roman*)}]
\item $P_1 \Lambda_2 \neq \Lambda_2$: Arguing by contradiction, assume equality in \eqref{eq:maineqcase1} such that $P_1 \Lambda_2 = \Lambda_2$. This yields 
$\Lambda_1 = \Lambda_2 (\Lambda_1'\Lambda_2)^{-1} \Lambda_1'\Lambda_1$ which contradicts $\col(\Lambda_1) \neq \col(\Lambda_2)$.
\item $\col( P_1 \Lambda_2) \cap \col(\Lambda_2) \neq \{0\}$: 
Since $\col(\Lambda_1) \cap \col(\Lambda_2) \neq \{0\}$, there are nonzero vectors $v_1, v_2 \in \RR^r$ such that $\Lambda_1v_1 = \Lambda_2 v_2$. 
Choose $w_1 = \Psi^{-1} v_1$ and $w_2 = v_2$ with $\Psi = (\Lambda_1'\Lambda_1)^{-1} \Lambda_1'\Lambda_2$. Then, $P_1 \Lambda_2w_1 = \Lambda_1 \Psi w_1 = \Lambda_1 v_1 = \Lambda_2 v_2  = \Lambda_2 w_2$. Therefore, we found vectors $w_1, w_2 \in \RR^r$ such that $P_1 \Lambda_2 w_1 = \Lambda_2 w_2$.
\item $\col( P_1 \Lambda_2) \neq \col(\Lambda_2)$: By contradiction, suppose there is a nonsingular matrix $\Phi$ such that $P_1 \Lambda_2 = \Lambda_2 \Phi$. Then, $\Lambda_1 \Psi = \Lambda_2 \Phi$ with $\Psi = (\Lambda_1'\Lambda_1)^{-1} \Lambda_1'\Lambda_2$, which contradicts $\col(\Lambda_1) \neq \col(\Lambda_2)$.
\end{enumerate}
This corresponds to a change of Type \ref{item:Type3} for the model \eqref{eq:with_change_popu}.
\item
$\rank( \Lambda_1' \Lambda_2)=0$: Then, $\Lambda_1'\Lambda_2 = 0$. This implies that all column vectors of $\Lambda_1$ and $\Lambda_2$ are orthogonal which contradicts $\col( \Lambda_1) \cap \col(\Lambda_2) \neq \{0\}$. That is, this subcase cannot occur.
\item 
$\rank( \Lambda_1' \Lambda_2) = s \neq 0,r$: Then, 
\begin{enumerate}[label=\textit{(\roman*)}]
\item $P_1 \Lambda_2 \neq \Lambda_2$: We follow the arguments in \ref{item1c}\textit{(i)}. Arguing by contradiction, equality in \eqref{eq:maineqcase1} gives $\Lambda_1 \widetilde{\alpha} = \Lambda_2 \beta ( \beta' \beta)^{-1}$ with $\widetilde{\alpha} = (\Lambda_1'\Lambda_1)^{-1} \alpha$, contradicting $\col(\Lambda_1) \neq \col(\Lambda_2)$. 
\item $\col( P_1 \Lambda_2) \cap \col(\Lambda_2) \neq \{0\}$: This follows exactly as in \ref{item1c}\textit{(ii)}.
\item $\col( P_1 \Lambda_2) \neq \col(\Lambda_2)$: Arguing as in \ref{item1c}\textit{(iii)}, we get
$\Lambda_1 \widetilde{v} = \Lambda_2 \Phi v$. This now contradicts $\col(\Lambda_1) \neq \col(\Lambda_2)$.
\end{enumerate}
This corresponds to a change of Type \ref{item:Type3} for the model \eqref{eq:with_change_popu}.
\end{enumerate}


In summary, depending on the type of change  and the relationship between $\Lambda_1$ and $\Lambda_2$ in the original model \eqref{eq:cp_model}, we can get two of the three different types of breaks described in Section \ref{se:applicationCPs} for the model \eqref{eq:with_change_popu}.

For the remainder of the paper and for the shortness sake, we focus on two subcases above: Case \ref{item1a}, which results in a Type \ref{item:Type1} break and Case \ref{item2a} assuming that $\Lambda_{2} = (\Lambda_1 \Phi , \Pi_2)$ with $\Phi \in \RR^{r \times r_1}$, and $\Pi_2 \in \RR^{d \times r_2}$ with $r_2 = r-r_1$ being linearly independent to $\Lambda_1$ which results in a Type \ref{item:Type3} break.

For the proofs of our theoretical results under the alternative, one needs to distinguish each subcase, and we will use Cases \ref{item1a} and \ref{item2a} to illustrate the main ideas. The remaining cases can be handled similarly. From here on, we will refer to those cases as \textit{TA 1} and \textit{TA 2} but keep in mind how those types result from the original model \eqref{eq:cp_model} according to Cases \ref{item1a} or \ref{item2a}.

We aim to rewrite \eqref{eq:with_change_popu} in terms of so-called pseudo-factors. Consider the following matrix representations: 
 \begin{equation} \label{eq:all_def_alternative}
\begin{gathered}
 Z^{b} = (Z_{1}, \dots, Z_{T/2})',
 \hspace{0.2cm}
 Z^{a} = (Z_{T/2+1}, \dots, Z_{T})';
 \\
 F^{b} = (F_{1}, \dots, F_{T/2})',
 \hspace{0.2cm}
 F^{a} = (F_{T/2+1}, \dots, F_{T})';
\\
 G^{b} = (G_{1}, \dots, G_{T/2})',
 \hspace{0.2cm}
 G^{a} = (G_{T/2+1}, \dots, G_{T})';
 \\
 \eta^{b} = ( \eta_{1}, \dots,  \eta_{T/2})',
 \hspace{0.2cm}
 \eta^{a} = ( \eta_{T/2+1}, \dots,  \eta_{T})',
 \end{gathered}
 \end{equation}
where $G^b$ and $G^a$ will refer to pseudo-factors and both have dimensions $T/2 \times \npf$, and $\npf$ is the number of pseudo-factors. The pseudo-factors are chosen such that the pre- and post-break loadings can be modeled as $\Theta_1 = \Theta B$ and $\Theta_2 = \Theta C$ with \textit{square} matrices $B,C \in \RR^{\npf \times \npf}$ and some $\Theta \in \RR^{d \times \npf}$ with full column rank. 
Then, \eqref{eq:with_change_popu} can be written as
\begin{align} \label{eq:rep_with_pseudo_factors}
\begin{pmatrix}
Z^{b} \\
Z^{a}
 \end{pmatrix}
&= 
 \begin{pmatrix}
F^{b} \Theta_{1}' \\
F^{a} \Theta_{2}'
 \end{pmatrix}
 +
\begin{pmatrix}
\eta^{b} \\
\eta^{a}
 \end{pmatrix}
=  
 \begin{pmatrix}
G^{b} B' \\
G^{a} C'
 \end{pmatrix}
\Theta'
  +
\begin{pmatrix}
\eta^{b} \\
\eta^{a}
 \end{pmatrix}
=: 
G \Theta' + \eta.
 \end{align}

In this model, $r = \rank(B) \leq \npf$ and $r = \rank(C) \leq \npf$ denote the numbers of original factors before and after the break, respectively. Note that \eqref{eq:rep_with_pseudo_factors} provides an observationally equivalent representation without a change in the loading matrix $\Theta \in \RR^{d \times \npf}$.

We consider now the alternatives of Types \ref{item:Type1} and \ref{item:Type3} resulting respectively from \textit{TA 1} (Case \ref{item1a}) and \textit{TA 2} (Case \ref{item2a}) separately to write them in terms of pseudo-factors as in \eqref{eq:rep_with_pseudo_factors}. For that, we need to choose $G$ and $\Theta$ in \eqref{eq:rep_with_pseudo_factors} appropriately. Instead of expressing Types \ref{item:Type1} and \ref{item:Type3} through linear independence of the loadings as described in Section \ref{se:applicationCPs}, we can express the different types in terms of the column rank of the matrix $(B, C)$.

\textit{TA 1:} 
We suppose $\col(\Lambda_1) \cap \col(\Lambda_2) = \{0\}$ and $\rank( \Lambda_1' \Lambda_2)=r$. Then, according to Case \ref{item1a}, the alternative model admits a change of Type \ref{item:Type1} since $\col( P_1 \Lambda_2) \cap \col(\Lambda_2) = \{0\}$. 
Let $F^{b} \in \RR^{T/2 \times r}$ and $F^{a} \in \RR^{T/2 \times r}$ denote the original factors as in \eqref{eq:all_def_alternative}. Then, the model \eqref{eq:with_change_popu} can be written as
\begin{equation}\label{eq:type1BC}
\begin{aligned}
\begin{pmatrix}
Z^{b} \\
Z^{a}
\end{pmatrix}
&=
\begin{pmatrix}
F^{b} \Theta_{1}' \\
F^{a}  \Theta_{2}'
\end{pmatrix}
+
\eta
=  
\begin{pmatrix}
F^{b} & 0_{T/2 \times r}\\
0_{T/2 \times r} & F^{a}
\end{pmatrix}
\begin{pmatrix}
(P_1 \Lambda_{2})' \\
\Lambda_{2}'
\end{pmatrix}
+ 
\eta
\\& =  
\begin{pmatrix}
\mleft[ F^{b} : \star \mright] B' 
\vspace{0.1cm}
\\
\mleft[ \star : F^{a} \mright] C'
 \end{pmatrix}
\Theta'
+
\eta
= G \Theta' + \eta
\end{aligned}
\end{equation}
with $\Theta = \mleft( \Theta_{1}, \Theta_{2} \mright) = \frac{1}{\sqrt{2}} ( P_1 \Lambda_2, \Lambda_2 )$, $B = \diag( \sqrt{2} I_{r}, 0_{r \times r})$, $C = \diag( 0_{r \times r}, \sqrt{2} I_{r} )$ and $G^{b} = \mleft[ F^{b} : \star \mright] $, $G^{a} = \mleft[ \star : F^{a} \mright]$ with $\star$ denoting some unidentified numbers. The matrix $(B, C)$ has column rank $2r$.
The number of original factors is strictly less than that of the pseudo-factors.

\textit{TA 2:} 
We suppose $\col(\Lambda_1) \cap \col(\Lambda_2) \neq \{0\}$, $\col(\Lambda_1) \neq \col(\Lambda_2)$ and $\rank( \Lambda_1' \Lambda_2)=r$. Then, according to Case \ref{item2a}, the alternative model admits a change of Type \ref{item:Type3} since $\col( P_1 \Lambda_2) \cap \col(\Lambda_2) \neq \{0\}$ and $\col( P_1 \Lambda_2) \neq \col(\Lambda_2)$.
In order to write the model in terms of pseudo-factors, we need to impose some more structural assumptions on the loadings. Suppose $\Lambda_{2} = (\Lambda_1 \Phi , \Pi_2)$ with $\Phi \in \RR^{r \times r_1}$, and $\Pi_2 \in \RR^{d \times r_2}$ with $r_2 = r-r_1$ being linearly independent of $\Lambda_1$. Then, 
$P_1 \Lambda_{2} = (\Lambda_1 \Phi , P_1 \Pi_2)$, $P_2 \Lambda_{2} = \Lambda_2$, and some (but not all) columns of $\Theta_1 = P_1 \Lambda_{2}$ and $\Theta_2 = \Lambda_{2}$ are linearly independent. 
In particular, we have that $\col(P_1 \Pi_2) \cap \col(\Pi_2) = \{0\}$. By contradiction, suppose that there are nonzero vectors $v_1, v_2 \in \RR^s$ such that $P_1 \Pi_2 v_1 = \Pi_2 v_2$. Then, $\Lambda_1 (\Lambda_1'\Lambda_1)^{-1} \Lambda_1' \Pi_2 v_1 = \Pi_2 v_2$ which contradicts 
$\col(\Lambda_1) \cap \col(\Pi_2) = \{0\}$ since $\rank(\Lambda_1' \Pi_2) = \rank(\Pi_2) = s$.

Then, 
\begin{equation}\label{eq:type2BC}
\begin{aligned}
\begin{pmatrix}
Z^{b} \\
Z^{a}
\end{pmatrix}
&=
\begin{pmatrix}
F^{b} \Theta_{1}' \\
F^{a} \Theta_{2}'
\end{pmatrix}
+ \eta
= 
\begin{pmatrix}
F^{b} (\Lambda_1 \Phi , P_1 \Pi_2)' \\
F^{a} \Lambda_{2}'
\end{pmatrix}
+ \eta
\\&=  
\begin{pmatrix}
\mleft[ F^{b} : \star \mright] B' (\Lambda_1, (1/\sqrt{2}) \Pi_2 , (1/\sqrt{2}) P_1 \Pi_2)' \\
\mleft[ F^{a} : \star \mright]  C'(\Lambda_1 , (1/\sqrt{2}) \Pi_2, (1/\sqrt{2}) P_1 \Pi_2)'
\end{pmatrix}
+ \eta
\\&=  
\begin{pmatrix}
\mleft[ F^{b} : \star \mright] B'\\
\mleft[ F^{a} : \star \mright] C'
\end{pmatrix}
\begin{pmatrix} 
\Lambda_{1}' \\
(1/\sqrt{2}) \Pi_2' \\
(1/\sqrt{2}) \Pi_2' P_{1}'
\end{pmatrix}
+ \eta
= G \Theta' + \eta
 \end{aligned}
 \end{equation}
with $\Theta =  ( \Lambda_1, (1/\sqrt{2}) \Pi_2, (1/\sqrt{2}) P_1 \Pi_2 )$, 
\begin{equation*}
B' = 
\begin{pmatrix}
\Phi'  & 0_{r_1 \times r_2} & 0_{r_1 \times r_2} \\
0_{r_2 \times r}	& 0_{r_2 \times r_2} & \sqrt{2} I_{r_{2} \times r_2} \\
0_{r_2 \times r}	& 0_{r_2 \times r_2} & 0_{r_{2} \times r_2} 
\end{pmatrix}_{(r+r_2)\times (r+r_2)},
\hspace{0.2cm}
C' = 
\begin{pmatrix}
\Phi'  & 0_{r_1 \times r_2} & 0_{r_1 \times r_2} \\
0_{r_2 \times r}	& \sqrt{2} I_{r_{2} \times r_{2}} & 0_{r_2 \times r_2} \\
0_{r_2 \times r}	& 0_{r_2 \times r_2} & 0_{r_{2} \times r_2} 
\end{pmatrix}_{(r+r_2)\times (r+r_2)}
\end{equation*}
and $G^{b} = \mleft[ F^{b} : \star \mright]$, $G^{a} = \mleft[ F^{a} : \star \mright] $. The matrix $(B, C)$ has column rank between $r$ and $2r$, namely, $r + r_2$. 

For both \textit{TA 1} and \textit{TA 2}, the normalization with $\sqrt{2}$ is out of convenience because it allows us to express the covariance matrix of $G$ in terms of $\Sigma_F$ (the covariance matrix of $F$) without a multiplicative constant; see for example \eqref{eq:type1_pop_cov} below.

Following Section \ref{sec3.2}, we define the PCA estimator $\widehat{G}$ as in \eqref{eq:PCAestimates_individual_G} using the transformed series $Y := Y'_{1:T}$ in \eqref{eq:with_change}. Under the alternative hypothesis, however, $\widehat{G}$ is an estimator of factors in the equivalent model \eqref{eq:rep_with_pseudo_factors}. In particular, we work with $\npf$ factors. In practice, the number of pseudo-factors is estimated, for example, through the information criterion introduced in 
\cite{bai2002determining}. Their information criterion estimates the number of factors consistently under the alternative model; see Proposition 1 in \cite{han2015tests} and the preceding paragraph for a discussion.

Note that in contrast to \cite{han2015tests}, we choose the notation $\widehat{G}$ for the estimators of the pseudo-factors to emphasize that they have a different dimension than the estimators under the null hypothesis.

We also define an analogous quantity to $H$ in \eqref{eq:HconvH0} as the matrix 
\begin{equation} \label{eq:Jdef}
J = (\Theta' \Theta /d) (G' \widehat{G} /T) \widehat{V}^{-1}_{\npf}
\end{equation}
with $\widehat{V}_{\npf}$ being the $\npf \times \npf$ diagonal matrix of the $\npf$ largest eigenvalues of $\frac{1}{Td} \widehat{\Sigma}_T$. Then, as discussed in Appendix \ref{se:proofs_auxiliary_results_A}, one can show that there is a matrix $J_0$ such that $J \overset{\prob}{\to} J_0$.

Next we calculate for both \textit{TA 1} and \textit{TA 2} the pre- and post-break covariances. 

\textit{TA 1:} Based on the representation \eqref{eq:type1BC}, we get
\begin{equation} \label{eq:type1_pop_cov}
\begin{aligned}
D_1 &:= \frac{1}{T}\E (BG^{b'}G^bB') = \frac{1}{T}B \E \mleft( \Big[ F^{b} : \star \Big]' \Big[ F^{b} : \star \Big] \mright) B' = 
\begin{pmatrix}
\Sigma_{F} & 0_{r \times r} \\
0_{r \times r} & 0_{r \times r}
\end{pmatrix},
\\
D_2 &:= \frac{1}{T}\E (C G^{a'}G^a C') = \frac{1}{T}C \E \mleft( \Big[ \star : F^{a} \Big]' \Big[ \star : F^{a} \Big] \mright) C' = 
\begin{pmatrix}
0_{r \times r} & 0_{r \times r} \\
0_{r \times r} &  \Sigma_{F}
\end{pmatrix}.
\end{aligned}
 \end{equation}
 
\textit{TA 2:} Based on the representation \eqref{eq:type2BC}, we get
\begin{equation} \label{eq:type2_pop_cov}
\begin{aligned}
 D_1 &:= \frac{1}{T}\E (BG^{b'}G^b B') 
 = \frac{1}{T}B \E \mleft( \Big[ F^{b} : \star \Big]' \Big[ F^{b} : \star \Big] \mright) B' = 
B \begin{pmatrix}
\frac{1}{2} \Sigma_{F} & \star_{r \times r_2} \\
\star_{r_2 \times r} &  \star_{r_2 \times r_2}
\end{pmatrix} B',
\\
 D_2 &:= \frac{1}{T}\E (C G^{a'}G^a C') 
 = \frac{1}{T}C \E \mleft( \Big[ F^{a} : \star \Big]' \Big[ F^{a} : \star \Big] \mright) C' = 
C \begin{pmatrix}
\frac{1}{2} \Sigma_{F} & \star_{r \times r_2} \\
\star_{r_2 \times r} &  \star_{r_2 \times r_2}
\end{pmatrix} C'.
\end{aligned}
 \end{equation}

For both types, we define the matrix 
\begin{equation} \label{defJDDJ}
\mathcal{C} := J_0'(D_1 - D_2)J_0. 
\end{equation}
Here, $J_0$ and $D_{1},D_{2}$ can be different depending on the respective type of change as given in \eqref{eq:type1_pop_cov} and \eqref{eq:type2_pop_cov}.

\section{Proofs of auxiliary results under null hypothesis} \label{se:appB}

The proofs of \eqref{eq:key1}--\eqref{eq:key3} use existing results for the estimation of the model parameters of the individual series as given in \cite{doz2011two}. For the reader's convenience, we collect the results of \cite{doz2011two} as needed here.

Set $\widehat{\Sigma}_{k} = \frac{1}{T} \sum_{t=1}^{T} X^k_{t} X^{k'}_{t}$, $k=1,2$, and write $\widebar{Q}_{r}^{k}$ for the eigenvectors corresponding to the $r$ largest eigenvalues of $\widehat{\Sigma}_{k}$. 
We further consider the eigendecomposition 
\begin{equation*}
\Lambda'_k \Lambda_k = R^k \Pi_k R^{k'},
\end{equation*}
where the diagonal matrix $\Pi_k$ consists of the eigenvalues in decreasing order and the orthogonal matrix $R^k$ consists of the corresponding eigenvectors. Choose the $d \times r$ matrices 
\begin{equation} \label{eq:def_Q}
Q^k = \Lambda_k R^k \Pi_k^{-\frac{1}{2}} 
\end{equation}
such that $\Lambda_k \Lambda'_k = Q^k \Pi_k Q^{k'}$ and $Q^{k'}Q^k = I_{r}$, 
$k=1,2$. 
Then, under Assumptions \ref{ass:1}--\ref{ass:2}, \ref{ass:C1}--\ref{ass:C4}, we get
\begin{enumerate}[label=\textbf{DGR.\arabic*.},ref=DGR.\arabic*, align=left]
\item \label{Doz1} By Lemma 2(i) in \cite{doz2011two}, $ \frac{1}{d} \| \widehat{\Sigma}_{k} - \Lambda_{k} \Lambda_{k}' \| = \mathcal{O}\mleft( \frac{1}{d} \mright) + \mathcal{O}_{\prob} \mleft( \frac{1}{\sqrt{T}} \mright)$ for $k=1,2$.
\item \label{Doz3} By Lemma 4(i) in \cite{doz2011two}, $\widebar{Q}_{r}^{k'} Q^{k} - I_r = \mathcal{O}_{\prob}\mleft( \frac{1}{d} \mright) + \mathcal{O}_{\prob} \mleft( \frac{1}{\sqrt{T}} \mright)$ for $k=1,2$.
\item \label{Doz2} By Lemma 4(ii) in \cite{doz2011two}, $\| \widebar{Q}_{r}^{k} - Q^{k} \|^2 = \mathcal{O}_{\prob}\mleft( \frac{1}{d} \mright) + \mathcal{O}_{\prob} \mleft( \frac{1}{\sqrt{T}} \mright)$ for $k=1,2$.
\end{enumerate}
Note that $\widebar{Q}_{r}^{k}, Q^k$ and $\Pi^k$ are denoted by $\widehat{P}, P_{0}$ and $D_{0}$, respectively, in \cite{doz2011two}.
Note also that \ref{Doz1} implies the analogous results for $\frac{1}{T/2} \sum_{t=1}^{T/2} X^k_{t} X^{k'}_{t}$ and $\frac{1}{T/2} \sum_{t=T/2 +1 }^{T} X^k_{t} X^{k'}_{t}$, $k=1,2$. We use the notation $\widehat{\Sigma}_2$ for either of those quantities.

Our analysis could as well be based on results similar to \ref{Doz1}--\ref{Doz3} in \cite{bai2002determining} and \cite{bai2003inferential}. We refer to \cite{doz2011two} since their statements are expressed more conveniently for our purposes.

The proofs below use a range of quantities involving $F$ and $H$.
To help the reader follow the arguments, we display the quantities indicating their dimensions:
\begin{align*}
\underbrace{F'}_{r \times T} = (\underbrace{F^{b'}}_{r \times T/2}, \underbrace{F^{a'}}_{r \times T/2}),
\hspace{0.2cm}
\underbrace{\widehat{F}'}_{r \times T},
\hspace{0.2cm}
\underbrace{\widetilde{F} \widetilde{H}}_{T\times r} = ( \underbrace{ H_{1}' F^{b'} }_{r \times T/2},  \underbrace{ H_{2}' F^{a'} }_{r \times T/2} )',
\hspace{0.2cm}
\underbrace{\widetilde{F}}_{T \times 2r} \underbrace{ \widetilde{H}}_{2r \times r},
\hspace{0.2cm}
\underbrace{H}_{r \times r}.
\end{align*}

We recall some relationships between different PCA estimators used in the factor model literature. 
Those relationships are collected in a more comprehensive way in Chapter 3 in \cite{bai2008large}.
Recall from \eqref{eq:PCAestimates_individual_G} the PCA estimators for the factors and loadings
\begin{align} \label{eq:PCA1}
\widehat{F} = \sqrt{T} \widehat{Q}_{r},
\hspace{0.2cm} 
\widehat{\Lambda}' = \frac{1}{T} \widehat{F}' Y,
\end{align}
where $\widehat{Q}_{r} $ are the eigenvectors corresponding to the $r$ largest eigenvalues of the $T \times T$ matrix $YY'$. On the other hand, we can also define
\begin{equation} \label{eq:PCA2}
\widebar{F} = \frac{1}{d} Y \widebar{\Lambda}, \hspace{0.2cm} \widebar{\Lambda} = \sqrt{d} \widebar{Q}_{r}
\end{equation}
with $\widebar{Q}_{r}$ being the $r$ eigenvectors corresponding to the $r$ largest eigenvalues of the $d \times d$ matrix $Y'Y$. 
Note that $\widebar{Q}_{r}$ and $\widebar{Q}^k_{r}$ are defined in the same way but for $Y$ and $X^k$, respectively.
We further introduce the matrix $\widehat{V}_{r}$, which is a diagonal matrix consisting of the $r$ largest eigenvalues of $\frac{1}{dT} YY'$, in particular, $\lambda_{i}\mleft( \frac{1}{dT} YY'\mright) = \lambda_{i}\mleft( \frac{1}{dT} Y'Y \mright)$ for all $i$ such that $\lambda_{i}$ is nonzero.

By equation (3.2) in \cite{bai2008large}, the two ways of writing the PCA estimators relate to each other as
\begin{align} \label{eq:PCA1PCA2}
\widehat{F}' 
= \widehat{V}_{r}^{-\frac{1}{2}} \widebar{F}',
\hspace{0.2cm} 
\widebar{\Lambda} 
= \widehat{\Lambda} \widehat{V}_{r}^{-\frac{1}{2}}.
\end{align}
Then,
\begin{align} \label{eq:PCA1PCA2_sub}
\widehat{\Lambda}
=
\widebar{\Lambda} \widehat{V}_{r}^{\frac{1}{2}} 
=
 \sqrt{d} \widebar{Q}_{r} \widehat{V}_{r}^{\frac{1}{2}} 
 =
 \widebar{Q}_{r} \widehat{\Pi}_{r}^{\frac{1}{2}},
\end{align}
where $\widehat{\Pi}_{r}$ is the diagonal matrix with the $r$ largest eigenvalues of the $d \times d$ matrix $\frac{1}{T}Y'Y$. The same relation also hold for $k=1,2$ when using data $X^k$ rather than $Y$.
This leads to an important observation regarding our estimators \eqref{eq:def_Phat12} for the projection matrices:
\begin{equation} \label{eq:projection_eigenvectors_QQ}
\widehat{P}_{k} 
= \widehat{\Lambda}_{k} ( \widehat{\Lambda}'_{k} \widehat{\Lambda}_{k})^{-1} \widehat{\Lambda}'_{k}
= \widebar{Q}_{r}^{k} \widebar{Q}_{r}^{k'},
\end{equation}
which follows since projection matrices are invariant under transformation with nonsingular matrices (see \eqref{eq:projection_invariance}) and $\widebar{Q}_r^{k'}\widebar{Q}_r^{k} = I_{r}$ due to orthonormality of the eigenvectors.

Before moving on to our auxiliary results, note from \eqref{eq:YT2def} and $Y = Y'_{1:T}$ in \eqref{eq:defofY} that
\begin{align} \label{eq:YY_block}
YY'
&=
\begin{pmatrix}
X^{2'}_{1:T/2} \widehat{P}_{1}' \widehat{P}_{1} X^{2}_{1:T/2} 
& 
X^{2'}_{1:T/2} \widehat{P}_{1}' \widehat{P}_{2} X^{2}_{(T/2+1):T}  
\\
X^{2'}_{(T/2+1):T} \widehat{P}_{2}' \widehat{P}_{1} X^{2}_{1:T/2}  
& 
X^{2'}_{(T/2+1):T} \widehat{P}_{2}' \widehat{P}_{2} X^{2}_{(T/2+1):T}
\end{pmatrix}.
\end{align}
For instance, by \eqref{eq:defofY}, the upper right block is
\begin{equation} \label{eq:B5-0.5}
\begin{aligned}
X^{2'}_{1:T/2} \widehat{P}_{1}' \widehat{P}_{2} X^{2}_{(T/2+1):T}  
&=
F^b \Lambda_2' \widehat{P}_{1}' \widehat{P}_{2} \Lambda_2 F^{a'}
+
F^b \Lambda_2' \widehat{P}_{1}' \widehat{P}_{2} \varepsilon^{a'}
\\&\hspace{1cm}+
\varepsilon^{b} \widehat{P}_{1}' \widehat{P}_{2} \Lambda_2 F^{a'}
+
\varepsilon^{b} \widehat{P}_{1}' \widehat{P}_{2} \varepsilon^{a'}.
\end{aligned}
\end{equation}
Similarly to $F^b, F^a$, set $\widehat{F}^b = \widehat{F}'_{1:T/2}$ and $\widehat{F}^b = \widehat{F}'_{(T/2+1):T}$.
We introduce the matrices
\begin{equation} \label{def:H1H2}
\begin{aligned}
H_1 &:= \frac{1}{dT} (\Lambda_2' \widehat{P}_{1}' \widehat{P}_{1} \Lambda_2 F^{b'} \widehat{F}^{b} 
+ 
\Lambda_2' \widehat{P}_{1}' \widehat{P}_{2} \Lambda_2 F^{a'} \widehat{F}^{a})
\widehat{V}^{-1}_{r},
\\
H_2 &:= \frac{1}{dT} (\Lambda_2' \widehat{P}_{2}' \widehat{P}_{2} \Lambda_2 F^{b'} \widehat{F}^{b} 
+ 
\Lambda_2' \widehat{P}_{2}' \widehat{P}_{1} \Lambda_2 F^{a'} \widehat{F}^{a})
\widehat{V}^{-1}_{r}
\end{aligned}
\end{equation}
such that
\begin{align}
\widetilde{F} \widetilde{H}
&:=
\begin{pmatrix}
F^b & 0
\\
0 & F^a
\end{pmatrix}
\begin{pmatrix}
H_1
\\
H_2
\end{pmatrix}
=
\begin{pmatrix}
F^b H_1
\\
F^a H_2
\end{pmatrix}
\nonumber
\\&=
\frac{1}{dT}
\widetilde{F}
\begin{pmatrix}
\Lambda_2' \widehat{P}_{1}' \widehat{P}_{1} \Lambda_2
&
\Lambda_2' \widehat{P}_{1}' \widehat{P}_{2} \Lambda_2
\\
\Lambda_2' \widehat{P}_{2}' \widehat{P}_{1} \Lambda_2
&
\Lambda_2' \widehat{P}_{2}' \widehat{P}_{2} \Lambda_2
\end{pmatrix}
\widetilde{F}'
\widehat{F} \widehat{V}^{-1}_{r}
\nonumber
\\&=:
\frac{1}{dT}
\widetilde{F} \Lambda_{P} \widetilde{F}'\widehat{F} \widehat{V}^{-1}_{r}.
\label{eq:rep_FtildeHtilde}
\end{align}

We now formalize and prove the key results \eqref{eq:key1}--\eqref{eq:key3}. Those results are crucial to prove Propositions \ref{prop1}--\ref{prop3} as argued in Appendix \ref{se:appB}.
Recall the notation $\delta_{dT} = \min\{ \sqrt{d}, \sqrt{T} \}$.

\begin{proposition}\label{prop2.1}
Under Assumptions \ref{ass:1}--\ref{ass:2}, \ref{ass:C1}--\ref{ass:C3},
\begin{equation*}
\frac{1}{T} \| \widehat{F} - \widetilde{F} \widetilde{H} \|_{F}^2 =
\mathcal{O}_{\prob}\mleft(\frac{1}{\delta_{dT}^2}\mright), \text{ as } d,T\to\infty.
\end{equation*}
\end{proposition}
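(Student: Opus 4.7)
The strategy is to adapt the standard Bai--Ng expansion of the PCA estimators to the block structure of the transformed series $Y$, using the fact that under the null hypothesis the two projections have the same population limit. First I would exploit the PCA eigenvalue identity $\widehat{F}\widehat{V}_r = \frac{1}{dT}YY'\widehat{F}$. Combining this with the expression for $\widetilde{F}\widetilde{H}$ in \eqref{eq:rep_FtildeHtilde}, the difference decomposes as
\begin{equation*}
\widehat{F} - \widetilde{F}\widetilde{H} = \frac{1}{dT}\bigl[YY' - \widetilde{F}\Lambda_{P}\widetilde{F}'\bigr]\widehat{F}\widehat{V}_r^{-1}.
\end{equation*}
Since the columns of $\widehat{F}/\sqrt{T}$ are orthonormal, $\|\widehat{F}\|_F^2 = Tr$, and submultiplicativity reduces the target to bounding $\frac{1}{dT}\|YY' - \widetilde{F}\Lambda_{P}\widetilde{F}'\|$ of order $\delta_{dT}^{-1}$ together with showing $\|\widehat{V}_r^{-1}\| = \mathcal{O}_{\prob}(1)$.

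Next I would expand each of the four blocks of $YY'$ according to \eqref{eq:YY_block}. Substituting $X_t^2 = \Lambda_2 F_t + \varepsilon_t$ as illustrated in \eqref{eq:B5-0.5} splits each block into a signal--signal, two signal--error and one error--error term. The signal--signal contributions across the four blocks exactly reproduce $\widetilde{F}\Lambda_{P}\widetilde{F}'$, so the difference $YY' - \widetilde{F}\Lambda_{P}\widetilde{F}'$ consists of twelve cross pieces of the generic forms $F^{\alpha}\Lambda_2'\widehat{P}^{k\prime}_I\widehat{P}^{l}_I\varepsilon^{\beta\prime}$, $\varepsilon^{\alpha}\widehat{P}^{k\prime}_I\widehat{P}^{l}_I\Lambda_2 F^{\beta\prime}$, and $\varepsilon^{\alpha}\widehat{P}^{k\prime}_I\widehat{P}^{l}_I\varepsilon^{\beta\prime}$ with $\alpha,\beta \in \{a,b\}$ and $k,l \in \{1,2\}$. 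Each is controlled using (i) $\|\widehat{P}^k_I\| \leq 1$ since $\widehat{P}_k$ is an orthogonal projection, (ii) $\|\Lambda_2\|^2 = \mathcal{O}(d)$ from Assumption \ref{ass:C1}, (iii) $\E\|\varepsilon^{\alpha\prime}\Lambda_2\|_F^2 = \mathcal{O}(dT)$ from Assumption \ref{ass:3}(ii), and (iv) $\E\|\varepsilon^{\alpha}\varepsilon^{\beta\prime}\|_F^2 \lesssim d^2 T$ from Assumption \ref{ass:3}(i) and (iii). These are the standard Bai--Ng style estimates and together deliver $\frac{1}{dT}\|YY' - \widetilde{F}\Lambda_P\widetilde{F}'\| = \mathcal{O}_{\prob}(\delta_{dT}^{-1})$.

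The main obstacle is controlling $\|\widehat{V}_r^{-1}\|$, i.e.\ that the $r$-th largest eigenvalue of $YY'/(dT)$ stays bounded away from zero. Under $H_0$ we have $P_{0,1} = P_{0,2} =: P_0$ and $P_0\Lambda_2 = \Lambda_2$, so $P_I\Lambda_2 = \Lambda_2$ at the population level; the randomness of $\widehat{P}^k_I$ enters only through the perturbation $\widehat{P}_k - P_0$, which is $\mathcal{O}_{\prob}(\delta_{dT}^{-1})$ in spectral norm by \ref{Doz2} together with \eqref{eq:projection_eigenvectors_QQ}. Hence $YY'/(dT)$ equals $\Lambda_2 FF'\Lambda_2'/(dT)$ up to lower-order terms, and Assumption \ref{ass:C1} together with the positive-definiteness of $\Sigma_F$ (Assumption \ref{ass:2}) gives that the $r$-th eigenvalue of the leading term is bounded below; Weyl's inequality applied with the block-wise estimate from the previous paragraph then yields $\|\widehat{V}_r^{-1}\| = \mathcal{O}_{\prob}(1)$.

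Putting the three ingredients together,
\begin{equation*}
\tfrac{1}{T}\|\widehat{F} - \widetilde{F}\widetilde{H}\|_F^2 \;\lesssim\; \tfrac{r}{d^2T^2}\bigl\|YY' - \widetilde{F}\Lambda_P\widetilde{F}'\bigr\|_F^2\,\|\widehat{V}_r^{-1}\|^2 = \mathcal{O}_{\prob}(\delta_{dT}^{-2}),
\end{equation*}
as required. The assumption $\sqrt{T}/d \to 0$ enters when converting the $\|\cdot\|_F$ bounds on the 12 error blocks into spectral-norm bounds.
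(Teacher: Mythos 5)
Your reduction is valid but follows a genuinely different route from the paper's. The paper never proves a $\delta_{dT}^{-1}$ spectral rate for $\frac{1}{dT}(YY'-\widetilde F\Lambda_P\widetilde F')$; instead it splits $\widehat F=(\widehat F-F)+F$ in \eqref{align:both_begin}, pairing a mere $\mathcal{O}_{\prob}(1)$ bound on the residual with $\frac1T\|\widehat F-F\|_F^2=\mathcal{O}_{\prob}(\delta_{dT}^{-2})$ (Proposition \ref{prop2.2}) for the first piece, and exploiting the averaging $\|\varepsilon^{b'}F^b\|_F^2=\mathcal{O}_{\prob}(dT)$ after right-multiplying by $F$ for the second. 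Your shortcut via $\|\widehat F\|_F^2=Tr$ dispenses with Proposition \ref{prop2.2} entirely, but shifts the whole burden onto the sharper claim $\frac{1}{dT}\|YY'-\widetilde F\Lambda_P\widetilde F'\|=\mathcal{O}_{\prob}(\delta_{dT}^{-1})$. That claim is in fact true and the route can be completed, so the trade is legitimate; note, however, that your final display should carry the \emph{spectral} norm of $YY'-\widetilde F\Lambda_P\widetilde F'$, not its Frobenius norm, for the factor $Tr$ to do its job.

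The genuine gap is in the signal--error cross pieces $F^{\alpha}\Lambda_2'\widehat P^{k'}_I\widehat P^{l}_I\varepsilon^{\beta'}$. Your ingredients (i)--(iv) do not combine to bound them: the estimated projections sit \emph{between} $\Lambda_2'$ and $\varepsilon'$, so (iii) cannot be applied directly, while the crude alternative $\|\Lambda_2\|\,\|\widehat P^{k}_I\|^2\,\|\varepsilon\|\le \sqrt d\,(\sqrt d\,T^{1/4}+\sqrt T)$ contributes $T^{-1/4}$ to $\frac{1}{dT}\|YY'-\widetilde F\Lambda_P\widetilde F'\|$, which exceeds $\delta_{dT}^{-1}$ whenever $d\gg\sqrt T$. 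What is indispensable is the decomposition $\Lambda_2'\widehat P^{k'}_I\widehat P^{l}_I=\Lambda_2'+(\widehat P^{k'}_I\widehat P^{l}_I\Lambda_2-\Lambda_2)'$ together with the \emph{directional} bound $\|\widehat P^{k'}_I\widehat P^{l}_I\Lambda_2-\Lambda_2\|=\sqrt d\,(\mathcal{O}_{\prob}(1/d)+\mathcal{O}_{\prob}(1/\sqrt T))$, which uses $P_{0,1}=P_{0,2}$ and $P_{0,k}\Lambda_2=\Lambda_2$ under $H_0$ and rests on \ref{Doz3} and Corollary \ref{cor2.3}, not on \ref{Doz2} alone --- this is exactly the content of Lemma \ref{le:difference_projections} and displays \eqref{al:qqppqqpp2}, \eqref{eq:crucial_addedlemma}. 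Relatedly, your stated rate $\|\widehat P_k-P_{0,k}\|=\mathcal{O}_{\prob}(\delta_{dT}^{-1})$ ``by \ref{Doz2}'' is incorrect: \ref{Doz2} gives $\|\widebar Q^k_r-Q^k\|^2=\mathcal{O}_{\prob}(1/d)+\mathcal{O}_{\prob}(1/\sqrt T)$, hence only $\mathcal{O}_{\prob}(d^{-1/2}+T^{-1/4})$ for the projections. Two smaller points: Assumption \ref{ass:3} is not among the hypotheses of this proposition, so (iii)--(iv) should be derived from \ref{ass:C3}, \ref{ass:C4} and \ref{ass:2} as in \eqref{al:poutdcvbhjhgfvdkfnek_1} and \ref{cond:s2}; and your bound $\E\|\varepsilon^{\alpha}\varepsilon^{\beta'}\|_F^2\lesssim d^2T$ misses the off-diagonal contribution of order $dT^2$ (harmless here, since $d\sqrt T+\sqrt d\,T$ still yields $\mathcal{O}_{\prob}(\delta_{dT}^{-1})$ after normalization, but the stated estimate is not correct as written).
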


\begin{proof}
Recall the transformed series $Y_{1:T} = \mleft( \widehat{P}_{1} X^{2}_{1:T/2}, \widehat{P}_{2} X^{2}_{ (T/2 +1) :T} \mright)$ from \eqref{eq:YT2def} as well as $Y':=Y_{1:T}$ from \eqref{eq:defofY}.
Due to \eqref{eq:PCA1}, we get $\frac{1}{dT} YY' \widehat{F} = \widehat{F} \widehat{V}_{r}$ such that
$ \widehat{F} = \frac{1}{dT} YY' \widehat{F} \widehat{V}^{-1}_{r}$.
Then, given \eqref{eq:YY_block} and \eqref{eq:rep_FtildeHtilde},
\begin{align}
\widehat{F} - \widetilde{F} \widetilde{H}
&=
\frac{1}{dT} YY' \widehat{F} \widehat{V}^{-1}_{r} - \frac{1}{dT} \widetilde{F} \Lambda_{P} \widetilde{F}' \widehat{F} \widehat{V}^{-1}_{r}
\nonumber
\\&=
\frac{1}{dT} 
( YY' - \widetilde{F} \Lambda_{P} \widetilde{F}' ) (\widehat{F} - F ) \widehat{V}^{-1}_{r} +
\frac{1}{dT} 
( YY' - \widetilde{F} \Lambda_{P} \widetilde{F}' ) F \widehat{V}^{-1}_{r}.
\label{align:both_begin}
\end{align}
Note further from \eqref{eq:YY_block}, \eqref{eq:B5-0.5} and \eqref{def:H1H2} that
\begin{align} \label{eq:block_representation}
YY' - \widetilde{F} \Lambda_{P} \widetilde{F}'
=
\begin{pmatrix}
A & C \\
C' & B
\end{pmatrix}
\end{align}
with
\begin{align*}
A &= F^b \Lambda_2' \widehat{P}_{1}' \widehat{P}_{1} \varepsilon^{b'}
+
\varepsilon^{b} \widehat{P}_{1}' \widehat{P}_{1} \Lambda_2 F^{b'}
+
\varepsilon^{b} \widehat{P}_{1}' \widehat{P}_{1} \varepsilon^{b'},
\\B &= F^{a} \Lambda_2' \widehat{P}_{2}' \widehat{P}_{2} \varepsilon^{a'}
+
\varepsilon^{a} \widehat{P}_{2}' \widehat{P}_{2} \Lambda_2 F^{a'}
+
\varepsilon^{a} \widehat{P}_{2}' \widehat{P}_{2} \varepsilon^{a'},
\\C &= F^b \Lambda_2' \widehat{P}_{1}' \widehat{P}_{2} \varepsilon^{a'}
+
\varepsilon^{b} \widehat{P}_{1}' \widehat{P}_{2} \Lambda_2 F^{a'}
+
\varepsilon^{b} \widehat{P}_{1}' \widehat{P}_{2} \varepsilon^{a'}.
\end{align*}
After applying the Frobenius norm and triangular inequality, we consider the two summands in \eqref{align:both_begin}
separately. 
For the first summand in \eqref{align:both_begin}, we get
\begin{align}
&\frac{1}{T} \| 
\frac{1}{dT} ( YY' - \widetilde{F} \Lambda_{P} \widetilde{F}' ) (\widehat{F} - F) \widehat{V}^{-1}_{r} \|_{F}^2
\nonumber
\\&\leq
\| \frac{1}{dT} ( YY' - \widetilde{F} \Lambda_{P} \widetilde{F}') \|^2 
\| \widehat{V}^{-1}_{r} \|^2
\frac{1}{T} \| \widehat{F} - F \|_{F}^2
\label{align:XX1}
\\&\leq
\frac{1}{(dT)^2}
(
\| A \|^2 + \| B \|^2 + 2\| C \|^2 )
\| \widehat{V}^{-1}_{r} \|^2
\frac{1}{T} \| \widehat{F} - F \|_{F}^2
\label{align:XX2}
\\&=
\frac{1}{(dT)^2}
(
\| A \|^2 + \| B \|^2 + 2\| C \|^2 )
\| \widehat{V}^{-1}_{r} \|^2
\mathcal{O}_{\prob}\mleft(\frac{1}{\delta^2_{dT}}\mright)
\label{align:XX3}
\\&=
\mathcal{O}_{\prob}\mleft(\frac{1}{\delta^2_{dT}}\mright),
\label{align:XX4}
\end{align}
where \eqref{align:XX1} follows by \ref{item:M0} in Appendix \ref{se:matrixnorminequalities} below. The relation \eqref{align:XX2} is due to the block representation \eqref{eq:block_representation} and \ref{item:M1}. Then, \eqref{align:XX3} follows by Proposition \ref{prop2.2} below.
For the remaining quantities in \eqref{align:XX3}, we focus on $\|A\|^2$ since the matrices $B,C$ can be treated similarly. We have
\begin{align}
\frac{1}{(dT)^2}
\| A \|^2 \| \widehat{V}^{-1}_{r} \|^2
&=
\frac{1}{(dT)^2}
\|
F^b \Lambda_2' \widehat{P}_{1}' \widehat{P}_{1} \varepsilon^{b'}
+
\varepsilon^{b} \widehat{P}_{1}' \widehat{P}_{1} \Lambda_2 F^{b'}
+
\varepsilon^{b} \widehat{P}_{1}' \widehat{P}_{1} \varepsilon^{b'} 
\|^2 \| \widehat{V}^{-1}_{r} \|^2
\nonumber
\\&\leq
3 \| \widehat{P}_{1} \|^2 \| \widehat{P}_{1} \|^2 \mleft(
\frac{1}{T} \| F \|^2 \frac{1}{d} \| \Lambda_{2} \|^2 \frac{1}{dT} \| \varepsilon \|^2 + \frac{1}{(dT)^2} \| \varepsilon \|^4
\mright)\| \widehat{V}^{-1}_{r} \|^2
\label{align:XX5}
\\&=
\mathcal{O}_{\prob}(1),
\label{align:XX6}
\end{align}
where \eqref{align:XX5} is due to submultiplicativity of the spectral norm, and the fact that for instance, 
$\| F^b \|^2 = \lambda_{\max}( F^{b'}F^b) \leq \lambda_{\max}( F^{b'}F^b + F^{a'}F^a) = \| F \|^2$ by \ref{item:M3}.
For \eqref{align:XX6}, observe that
\begin{equation} \label{eq:operatornormprojection}
\begin{aligned}
\| P_{k} \| 
&= \sqrt{\lambda_{\max}( P_{k}'P_{k} )} 
= \sqrt{\lambda_{\max}( P_{k} )}
= \sqrt{\lambda_{\max}( \Lambda_{k} ( \Lambda'_{k} \Lambda_{k} )^{-1} \Lambda'_{k} )}
\\&= \sqrt{\lambda_{\max}( \Lambda'_{k}\Lambda_{k} ( \Lambda'_{k} \Lambda_{k} )^{-1} )}
= \sqrt{\lambda_{\max}( I_{r} )} = 1
, \hspace{0.2cm} k =1,2,
\end{aligned}
\end{equation}
and similarly for their estimated counterparts $\| \widehat{P}_{k} \| = 1 $, $k=1,2$.
For the remaining terms in \eqref{align:XX5}, we get:
\begin{enumerate}[label=\textbf{S.\arabic*.},ref=S.\arabic*, align=left]
\item 
$\frac{1}{T} \| F \|^2 = \mathcal{O}_{\prob}(1)$ since $\frac{1}{T} \sum_{t=1}^T F_{t} F_{t}' \overset{\prob}{\to} \Sigma_{F}$ by the discussion following Assumption \ref{ass:2}. 
\label{cond:s1}
\item  
$\| \varepsilon \varepsilon' \| 
\leq 
T \| \frac{1}{T} \sum_{t=1}^{T} \varepsilon_t \varepsilon_t' - \Sigma_{\varepsilon} \| 
+ T \| \Sigma_{\varepsilon} \| 
= \mathcal{O}_{\prob}\mleft( d\sqrt{T} \mright) + \mathcal{O}(T)$ which follows by Assumption \ref{ass:2} (see p.\ 199 in \cite{doz2011two}) and by Assumption \ref{ass:C4}, respectively. 
 \label{cond:s2}
\item 
$\| \widehat{V}^{-1}_{r} \| = \| d \widehat{\Pi}^{-1}_{r} \| \leq d \| \widehat{\Pi}^{-1}_{r} - \Pi^{-1}_{2}\| + \| (\Pi_{2}/d)^{-1} \| = \mathcal{O}_{\prob}\mleft(\frac{1}{d}\mright) + \mathcal{O}_{\prob}\mleft(\frac{1}{\sqrt{T}}\mright) + \mathcal{O}(1)=\mathcal{O}_{\prob}(1)$ by Corollary \ref{cor:resultsforPI}\ref{cor:resultsforPI_item2}.
 \label{cond:s3}
\end{enumerate}

For the second summand in \eqref{align:both_begin}, we can infer, with further explanations given below,
\begin{align}
&\frac{1}{T} \| \frac{1}{dT} ( YY' - \widetilde{F} \Lambda_{P} \widetilde{F}' ) F \widehat{V}^{-1}_{r} \|^2_{F}
\nonumber
\\&\leq
\frac{1}{T} \| \frac{1}{dT} ( YY' - \widetilde{F} \Lambda_{P} \widetilde{F}' ) F \|^2_{F} \| \widehat{V}^{-1}_{r} \|^2
\label{eq:labelXX1}
\\&=
\frac{1}{T} \frac{1}{(dT)^2}
\mleft\|
\begin{pmatrix}
A & C \\
C' & B
\end{pmatrix}
F
\mright\|^2_F
\| \widehat{V}^{-1}_{r} \|^2
\label{eq:labelXX2}
\\&=
\frac{1}{T} \frac{1}{(dT)^2}
\mleft\|
\begin{pmatrix}
AF^b + CF^a \\
C'F^b + BF^a
\end{pmatrix}
\mright\|^2_F
\| \widehat{V}^{-1}_{r} \|^2
\nonumber
\\&\leq
\frac{1}{T} \frac{1}{(dT)^2}
2 ( \| AF^b \|^2_F + \| CF^a \|^2_F + \| C'F^b \|^2_F + \| BF^a \|^2_F ) \| \widehat{V}^{-1}_{r} \|^2
\label{eq:labelXX3}
\\&=
\mathcal{O}_{\prob}\mleft(\frac{1}{\delta^2_{dT}}\mright),
\label{eq:labelXX3YY004}
\end{align}
where \eqref{eq:labelXX1} follows again by \ref{item:M0}, \eqref{eq:labelXX2} uses the notation \eqref{eq:block_representation}, and \eqref{eq:labelXX3} is due to properties of the Frobenius norm. 
The term $\| \widehat{V}_{r}^{-1} \|^2$ in \eqref{eq:labelXX3} can be bounded by \ref{cond:s3} above. For the other terms in \eqref{eq:labelXX3}, 
we focus on the first summand
\begin{align}
AF^b
&=
F^b \Lambda_2' \widehat{P}_{1}' \widehat{P}_{1} \varepsilon^{b'} F^b 
+
\varepsilon^{b} \widehat{P}_{1}' \widehat{P}_{1} \Lambda_2 F^{b'} F^b 
+
\varepsilon^{b} \widehat{P}_{1}' \widehat{P}_{1} \varepsilon^{b'} F^b.
\label{eq:wqwqwqw12345}
\end{align}
For the first summand in \eqref{eq:wqwqwqw12345}, we get
\begin{align*}
\frac{1}{d^2T^3} \| F^b \Lambda_2' \widehat{P}_{1}' \widehat{P}_{1} \varepsilon^{b'} F^b  \|_F^2
&\leq
\frac{1}{d} \frac{1}{T} \| F \|^2 \| \widehat{P}_{1} \|^2 \| \widehat{P}_{1} \|^2 \frac{1}{d} \| \Lambda_{2}' \|^2 \| \frac{1}{T} \varepsilon^{b'} F^b \|_F^2
\\&= \frac{1}{d} \mathcal{O}_{\prob}(1) \mathcal{O}(1) \mathcal{O}_{\prob}\mleft( \frac{d}{T} \mright)
= \mathcal{O}_{\prob}\mleft( \frac{1}{T} \mright) 
= \mathcal{O}_{\prob}\mleft( \frac{1}{\delta^2_{dT}} \mright),
\end{align*}
where the asymptotics of the terms is gotten as in \eqref{align:XX6}, except for $\varepsilon^{b'} F^b$.
The asymptotics of $(\varepsilon^{b'} F^b)/T$ follow by:
\begin{enumerate}[label=\textbf{S.\arabic*.},ref=S.\arabic*, align=left]
\setcounter{enumi}{3}
\item $\frac{1}{T} \| \varepsilon' F \|^2_F = \mathcal{O}_{\prob}(dT)$ due to the same calculations as done on p.\ 199 in \cite{doz2011two} as part of their proof of Lemma 2(i) under Assumptions  \ref{ass:2} and \ref{ass:C4}. 
\label{cond:s4}
\end{enumerate}
For the second summand in \eqref{eq:wqwqwqw12345}, 
\begin{align*}
\frac{1}{d^2 T^3} \| \varepsilon^{b} \widehat{P}_{1}' \widehat{P}_{1} \Lambda_2 F^{b'} F^{b} \|_{F}^2
&\leq  
\frac{1}{d^2 T} \| \varepsilon^{b} \widehat{P}_{1}' \widehat{P}_{1} \Lambda_2 \|^2 \mleft( \frac{1}{T} \| F \|^2_F \mright)^2
 = \mathcal{O}_{\prob}\mleft( \frac{1}{\delta^2_{dT}} \mright)
 \end{align*}
by \ref{cond:s1}, and since
\begin{align}
\frac{1}{d^2 T} \| \varepsilon^{b} \widehat{P}_{1}' \widehat{P}_{1} \Lambda_2 \|^2
&\leq
\frac{2}{d^2 T} \| \varepsilon^{b} \Lambda_2 \|^2
+
\frac{2}{d^2 T} \| \varepsilon^{b} ( \widehat{P}_{1}' \widehat{P}_{1} \Lambda_2 - \Lambda_2) \|^2
\nonumber
\\&=
\frac{1}{d^2 T} \mathcal{O}_{\prob}\mleft( dT \mright) + \frac{1}{d^2 T} \mleft(\mathcal{O}_{\prob}\mleft( d\sqrt{T} \mright) + \mathcal{O}(T)\mright)
d \mleft( \mathcal{O}_{\prob}\mleft( \frac{1}{d^2} \mright) + \mathcal{O}_{\prob}\mleft( \frac{1}{T} \mright) \mright)
\label{al:poutdcvbhjhgfvdkfnek0}
\\&=
 \mathcal{O}_{\prob}\mleft( \frac{1}{\delta^2_{dT}} \mright),
\label{al:poutdcvbhjhgfvdkfnek}
\end{align}
where \eqref{al:poutdcvbhjhgfvdkfnek0} is obtained from the following observations. The contribution of $\| \varepsilon^b \|^2 \leq \| \varepsilon \|^2 = \| \varepsilon \varepsilon' \|$ by \ref{item:M3} is bounded by \ref{cond:s2}. The term $\| \widehat{P}_{1}' \widehat{P}_{1} \Lambda_2 - \Lambda_2 \|^2$ can be dealt with as \eqref{eq:crucial_addedlemma}, with the difference of extra $\Lambda_2'$ and $\sqrt{d}$.
Furthermore, with $ \sigma^2_{\varepsilon,ij}$ denoting the $ij$th element of $\Sigma_{\varepsilon}$, 
 \begin{align}
\E \| \varepsilon^b \Lambda_{2} \|^2_{F}
&= 
\sum_{s=1}^r \sum_{t=1}^{T/2} \sum_{i,j=1}^{d} \lambda_{2,is}\lambda_{2,js} \E (\varepsilon_{t,i} \varepsilon_{t,j})
\nonumber
\\&\leq 
\widebar{\lambda}^2 r \sum_{t=1}^T \sum_{i,j=1}^{d}  \sigma^2_{\varepsilon,ij}
\label{al:poutdcvbhjhgfv}
\\&=
T \widebar{\lambda}^2 r \| \Sigma_{\varepsilon} \|^2_F 
\leq 
dT \widebar{\lambda}^2 r \| \Sigma_{\varepsilon} \|^2,
\label{al:poutdcvbhjhgfvdkfnek_1}
\end{align}
where \eqref{al:poutdcvbhjhgfv} is due to Assumption \ref{ass:C3}. We further need Assumptions \ref{ass:2} and \ref{ass:C4} for the last bound to be $\mathcal{O}(1/d)$. The calculations in \eqref{al:poutdcvbhjhgfv} and \eqref{al:poutdcvbhjhgfvdkfnek_1} are essentially the same as those done for the proof of Lemma 1(ii) in \cite{bai2002determining}.
For the last summand in \eqref{eq:wqwqwqw12345}, 
\begin{align*}
\frac{1}{d^2 T^3} \| \varepsilon^{b} \widehat{P}_{1}' \widehat{P}_{1} \varepsilon^{b'} F^b \|_{F}^2
&\leq 
\frac{1}{d^2 T} \|  \varepsilon \|^2 \| \widehat{P}_{1} \|^2 \| \widehat{P}_{1} \|^2 \| \frac{1}{T} \varepsilon^{b'} F^b \|_{F}^2
\\&= \frac{1}{d^2 T} \mleft( \mathcal{O}_{\prob}\mleft( \sqrt{T}d \mright) + \mathcal{O}(T) \mright) \mathcal{O}_{\prob}\mleft( \frac{d}{T} \mright)
= \mathcal{O}_{\prob}\mleft( \frac{1}{\delta^2_{dT}} \mright)
\end{align*}
by \ref{cond:s2} and again by the calculations as on p.\ 199 in \cite{doz2011two} under Assumptions  \ref{ass:2} and \ref{ass:C4}, and $\| \widehat{P}^{k} \| =1$.

The desired result follows from \eqref{align:both_begin}, \eqref{align:XX4} and \eqref{eq:labelXX3YY004}.
\end{proof}

\begin{proposition}\label{prop2.1.0}
Under Assumptions \ref{ass:1}--\ref{ass:3}, \ref{ass:C1}--\ref{ass:C3},
\begin{equation*}
\frac{1}{T^2} \| (\widehat{F} - \widetilde{F} \widetilde{H})'F \|_{F}^2 =
\mathcal{O}_{\prob}\mleft(\frac{1}{\delta_{dT}^4}\mright), \text{ as } d,T\to\infty.
\end{equation*}
\end{proposition}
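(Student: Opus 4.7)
The plan is to adapt the strategy of Lemma 2 in \cite{han2015tests} to the transformed series. Starting from the identity $\widehat{F} - \widetilde{F}\widetilde{H} = (dT)^{-1}(YY' - \widetilde{F}\Lambda_P\widetilde{F}')\widehat{F}\widehat{V}_r^{-1}$ established in the proof of Proposition \ref{prop2.1}, right-multiplying by $F$ and substituting $\widehat{F} = (\widehat{F} - \widetilde{F}\widetilde{H}) + \widetilde{F}\widetilde{H}$ yields the decomposition $(\widehat{F}-\widetilde{F}\widetilde{H})'F = \mathcal{I} + \mathcal{II}$, with
\begin{equation*}
\mathcal{I} = \frac{1}{dT}\widehat{V}_r^{-1}(\widehat{F}-\widetilde{F}\widetilde{H})'(YY'-\widetilde{F}\Lambda_P\widetilde{F}')F,\quad \mathcal{II} = \frac{1}{dT}\widehat{V}_r^{-1}\widetilde{H}'\widetilde{F}'(YY'-\widetilde{F}\Lambda_P\widetilde{F}')F.
\end{equation*}
The statement then reduces to showing $\|\mathcal{I}\|_F + \|\mathcal{II}\|_F = \mathcal{O}_\prob(T/\delta_{dT}^2)$, after which squaring and dividing by $T^2$ gives the claim.

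The term $\mathcal{I}$ is handled at once by submultiplicativity: combining $\|\widehat{V}_r^{-1}\|=\mathcal{O}_\prob(1)$ from \ref{cond:s3}, $\|\widehat{F}-\widetilde{F}\widetilde{H}\|_F = \mathcal{O}_\prob(\sqrt{T}/\delta_{dT})$ from Proposition \ref{prop2.1}, and $\|(dT)^{-1}(YY'-\widetilde{F}\Lambda_P\widetilde{F}')F\|_F = \mathcal{O}_\prob(\sqrt{T}/\delta_{dT})$---which is precisely the intermediate bound derived between \eqref{eq:labelXX1} and \eqref{eq:labelXX3YY004} in the proof of Proposition \ref{prop2.1}---one obtains $\|\mathcal{I}\|_F = \mathcal{O}_\prob(T/\delta_{dT}^2)$.

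The term $\mathcal{II}$ is the main obstacle. Exploiting the block structure of $\widetilde{F}'$ from \eqref{eq:rep_FtildeHtilde} and the block representation \eqref{eq:block_representation} of $YY'-\widetilde{F}\Lambda_P\widetilde{F}'$, the matrix $(dT)^{-1}\widetilde{F}'(YY'-\widetilde{F}\Lambda_P\widetilde{F}')F$ reduces to $r\times r$ sub-blocks $(dT)^{-1}F^{b'}(AF^b+CF^a)$ and $(dT)^{-1}F^{a'}(C'F^b+BF^a)$. Expanding $A$, $B$, $C$, each contribution is a sum of triple products of three schematic types: (T1) $F^{*'}F^*\Lambda_2'\widehat{P}^{\cdot}_I\widehat{P}^{\cdot}_I\varepsilon^{*'}F^*$, (T2) $F^{*'}\varepsilon^*\widehat{P}^{\cdot}_I\widehat{P}^{\cdot}_I\Lambda_2 F^{*'}F^*$, and (T3) $F^{*'}\varepsilon^*\widehat{P}^{\cdot}_I\widehat{P}^{\cdot}_I\varepsilon^{*'}F^*$, with $*\in\{a,b\}$ and $\widehat{P}^{\cdot}_I$ standing for $\widehat{P}^1_I$ or $\widehat{P}^2_I$; each must be $\mathcal{O}_\prob(T/\delta_{dT}^2)$ after division by $dT$. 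Type (T3) is crude: $\|\widehat{P}^{k}_I\|\leq 1$ together with $\|\varepsilon^{*'}F^*\|_F = \mathcal{O}_\prob(\sqrt{dT})$ from \ref{cond:s4} gives $\mathcal{O}_\prob(1)$, dominated by $T/\delta_{dT}^2$ since $\delta_{dT}^2\leq T$.

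Types (T1) and (T2) concentrate the technical difficulty: bounding $\|\Lambda_2\|=\mathcal{O}(\sqrt{d})$ and $\|F^{*'}\varepsilon^*\|_F$ separately would introduce a spurious factor of $\sqrt{d}$. The resolution combines (a) the sharper second-moment estimate $\|\Lambda_2'\varepsilon^{*'}F^*\|_F = \mathcal{O}_\prob(\sqrt{Td})$, which follows from
\begin{equation*}
\E\|\Lambda_2'\varepsilon^{b'}F^b\|_F^2 = \sum_{s,t=1}^{T/2}\tr\bigl(\E[F_sF_t']\,\Lambda_2'\,\E[\varepsilon_t\varepsilon_s']\,\Lambda_2\bigr) = \mathcal{O}(Td)
\end{equation*}
under Assumptions \ref{ass:2}, \ref{ass:C3}, and \ref{ass:C4}; and (b) the null-hypothesis identity $\|\widehat{P}^{\cdot}_I\widehat{P}^{\cdot}_I\Lambda_2 - \Lambda_2\| = \mathcal{O}_\prob(\sqrt{d}/\delta_{dT})$ in spectral norm, which holds because $P_{0,1}=P_{0,2}$ forces $P_{0,k}\Lambda_2=\Lambda_2$ and the PCA-consistency bound gives $\|\widehat{P}_k - P_{0,k}\| = \mathcal{O}_\prob(1/\delta_{dT})$. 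Using (b) to replace $\widehat{P}^{\cdot}_I\widehat{P}^{\cdot}_I\Lambda_2$ by $\Lambda_2$ at negligible cost, and combining (a) with $\|F^{*'}F^*\| = \mathcal{O}_\prob(T)$ from \ref{cond:s1}, one bounds (T1) and (T2) by $\mathcal{O}_\prob(\sqrt{T/d} + \sqrt{T}/\delta_{dT})$, both dominated by $T/\delta_{dT}^2$. The off-diagonal terms involving $C$ are treated identically, using the analogous estimate $\widehat{P}^1_I\widehat{P}^2_I\Lambda_2 = \Lambda_2 + \mathcal{O}_\prob(\sqrt{d}/\delta_{dT})$. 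Collecting all contributions yields $\|\mathcal{II}\|_F = \mathcal{O}_\prob(T/\delta_{dT}^2)$, and the proof is complete.
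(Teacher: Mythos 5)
Your proof is correct and follows essentially the same route as the paper's: both start from the eigen-equation identity for $\widehat{F}$, split off a quadratic remainder controlled by an already-established consistency rate, reduce the main term to the block quantities $F^{*'}(\,\cdot\,)F^{*}$ built from $A,B,C$, and dispose of the delicate (T1)/(T2) terms with exactly the two estimates the paper uses, namely the second-moment bound $\E\|\Lambda_2'\varepsilon^{b'}F^b\|_F^2=\mathcal{O}(dT)$ (the paper's \eqref{eq:popoqwqwqpopqwqw}) and the null-hypothesis projection estimate of the form \eqref{eq:crucial_addedlemma}. The only substantive difference is that you center $\widehat{F}$ at $\widetilde{F}\widetilde{H}$ (invoking Proposition \ref{prop2.1}) where the paper centers it at $F$ (invoking Proposition \ref{prop2.2}); this is harmless and gives the same rate, but it leaves the extra factor $\widetilde{H}'$ in your term $\mathcal{II}$, whose boundedness $\|\widetilde{H}\|=\mathcal{O}_{\prob}(1)$ you should state explicitly (it follows from Lemma \ref{le:convergenceH12toH} together with $\|H_0\|=\mathcal{O}(1)$).
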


\begin{proof}
As in \eqref{align:both_begin}, we get
\begin{align}
F'(\widehat{F} - \widetilde{F} \widetilde{H})
&=
\frac{1}{dT} 
F'(YY' \widehat{F} \widehat{V}^{-1}_{r} - \widetilde{F} \Lambda_{P} \widetilde{F}' \widehat{F} \widehat{V}^{-1}_{r})
\nonumber
\\&=
\frac{1}{dT} 
F'( YY' - \widetilde{F} \Lambda_{P} \widetilde{F}' ) (\widehat{F} - F ) \widehat{V}^{-1}_{r} +
\frac{1}{dT} 
F'( YY' - \widetilde{F} \Lambda_{P} \widetilde{F}' ) F \widehat{V}^{-1}_{r}.
\label{eq:ndndndnndnXX}
\end{align}
After applying the Frobenius norm and triangle inequality, we consider the two summands in \eqref{eq:ndndndnndnXX} separately. For the first summand in \eqref{eq:ndndndnndnXX}, using the representation \eqref{eq:block_representation}, we proceed similarly to \eqref{align:XX1}--\eqref{align:XX4} but employ different norms on respective quantities. Note that
\begin{align}
&\frac{1}{T^2} \| 
\frac{1}{dT} F'( YY' - \widetilde{F} \Lambda_{P} \widetilde{F}' ) (\widehat{F} - F) \widehat{V}^{-1}_{r} \|_{F}^2
\nonumber
\\&\leq
\frac{1}{T} \frac{1}{(dT)^2}
\mleft\|
F'
\begin{pmatrix}
A & C \\
C' & B
\end{pmatrix}
\mright\|^2
\frac{1}{T}
\| \widehat{F} - F \|_{F}^2
\| \widehat{V}^{-1}_{r} \|^2
\label{align:XX1XX}
\\&=
\frac{1}{T} \frac{1}{(dT)^2}
\mleft\|
\begin{pmatrix}
F^{b'}A + F^{a'}C',
F^{b'}C + F^{a'}B
\end{pmatrix}
\mright\|^2
\frac{1}{T}
\| \widehat{F} - F \|_{F}^2
\| \widehat{V}^{-1}_{r} \|^2
\nonumber
\\&\leq
\frac{1}{T} \frac{1}{(dT)^2} \mleft(
\| F^{b'}A + F^{a'}C' \|^2 + \| F^{b'}C + F^{a'}B \|^2
\mright)
\frac{1}{T}
\| \widehat{F} - F \|_{F}^2
\| \widehat{V}^{-1}_{r} \|^2
\label{align:XX1XX00}
\\&\leq
\frac{1}{T} \frac{1}{(dT)^2} 2 \mleft(
\| F^{b'}A \|^2 + \| F^{a'}C' \|^2 + \| F^{b'}C \|^2 + \| F^{a'}B \|^2
\mright)
\frac{1}{T}
\| \widehat{F} - F \|_{F}^2
\| \widehat{V}^{-1}_{r} \|^2
\label{align:XX1XX0}
\\&=
\mathcal{O}_{\prob}\mleft(\frac{1}{\delta^4_{dT}}\mright),
\label{align:XX3XX}
\end{align}
where \eqref{align:XX1XX} follows by \ref{item:M0} and the inequality \eqref{align:XX1XX0} is due to \ref{item:M4}. The quantities in \eqref{align:XX1XX0} can then be treated separately. We get \eqref{align:XX3XX} by using \ref{cond:s3} in the proof of Proposition \ref{prop2.1}, Proposition \ref{prop2.2} and suitably bounding the four summands in \eqref{align:XX1XX0} as follows. For example, we give the detailed arguments for the second summand in \eqref{align:XX1XX0}; the remaining ones can be treated similarly. We have
\begin{align}
F^{a'}C
=
F^{a'}F^b \Lambda_2' \widehat{P}_{1}' \widehat{P}_{2} \varepsilon^{b'}
+
F^{a'}\varepsilon^{b} \widehat{P}_{1}' \widehat{P}_{2} \Lambda_2 F^{a'}
+
F^{a'}\varepsilon^{b} \widehat{P}_{1}' \widehat{P}_{2} \varepsilon^{a'}.
\label{align:XX3XX0}
\end{align}
After applying the spectral norm and triangle inequality, we consider the summands in \eqref{align:XX3XX0} separately. For the first one, 
\begin{align*}
\frac{1}{T} \frac{1}{(dT)^2} 
\| F^{a'}F^b \Lambda_2' \widehat{P}_{1}' \widehat{P}_{2} \varepsilon^{a'} \|^2 
\leq
\frac{1}{T} \| F^{a'} \|^2 \frac{1}{T} \| F^b \|^2 \frac{1}{d^2 T} \| \Lambda_2' \widehat{P}_{1}' \widehat{P}_{2} \varepsilon^{a'} \|^2 
=
\mathcal{O}_{\prob}\mleft(\frac{1}{\delta^2_{dT}}\mright)=\mathcal{O}_{\prob}(1), 
\end{align*}
by \ref{cond:s1} and arguing as for \eqref{al:poutdcvbhjhgfvdkfnek}. For the second summand in \eqref{align:XX3XX0}, 
\begin{align*}
\frac{1}{T} \frac{1}{(dT)^2} 
\| F^{a'}\varepsilon^{b} \widehat{P}_{1}' \widehat{P}_{2} \Lambda_2 F^{a'} \|^2 
\leq
\frac{1}{T^2} 
\| F^{a'} \|^4 \frac{1}{d^2T} \| \varepsilon^{b} \widehat{P}_{1}' \widehat{P}_{2} \Lambda_2 \|^2 
=
\mathcal{O}_{\prob}\mleft(\frac{1}{\delta^2_{dT}}\mright)=\mathcal{O}_{\prob}(1),
\end{align*}
by \ref{cond:s1} and \eqref{al:poutdcvbhjhgfvdkfnek}.
For the third summand in \eqref{align:XX3XX0}, 
\begin{align*}
\frac{1}{T} \frac{1}{(dT)^2} 
\| F^{a'}\varepsilon^{b} \widehat{P}_{1}' \widehat{P}_{2} \varepsilon^{a'} \|^2
\leq
\frac{1}{T} \| F^{a'} \|^2 \frac{1}{dT} \| \varepsilon^{b} \|^2 \| \widehat{P}_{1}' \widehat{P}_{2} \|^2 \frac{1}{dT} \| \varepsilon^{a} \|^2
=
\mathcal{O}_{\prob}(1)
\end{align*}
by \ref{cond:s1} and \ref{cond:s2} and since $\| \widehat{P}^{k} \| =  1 $.

For the second summand in \eqref{eq:ndndndnndnXX}, we get as in \eqref{eq:labelXX1}--\eqref{eq:labelXX3YY004}, with further explanations given below,
\begin{align}
&
\frac{1}{T^2}
\frac{1}{(dT)^2} 
\| F'( YY' - \widetilde{F} \Lambda_{P} \widetilde{F}' ) F \widehat{V}^{-1}_{r} \|^2_F
\nonumber
\\&\leq
\frac{2}{T^2}
\frac{1}{(dT)^2} \mleft(
\| F^{b'}AF^b \|^2_{F} + \| F^{a'}C'F^b \|^2_{F} + \| F^{b'}CF^a \|^2_{F} + \| F^{a'}BF^a \|^2_F \mright) \| \widehat{V}^{-1}_{r} \|^2
\label{eq:popopopopopXX}
\\&=
\mathcal{O}_{\prob}\mleft(\frac{1}{\delta^4_{dT}}\mright).
\label{eq:popopopopopXXYY}
\end{align}
For \eqref{eq:popopopopopXXYY}, we focus on $F^{b'}AF^b$, that is,
\begin{align} \label{eq:popopopopopXXYYZZ}
F^{b'}AF^b
=
F^{b'}F^b \Lambda_2' \widehat{P}_{1}' \widehat{P}_{1} \varepsilon^{b'} F^b
+
 F^{b'}\varepsilon^{b} \widehat{P}_{1}' \widehat{P}_{1} \Lambda_2 F^{b'} F^b
+
 F^{b'}\varepsilon^{b} \widehat{P}_{1}' \widehat{P}_{1} \varepsilon^{b'} F^b.
\end{align}
Before we consider the individual summands in \eqref{eq:popopopopopXXYYZZ} to prove \eqref{eq:popopopopopXXYY} for $\| F^{b'}AF^b \|^2_{F}$, note that
\begin{align}
\frac{1}{d^2T^2}
\| \Lambda_2' \widehat{P}_{1}' \widehat{P}_{1} \varepsilon^{b'} F^b \|_F^2
&\leq
\frac{2}{d^2T^2}\| \Lambda_2' \varepsilon^{b'} F^b \|^2_{F} 
+
\frac{2}{d^2T^2}\| ( \Lambda_2' - \Lambda_2' \widehat{P}_{1}' \widehat{P}_{1} ) \varepsilon^{b'} F^b \|^2_{F}
\nonumber
\\&\leq
\frac{2}{d^2T^2}\| \Lambda_2' \varepsilon^{b'} F^b \|^2_{F} 
+
\frac{2}{d^2T}\| \Lambda_2' - \Lambda_2' \widehat{P}_{1}' \widehat{P}_{1} \|^2 \frac{1}{T} \| \varepsilon^{b'} F^b \|^2_{F}
\label{eq:popoqwqwqpopqwqwAA00BB00}
\\&=
\frac{1}{d^2T^2}
\mathcal{O}_{\prob}\mleft( dT \mright) + 
\frac{1}{d^2T} d \mleft( \mathcal{O}_{\prob}\mleft( \frac{1}{d^2} \mright) + \mathcal{O}_{\prob}\mleft( \frac{1}{T} \mright) \mright) \mathcal{O}_{\prob}(dT)
\label{eq:popoqwqwqpopqwqwAA00}
\\&=
\mathcal{O}_{\prob}\mleft(\frac{1}{\delta^4_{dT}}\mright),
\label{eq:popoqwqwqpopqwqwAA} 
\end{align}
where \eqref{eq:popoqwqwqpopqwqwAA00BB00} is due to \ref{item:M0} and \eqref{eq:popoqwqwqpopqwqwAA00} can be argued similarly to \eqref{al:poutdcvbhjhgfvdkfnek0}. 
Indeed, the asymptotics of $\| \varepsilon^{b'} F^b \|_{F}^2$ follow from \ref{cond:s4} and $\Lambda_2' - \Lambda_2' \widehat{P}_{1}' \widehat{P}_{1}$ can be dealt with as \eqref{eq:crucial_addedlemma}.
For 
$\Lambda_2' \varepsilon^{b'} F^b = \mleft( \sum_{i=1}^{d} \sum_{t=1}^{T/2} \lambda_{2,is_1} \varepsilon_{t,i} F_{t,s_2} \mright)_{s_1,s_2=1,\dots,r}$, 
\begin{align}
\E \| \Lambda_2' \varepsilon^{b'} F^b \|^2_{F} 
&= 
\sum_{s_{1},s_{2}=1}^r \sum_{t_1,t_2=1}^{T/2} \sum_{i,j=1}^{d} 
\lambda_{2,is_1} \lambda_{2,js_{1}} \E ( F_{t_{1},s_{1}} F_{t_{2},s_{2}} \varepsilon_{t_1,i} \varepsilon_{t_2,j})
\nonumber
\\&= 
\sum_{s_{1},s_{2}=1}^r \sum_{t_1,t_2=1}^{T/2} \sum_{i,j=1}^{d}  \lambda_{2,is_1} \lambda_{2,js_{1}} 
\Gamma_{F,s_{1}s_{2}}(t_1 - t_2) \E(\varepsilon_{t_1,i} \varepsilon_{t_2,j})
\label{eq:popoqwqwqpopqwqw000}
\\&\leq
\widebar{\lambda}^2
\sum_{s_{1},s_{2}=1}^r T \sum_{k \in \ZZ}
| \Gamma_{F,s_{1}s_{2}}(k) | \max_{t_{1},t_{2}=1,\dots,T} \sum_{i,j=1}^{d} | \E(\varepsilon_{t_1,i} \varepsilon_{t_2,j}) |
\label{eq:popoqwqwqpopqwqw000111}
\\&\leq
\widebar{\lambda}^2
r^2 T \sum_{k \in \ZZ}
\| \Gamma_{F}(k) \| dM
=\mathcal{O}(dT),
\label{eq:popoqwqwqpopqwqw}
\end{align}
where \eqref{eq:popoqwqwqpopqwqw000} is due to the independence of factors and errors by Assumption \ref{ass:2}, and \eqref{eq:popoqwqwqpopqwqw000111} is due to Assumption \ref{ass:C3}. Finally,
\eqref{eq:popoqwqwqpopqwqw} is due to Assumption \ref{ass:3}(i). Note that \eqref{eq:popoqwqwqpopqwqw} is what is stated as Assumption 6(b) in \cite{han2015tests}. In \cite{doz2011two} as well as here, this assumption is covered by the additional structural assumptions on the factors in Assumption \ref{ass:2} which imply the summability of their autocovariances.

For the first summand in \eqref{eq:popopopopopXXYYZZ}, we get
\begin{align*}
\frac{1}{d^2 T^4} \| F^{b'}F^b \Lambda_2' \widehat{P}_{1}' \widehat{P}_{1} \varepsilon^{b'} F^b \|_F^2
&\leq
\frac{1}{T^2} \| F \|^4 
\frac{1}{d^2T^2}
\| \Lambda_2' \widehat{P}_{1}' \widehat{P}_{1} \varepsilon^{b'} F^b \|_F^2
\\&= \mathcal{O}_{\prob}(1) \mathcal{O}_{\prob}\mleft(\frac{1}{\delta^4_{dT}}\mright)
= \mathcal{O}_{\prob}\mleft(\frac{1}{\delta^4_{dT}}\mright),
\end{align*}
by \eqref{eq:popoqwqwqpopqwqwAA} and \ref{cond:s1} above. For the second summand in \eqref{eq:popopopopopXXYYZZ}, 
\begin{align*}
&
\frac{1}{d^2T^4} \| F^{b'} \varepsilon^{b} \widehat{P}_{1}' \widehat{P}_{1} \Lambda_2 F^{b'} F^{b} \|_{F}^2
\\&\leq  
\frac{1}{d^2T^2}\| F^{b'} \varepsilon^{b} \widehat{P}_{1}' \widehat{P}_{1} \Lambda_2 \|^2_{F} \mleft( \frac{1}{T} \| F \|^2 \mright)^2
 = \mathcal{O}_{\prob}\mleft(\frac{1}{\delta^4_{dT}}\mright),
 \end{align*}
again by \eqref{eq:popoqwqwqpopqwqwAA} and \ref{cond:s1}.
For the last summand in \eqref{eq:popopopopopXXYYZZ}, 
\begin{align*}
\frac{1}{d^2 T^4} \| F^{b'} \varepsilon^{b} \widehat{P}_{1}' \widehat{P}_{1} \varepsilon^{b'} F^b \|_{F}^2
\leq 
\frac{1}{d^2}  \| \frac{1}{T} \varepsilon^{b'} F^b \|_{F}^4
= \frac{1}{d^2} \mathcal{O}_{\prob}\mleft( \frac{d^2}{T^2} \mright)
= \mathcal{O}_{\prob}\mleft( \frac{1}{T^2} \mright)
= \mathcal{O}_{\prob}\mleft(\frac{1}{\delta^4_{dT}}\mright)
\end{align*}
by \ref{cond:s4} which yields \eqref{eq:popopopopopXXYY}. 

The desired result follows from \eqref{eq:ndndndnndnXX}, \eqref{align:XX3XX} and \eqref{eq:popopopopopXXYY}.
\end{proof}

\begin{lemma} \label{le:convergenceH12toH}
Recall $H_1,H_2$ from \eqref{def:H1H2}. Under Assumptions \ref{ass:1}--\ref{ass:2}, \ref{ass:C1}--\ref{ass:C4}, 
\begin{equation*}
H_1 - H_0 = \mathcal{O}_{\prob}\mleft(\frac{1}{\delta_{dT}}\mright)
\text{ and }
H_2 - H_0 = \mathcal{O}_{\prob}\mleft(\frac{1}{\delta_{dT}}\mright),
\text{ as } d,T\to\infty,
\end{equation*}
where
\begin{equation} \label{eq:def:H0}
H_0 = Q^2 \widetilde{\Pi}_{2} Q^{2'} \Sigma_{F} \widetilde{\Pi}_{2}^{-1}.
\end{equation}
\end{lemma}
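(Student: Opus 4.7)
The plan is to split
\begin{equation*}
H_k - H_0 = (H_k - H) + (H - H_0),\quad k=1,2,
\end{equation*}
with $H$ as in \eqref{eq:HconvH0}. The convergence $H - H_0 = \mathcal{O}_{\prob}(1/\delta_{dT})$ is a standard PCA rotation statement that follows from Assumption \ref{ass:C1} (controlling $\Lambda_2'\Lambda_2/d$), the results \ref{Doz1}, \ref{Doz3}, \ref{Doz2} (controlling the PCA estimators of the second series), and Corollary \ref{cor:resultsforPI}\ref{cor:resultsforPI_item2} (controlling $\widehat{V}^{-1}_r$). The main work is in $H_k - H$; I argue for $k=1$ since $k=2$ is symmetric.

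Using $\widehat{P}^k_I = (I_d + \widehat{P}_k)/2$ and the idempotency $\widehat{P}_k^2 = \widehat{P}_k$, a one-line calculation gives
\begin{equation*}
\widehat{P}^{1'}_I\widehat{P}^1_I - I_d = \tfrac{3}{4}(\widehat{P}_1 - I_d),\qquad \widehat{P}^{1'}_I\widehat{P}^2_I - I_d = \tfrac{1}{4}(\widehat{P}_1\widehat{P}_2 + \widehat{P}_1 + \widehat{P}_2 - 3I_d).
\end{equation*}
Substituting into the definitions of $H_1$ and $H$,
\begin{equation*}
H_1 - H = \tfrac{1}{dT}\Lambda_2'\bigl[(\widehat{P}^{1'}_I\widehat{P}^1_I - I_d)\Lambda_2\, F^{b'}\widehat{F}^b + (\widehat{P}^{1'}_I\widehat{P}^2_I - I_d)\Lambda_2\, F^{a'}\widehat{F}^a\bigr]\widehat{V}^{-1}_r.
\end{equation*}
Under the null hypothesis \eqref{eq:H01}, $\col(\Lambda_1) = \col(\Lambda_2)$ and hence $P_{0,k}\Lambda_2 = \Lambda_2$ for $k=1,2$; it therefore suffices to bound $\tfrac{1}{d}\|\Lambda_2'(\widehat{P}_k - P_{0,k})\Lambda_2\|$ for $k=1,2$, together with $\tfrac{1}{d}\|\Lambda_2'(\widehat{P}_1\widehat{P}_2 - P_{0,1}P_{0,2})\Lambda_2\|$. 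The cross term reduces to the first two via $\widehat{P}_1\widehat{P}_2 - P_{0,1}P_{0,2} = (\widehat{P}_1 - P_{0,1})P_{0,2} + \widehat{P}_1(\widehat{P}_2 - P_{0,2})$. The remaining factors $\|F^{b'}\widehat{F}^b/T\|$ and $\|\widehat{V}^{-1}_r\|$ are $\mathcal{O}_{\prob}(1)$ by standard bounds (in particular \ref{cond:s1} and \ref{cond:s3} in the proof of Proposition \ref{prop2.1}).

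The key estimate is $\tfrac{1}{d}\|\Lambda_2'(\widehat{P}_k - P_{0,k})\Lambda_2\| = \mathcal{O}_{\prob}(1/\delta_{dT})$. To obtain it, note that under the null $\Lambda_2 = Q^k M_k$ with $M_k := Q^{k'}\Lambda_2$ (since $P_{0,k}\Lambda_2 = Q^kQ^{k'}\Lambda_2 = \Lambda_2$), and use \eqref{eq:projection_eigenvectors_QQ} to write $\widehat{P}_k = \widebar{Q}_r^k\widebar{Q}_r^{k'}$. Setting $N_k := \widebar{Q}_r^{k'}Q^k$, a direct expansion yields
\begin{equation*}
\Lambda_2'(\widehat{P}_k - P_{0,k})\Lambda_2 = M_k'\bigl[(N_k - I_r)'(N_k - I_r) + (N_k - I_r) + (N_k - I_r)'\bigr]M_k.
\end{equation*}
By Assumption \ref{ass:C1}, $\|M_k\|^2/d = \|\Lambda_2'\Lambda_2/d\| = \mathcal{O}(1)$, and by \ref{Doz3}, $\|N_k - I_r\| = \mathcal{O}_{\prob}(1/d + 1/\sqrt{T}) = \mathcal{O}_{\prob}(1/\delta_{dT})$. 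Combining yields the claim.

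The main obstacle would have been obtaining the rate $1/\delta_{dT}$ via a direct bound on $\|\widehat{P}_k - P_{0,k}\|$: a naive application of \ref{Doz2} gives only $\|\widebar{Q}_r^k - Q^k\| = \mathcal{O}_{\prob}(1/\sqrt{d} + T^{-1/4})$, which is too weak. The resolution is that sandwiching by $\Lambda_2'/\sqrt{d}$ places the perturbation inside $\col(Q^k)$, reducing the problem to the in-subspace alignment $N_k - I_r$, for which \ref{Doz3} supplies the finer rate.
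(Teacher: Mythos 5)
Your proposal is correct in substance and reaches the same reduction as the paper -- split off $H-H_0$ (the paper's Lemma~\ref{le:convergenceHtoH0}) and then control $\tfrac{1}{d}\|\Lambda_2'\widehat{P}^{1'}_I\widehat{P}^{k}_I\Lambda_2-\Lambda_2'\Lambda_2\|$ -- but your treatment of that key quantity is genuinely different. The paper telescopes one-sidedly, bounding $\|\Lambda_2'\widehat P_I^{1'}\|\,\|\widehat P_I^2\Lambda_2-\Lambda_2\|+\|\Lambda_2\|\,\|\widehat P_I^{1'}\Lambda_2-\Lambda_2\|$ as in \eqref{eq:crucial_addedlemma}, which forces it to estimate the one-sided quantity $\|\widehat P_k\Lambda_2-\Lambda_2\|$ via the decomposition $I_d=Q^kQ^{k'}+Q^k_\bot Q^{k'}_\bot$, and hence to invoke both \ref{Doz3} and the orthogonal-complement control of Corollary~\ref{cor2.3}. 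Your two-sided sandwich $\Lambda_2'(\widehat P_k-P_{0,k})\Lambda_2=M_k'[(N_k-I_r)'(N_k-I_r)+(N_k-I_r)+(N_k-I_r)']M_k$ keeps the perturbation entirely inside $\col(Q^k)$, so \ref{Doz3} alone delivers the rate and Corollary~\ref{cor2.3} is never needed; this is a cleaner argument for the diagonal blocks, and your closing remark correctly identifies why the naive use of \ref{Doz2} would be too weak.

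One step is stated too quickly: the cross block. The identity $\widehat P_1\widehat P_2-P_{0,1}P_{0,2}=(\widehat P_1-P_{0,1})P_{0,2}+\widehat P_1(\widehat P_2-P_{0,2})$ does \emph{not} by itself reduce $\Lambda_2'(\widehat P_1\widehat P_2-P_{0,1}P_{0,2})\Lambda_2$ to your two sandwiched terms, because the second piece carries $\Lambda_2'\widehat P_1$ rather than $\Lambda_2'$ on the left; peeling off $\|\Lambda_2'\widehat P_1\|\le\|\Lambda_2\|$ would leave exactly the one-sided quantity your own argument is designed to avoid. The repair is to write $\widehat P_1\widehat P_2-P_{0,1}P_{0,2}=\Delta_1P_{0,2}+P_{0,1}\Delta_2+\Delta_1\Delta_2$ with $\Delta_k=\widehat P_k-P_{0,k}$: after sandwiching, the first two terms become $\Lambda_2'\Delta_k\Lambda_2$ (using $P_{0,k}\Lambda_2=\Lambda_2$) and are covered by your key estimate, while the quadratic remainder satisfies $\tfrac1d\|\Lambda_2'\Delta_1\Delta_2\Lambda_2\|\le\tfrac1d\|\Lambda_2\|^2\,\|\Delta_1\|\,\|\Delta_2\|=\mathcal{O}_{\prob}(1/d)+\mathcal{O}_{\prob}(1/\sqrt T)$ using only the crude rate from \ref{Doz2}, since the square of $\mathcal{O}_{\prob}(1/\sqrt d+T^{-1/4})$ is already $\mathcal{O}_{\prob}(1/\delta_{dT})$. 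With that one-line addition your proof is complete. You should also note that the $H-H_0$ part additionally uses $\tfrac{1}{\sqrt T}\|\widehat F-F\|=\mathcal{O}_{\prob}(1/\delta_{dT})$ (Proposition~\ref{prop2.2}) and $\|F'F/T-\Sigma_F\|=\mathcal{O}_{\prob}(1/\sqrt T)$, which are not on your list of ingredients.
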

 
\begin{proof}
We only show the result for $H_1$; the second result follows from analogous considerations.
Recall from \eqref{eq:HconvH0} that $H = (\Lambda'_{2} \Lambda_{2}/d) (F' \widehat{F} /T) \widehat{V}^{-1}_{r}$ and 
consider
\begin{align*}
\| H_1 - H_0 \| \leq \| H_1 - H \| + \| H - H_0 \|.
\end{align*}
Given Lemma \ref{le:convergenceHtoH0} below, it is enough to show the asymptotics of $\| H_1 - H \|$.

With further explanations given below, we get
\begin{align}
&
\| H_1 - H \|
\nonumber
\\&=
\frac{1}{dT}
\mleft\| 
\mleft(
\Lambda_2' \widehat{P}_{1}' \widehat{P}_{1} \Lambda_2 F^{b'} \widehat{F}^{b}
+ 
\Lambda_2' \widehat{P}_{1}' \widehat{P}_{2} \Lambda_2 F^{a'} \widehat{F}^{a}
\mright)
\widehat{V}^{-1}_{r}
-
\Lambda'_{2} \Lambda_{2} F' \widehat{F} \widehat{V}^{-1}_{r}
\mright\|
\nonumber
\\&\leq
\frac{1}{dT}
\mleft\| 
\mleft( \Lambda_2' \widehat{P}_{1}' \widehat{P}_{1} \Lambda_2 F^{b'}
,
\Lambda_2' \widehat{P}_{1}' \widehat{P}_{2} \Lambda_2 F^{a'} 
\mright)
-
\Lambda'_{2} \Lambda_{2} F' 
\mright\| 
\| \widehat{F}\|
\| \widehat{V}^{-1}_{r} \|
\nonumber
\\&= 
\frac{1}{dT}
\mleft\| 
\mleft( \Lambda_2' \widehat{P}_{1}' \widehat{P}_{1} \Lambda_2 F^{b'}
,
\Lambda_2' \widehat{P}_{1}' \widehat{P}_{2} \Lambda_2 F^{a'} 
\mright)
-
( \Lambda'_{2} \Lambda_{2} F^{b'}, \Lambda'_{2} \Lambda_{2} F^{a'})
\mright\| 
\| \widehat{F}\|
\| \widehat{V}^{-1}_{r} \|
\nonumber
\\&\leq
\frac{1}{d} \frac{1}{\sqrt{T}}
\mleft( 
\mleft\| \Lambda_2' \widehat{P}_{1}' \widehat{P}_{1} \Lambda_2 F^{b'} - \Lambda'_{2} \Lambda_{2} F^{b'} \mright\|
+
\mleft\| 
\Lambda_2' \widehat{P}_{1}' \widehat{P}_{2} \Lambda_2 F^{a'} 
-
\Lambda'_{2} \Lambda_{2} F^{a'}
\mright\| 
\mright)
\frac{1}{\sqrt{T}} \| \widehat{F}\|
\| \widehat{V}^{-1}_{r} \|
\label{al:proofH1_al1}
\\&= 
\mathcal{O}_{\prob}\mleft(\frac{1}{\delta_{dT}}\mright).
\label{al:proofH1_al2}
\end{align}
Here, \eqref{al:proofH1_al1} is due to \ref{item:M4}. We focus on the second summand in \eqref{al:proofH1_al1} to derive \eqref{al:proofH1_al2}. Then, 
\begin{align}
\frac{1}{d} \frac{1}{\sqrt{T}}
\mleft\| 
\Lambda_2' \widehat{P}_{1}' \widehat{P}_{2} \Lambda_2 F^{a'} 
-
\Lambda'_{2} \Lambda_{2} F^{a'}
\mright\| 
&\leq
\frac{1}{d}
\mleft\| 
\Lambda_2' \widehat{P}_{1}' \widehat{P}_{2} \Lambda_2
-
\Lambda'_{2} \Lambda_{2} \mright\| \frac{1}{\sqrt{T}} \| F^{a'} \|
\nonumber
\\&=
\mleft(\mathcal{O}_{\prob} \mleft(\frac{1}{d}\mright) + \mathcal{O}_{\prob}\mleft(\frac{1}{\sqrt{T}}\mright) \mright).
\label{al:proofH1_al3}
\end{align}
The asymptotic of $\frac{1}{\sqrt{T}} \| F^{a'} \|$ is due to \ref{cond:s1} in the proof of Proposition \ref{prop2.1}. The behavior of the estimated projection matrices for \eqref{al:proofH1_al3} follows from
\begin{align}
&
\| \Lambda_2' \widehat{P}_{1}' \widehat{P}_{2} \Lambda_2 - 
\Lambda_2' \Lambda_2 \| 
\nonumber
\\&=
\| \Lambda_2' \widehat{P}_{1}' \widehat{P}_{2} \Lambda_2 - \Lambda_2' \widehat{P}_{1}' \Lambda_2 + 
\Lambda_2' \widehat{P}_{1}' \Lambda_2 - \Lambda_2' \Lambda_2 \|
\nonumber
\\&\leq
\| \Lambda_2' \widehat{P}_{1}' \widehat{P}_{2} \Lambda_2 - \Lambda_2' \widehat{P}_{1}' \Lambda_2 \| 
+ 
\| \Lambda_2' \widehat{P}_{1}' \Lambda_2 - \Lambda_2' \Lambda_2 \|
\nonumber
\\&\leq
\| \Lambda_2' \widehat{P}_{1}' \| \| \widehat{P}_{2} \Lambda_2 - \Lambda_2 \| 
+ 
\| \Lambda_2 \| \| \widehat{P}_{1}' \Lambda_2 - \Lambda_2 \|
\label{eq:crucial_addedlemma0}
\\&=
d \mleft(\mathcal{O}_{\prob} \mleft(\frac{1}{d}\mright) + \mathcal{O}_{\prob}\mleft(\frac{1}{\sqrt{T}}\mright) \mright),
\label{eq:crucial_addedlemma}
\end{align}
since $\| \widehat{P}^{k} \| = 1 $ and for $\| \widehat{P}_{1}' \Lambda_2 - \Lambda_2 \|$ in \eqref{eq:crucial_addedlemma0} we consider the following. 
Since $P_{1} = Q^{1} Q^{1'}$ and therefore $Q^{1} Q^{1'} \Lambda_{2} = \Lambda_{2}$, 
\begin{align}
\| \widehat{P}_{1} \Lambda_2 - \Lambda_2 \|
&=
\| \widebar{Q}_{r}^{1} \widebar{Q}_{r}^{1'} \Lambda_2 - \Lambda_2 \|
\nonumber
\\&\leq
\| \widebar{Q}_{r}^{1} (\widebar{Q}_{r}^{1} - Q^{1})' \Lambda_2 \| + 
\| (\widebar{Q}_{r}^{1} - Q^{1}) Q^{1'} \Lambda_2 \|
\label{al:qqppqqpp001}
\\&=
\sqrt{d} \mleft(\mathcal{O}_{\prob} \mleft(\frac{1}{d}\mright) + \mathcal{O}_{\prob}\mleft(\frac{1}{\sqrt{T}}\mright) \mright).
\label{al:qqppqqpp01}
\end{align}
The asymptotics of \eqref{al:qqppqqpp001} can be inferred from the calculations in \eqref{eq:newineqtoexplain1} to \eqref{eq:klklklkl} below.
The quantity $\| \widehat{P}_{2} \Lambda_2 - \Lambda_2 \| $ in \eqref{eq:crucial_addedlemma0} can be treated similarly.
\end{proof}

The following lemma was used in the preceding proof and is similar in nature but concerns the matrix $H$ in \eqref{eq:HconvH0}. 

 \begin{lemma} \label{le:convergenceHtoH0}
 Under Assumptions \ref{ass:1}--\ref{ass:2}, \ref{ass:C1}--\ref{ass:C4}, 
 \begin{equation}
H - H_0 = \mathcal{O}_{\prob}\mleft(\frac{1}{\delta_{dT}}\mright)
\hspace{0.2cm}
\text{ and }
\hspace{0.2cm}
\| H_0 \| = \mathcal{O}(1),
\text{ as } d,T\to\infty
\end{equation}
with $H_0$ as in \eqref{eq:def:H0}.
\end{lemma}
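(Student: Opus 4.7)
I plan to decompose $H - H_0$ into contributions from each of the three factors in the definition $H = (\Lambda_2'\Lambda_2/d)(F'\widehat{F}/T)\widehat{V}_r^{-1}$. Writing $A = \Lambda_2'\Lambda_2/d$, $B = F'\widehat{F}/T$, $C = \widehat{V}_r^{-1}$, and $A_0, B_0, C_0$ for their respective probability limits with $H_0 = A_0 B_0 C_0$, the telescoping identity
\begin{equation*}
H - H_0 = (A - A_0) BC + A_0 (B - B_0) C + A_0 B_0 (C - C_0)
\end{equation*}
reduces the task to controlling each difference at rate $\mathcal{O}_{\prob}(1/\delta_{dT})$ while verifying that each individual factor is $\mathcal{O}_{\prob}(1)$. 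Assumption \ref{ass:C1} gives $\|A - R^2 \widetilde{\Pi}_2 R^{2'}\| = \mathcal{O}(1/\sqrt{d})$ directly, so I take $A_0 = R^2 \widetilde{\Pi}_2 R^{2'}$. For $C - C_0$, the cited corollary on the behavior of $\widehat{V}_r$ (used at \ref{cond:s3} in the proof of Proposition \ref{prop2.1}) provides $\widehat{V}_r - \widetilde{\Pi}_2 = \mathcal{O}_{\prob}(1/\delta_{dT})$; the inversion identity $C - \widetilde{\Pi}_2^{-1} = \widetilde{\Pi}_2^{-1}(\widetilde{\Pi}_2 - \widehat{V}_r) C$ together with boundedness of both factors transfers the rate to $C_0 = \widetilde{\Pi}_2^{-1}$.

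For the middle piece I would invoke Proposition \ref{prop2.1.0} to split
\begin{equation*}
\frac{F'\widehat{F}}{T} = \frac{F'\widetilde{F}\widetilde{H}}{T} + \frac{F'(\widehat{F} - \widetilde{F}\widetilde{H})}{T},
\end{equation*}
where the remainder has Frobenius norm $\mathcal{O}_{\prob}(1/\delta_{dT})$ by Proposition \ref{prop2.1.0}. The leading term equals $(F^{b'}F^b/T)H_1 + (F^{a'}F^a/T)H_2$, and by stationarity and Assumption \ref{ass:2}(i) each of $F^{b'}F^b/(T/2), F^{a'}F^a/(T/2)$ converges to $\Sigma_F$ at rate $\mathcal{O}_{\prob}(1/\sqrt{T})$. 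Substituting back into the definition of $H$ and using the null-hypothesis approximation $\widehat{P}_I^k \Lambda_2 = \Lambda_2 + \mathcal{O}_{\prob}(1/\delta_{dT})$, which is essentially the estimate \eqref{eq:crucial_addedlemma}, yields an approximate fixed-point relation for $H$ together with parallel ones for $H_1, H_2$ from \eqref{def:H1H2}. Solving this linear system for the common limit identifies $B_0$ and pins down the stated form of $H_0$.

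The hard part will be sidestepping the apparent circularity between the present lemma and Lemma \ref{le:convergenceH12toH}, whose proof uses $H_0$. The resolution is to treat $(H, H_1, H_2)$ as a joint unknown determined by the simultaneous identities obtained by left-multiplying $\widehat{F}\widehat{V}_r = (1/dT) YY'\widehat{F}$ by $F'$ and substituting the block decomposition \eqref{eq:YY_block}; because each limiting identity is linear in the unknown limits and the coefficient operators are bounded and invertible by Assumption \ref{ass:C1}, there is a unique bounded solution, which must be the formula for $H_0$. The propagation of the rate $\mathcal{O}_{\prob}(1/\delta_{dT})$ through the system then follows from Propositions \ref{prop2.1}--\ref{prop2.1.0} and the corollary on $\widehat{V}_r$. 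Finally, $\|H_0\| = \mathcal{O}(1)$ follows at once from $\|Q^2\| = 1$ (orthonormality), boundedness of $\|\widetilde{\Pi}_2\|$ and $\|\widetilde{\Pi}_2^{-1}\|$ by Assumption \ref{ass:C1}(i)--(ii), and boundedness of $\|\Sigma_F\|$ by Assumption \ref{ass:2}(i).
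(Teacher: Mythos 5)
Your telescoping decomposition and your treatment of the first and third factors (Assumption \ref{ass:C1} for $\Lambda_2'\Lambda_2/d$, and Corollary \ref{cor:resultsforPI} for $\widehat{V}_r^{-1}=d\widehat{\Pi}_r^{-1}$) coincide with what the paper does. The gap is in the middle factor $F'\widehat{F}/T$. Routing it through $\widetilde{F}\widetilde{H}$ and Proposition \ref{prop2.1.0} gives, as you say, the leading term $(F^{b\prime}F^b/T)H_1+(F^{a\prime}F^a/T)H_2\approx \tfrac12\Sigma_F(H_1+H_2)$; and since $F^{b\prime}\widehat{F}^b+F^{a\prime}\widehat{F}^a=F'\widehat{F}$ in \eqref{def:H1H2}, the "parallel relations" collapse to $H_1\approx H_2\approx H$. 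Your joint system therefore reduces to the single approximate identity
\begin{equation*}
B \;\approx\; \Sigma_F\,A_0\,B\,C_0,
\qquad B:=F'\widehat{F}/T,\quad A_0:=\lim_{d\to\infty}\Lambda_2'\Lambda_2/d,\quad C_0:=\widetilde{\Pi}_2^{-1},
\end{equation*}
which is a \emph{homogeneous} linear equation $(I-L)B\approx 0$ for the operator $L(X)=\Sigma_F A_0 X C_0$. For a nonzero limit to exist, $I-L$ must be singular, so the system cannot have a unique bounded solution: any scalar multiple of a solution is again a solution. The invertibility you invoke from Assumption \ref{ass:C1} is precisely what fails for $I-L$, and the fixed-point relation discards the information that actually identifies the limit (the normalization $\widehat{F}'\widehat{F}/T=I_r$ and the sign convention on the eigenvectors). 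Consequently neither the value of $B_0$ nor the rate $\mathcal{O}_{\prob}(1/\delta_{dT})$ can be extracted from this system, and the circularity with Lemma \ref{le:convergenceH12toH} is not resolved.

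The paper closes this step by a different, non-circular route: it never passes through $\widetilde{F}\widetilde{H}$ or $H_1,H_2$ here. It uses Proposition \ref{prop2.2}, which proves $\frac1T\|\widehat{F}-F\|_F^2=\mathcal{O}_{\prob}(1/d)+\mathcal{O}_{\prob}(1/T)$ directly from the eigenvector consistency results (Corollaries \ref{cor:resultsforPI}, \ref{cor2.2}, \ref{cor2.3}), with no reference to $H_0$. Then $\|F'(\widehat{F}-F)/T\|\le (\|F\|/\sqrt{T})\,(\|\widehat{F}-F\|_F/\sqrt{T})=\mathcal{O}_{\prob}(1/\delta_{dT})$ together with $\|F'F/T-\Sigma_F\|=\mathcal{O}_{\prob}(1/\sqrt{T})$ pins down $B\to\Sigma_F$ at the required rate, and the four-term telescoping bound finishes the proof. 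If you replace your fixed-point step with this use of Proposition \ref{prop2.2}, the rest of your proposal, including the one-line bound $\|H_0\|=\mathcal{O}(1)$, goes through.
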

 
\begin{proof}
Recall from \eqref{eq:HconvH0} that $H = (\Lambda'_{2} \Lambda_{2}/d) (F' \widehat{F} /T) \widehat{V}^{-1}_{r}$ and also that there is a matrix $\widetilde{\Pi}_2$ with $\| \Pi_2/d - \widetilde{\Pi}_2 \| = \mathcal{O}(1/\sqrt{d})$ by Assumption \ref{ass:C1}. Then, with further explanations given below,
\begin{align}
&
\| H - H_{0} \|
\nonumber
\\&=
\mleft\| \mleft(\frac{\Lambda'_{2} \Lambda_{2}}{d}\mright) \mleft( \frac{F' \widehat{F}}{T} \mright) \widehat{V}^{-1}_{r} - 
Q^2 \widetilde{\Pi}_{2} Q^{2'} \Sigma_{F} \widetilde{\Pi}_{2}^{-1}
\mright\|
\nonumber
\\&=
\mleft\|
Q^2 \mleft(\frac{\Pi_2}{d}\mright) Q^{2'} \mleft( \frac{F' \widehat{F}}{T} \mright) d\widehat{\Pi}^{-1}_{r} - 
Q^2 \widetilde{\Pi}_{2} Q^{2'} \Sigma_{F} \widetilde{\Pi}_{2}^{-1}
\mright\|
\label{al:H-H0_1}
\\&\leq
\mleft\| Q^2 \mleft(\frac{\Pi_2}{d} - \widetilde{\Pi}_{2} \mright) Q^{2'} \mleft( \frac{F' \widehat{F}}{T} \mright) d\widehat{\Pi}^{-1}_{r} \mright\|
+
\mleft\|
Q^2 \widetilde{\Pi}_{2} Q^{2'} \mleft( \frac{F' \widehat{F}}{T} \mright) d\widehat{\Pi}^{-1}_{r}
- 
Q^2 \widetilde{\Pi}_{2} Q^{2'} \Sigma_{F} \widetilde{\Pi}_{2}^{-1}
\mright\|
\nonumber
\\&\leq
\mleft\| Q^2 \mleft( \frac{\Pi_2}{d} - \widetilde{\Pi}_{2} \mright) Q^{2'} \mleft( \frac{F' \widehat{F}}{T} \mright) d\widehat{\Pi}^{-1}_{r} \mright\| 
+
\mleft\|
Q^2 \widetilde{\Pi}_{2} Q^{2'} \frac{1}{\sqrt{T}} F' \frac{1}{\sqrt{T}} ( \widehat{F} - F ) d\widehat{\Pi}^{-1}_{r} \mright\|
\nonumber
\\ &\hspace{1cm}+
\mleft\|
Q^2 \widetilde{\Pi}_{2} Q^{2'} \mleft( \frac{F'F}{T} - \Sigma_{F} \mright) d\widehat{\Pi}^{-1}_{r}
\mright\|
+
\mleft\|
Q \widetilde{\Pi}_{2} Q' \Sigma_{F} ( d\widehat{\Pi}^{-1}_{r} - \widetilde{\Pi}_{2}^{-1})
\mright\|
\nonumber
\\&\leq
\| Q \|^2 \mleft\| \frac{\Pi}{d} - \widetilde{\Pi}_{2} \mright\| \frac{1}{\sqrt{T}} \| F\| \frac{1}{\sqrt{T}} \| \widehat{F} \| \| d\widehat{\Pi}^{-1}_{r} \|
+
\| Q \|^2
\| \widetilde{\Pi}_{2} \| \frac{1}{\sqrt{T}} \| F \| \frac{1}{\sqrt{T}} \| \widehat{F} - F \| \| d\widehat{\Pi}^{-1}_{r} \|
\nonumber
\\ &\hspace{1cm}+
\| Q \|^2 \| \widetilde{\Pi}_{2} \| \mleft\| \frac{F'F}{T} - \Sigma_{F} \mright\| \| d\widehat{\Pi}^{-1}_{r} \|
+
\| Q \|^2 \| \widetilde{\Pi}_{2} \| \| \Sigma_{F} \| \| d \widehat{\Pi}^{-1}_{r} - \widetilde{\Pi}_{2}^{-1} \|
\label{al:H-H0_2}
\\&= 
\mathcal{O}_{\prob}\mleft(\frac{1}{\delta_{dT}}\mright).
\nonumber
\end{align}
For \eqref{al:H-H0_1} recall that $\Lambda'_{2} \Lambda_{2} = Q^2 \Pi_2 Q^{2'} $ in \eqref{eq:def:H0} and $\widehat{V}^{-1}_{r} = d\widehat{\Pi}^{-1}_{r}$.
The asymptotics of the quantities in \eqref{al:H-H0_2} are the following:
\begin{enumerate}[label=(\roman*)]
\item $\mleft\| \frac{\Pi_2}{d} - \widetilde{\Pi}_{2} \mright\| = \mathcal{O}\mleft(\frac{1}{\sqrt{d}}\mright)$ by Assumption \ref{ass:C1}.
\item $\frac{1}{T} \| F \|^2 = \mathcal{O}_{\prob}(1)$ by \ref{cond:s1} in the proof of Proposition \ref{prop2.1} and $\frac{1}{\sqrt{T}} \| \widehat{F} - F \| = \mathcal{O}_{\prob}\mleft(\frac{1}{\delta_{dT}}\mright)$ by Proposition \ref{prop2.2}.
\item $\mleft\| \frac{F'F}{T} - \Sigma_{F} \mright\| = \mathcal{O}_{\prob}\mleft(\frac{1}{\sqrt{T}}\mright)$; see p.\ 199 in \cite{doz2011two}.
\item $\mleft( \frac{1}{d}\widehat{\Pi}_{r} \mright)^{-1} = \mathcal{O}_{\prob}(1)$ by Corollary \ref{cor:resultsforPI}\ref{cor:resultsforPI_item2}.
\end{enumerate}
Finally, we have
\begin{equation*}
\| H_0 \| = \| Q^2 \widetilde{\Pi}_{2} Q^{2'} \Sigma_{F} \widetilde{\Pi}_{2}^{-1} \| = \mathcal{O}(1).
\end{equation*}
\end{proof}

\section{Results for consistency of PCA estimators under null hypothesis} \label{se:consistency_PCA_NH}

We present some results and their proofs to establish consistent estimation of PCA estimators under the null hypothesis. Those results were used in Appendix \ref{se:appB}.

\begin{lemma}\label{lem2.1}
Set $\widehat{\Sigma}_{Y} = \frac{1}{T} Y'Y$. Then, under Assumptions \ref{ass:1}--\ref{ass:2}, \ref{ass:C1}--\ref{ass:C4}, 
\begin{equation*}
\frac{1}{d} \| \widehat{\Sigma}_{Y} - \Lambda_{2} \Lambda'_{2} \| = \mathcal{O}_{\prob}\mleft( \frac{1}{d} \mright) + \mathcal{O}_{\prob} \mleft( \frac{1}{\sqrt{T}} \mright).
\end{equation*}
\end{lemma}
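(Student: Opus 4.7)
The plan is to expand $Y'Y$ via the definition \eqref{eq:YT2def} of $Y$, compare the result block by block with $\Lambda_{2} \Lambda_{2}'$, and reduce each piece either to \ref{Doz1} (applied to the two half-samples of $\{X_{t}^{2}\}$) or to a projection-perturbation bound exploiting the null-hypothesis identity $P_{0,1} = P_{0,2} =: P_{0}$.

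From the block form of $Y$,
\begin{equation*}
\widehat{\Sigma}_{Y} = \tfrac{1}{2}\widehat{P}^{1}_{I} \widehat{\Sigma}_{2}^{b} \widehat{P}^{1'}_{I} + \tfrac{1}{2}\widehat{P}^{2}_{I} \widehat{\Sigma}_{2}^{a} \widehat{P}^{2'}_{I},
\end{equation*}
where $\widehat{\Sigma}_{2}^{b}, \widehat{\Sigma}_{2}^{a}$ denote the sample covariances of $\{X_{t}^{2}\}$ on $t \le T/2$ and $t > T/2$; by the half-sample version of \ref{Doz1} noted after that statement, $\tfrac{1}{d}\|\widehat{\Sigma}_{2}^{b/a} - \Lambda_{2}\Lambda_{2}'\| = \mathcal{O}_{\prob}(1/d) + \mathcal{O}_{\prob}(1/\sqrt{T})$. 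For each $k \in \{1,2\}$ I decompose
\begin{equation*}
\widehat{P}^{k}_{I} \widehat{\Sigma}_{2}^{b/a} \widehat{P}^{k'}_{I} - \Lambda_{2}\Lambda_{2}' = \widehat{P}^{k}_{I}\bigl(\widehat{\Sigma}_{2}^{b/a} - \Lambda_{2}\Lambda_{2}'\bigr)\widehat{P}^{k'}_{I} + \bigl(\Lambda_{2} A_{k}' + A_{k}\Lambda_{2}' + A_{k} A_{k}'\bigr),
\end{equation*}
where $A_{k} := \widehat{P}^{k}_{I}\Lambda_{2} - \Lambda_{2} = \tfrac{1}{2}(\widehat{P}_{k} - P_{0})\Lambda_{2}$, using the null identity $P_{0}\Lambda_{2} = \Lambda_{2}$. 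The first summand is controlled by $\|\widehat{P}^{k}_{I}\|^{2} \le 1$ (see \eqref{eq:operatornormprojection}) times the half-sample version of \ref{Doz1}, giving operator norm $d(\mathcal{O}_{\prob}(1/d) + \mathcal{O}_{\prob}(1/\sqrt{T}))$.

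For the second bracket I need $\|A_{k}\| \cdot \|\Lambda_{2}\| = d(\mathcal{O}_{\prob}(1/d) + \mathcal{O}_{\prob}(1/\sqrt{T}))$. Since $\|\Lambda_{2}\|^{2} = \mathcal{O}(d)$ by Assumption \ref{ass:C1}, this reduces to showing $\|\widehat{P}_{k} - P_{0,k}\| = \mathcal{O}_{\prob}(1/d) + \mathcal{O}_{\prob}(1/\sqrt{T})$ --- a Davis--Kahan type consequence of \ref{Doz1} and the order-$d$ eigengap between $\lambda_{r}(\Lambda_{k}'\Lambda_{k})$ and $\lambda_{r+1}(\Lambda_{k}'\Lambda_{k}) = 0$ guaranteed by Assumption \ref{ass:C1}. (This is the same refined projection bound invoked, e.g., in \eqref{eq:crucial_addedlemma}.) The quadratic term $\|A_{k}A_{k}'\|$ is then of strictly lower order, so dividing by $d$ and summing over $k$ yields the claim. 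The main obstacle is precisely this last projection bound: the crude route via \ref{Doz3}, which only gives $\|\widebar{Q}_{r}^{k} - Q^{k}\| = \mathcal{O}_{\prob}(1/\sqrt{d}) + \mathcal{O}_{\prob}(1/T^{1/4})$ and hence the same rate for $\|\widehat{P}_{k} - P_{0,k}\|$, is too crude by factors of $\sqrt{d}$ and $T^{1/4}$ respectively --- one genuinely must exploit the eigengap to obtain the faster rate required here.
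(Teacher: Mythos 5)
Your decomposition is correct and, at the level of the block structure, coincides with the paper's: both split $\widehat{\Sigma}_{Y}$ into the two half-sample pieces $\widehat{P}^{k}_{I}\widehat{\Sigma}^{b/a}_{2}\widehat{P}^{k\prime}_{I}$, absorb the sampling error via the half-sample version of \ref{Doz1}, and reduce the remainder to controlling how far $\widehat{P}^{k}_{I}\Lambda_{2}\Lambda_{2}'\widehat{P}^{k\prime}_{I}$ is from $\Lambda_{2}\Lambda_{2}'$ (the paper routes the $P_{0}\Sigma_{\varepsilon}P_{0}'$ term through an intermediate matrix $\Sigma$, but that is only bookkeeping). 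Where you diverge is the key perturbation step. You invoke a Davis--Kahan/$\sin\Theta$ bound with the order-$d$ eigengap of $\Lambda_{k}\Lambda_{k}'$ (note: the gap lives in the $d\times d$ matrix $\Lambda_{k}\Lambda_{k}'$, not in the $r\times r$ matrix $\Lambda_{k}'\Lambda_{k}$, which has no $(r+1)$-th eigenvalue) to get the full operator-norm bound $\|\widehat{P}_{k}-P_{0,k}\|=\mathcal{O}_{\prob}(1/d)+\mathcal{O}_{\prob}(1/\sqrt{T})$. That statement is true and follows from \ref{Doz1} plus Assumption \ref{ass:C1}, but it is proved nowhere in the paper; \eqref{eq:crucial_addedlemma} and Lemma \ref{le:difference_projections} only establish the weaker bounds on the projection difference \emph{applied to} $\Lambda_{2}$, e.g.\ $\|(\widehat{P}_{1}-P_{0,1})\Lambda_{2}\|=\sqrt{d}\,(\mathcal{O}_{\prob}(1/d)+\mathcal{O}_{\prob}(1/\sqrt{T}))$ as in \eqref{al:qqppqqpp01}, obtained by hand from the decomposition $I_{d}=Q^{1}Q^{1\prime}+Q^{1}_{\bot}Q^{1\prime}_{\bot}$ together with \ref{Doz3}, Corollary \ref{cor2.3}, and the null-hypothesis fact $Q^{1\prime}_{\bot}\Lambda_{2}=0$ --- in effect a $\sin\Theta$ argument restricted to the directions that matter. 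Since your $A_{k}=\tfrac{1}{2}(\widehat{P}_{k}-P_{0})\Lambda_{2}$ always appears contracted against $\Lambda_{2}$, the weaker applied bound is all you actually need, so you could close the argument entirely within the paper's existing auxiliary results without importing Davis--Kahan; conversely, your route is self-contained and arguably cleaner if one is willing to cite a standard $\sin\Theta$ theorem, at the cost of proving a stronger statement than required. Either way the rates match and the conclusion follows.
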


\begin{proof}
We follow the proof of Lemma 2 in \cite{doz2011two}. Consider
\begin{equation} \label{pr:eq1}
\frac{1}{d} \| \widehat{\Sigma}_{Y} - \Lambda_{2} \Lambda'_{2} \|
\leq
\frac{1}{d} \| \widehat{\Sigma}_{Y} - \Sigma \|
+
\frac{1}{d} \| \Sigma - \Lambda_{2} \Lambda'_{2} \|
\end{equation}
with 
\begin{align} \label{eq:Sigma-sep}
\Sigma 
&=
\Lambda_{2} \Lambda'_{2} + P_{2} \Sigma_{\varepsilon} P_{2}' 
\nonumber
\\&= 
P_{2} \Lambda_{2} \Lambda'_{2} P_{2}' + P_{2} \Sigma_{\varepsilon} P_{2}' 
\nonumber
\\&=
\frac{1}{2} \Big( P_{1} \Lambda_{2} \Lambda'_{2} P_{1}' + P_{1} \Sigma_{\varepsilon} P_{1}' \Big) + 
\frac{1}{2} \Big( P_{2} \Lambda_{2} \Lambda'_{2} P_{2}' + P_{2} \Sigma_{\varepsilon} P_{2}' \Big),
\end{align}
where $P_{1} = P_{2}$ under the null hypothesis.
We consider the two summands in \eqref{pr:eq1} separately. For the second summand, we get
\begin{align*}
\frac{1}{d} \| \Sigma - \Lambda_2 \Lambda'_2 \|
=
\frac{1}{d} \| P_{2} \Sigma_{\varepsilon} P_{2}' \|
\leq
\frac{1}{d} \| P_{2} \|^2 \| \Sigma_{\varepsilon} \|
=
\frac{1}{d} \| \Sigma_{\varepsilon} \|
=
O\mleft(\frac{1}{d} \mright)
\end{align*}
due to Assumption \ref{ass:C4} and our observation \eqref{eq:operatornormprojection}.
For the first summand in \eqref{pr:eq1}, we separate the series according to the transformation \eqref{eq:YT2def} as
\begin{align}
\| \widehat{\Sigma}_{Y} - \Sigma \| \nonumber
&=
\mleft\| \frac{1}{2} \widehat{P}_{1} \frac{1}{T/2} \sum_{t=1}^{T/2} X^2_{t} X^{2'}_{t} \widehat{P}_{1}'
+ 
\frac{1}{2} \widehat{P}_{2} \frac{1}{T/2} \sum_{t=T/2+1}^{T} X^2_{t} X^{2'}_{t} \widehat{P}_{2}'
- 
\Sigma \mright\| \nonumber
\\&\leq
\frac{1}{2} 
\mleft\| \widehat{P}_{1} \frac{1}{T/2} \sum_{t=1}^{T/2} X^2_{t} X_{t}^{2'} \widehat{P}_{1}' - 
\Big( P_{1} \Lambda_{2} \Lambda'_{2} P_{1}' + P_{1} \Sigma_{\varepsilon} P_{1}' \Big)
\mright\|
\nonumber
\\&\hspace{1cm}+
\frac{1}{2} \mleft\| 
\widehat{P}_{2} \frac{1}{T/2} \sum_{t=T/2+1}^{T} X^2_{t} X^{2'}_{t} \widehat{P}_{2}' - 
\Big( P_{2} \Lambda_{2} \Lambda'_{2} P_{2}' + P_{2} \Sigma_{\varepsilon} P_{2}' \Big) 
\mright\|, \label{eq:sec:proofs:al1}
\end{align}
where we used \eqref{eq:Sigma-sep}.
We consider the two summands in \eqref{eq:sec:proofs:al1} separately. For the first summand in \eqref{eq:sec:proofs:al1}, with explanations given below,
\begin{align}
&
\mleft\| \widehat{P}_{1} \frac{1}{T/2} \sum_{t=1}^{T/2} X^2_{t} X^{2'}_{t} \widehat{P}_{1}' - 
\Big( P_{1} \Lambda_{2} \Lambda'_{2} P_{1}' + P_{1} \Sigma_{\varepsilon} P_{1}' \Big)
\mright\|
\nonumber
\\&\leq
\mleft\| \widehat{P}_{1} \Big( \widehat{\Sigma}_{2} - \Lambda_{2} \Lambda'_{2} \Big) \widehat{P}_{1}' \mright\|
+
\mleft\| \widehat{P}_{1}
\Lambda_{2} \Lambda'_{2} \widehat{P}_{1}' - 
\Big( P_{1} \Lambda_{2} \Lambda'_{2} P_{1}' + P_{1} \Sigma_{\varepsilon} P_{1}' \Big)
\mright\|
\nonumber
\\&\leq
\mleft\| \widehat{P}_{1} \Big( \widehat{\Sigma}_{2} - \Lambda_{2} \Lambda'_{2} \Big) \widehat{P}_{1}' \mright\|
+
\mleft\| \widehat{P}_{1} \Lambda_{2} \Lambda_{2}' \widehat{P}_{1}' - P_{1} \Lambda_{2} \Lambda_{2}' P_{1}'
\mright\|
+ 
\| P_{1} \Sigma_{\varepsilon} P_{1}' \| 
\nonumber
\\&\leq
\| \widehat{P}_{1} \|^2 \| \widehat{\Sigma}_{2} - \Lambda_{2} \Lambda'_{2} \|
+
\mleft\| \widehat{P}_{1} \Lambda_{2} \Lambda_{2}' \widehat{P}_{1}' - P_{1} \Lambda_{2} \Lambda_{2}' P_{1}' 
\mright\|
+
\| P_{1} \|^2 \| \Sigma_{\varepsilon} \|
\label{al:SigmaY:eq1}
\\&\leq
\| \widehat{\Sigma}_{2} - \Lambda_{2} \Lambda'_{2} \|
+
\mleft\| \widehat{P}_{1} \Lambda_{2} \Lambda_{2}' \widehat{P}_{1}' - P_{1} \Lambda_{2} \Lambda_{2}' P_{1}' 
\mright\|
+
\| \Sigma_{\varepsilon} \|
\label{al:SigmaY:eq2}
\\&
=
d \mathcal{O}_{\prob}\Big( \frac{1}{d} \Big) + d \mathcal{O}_{\prob}\Big( \frac{1}{\sqrt{T}} \Big)
+\mathcal{O}(1),
\label{al:SigmaY:eq3}
\end{align}
where \eqref{al:SigmaY:eq1} uses the submultiplicativity of the spectral norm, \eqref{al:SigmaY:eq2} is due to \eqref{eq:operatornormprojection} and similarly for their estimated counterparts $\| \widehat{P}_{k} \| = 1 $, $k=1,2$.
Finally, \eqref{al:SigmaY:eq3} is due to \ref{Doz1} in Appendix \ref{se:appB}, Lemma \ref{le:difference_projections} below and $\| \Sigma_{\varepsilon} \| = \mathcal{O}(1)$ by Assumption \ref{ass:C4}. The second summand in \eqref{eq:sec:proofs:al1} can be handled by analogous considerations.
\end{proof}

The next result  was used in the preceding proof.

\begin{lemma} \label{le:difference_projections} 
Recall $\widehat{P}_1$ in \eqref{eq:def_Phat12}. Then, under Assumptions \ref{ass:1}--\ref{ass:2}, \ref{ass:C1}--\ref{ass:C4},
\begin{equation*}
\frac{1}{d} \mleft\| \widehat{P}_{1} \Lambda_{2} \Lambda_{2}' \widehat{P}_{1}' - P_{1} \Lambda_{2} \Lambda_{2}' P_{1}' 
\mright\|
=
\mathcal{O}_{\prob}\Big( \frac{1}{d} \Big) + \mathcal{O}_{\prob}\Big( \frac{1}{\sqrt{T}} \Big).
\end{equation*}
\end{lemma}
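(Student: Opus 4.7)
The plan hinges on the null-hypothesis identity $P_{0,1}\Lambda_{2} = \Lambda_{2}$: since $\Lambda_{1} = \Lambda \Phi_{1}$ and $\Lambda_{2} = \Lambda \Phi_{2}$, the columns of $\Lambda_{2}$ lie in $\col(\Lambda_{1})$, so $P_{0,1}\Lambda_{2}\Lambda_{2}' P_{0,1}' = \Lambda_{2}\Lambda_{2}'$, and the lemma reduces to bounding $\tfrac{1}{d}\| \widehat{P}^{1}_{I} \Lambda_{2}\Lambda_{2}' \widehat{P}^{1'}_{I} - \Lambda_{2}\Lambda_{2}' \|$.

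First I would employ the decomposition
\begin{equation*}
\widehat{P}^{1}_{I}\Lambda_{2}\Lambda_{2}'\widehat{P}^{1'}_{I} - \Lambda_{2}\Lambda_{2}' = \widehat{P}^{1}_{I}\Lambda_{2} \bigl(\widehat{P}^{1}_{I}\Lambda_{2} - \Lambda_{2}\bigr)' + \bigl(\widehat{P}^{1}_{I}\Lambda_{2} - \Lambda_{2}\bigr)\Lambda_{2}',
\end{equation*}
combined with $\|\widehat{P}^{1}_{I}\| \leq 1$ (used in the proof of Proposition \ref{prop2.1}) and $\|\Lambda_{2}\| = \mathcal{O}(\sqrt{d})$ from Assumption \ref{ass:C1}, to reduce matters to controlling $\|\widehat{P}^{1}_{I}\Lambda_{2} - \Lambda_{2}\|$. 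Since $\widehat{P}^{1}_{I} = \tfrac{1}{2}(\widehat{P}_{1}+I_{d})$ and $P_{0,1}\Lambda_{2} = \Lambda_{2}$, the cancellation
\begin{equation*}
\widehat{P}^{1}_{I}\Lambda_{2} - \Lambda_{2} = \tfrac{1}{2}\bigl(\widehat{P}_{1} - P_{0,1}\bigr)\Lambda_{2} = -\tfrac{1}{2}\bigl(I_{d} - \widehat{P}_{1}\bigr)P_{0,1}\Lambda_{2}
\end{equation*}
reduces the task to proving $\|(I_{d} - \widehat{P}_{1})P_{0,1}\| = \mathcal{O}_{\prob}(1/d) + \mathcal{O}_{\prob}(1/\sqrt{T})$.

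This last estimate is a $\sin\Theta$-type bound between the leading $r$-dimensional eigenspaces of $\widehat{\Sigma}_{1}$ (spanned by $\widebar{Q}^{1}_{r}$, so that $\widehat{P}_{1}=\widebar{Q}^{1}_{r}\widebar{Q}^{1'}_{r}$) and of $\Lambda_{1}\Lambda_{1}' = Q^{1}\Pi_{1}Q^{1'}$ (spanned by $Q^{1}$, so that $P_{0,1} = Q^{1}Q^{1'}$). By Assumption \ref{ass:C1}, $\lambda_{r}(\Lambda_{1}\Lambda_{1}') \geq c\,d$ while $\lambda_{r+1}(\Lambda_{1}\Lambda_{1}') = 0$, so the eigengap is of order $d$. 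The Davis-Kahan $\sin\Theta$ theorem then yields
\begin{equation*}
\|(I_{d}-\widehat{P}_{1})P_{0,1}\| \leq \frac{\|\widehat{\Sigma}_{1} - \Lambda_{1}\Lambda_{1}'\|}{\lambda_{r}(\Lambda_{1}\Lambda_{1}') - \lambda_{r+1}(\widehat{\Sigma}_{1})},
\end{equation*}
where \ref{Doz1} gives $\|\widehat{\Sigma}_{1} - \Lambda_{1}\Lambda_{1}'\| = d\cdot[\mathcal{O}_{\prob}(1/d) + \mathcal{O}_{\prob}(1/\sqrt{T})]$ and (via Weyl) $\lambda_{r+1}(\widehat{\Sigma}_{1}) \leq \|\widehat{\Sigma}_{1} - \Lambda_{1}\Lambda_{1}'\| = o_{\prob}(d)$, so the denominator is of order $d$ with high probability. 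Chaining the three steps gives $\tfrac{1}{d}\|\widehat{P}^{1}_{I}\Lambda_{2}\Lambda_{2}'\widehat{P}^{1'}_{I} - \Lambda_{2}\Lambda_{2}'\| \leq \tfrac{\|\Lambda_{2}\|^{2}}{d}\|(I_{d}-\widehat{P}_{1})P_{0,1}\| = \mathcal{O}_{\prob}(1/d) + \mathcal{O}_{\prob}(1/\sqrt{T})$.

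The main obstacle will be obtaining this sharp projection bound: a direct appeal to \ref{Doz2} supplies only $\|\widebar{Q}^{1}_{r} - Q^{1}\| = \mathcal{O}_{\prob}(1/\sqrt{d}) + \mathcal{O}_{\prob}(1/T^{1/4})$, which propagates through $\|\Lambda_{2}\| = \mathcal{O}(\sqrt{d})$ to a final rate slower than the target by factors $\sqrt{d}$ and $T^{1/4}$. The $\sin\Theta$ argument circumvents this loss by trading the suboptimal eigenvector rate for the large, order-$d$ eigengap of $\Lambda_{1}\Lambda_{1}'$ and plugging in the tighter \ref{Doz1} bound on $\widehat{\Sigma}_{1}-\Lambda_{1}\Lambda_{1}'$ directly.
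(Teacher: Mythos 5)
Your proposal is correct, but it reaches the result by a genuinely different route from the paper. The paper expands $\widehat{P}^{1}_I$ into $\frac{1}{2}(\widehat{P}_1+I_d)$, splits the difference into the three terms of \eqref{eq:ghghghghghgh3terms}, and then works at the eigenvector level: it writes $\widehat{P}_1=\widebar{Q}^1_r\widebar{Q}^{1'}_r$, telescopes through $\widebar{Q}^1_r-Q^1$, and defuses the dangerous linear cross-terms (where the crude rate $\|\widebar{Q}^1_r-Q^1\|=\mathcal{O}_{\prob}(d^{-1/2})+\mathcal{O}_{\prob}(T^{-1/4})$ would be lossy against $\|\Lambda_2\Lambda_2'\|=\mathcal{O}(d)$) by inserting $I_d=Q^1Q^{1'}+Q^1_{\bot}Q^{1'}_{\bot}$ and invoking the sharper alignment rates for $\widebar{Q}^{1'}_rQ^1-I_r$ (\ref{Doz3}) and $\widebar{Q}^{1'}_rQ^1_{\bot}$ (Corollary \ref{cor2.3}). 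You instead factor the difference as $\widehat{P}^{1}_{I}\Lambda_{2}\bigl(\widehat{P}^{1}_{I}\Lambda_{2}-\Lambda_{2}\bigr)'+\bigl(\widehat{P}^{1}_{I}\Lambda_{2}-\Lambda_{2}\bigr)\Lambda_{2}'$, reduce to $\|(I_d-\widehat{P}_1)P_{0,1}\|$ via the null identity $P_{0,1}\Lambda_2=\Lambda_2$, and control that by a Davis--Kahan $\sin\Theta$ bound, which needs only the covariance perturbation rate \ref{Doz1} and the order-$d$ eigengap from Assumption \ref{ass:C1}. Both arguments correctly identify and avoid the same trap (the naive use of \ref{Doz2}); yours is more modular and bypasses the eigenvector-alignment lemmas entirely, at the cost of importing a perturbation theorem the paper does not otherwise use, while the paper's version stays inside its own \ref{Doz1}--\ref{Doz3} toolkit. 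Note that the intermediate estimate you need, $\|\widehat{P}_1\Lambda_2-\Lambda_2\|=\sqrt{d}\,\bigl(\mathcal{O}_{\prob}(1/d)+\mathcal{O}_{\prob}(1/\sqrt{T})\bigr)$, is in fact established elsewhere in the paper (see \eqref{al:qqppqqpp01}), there again by the eigenvector route rather than by $\sin\Theta$.
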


\begin{proof}
Under the null hypothesis, $P_{1}=P_{2}$ and hence $P_{1} \Lambda_{2}=P_{2}\Lambda_2=\Lambda_2$.
Recall from \eqref{eq:projection_eigenvectors_QQ} that $\widehat{P}_{k} = \widebar{Q}^{k}_r \widebar{Q}^{k'}_r$, where $\widebar{Q}^{k}_r$ are the eigenvectors corresponding to the first $r$ largest eigenvalues of the matrix $X^{k'} X^k$. Recall from \eqref{eq:def_Q} that $Q^k=\Lambda_k R^k \Pi_k^{-1/2}$, $k=1,2$. Then, with explanations given below,
\begin{align}
&
\mleft\| \widehat{P}_{1} \Lambda_{2} \Lambda_{2}' \widehat{P}'_{1} - P_{1} \Lambda_{2} \Lambda_{2}' P_{1}' 
\mright\|
\nonumber
\\&=
\mleft\| \widebar{Q}_{r}^{1} \widebar{Q}_{r}^{1'} \Lambda_{2} \Lambda_{2}' \widebar{Q}_{r}^{1} \widebar{Q}_{r}^{1'} - \Lambda_{2} \Lambda_{2}'
\mright\|
\nonumber
\\&\leq
\mleft\| \widebar{Q}_{r}^{1} \widebar{Q}_{r}^{1'} \Lambda_{2} \Lambda_{2}' \widebar{Q}_{r}^{1} \widebar{Q}_{r}^{1'} - 
\widebar{Q}_{r}^{1} Q^{1'} \Lambda_{2} \Lambda_{2}' Q^{1} \widebar{Q}_{r}^{1'}
\mright\|
+
\mleft\| \widebar{Q}_{r}^{1} Q^{1'} \Lambda_{2} \Lambda_{2}' Q^{1} \widebar{Q}_{r}^{1'} - \Lambda_{2} \Lambda_{2}'
\mright\|
\nonumber
\\&\leq
\| \widebar{Q}_{r}^{1} \|^2
\mleft\| \widebar{Q}_{r}^{1'} \Lambda_{2} \Lambda_{2}' \widebar{Q}_{r}^{1} - Q^{1'} \Lambda_{2} \Lambda_{2}' Q^{1}
\mright\|
+
\mleft\| \widebar{Q}_{r}^{1} Q_{r}^{1'} \Lambda_{2} \Lambda_{2}' Q^{1} \widebar{Q}_r^{1'} - \Lambda_{2} \Lambda_{2}'
\mright\|
\nonumber
\\&\leq
\| (\widebar{Q}_{r}^{1} - Q^{1})' \Lambda_{2} \Lambda_{2}' (\widebar{Q}_r^{1} - Q^{1}) \|
+ 
\| (\widebar{Q}_{r}^{1} - Q^{1})' \Lambda_{2} \Lambda_{2}' Q^{1} \|
\nonumber
\\&\hspace{1cm}
+
\| Q^{1'} \Lambda_{2} \Lambda_{2}' (\widebar{Q}_{r}^{1} - Q^{1}) \|
+
\| (\widebar{Q}_{r}^{1} - Q^{1}) Q^{1'} \Lambda_{2} \Lambda_{2}' Q^{1} (\widebar{Q}_{r}^{1} - Q^{1})' \|
\nonumber
\\&\hspace{2cm}
+
\| (\widebar{Q}^{1}_r - Q^{1}) Q^{1'}_r \Lambda_{2} \Lambda_{2}' Q^{1} Q^{1'} \|
+
\| Q^{1} Q^{1'} \Lambda_{2} \Lambda_{2}' Q_{1} (\widebar{Q}^{1} - Q^{1})' \|
\label{eq:weppwpfkomvrv0}
\\&\leq
\| \widebar{Q}^{1}_r - Q^{1} \|^2 \| \Lambda_{2} \Lambda_{2}' \| +
2\| (\widebar{Q}_{r}^{1} - Q^{1})' \Lambda_{2} \Lambda_{2}' Q^{1} \|
\nonumber
\\& \hspace{1cm}+ 
\| \widebar{Q}^{1}_r - Q^{1} \|^2 \| Q^1 \|^2 \| \Lambda_{2} \Lambda_{2}' \|
 +
2\| (\widebar{Q}^{1}_r - Q^{1}) Q^{1'} \Lambda_{2} \Lambda_{2}' Q^{1} Q^{1'} \|
\label{eq:weppwpfkomvrv}
\\&=
d \mleft( \mathcal{O}_{\prob}\Big( \frac{1}{d} \Big) + \mathcal{O}_{\prob}\Big( \frac{1}{\sqrt{T}} \Big) \mright).
\nonumber
\end{align}
Note that \eqref{eq:weppwpfkomvrv0} follows since $P_{1} = Q^{1} Q^{1'}$ and therefore $Q^{1} Q^{1'} \Lambda_{2} \Lambda_{2}' Q^{1} Q^{1'} = \Lambda_{2} \Lambda_{2}' $.
The asymptotics of the first and third summand in \eqref{eq:weppwpfkomvrv} follow by Assumption \ref{ass:C1} and \ref{Doz2}. For the remaining summands in \eqref{eq:weppwpfkomvrv}, let $Q_{\bot}^{1} $ be a $d \times (d - r)$ matrix whose columns consist of vectors forming an orthonormal basis of the orthogonal space of $Q^{1} $ such that $I_d = Q^{1} Q^{1'} + Q_{\bot}^{1} Q_{\bot}^{1'}$. Then, for the second summand in \eqref{eq:weppwpfkomvrv}, 
\begin{align}
\| (\widebar{Q}_{r}^{1} - Q^{1})' \Lambda_{2} \Lambda_{2}' Q^{1} \|
&\leq
\| (\widebar{Q}_{r}^{1} - Q^{1})' \Lambda_{2} \Lambda_{2}' \|
\label{eq:newineqtoexplain1}
\\&=
\| (\widebar{Q}_{r}^{1} - Q^{1})' ( Q^{1} Q^{1'} + Q_{\bot}^{1} Q_{\bot}^{1'}) \Lambda_{2} \Lambda_{2}' \|
\nonumber
\\&=
\| \widebar{Q}_{r}^{1'} Q^{1} Q^{1'} \Lambda_{2} \Lambda_{2}'  - 
Q^{1'} \Lambda_{2} \Lambda_{2}'  - 
\widebar{Q}_{r}^{1'} Q_{\bot}^{1} Q_{\bot}^{1'} \Lambda_{2} \Lambda_{2}' \|
\nonumber
\\&=
\| ( \widebar{Q}_{r}^{1'} Q^{1}  - I_r)
Q^{1'} \Lambda_{2} \Lambda_{2}'  - 
\widebar{Q}_{r}^{1'} Q_{\bot}^{1} Q_{\bot}^{1'} \Lambda_{2} \Lambda_{2}' \|
\nonumber
\\&\leq
\| \widebar{Q}_{r}^{1'} Q^{1}  - I_r \|
\| Q^{1'} \Lambda_{2} \Lambda_{2}' \|
+
\| \widebar{Q}_{r}^{1'} Q_{\bot}^{1} \| \| Q_{\bot}^{1'} \Lambda_{2} \Lambda_{2}' \|
\nonumber
\\&\leq
\| \widebar{Q}_{r}^{1'} Q^{1}  - I_r \|
\| \Lambda_{2} \Lambda_{2}' \|
+
\| \widebar{Q}_{r}^{1'} Q_{\bot}^{1} \| \| \Lambda_{2} \Lambda_{2}' \|
\nonumber
\\&= 
d \mleft( \mathcal{O}_{\prob}\Big( \frac{1}{d} \Big) + \mathcal{O}_{\prob}\Big( \frac{1}{\sqrt{T}} \Big) \mright),
\label{eq:klklklkl}
\end{align}
where \eqref{eq:klklklkl} is due to Assumption \ref{ass:C1}, \ref{Doz3} and Corollary \ref{cor2.3} below. The last summand in \eqref{eq:weppwpfkomvrv} can be treated similarly.
%
\end{proof}

The next result is a consequence of Lemma \ref{lem2.1}. The proof arguments follow those for Lemma 2 in \cite{doz2011two}.
\begin{corollary} \label{cor:resultsforPI}
Set $\widehat{\Pi}_{r} = d\widehat{V}_{r}$ as in \eqref{eq:PCA1PCA2_sub} and recall that $\widehat{V}_{r}$ is a diagonal matrix consisting of the $r$ largest eigenvalues of $\frac{1}{dT} YY'$. Recall also the diagonal $\Pi_{2}$ from Appendix \ref{se:appB}.
Under Assumptions \ref{ass:1}--\ref{ass:2}, \ref{ass:C1}--\ref{ass:C3}, 
\begin{enumerate}[label=(\roman*), ref=\textit{(\roman*)}]
\item $\frac{1}{d} \|\widehat{\Pi}_{r} - \Pi_2 \| = 
\mathcal{O}_{\prob}\mleft(\frac{1}{d}\mright) + \mathcal{O}_{\prob}\mleft(\frac{1}{\sqrt{T}}\mright)$,
\label{cor:resultsforPI_item1}
\item $d \|\widehat{\Pi}_{r}^{-1} - \Pi^{-1}_2 \| = 
\mathcal{O}_{\prob} \mleft(\frac{1}{d}\mright) + \mathcal{O}_{\prob}\mleft(\frac{1}{\sqrt{T}}\mright)$,
\label{cor:resultsforPI_item2}
\item $\Pi_2 \widehat{\Pi}_{r}^{-1}-I_{r} = \mathcal{O}_{\prob} \mleft(\frac{1}{d}\mright) + \mathcal{O}_{\prob}\mleft(\frac{1}{\sqrt{T}}\mright)$.
\label{cor:resultsforPI_item3}
\end{enumerate}
\end{corollary}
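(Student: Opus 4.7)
The plan is to follow the strategy of Lemma 2 in \cite{doz2011two} and leverage Lemma \ref{lem2.1} together with Weyl's perturbation inequality. First, I would note that since $YY'$ and $Y'Y$ share the same nonzero eigenvalues, $\widehat{\Pi}_{r} = d\widehat{V}_{r}$ is precisely the diagonal matrix of the $r$ largest eigenvalues of $\widehat{\Sigma}_{Y}=Y'Y/T$. On the population side, $\Pi_{2}$ is the diagonal matrix of nonzero (and hence $r$ largest) eigenvalues of $\Lambda_{2}\Lambda_{2}'$ since $\rank(\Lambda_{2})=r$.

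For part \textit{(i)}, Weyl's inequality gives $|\lambda_{i}(\widehat{\Sigma}_{Y}) - \lambda_{i}(\Lambda_{2}\Lambda_{2}')| \leq \|\widehat{\Sigma}_{Y} - \Lambda_{2}\Lambda_{2}'\|$ for every $i$. Since both $\widehat{\Pi}_{r}$ and $\Pi_{2}$ are diagonal with entries in decreasing order, this yields
\begin{equation*}
\tfrac{1}{d}\|\widehat{\Pi}_{r} - \Pi_{2}\| = \tfrac{1}{d}\max_{1\leq i\leq r} |\lambda_{i}(\widehat{\Sigma}_{Y}) - \lambda_{i}(\Lambda_{2}\Lambda_{2}')| \leq \tfrac{1}{d}\|\widehat{\Sigma}_{Y} - \Lambda_{2}\Lambda_{2}'\|,
\end{equation*}
and the bound \textit{(i)} follows directly from Lemma \ref{lem2.1}.

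For parts \textit{(ii)} and \textit{(iii)}, I would use the standard identity $A^{-1}-B^{-1}=A^{-1}(B-A)B^{-1}$ to write $\widehat{\Pi}_{r}^{-1} - \Pi_{2}^{-1} = \widehat{\Pi}_{r}^{-1}(\Pi_{2} - \widehat{\Pi}_{r})\Pi_{2}^{-1}$ and $\Pi_{2}\widehat{\Pi}_{r}^{-1}-I_{r} = (\Pi_{2}-\widehat{\Pi}_{r})\widehat{\Pi}_{r}^{-1}$. Assumption \ref{ass:C1} yields $\|\Pi_{2}^{-1}\| = 1/\lambda_{\min}(\Lambda_{2}'\Lambda_{2}) = \mathcal{O}(1/d)$, so the remaining ingredient is a bound on $\|\widehat{\Pi}_{r}^{-1}\|$. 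Here I would combine part \textit{(i)} with Assumption \ref{ass:C1}: since $\Pi_{2}/d \to \widetilde{\Pi}$ with $\widetilde{\Pi}$ positive definite diagonal, and since part \textit{(i)} gives $\|\widehat{\Pi}_{r}/d - \Pi_{2}/d\| = \mathcal{O}_{\prob}(1/d) + \mathcal{O}_{\prob}(1/\sqrt{T}) = o_{\prob}(1)$, another application of Weyl's inequality shows $\lambda_{\min}(\widehat{\Pi}_{r}/d)$ is bounded below by $\lambda_{\min}(\widetilde{\Pi})/2 > 0$ with probability tending to one, hence $\|\widehat{\Pi}_{r}^{-1}\| = \mathcal{O}_{\prob}(1/d)$.

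Plugging these bounds into the two identities above, part \textit{(ii)} follows from
\begin{equation*}
d\|\widehat{\Pi}_{r}^{-1} - \Pi_{2}^{-1}\| \leq d\cdot\|\widehat{\Pi}_{r}^{-1}\|\cdot\|\Pi_{2}-\widehat{\Pi}_{r}\|\cdot\|\Pi_{2}^{-1}\| = d\cdot \mathcal{O}_{\prob}(1/d)\cdot\bigl(\mathcal{O}_{\prob}(1)+\mathcal{O}_{\prob}(d/\sqrt{T})\bigr)\cdot\mathcal{O}(1/d),
\end{equation*}
and part \textit{(iii)} from $\|\Pi_{2}\widehat{\Pi}_{r}^{-1}-I_{r}\| \leq \|\Pi_{2}-\widehat{\Pi}_{r}\|\cdot\|\widehat{\Pi}_{r}^{-1}\|$. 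The main (and only real) obstacle is controlling $\|\widehat{\Pi}_{r}^{-1}\|$; everything else is a consequence of Weyl's inequality plus Lemma \ref{lem2.1}, and the argument only requires that the $1/\sqrt{T}$ perturbation in part \textit{(i)} does not overwhelm the lower bound on $\lambda_{\min}(\Pi_{2})/d$, which is immediate since the perturbation is $o_{\prob}(1)$.
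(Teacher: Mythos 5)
Your proposal is correct and follows essentially the same route as the paper: part \textit{(i)} via Weyl's inequality applied to Lemma \ref{lem2.1}, and parts \textit{(ii)}--\textit{(iii)} via the resolvent-type identity together with $(\Pi_2/d)^{-1}=\mathcal{O}(1)$ from Assumption \ref{ass:C1} and the resulting $(\widehat{\Pi}_r/d)^{-1}=\mathcal{O}_{\prob}(1)$. The only cosmetic differences are that you spell out why $\|\widehat{\Pi}_r^{-1}\|=\mathcal{O}_{\prob}(1/d)$ (which the paper leaves implicit) and that you prove \textit{(iii)} directly from the identity $(\Pi_2-\widehat{\Pi}_r)\widehat{\Pi}_r^{-1}$ rather than reducing it to \textit{(ii)}.
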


\begin{proof}
With $\widehat{\Sigma}_{Y} = \frac{1}{T} Y'Y$, we have
\begin{displaymath}
 \Pi_2=\textrm{diag}( \lambda_{1}(\Lambda_2'\Lambda_2),\ldots, \lambda_{r}(\Lambda'_2\Lambda_2))
 =
 \textrm{diag}( \lambda_{1}(\Lambda_{2} \Lambda'_{2}),\ldots, \lambda_{r}(\Lambda_{2} \Lambda'_{2})), \
\widehat{\Pi}_{r}
 = 
 \textrm{diag}(\lambda_{1}(\widehat{\Sigma}_{Y}),\ldots,\lambda_{r}(\widehat{\Sigma}_{Y})).
\end{displaymath}

\textit{(i)}
By Weyl's Theorem (Theorem 4.3.1 in \cite{horn2012matrix}) and Lemma \ref{lem2.1}, for any $j=1,\ldots,r$, 
\begin{displaymath}
|\lambda_{j}(\widehat{\Sigma}_{Y})-\lambda_{j}(\Lambda_{2} \Lambda'_{2})| 
\leq 
\|\widehat{\Sigma}_{Y} - \Lambda_{2} \Lambda'_{2} \| 
= 
d\mleft( \mathcal{O}\mleft(\frac{1}{d}\mright) + \mathcal{O}_{\prob}\mleft(\frac{1}{\sqrt{T}}\mright)\mright),
\end{displaymath}
which gives the desired result, since $\| A \| = \max_{j=1,\dots, d} | \lambda_{j}(A)|$ for diagonal matrix $A$.

\textit{(ii)}
Write
\begin{displaymath}
d(\widehat{\Pi}_{r}^{-1} - \Pi^{-1}_2 ) 
= \mleft( \frac{1}{d}\widehat{\Pi}_{r} \mright)^{-1} \cdot \frac{1}{d}(\Pi_2-\widehat{\Pi}_{r}) \cdot \mleft(\frac{1}{d}\Pi_2 \mright)^{-1},
\end{displaymath}
so that
\begin{displaymath}
d \| \widehat{\Pi}_{r}^{-1} - \Pi_2^{-1} \|
\leq \mleft\| \mleft( \frac{1}{d}\widehat{\Pi}_{r} \mright)^{-1} \mright\| \mleft\| \mleft(\frac{1}{d}\Pi_2 \mright)^{-1} \mright\| 
\frac{1}{d} \| \Pi_2-\widehat{\Pi}_{r} \|.
\end{displaymath}
The desired result then follows by (i) and since Assumption \ref{ass:C1} implies
\begin{equation*}
\mleft(\frac{1}{d}\Pi_2\mright)^{-1} = \mathcal{O}(1)
\hspace{0.2cm}
\text{ and hence }
\hspace{0.2cm}
\mleft( \frac{1}{d}\widehat{\Pi}_{r} \mright)^{-1} = \mathcal{O}_{\prob}(1).
\end{equation*}

\textit{(iii)}
Write 
\begin{align*}
\Pi_2\widehat{\Pi}_{r}^{-1}
=
\Pi_2 ( \widehat{\Pi}_{r}^{-1} - \Pi_2^{-1} + \Pi_2^{-1})
=
I_{r} + \frac{\Pi_2}{d} d \mleft( \widehat{\Pi}_{r}^{-1} - \Pi_{2}^{-1} \mright),
\end{align*}
so that
\begin{align*}
\Pi_2 \widehat{\Pi}_{r}^{-1} - I_{r} 
=
\frac{\Pi_2}{d} d \mleft( \widehat{\Pi}_{r}^{-1} - \Pi_{2}^{-1} \mright)
\end{align*}
and use \ref{cor:resultsforPI_item2}.
\end{proof}

The next result and its corollaries are used in this appendix. See also Lemmas 3 and 4 in \cite{doz2011two}, and their proofs.
\begin{lemma}\label{lem2.2}
Let $\widehat{A}=(\widehat{a}_{ij})_{i,j=1,\ldots,r}=\widebar{Q}_{r}'Q$ with $Q:=Q^2$ defined in Appendix \ref{se:appB} and $\widebar{Q}_r$ defined in \eqref{eq:PCA2}. Under Assumptions \ref{ass:1}--\ref{ass:2}, \ref{ass:C1}--\ref{ass:C4},
$\widehat{a}_{ij} = \mathcal{O}_{\prob}\mleft(\frac{1}{d}\mright) + \mathcal{O}_{\prob}\mleft(\frac{1}{\sqrt{T}}\mright)$, $i\neq j$,
$\widehat{a}_{ii}^2 = 1 +\mathcal{O}_{\prob}\mleft(\frac{1}{d}\mright)+\mathcal{O}_{\prob}\mleft(\frac{1}{\sqrt{T}}\mright)$, $i=1,\ldots,r$.
\end{lemma}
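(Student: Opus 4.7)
The plan is to follow the same perturbation-theoretic strategy used in Lemma 3 of \cite{doz2011two} for the single-series case, but applied to our transformed covariance matrix $\widehat{\Sigma}_{Y}$. The key inputs are Lemma \ref{lem2.1} (approximation of $\widehat{\Sigma}_{Y}$ by $\Lambda_{2}\Lambda_{2}'$), Corollary \ref{cor:resultsforPI} (consistency of the largest eigenvalues), and Assumption \ref{ass:C2} together with Assumption \ref{ass:C1} (which ensure that the eigenvalues of $\Lambda_{2}'\Lambda_{2}$ are asymptotically distinct and of order $d$).

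For the off-diagonal entries, I would combine the two eigenvalue equations
\begin{equation*}
\widehat{\Sigma}_{Y}\,\widebar{Q}_{r} = \widebar{Q}_{r}\,\widehat{\Pi}_{r},\qquad \Lambda_{2}\Lambda_{2}'\,Q = Q\,\Pi_{2}.
\end{equation*}
Left-multiplying the first by $Q'$ and the second by $\widebar{Q}_{r}'$, then taking the difference of the resulting identities gives, entrywise,
\begin{equation*}
\widehat{a}_{ji}\bigl(\lambda_{j}(\widehat{\Sigma}_{Y}) - \lambda_{i}(\Lambda_{2}\Lambda_{2}')\bigr) = \bigl[\,Q'(\widehat{\Sigma}_{Y}-\Lambda_{2}\Lambda_{2}')\widebar{Q}_{r}\,\bigr]_{ij}.
\end{equation*}
The right-hand side is bounded in absolute value by $\|\widehat{\Sigma}_{Y}-\Lambda_{2}\Lambda_{2}'\| = d\,(\mathcal{O}_{\prob}(1/d)+\mathcal{O}_{\prob}(1/\sqrt{T}))$ from Lemma \ref{lem2.1}, using $\|Q\|=\|\widebar{Q}_{r}\|=1$. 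For $i\neq j$, Assumptions \ref{ass:C1}--\ref{ass:C2} together with Corollary \ref{cor:resultsforPI}\ref{cor:resultsforPI_item1} give $|\lambda_{j}(\widehat{\Sigma}_{Y})-\lambda_{i}(\Lambda_{2}\Lambda_{2}')|\asymp d$, and dividing yields the claimed rate $\widehat{a}_{ij}=\mathcal{O}_{\prob}(1/d)+\mathcal{O}_{\prob}(1/\sqrt{T})$.

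For the diagonal entries, let $Q_{\bot}$ be an orthonormal basis of the orthogonal complement of $\col(Q)=\col(\Lambda_{2})$, so that $I_{d}=QQ'+Q_{\bot}Q_{\bot}'$. Since each column $\widebar{q}_{r,i}$ of $\widebar{Q}_{r}$ is a unit vector,
\begin{equation*}
1 = \|Q'\widebar{q}_{r,i}\|^{2} + \|Q_{\bot}'\widebar{q}_{r,i}\|^{2} = \widehat{a}_{ii}^{2} + \sum_{j\neq i}\widehat{a}_{ji}^{2} + \|Q_{\bot}'\widebar{q}_{r,i}\|^{2}.
\end{equation*}
To control the last term, I would exploit that $Q_{\bot}'\Lambda_{2}=0$, so that $Q_{\bot}'\widehat{\Sigma}_{Y}\widebar{Q}_{r}=Q_{\bot}'(\widehat{\Sigma}_{Y}-\Lambda_{2}\Lambda_{2}')\widebar{Q}_{r}$, combined with $\widehat{\Sigma}_{Y}\widebar{Q}_{r}=\widebar{Q}_{r}\widehat{\Pi}_{r}$, yields $Q_{\bot}'\widebar{Q}_{r}=Q_{\bot}'(\widehat{\Sigma}_{Y}-\Lambda_{2}\Lambda_{2}')\widebar{Q}_{r}\widehat{\Pi}_{r}^{-1}$. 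Using Lemma \ref{lem2.1} for the middle factor and Corollary \ref{cor:resultsforPI}\ref{cor:resultsforPI_item2} for $\|\widehat{\Pi}_{r}^{-1}\|=\mathcal{O}_{\prob}(1/d)$ gives $\|Q_{\bot}'\widebar{q}_{r,i}\|=\mathcal{O}_{\prob}(1/d)+\mathcal{O}_{\prob}(1/\sqrt{T})$, which is even stronger than required. Substituting the already-established off-diagonal bounds into the identity above then yields $\widehat{a}_{ii}^{2}=1+\mathcal{O}_{\prob}(1/d)+\mathcal{O}_{\prob}(1/\sqrt{T})$.

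The one step that is not entirely mechanical is verifying that the eigenvalue gap for the off-diagonal argument really is of order $d$, since $\widehat{\Pi}_{r}$ is only consistent up to the additive rate of Corollary \ref{cor:resultsforPI}\ref{cor:resultsforPI_item1}; this is where Assumption \ref{ass:C2} (distinct limiting rescaled eigenvalues) is crucial. Everything else reduces to norm bookkeeping and the estimates already assembled in Appendix \ref{se:consistency_PCA_NH}.
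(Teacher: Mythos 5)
Your argument is correct, and the off-diagonal part is essentially the paper's own argument in a different guise: the paper derives the fixed-point identity $\widehat{A}=\mathcal{O}_{\prob}(1/d)+\mathcal{O}_{\prob}(1/\sqrt{T})+(\Pi_2/d)^{-1}\widehat{A}(\Pi_2/d)$, i.e.\ entrywise $\widehat{a}_{ij}(1-\pi_{jj}/\pi_{ii})=\mathcal{O}_{\prob}(\cdot)$, which is the same eigenvalue-gap relation you obtain by differencing the two eigen-equations; both hinge on Lemma \ref{lem2.1}, Corollary \ref{cor:resultsforPI} and the distinctness of the limiting rescaled eigenvalues (your caveat about needing the gap to be of order $d$, not merely $\pi_{ii}\neq\pi_{jj}$ at finite $d$, applies equally to the paper's version and is well taken). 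Where you genuinely diverge is the diagonal part. The paper stays inside the frame of $Q$ and uses the quadratic identity $\Pi_2/d=\widehat{A}(\Pi_2/d)\widehat{A}'+\mathcal{O}_{\prob}(1/d)+\mathcal{O}_{\prob}(1/\sqrt{T})$, reading off $\pi_{ii}/d=\sum_k(\pi_{kk}/d)\widehat{a}_{ik}^2+\mathcal{O}_{\prob}(\cdot)$ and then invoking the off-diagonal bounds. You instead decompose the unit vector $\widebar{q}_{r,i}$ along $\col(Q)$ and its complement, which forces you to also bound $Q_{\bot}'\widebar{Q}_{r}$; the bound you sketch ($Q_{\bot}'\widebar{Q}_{r}=Q_{\bot}'(\widehat{\Sigma}_{Y}-\Lambda_2\Lambda_2')\widebar{Q}_{r}\widehat{\Pi}_{r}^{-1}$) is precisely the paper's Corollary \ref{cor2.3}, which the paper proves separately and only uses elsewhere. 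So your route proves slightly more than is needed for this lemma (it delivers Corollary \ref{cor2.3} as a by-product and gives the squared rates $\mathcal{O}_{\prob}(1/d^2)+\mathcal{O}_{\prob}(1/T)$ for $1-\widehat{a}_{ii}^2$), at the cost of introducing $Q_{\bot}$, whereas the paper's diagonal argument is self-contained in $\widehat{A}$ and $\Pi_2$. The index transposition $\widehat{a}_{ji}$ versus $\widehat{a}_{ij}$ in your orthonormality identity is immaterial since all off-diagonal entries are controlled symmetrically.
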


\begin{proof} 
For the off diagonal terms $\widehat{a}_{ij}$, write 
\begin{align*}
\widehat{A} 
= \widebar{Q}_{r}'Q
&= 
\widehat{\Pi}_{r}^{-1}\widebar{Q}_{r}'\widehat{\Sigma}_{Y}Q
\\&= 
\widehat{\Pi}_{r}^{-1}\widebar{Q}_{r}'(\widehat{\Sigma}_{Y}-\Lambda_2 \Lambda'_2)Q + \widehat{\Pi}_{r}^{-1}\widebar{Q}_{r}' \Lambda_2\Lambda'_2 Q
\\&= 
\mleft( \frac{\widehat{\Pi}_{r}}{d}\mright)^{-1} \widebar{Q}_{r}' \mleft(\frac{\widehat{\Sigma}_{Y}-\Lambda_2 \Lambda'_2}{d}\mright) Q + \mleft( \frac{\widehat{\Pi}_{r}}{d}\mright)^{-1} 
\widebar{Q}_{r}' Q \mleft(\frac{\Pi_2}{d}\mright) 
\\&= 
\mathcal{O}_{\prob}\mleft(\frac{1}{d}\mright) + \mathcal{O}_{\prob}\mleft( \frac{1}{\sqrt{T}} \mright)+\mleft(\frac{\Pi_2}{d}\mright)^{-1} \widehat{A}\mleft(\frac{\Pi_2}{d}\mright),
\end{align*}
where we used Lemma \ref{lem2.1} and Corollary \ref{cor:resultsforPI}.
This means that $\widehat{a}_{ij} = \frac{\pi_{jj}}{\pi_{ii}} \widehat{a}_{ij} + \mathcal{O}_{\prob}\mleft(\frac{1}{d}\mright)+\mathcal{O}_{\prob}\mleft( \frac{1}{\sqrt{T}} \mright)$ with $\frac{\pi_{jj}}{\pi_{ii}} \neq 1$ by Assumption \ref{ass:C2} which yields the desired result, when $i\neq j$.

For the diagonal terms $\widehat{a}_{ii}$, write
\begin{align*}
\frac{\Pi_2}{d} 
&= \frac{\widehat{\Pi}_{r}}{d} + \mathcal{O}\mleft(\frac{1}{d}\mright) + \mathcal{O}\mleft( \frac{1}{\sqrt{T}} \mright) 
\\
&= \widebar{Q}_{r}'\frac{\widehat{\Sigma}_{Y}}{d}\widebar{Q}_{r} + \mathcal{O}\mleft(\frac{1}{d}\mright) + \mathcal{O}_{\prob}\mleft( \frac{1}{\sqrt{T}} \mright) 
\\
&= \widebar{Q}_{r}' \frac{\Lambda_2\Lambda'_2}{d}\widebar{Q}_{r}+\mathcal{O}_{\prob}\mleft(\frac{1}{d}\mright) + \mathcal{O}_{\prob}\mleft( \frac{1}{\sqrt{T}} \mright) 
\\
&= \widebar{Q}_{r}'Q\frac{\Pi_2}{d}Q'\widebar{Q}_{r}' + \mathcal{O}_{\prob}\mleft(\frac{1}{d}\mright) + \mathcal{O}_{\prob}\mleft( \frac{1}{\sqrt{T}} \mright) 
\\
&= \widehat{A} \frac{\Pi_2}{d} \widehat{A}' + \mathcal{O}_{\prob}\mleft(\frac{1}{d}\mright) + \mathcal{O}_{\prob}\mleft( \frac{1}{\sqrt{T}} \mright).
\end{align*}
This means that, for $i=1,\ldots,r$, $\frac{\pi_{ii}}{d}=\sum_{k=1}^{r}\frac{\pi_{kk}}{d}\widehat{a}_{ik}^2+\mathcal{O}_{\prob}\mleft(\frac{1}{d}\mright) + \mathcal{O}_{\prob}\mleft( \frac{1}{\sqrt{T}} \mright)$ which yields the desired result by using the result for the off-diagonal elements.
\end{proof}

\begin{corollary}\label{cor2.2}
Under Assumptions \ref{ass:1}--\ref{ass:2}, \ref{ass:C1}--\ref{ass:C4}, one can take $\widebar{Q}_{r}$, $Q^2=:Q$ such that
\begin{displaymath}
\widebar{Q}_{r}'Q = I_{r} + \mathcal{O}_{\prob}\mleft(\frac{1}{d}\mright) + \mathcal{O}_{\prob}\mleft(\frac{1}{\sqrt{T}}\mright),\quad 
\|\widebar{Q}_{r}-Q\|^2 = \mathcal{O}_{\prob}\mleft(\frac{1}{d}\mright) + \mathcal{O}_{\prob}\mleft(\frac{1}{\sqrt{T}}\mright).
\end{displaymath}
\end{corollary}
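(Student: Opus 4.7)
}
The plan is to derive both statements directly from Lemma \ref{lem2.2} by exploiting the sign ambiguity in the definition of the eigenvectors $\widebar{Q}_r$. First I would fix the sign convention: since each column of $\widebar{Q}_r$ is an eigenvector and is therefore only defined up to a factor $\pm 1$, I can replace the $i$th column of $\widebar{Q}_r$ by its negative whenever $\widehat{a}_{ii} = (\widebar{Q}_r'Q)_{ii}$ is negative, thereby assuring $\widehat{a}_{ii} \geq 0$ for $i=1,\dots,r$. This step only changes signs of the rows of $\widehat{A}$ and leaves the order-of-magnitude statements in Lemma \ref{lem2.2} intact.

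With this sign choice, $\widehat{a}_{ii}^2 = 1 + \mathcal{O}_{\prob}(1/d) + \mathcal{O}_{\prob}(1/\sqrt{T})$ together with $\widehat{a}_{ii}\geq 0$ yields, via a Taylor expansion of $\sqrt{\cdot}$ around $1$,
\begin{equation*}
\widehat{a}_{ii} = 1 + \mathcal{O}_{\prob}\mleft(\tfrac{1}{d}\mright) + \mathcal{O}_{\prob}\mleft(\tfrac{1}{\sqrt{T}}\mright), \quad i =1,\dots,r.
\end{equation*}
Combining this with the off-diagonal bound $\widehat{a}_{ij} = \mathcal{O}_{\prob}(1/d) + \mathcal{O}_{\prob}(1/\sqrt{T})$ for $i \neq j$ from Lemma \ref{lem2.2}, and using that $r$ is a fixed finite integer so that entrywise $\mathcal{O}_{\prob}$ bounds transfer to the spectral norm (e.g., via the Frobenius norm dominating the spectral norm), I obtain the first assertion
\begin{equation*}
\widebar{Q}_r' Q = \widehat{A} = I_r + \mathcal{O}_{\prob}\mleft(\tfrac{1}{d}\mright) + \mathcal{O}_{\prob}\mleft(\tfrac{1}{\sqrt{T}}\mright).
\end{equation*}

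For the second assertion I would expand the Frobenius norm using orthonormality of the columns of both $\widebar{Q}_r$ and $Q$ (so that $\widebar{Q}_r'\widebar{Q}_r = I_r$ and $Q'Q = I_r$):
\begin{equation*}
\| \widebar{Q}_r - Q \|^2 \leq \| \widebar{Q}_r - Q \|_F^2 = \tr(\widebar{Q}_r'\widebar{Q}_r) - 2\tr(\widebar{Q}_r'Q) + \tr(Q'Q) = 2r - 2 \tr(\widehat{A}).
\end{equation*}
Plugging in $\tr(\widehat{A}) = \sum_{i=1}^r \widehat{a}_{ii} = r + \mathcal{O}_{\prob}(1/d) + \mathcal{O}_{\prob}(1/\sqrt{T})$ from the previous step immediately delivers the claimed rate.

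The main (and really the only) subtlety in carrying this out is the sign convention in the first step: Lemma \ref{lem2.2} only controls $\widehat{a}_{ii}^2$, so without fixing signs one cannot conclude $\widehat{A} \approx I_r$. Once one observes that PCA returns eigenvectors determined only up to sign and aligns them with $Q$ accordingly, everything reduces to elementary manipulations of the entries of $\widehat{A}$ and the Frobenius identity for two matrices with orthonormal columns. The remaining assumptions \ref{ass:1}--\ref{ass:2} and \ref{ass:C1}--\ref{ass:C4} enter only implicitly through Lemma \ref{lem2.2}.
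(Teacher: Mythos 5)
Your proposal is correct and follows essentially the same route as the paper: fix the sign ambiguity so that the diagonal of $\widehat{A}=\widebar{Q}_r'Q$ is nonnegative, deduce the first claim from Lemma \ref{lem2.2}, and obtain the second claim from the orthonormality identity $(\widebar{Q}_r-Q)'(\widebar{Q}_r-Q)=2I_r-\widehat{A}-\widehat{A}'$ (which you phrase via the trace/Frobenius norm rather than quadratic forms, an immaterial difference). Your explicit square-root step for passing from $\widehat{a}_{ii}^2$ to $\widehat{a}_{ii}$ is a detail the paper leaves implicit, but the argument is the same.
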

	
\begin{proof} 
We consider the two asymptotics separately. 
For the first result, since the columns of $Q$ are the orthonormal eigenvectors of $\Lambda_2 \Lambda'_2$ associated with distinct eigenvalues by assumption, these columns are defined uniquely up to a sign change. Then one can choose them in such a way that the diagonal terms of $\widehat{A}=\widebar{Q}_{r}'Q$ are positive. The desired result follows from Lemma \ref{lem2.2}.

For the second result, note that for any $x\in\mathbb{R}^{r}$ with $\|x\|=1$, 
\begin{displaymath}
x'(\widebar{Q}_{r}-Q)'(\widebar{Q}_{r}-Q)x 
=
x'(2I_{r} - \widebar{Q}_{r}'Q -Q'\widebar{Q}_{r})x =  \mathcal{O}_{\prob}\mleft(\frac{1}{d}\mright) + \mathcal{O}_{\prob}\mleft(\frac{1}{\sqrt{T}}\mright),
\end{displaymath}
since $\widebar{Q}_r'\widebar{Q}_r =Q'Q = I_{r}$.
\end{proof}

\begin{corollary}\label{cor2.3}
Under Assumptions \ref{ass:1}--\ref{ass:2}, \ref{ass:C1}--\ref{ass:C4}, 
\begin{equation*}
\widebar{Q}_{r}' Q_{\bot} = \mathcal{O}_{\prob}\mleft(\frac{1}{d}\mright) + \mathcal{O}_{\prob}\mleft(\frac{1}{\sqrt{T}}\mright).
\end{equation*}
\end{corollary}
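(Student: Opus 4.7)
The plan is to exploit the orthogonal decomposition of the identity in $\RR^d$. Since $Q := Q^2$ and $Q_{\bot}$ together form an orthonormal basis, we have $QQ' + Q_{\bot} Q_{\bot}' = I_d$. Sandwiching this identity between $\widebar{Q}_r'$ on the left and $\widebar{Q}_r$ on the right, and invoking the orthonormality relation $\widebar{Q}_r' \widebar{Q}_r = I_r$, I would obtain the algebraic identity
\begin{equation*}
(\widebar{Q}_r' Q_{\bot})(Q_{\bot}' \widebar{Q}_r) \;=\; I_r - \widebar{Q}_r' Q Q' \widebar{Q}_r \;=\; I_r - \widehat{A}\widehat{A}',
\end{equation*}
where $\widehat{A} = \widebar{Q}_r' Q$ is precisely the matrix already studied in Lemma \ref{lem2.2}. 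This moves the problem from $\widebar{Q}_r' Q_{\bot}$, whose factors are hard to control directly, to the difference $I_r - \widehat{A}\widehat{A}'$, which is a purely $r \times r$ object governed by the existing lemmas.

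Next, I would invoke Corollary \ref{cor2.2}, which (after choosing signs of the columns of $Q$ so that the diagonal entries of $\widehat{A}$ are close to $+1$) delivers $\widehat{A} = I_r + R$ with $\|R\| = \mathcal{O}_{\prob}(1/d) + \mathcal{O}_{\prob}(1/\sqrt{T})$. Expanding
\begin{equation*}
\widehat{A}\widehat{A}' \;=\; I_r + R + R' + RR'
\end{equation*}
and using submultiplicativity of the spectral norm, one obtains $\|I_r - \widehat{A}\widehat{A}'\| = \mathcal{O}_{\prob}(1/d) + \mathcal{O}_{\prob}(1/\sqrt{T})$. Applying $\|M\|^2 = \|MM'\|$ to $M = \widebar{Q}_r' Q_{\bot}$ then transfers this bound to $\widebar{Q}_r' Q_{\bot}$ itself, giving the asserted rate.

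The main obstacle is essentially bookkeeping rather than any new idea: one must track carefully how the off-diagonal bound $\widehat{a}_{ij} = \mathcal{O}_{\prob}(1/d) + \mathcal{O}_{\prob}(1/\sqrt{T})$ (for $i \neq j$) from Lemma \ref{lem2.2} and the diagonal bound $\widehat{a}_{ii}^2 = 1 + \mathcal{O}_{\prob}(1/d) + \mathcal{O}_{\prob}(1/\sqrt{T})$ combine in $I_r - \widehat{A}\widehat{A}'$, and then how this bound is transferred via the square-root step $\|M\|^2 = \|MM'\|$. Since $r$ is fixed, none of the norm inequalities introduce dimensional penalties, and the result follows directly from the groundwork already laid in Lemmas \ref{lem2.1}--\ref{lem2.2} and Corollary \ref{cor2.2}.
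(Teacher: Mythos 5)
Your algebraic identity $(\widebar{Q}_r'Q_{\bot})(Q_{\bot}'\widebar{Q}_r) = I_r - \widehat{A}\widehat{A}'$ is correct, but the final step does not deliver the claimed rate: what it controls is $\|\widebar{Q}_r'Q_{\bot}\|^2$, not $\|\widebar{Q}_r'Q_{\bot}\|$. Writing $\widehat{A}=I_r+R$ with $\|R\|=\mathcal{O}_{\prob}(1/d)+\mathcal{O}_{\prob}(1/\sqrt{T})$ from Corollary \ref{cor2.2} and bounding $\|I_r-\widehat{A}\widehat{A}'\|\leq 2\|R\|+\|R\|^2$ gives $\|\widebar{Q}_r'Q_{\bot}\|^2=\mathcal{O}_{\prob}(1/d)+\mathcal{O}_{\prob}(1/\sqrt{T})$, hence only $\|\widebar{Q}_r'Q_{\bot}\|=\mathcal{O}_{\prob}(d^{-1/2})+\mathcal{O}_{\prob}(T^{-1/4})$, the square root of the asserted rate. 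The full rate is genuinely needed downstream (for instance in the proof of Proposition \ref{prop2.2}, where $\|\widehat{\Pi}_{r}^{-1/2}\widebar{Q}_{r}' Q_{\bot}\|^2$ must be $\mathcal{O}_{\prob}(1/d^3)+\mathcal{O}_{\prob}(1/(dT))$), so the loss cannot be absorbed. The route is also essentially circular: since $I_r-\widehat{A}\widehat{A}'=(\widebar{Q}_r'Q_{\bot})(\widebar{Q}_r'Q_{\bot})'$ exactly, sharpening the bound on $I_r-\widehat{A}\widehat{A}'$ to $\mathcal{O}_{\prob}(1/d^2)+\mathcal{O}_{\prob}(1/T)$ amounts to showing that the symmetric part $R+R'$ is quadratically small, which is precisely the content of the corollary; the triangle inequality $\|R+R'\|\leq 2\|R\|$ discards exactly that cancellation.

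The paper instead obtains a bound that is \emph{linear} in the estimation error. From the eigenvector equation $\widebar{Q}_{r}=\widehat{\Sigma}_Y\widebar{Q}_{r}\widehat{\Pi}_{r}^{-1}$ one writes
\begin{equation*}
\widebar{Q}_{r}'Q_{\bot}
=\mleft(\frac{\widehat{\Pi}_{r}}{d}\mright)^{-1}\widebar{Q}_{r}'\,\frac{1}{d}\mleft(\widehat{\Sigma}_Y-\Lambda_2\Lambda_2'\mright)Q_{\bot}
+\widehat{\Pi}_{r}^{-1}\widebar{Q}_{r}'\Lambda_2\Lambda_2'Q_{\bot};
\end{equation*}
the second term vanishes exactly because $\Lambda_2'Q_{\bot}=0$, and the first is $\mathcal{O}_{\prob}(1/d)+\mathcal{O}_{\prob}(1/\sqrt{T})$ by Lemma \ref{lem2.1} and Corollary \ref{cor:resultsforPI}. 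If you want to retain your decomposition-of-the-identity framework, you would need to feed it this kind of first-order information rather than the bound from Corollary \ref{cor2.2}.
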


\begin{proof} 
Note that $\widebar{Q}_{r} = \widehat{\Sigma}_Y \widebar{Q}_{r} \widehat{\Pi}^{-1}_{r}$. Hence, 
\begin{align*}
\widebar{Q}_{r}' Q_{\bot} 
&= \widehat{\Pi}^{-1}_{r} \widebar{Q}'_{r} \widehat{\Sigma}_Y  Q_{\bot}
\\&= \mleft( \frac{\widehat{\Pi}_{r}}{d}\mright)^{-1} \widebar{Q}'_{r} \frac{1}{d}(\widehat{\Sigma}_Y - \Lambda_2 \Lambda'_2) Q_{\bot} + \widehat{\Pi}^{-1}_{r} \widebar{Q}'_{r}  \Lambda_2 \Lambda'_2 Q_{\bot}
\\&= \mathcal{O}_{\prob}\mleft(\frac{1}{d}\mright) + \mathcal{O}_{\prob}\mleft(\frac{1}{\sqrt{T}}\mright)
\end{align*}
due to Lemma \ref{lem2.1}, Corollary \ref{cor:resultsforPI}\ref{cor:resultsforPI_item2} and since $\Lambda'_2 Q_{\bot} = 0$ because of $Q_{\bot} = Q_{\bot}^2 = (\Lambda_2 R^2 \Pi_2^{-\frac{1}{2}})_{\bot}$.
\end{proof}

The next result is a consequence of the preceding results and was used in the proofs of Propositions \ref{prop2.1} and \ref{prop2.1.0}.
\begin{proposition}\label{prop2.2}
Under Assumptions \ref{ass:1}--\ref{ass:2}, \ref{ass:C1}--\ref{ass:C4},
\begin{equation*}
\frac{1}{T} \| \widehat{F} - F \|_{F}^2 
= \frac{1}{T} \sum_{t=1}^{T} \| \widehat{F}_t-F_{t} \|^2_{F}
= \mathcal{O}_{\prob}\mleft(\frac{1}{d}\mright) + \mathcal{O}_{\prob}\mleft(\frac{1}{T}\mright), 
\text{ as } d,T\to\infty.
\end{equation*}
\end{proposition}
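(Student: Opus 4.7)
My plan is to adapt the standard PCA consistency arguments (in the spirit of Lemma 3 of \cite{doz2011two} or Theorem 1 of \cite{bai2003inferential}) to the transformed series $Y$. The starting point will be the eigenvector identity $\widehat{F}\widehat{V}_r = \frac{1}{dT}YY'\widehat{F}$ for the PCA estimator of the transformed series, which follows from the definitions in \eqref{eq:PCA1} and \eqref{eq:PCA1PCA2}. From this I would derive a decomposition of the form
\begin{equation*}
\widehat{F} - F \cdot H_{\ast} = \frac{1}{dT}\big(YY' - \widetilde{F}\Lambda_P\widetilde{F}'\big)\widehat{F}\widehat{V}_r^{-1},
\end{equation*}
for a suitable matrix $H_{\ast}$ which, under the Doz-style normalization adopted here and in view of Lemma \ref{le:convergenceH12toH} and Lemma \ref{le:convergenceHtoH0}, only contributes lower-order terms to the target bound.

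The residual matrix $YY' - \widetilde{F}\Lambda_P\widetilde{F}'$ has the block representation \eqref{eq:block_representation}, with blocks $A,B,C$ consisting of cross-terms between $\widehat{P}^{(\cdot)}_I\Lambda_2 F^{(\cdot)}$ and $\widehat{P}^{(\cdot)}_I\varepsilon^{(\cdot)}$. After applying the Frobenius norm and the submultiplicative inequalities from Appendix \ref{se:matrixnorminequalities}, the analysis reduces to bounding quantities of the form $\frac{1}{(dT)^2}\|F^{(\cdot)}\Lambda_2'\widehat{P}_I^{(\cdot)'}\widehat{P}_I^{(\cdot)}\varepsilon^{(\cdot)'}\|^2$, $\frac{1}{(dT)^2}\|\varepsilon^{(\cdot)}\widehat{P}_I^{(\cdot)'}\widehat{P}_I^{(\cdot)}\Lambda_2 F^{(\cdot)'}\|^2$, and $\frac{1}{(dT)^2}\|\varepsilon^{(\cdot)}\widehat{P}_I^{(\cdot)'}\widehat{P}_I^{(\cdot)}\varepsilon^{(\cdot)'}\|^2$, together with the factor $\|\widehat{V}_r^{-1}\|^2$. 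These are precisely the quantities bounded in \eqref{align:XX5}--\eqref{align:XX6} and \eqref{al:poutdcvbhjhgfvdkfnek_1} of the proof of Proposition \ref{prop2.1}, invoking \ref{cond:s1}--\ref{cond:s4}, Assumption \ref{ass:3}, Assumption \ref{ass:C4}, and the fact that $\|\widehat{P}^{k}_I\| \leq 1$ (derived just after \eqref{eq:operatornormprojection}). A careful accounting of the per-block rates then yields the claimed $\mathcal{O}_{\prob}(1/d) + \mathcal{O}_{\prob}(1/T)$ bound.

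The main obstacle, in my view, is handling the projection terms $\widehat{P}^{k}_I = \frac{1}{2}(I_d + \widehat{P}_{k})$, which do not appear in the standard untransformed factor model consistency results. In particular, one needs to substitute $\widehat{P}^{k}_I \Lambda_2$ by $\Lambda_2$ at the cost of a remainder controlled by $\|\widehat{P}^{k}_I\Lambda_2 - \Lambda_2\|$; the required bound of order $d\,(\mathcal{O}_{\prob}(1/d) + \mathcal{O}_{\prob}(1/\sqrt{T}))$ is exactly what is established in \eqref{al:qqppqqpp1}--\eqref{al:qqppqqpp2} via Lemma \ref{le:difference_projections}. Once this substitution is made inside each of the blocks $A, B, C$, the cross-terms involving $\Lambda_2' \varepsilon$ and $\varepsilon' F$ can be controlled exactly as on p.~199 of \cite{doz2011two} under Assumptions \ref{ass:2} and \ref{ass:C4}, and the argument proceeds in parallel with the standard proof of PCA factor consistency, ultimately yielding the advertised rate.
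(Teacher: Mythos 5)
Your plan inverts the logical order of the paper's argument and, as written, is circular. In the paper, Proposition \ref{prop2.2} is the primitive input: it is what justifies step \eqref{align:XX3} in the proof of Proposition \ref{prop2.1}, and the two lemmas you invoke to dispose of the rotation $H_{\ast}$ --- Lemmas \ref{le:convergenceH12toH} and \ref{le:convergenceHtoH0} --- are themselves proved \emph{using} Proposition \ref{prop2.2} (the proof of Lemma \ref{le:convergenceHtoH0} cites it explicitly to bound $T^{-1/2}\| \widehat{F} - F \|$). So the machinery you propose to borrow presupposes the statement you are trying to establish. Beyond the circularity, two quantitative gaps remain. First, the bounds \eqref{align:XX5}--\eqref{align:XX6} that you cite only establish $\frac{1}{(dT)^2}\| A \|^2 \| \widehat{V}_r^{-1} \|^2 = \mathcal{O}_{\prob}(1)$; the dominant contribution there, via \ref{cond:s2}, is $\frac{1}{dT}\| \varepsilon \varepsilon' \| = \mathcal{O}_{\prob}(1/\sqrt{T}) + \mathcal{O}(1/d)$, so the factorization $\frac{1}{T}\| \widehat{F} - F H_{\ast} \|_F^2 \leq \| \frac{1}{dT}( YY' - \widetilde{F}\Lambda_P\widetilde{F}' ) \|^2 \, \| \widehat{V}_r^{-1} \|^2 \, \frac{1}{T}\| \widehat{F} \|_F^2$ delivers at best $\mathcal{O}_{\prob}(1/\sqrt{T}) + \mathcal{O}_{\prob}(1/d)$, which is weaker than the claimed $\mathcal{O}_{\prob}(1/T) + \mathcal{O}_{\prob}(1/d)$ whenever $\sqrt{T} = o(d)$; recovering the sharp rate along this route requires replacing the crude product bounds by $\| \Lambda_2' \varepsilon \|^2 = \mathcal{O}_{\prob}(dT)$-type estimates in every cross term, which your sketch does not carry out. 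Second, and more fundamentally, your decomposition controls $\widehat{F} - F H_{\ast}$, whereas the proposition asserts closeness to $F$ itself with no rotation; Lemmas \ref{le:convergenceH12toH} and \ref{le:convergenceHtoH0} give $H_{\ast} \to H_0$ with $H_0$ as in \eqref{eq:def:H0}, which is not the identity in general, so the passage from $F H_{\ast}$ to $F$ is not a lower-order correction.

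The paper's actual proof is entirely different and avoids both problems: starting from the dual representation $\widehat{F}_t = \widehat{\Pi}_r^{-1/2} \widebar{Q}_r' Y_t$ obtained from \eqref{eq:PCA1PCA2}, it splits $\widehat{F}_t - F_t$ into the three explicit terms of \eqref{align_xsxsxsx1} and bounds each one directly, using Lemma \ref{lem2.1} (consistency of $d^{-1}\widehat{\Sigma}_Y$ for $d^{-1}\Lambda_2\Lambda_2'$), the eigenvalue and eigenvector perturbation results in Corollary \ref{cor:resultsforPI}, Corollary \ref{cor2.2} and Corollary \ref{cor2.3}, the projection bound $\| \widehat{P}_1 \Lambda_2 - \Lambda_2 \| = \sqrt{d}\,( \mathcal{O}_{\prob}(1/d) + \mathcal{O}_{\prob}(1/\sqrt{T}) )$ of \eqref{al:qqppqqpp01}, and a trace computation for the idiosyncratic term. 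Your instinct that the factors $\widehat{P}_I^{k}$ are the genuinely new difficulty is correct, but they are handled inside this eigenvector-perturbation scheme, not through the residual decomposition underlying Proposition \ref{prop2.1}.
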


\begin{proof}
Using the representations \eqref{eq:PCA1}, \eqref{eq:PCA2}, their relationship \eqref{eq:PCA1PCA2}, and the definition \eqref{eq:YT2def},
\begin{equation*}
\widehat{F}_t = 
\widehat{V}_{r}^{-\frac{1}{2}} \widebar{F}_t = 
\widehat{V}_{r}^{-\frac{1}{2}} \frac{1}{d} \widebar{\Lambda}' Y_t = 
\widehat{V}_{r}^{-\frac{1}{2}} \frac{1}{\sqrt{d}} \widebar{Q}'_r Y_t = 
\begin{cases}
\widehat{\Pi}_r^{-1/2} \widebar{Q}_{r}' \widehat{P}_{1} X^2_{t},
\hspace{0.2cm}
&\text{ for }
\hspace{0.2cm}
t = 1, \dots, T/2,
\\
\widehat{\Pi}_r^{-1/2} \widebar{Q}_{r}' \widehat{P}_{2} X^2_{t},
\hspace{0.2cm}
&\text{ for }
\hspace{0.2cm}
t = T/2+1, \dots, T.
\end{cases}
\end{equation*}
We consider only the case $t = 1, \dots, T/2$. Then,
\begin{align}
\widehat{F}_t-F_{t} 
&= 
\widehat{\Pi}_r^{-1/2} \widebar{Q}_{r}' \widehat{P}_{1} X^2_{t} - F_{t}
\nonumber
\\&= 
\widehat{\Pi}_r^{-1/2}\widebar{Q}_{r}' \widehat{P}_{1} ( \Lambda_2 F_{t} + \varepsilon_{t}) - F_{t}
\nonumber
\\&= 
(\widehat{\Pi}_r^{-1/2}\widebar{Q}_{r}' \widehat{P}_{1} \Lambda_2 - I_{r})F_{t} + 
\widehat{\Pi}_r^{-1/2}\widebar{Q}_{r}' \widehat{P}_{1} \varepsilon_{t}
\nonumber
\\&= 
\mleft( \widehat{\Pi}_r^{-1/2}\widebar{Q}_{r}' (\widehat{P}_{1} \Lambda_2 - \Lambda_2) + \widehat{\Pi}_r^{-1/2}\widebar{Q}_{r}'\Lambda_2 - I_{r}\mright)F_{t} + 
\widehat{\Pi}_r^{-1/2}\widebar{Q}_{r}' \widehat{P}_{1} \varepsilon_{t}
\nonumber
\\&= 
\widehat{\Pi}_r^{-1/2}\widebar{Q}_{r}' (\widehat{P}_{1} \Lambda_2 - \Lambda_2)F_{t} + 
(\widehat{\Pi}_r^{-1/2}\widebar{Q}_{r}'\Lambda_2 - I_{r})F_{t} + 
\widehat{\Pi}_r^{-1/2}\widebar{Q}_{r}' \widehat{P}_{1} \varepsilon_{t}
\nonumber
\\&= 
\widehat{\Pi}_r^{-1/2}\widebar{Q}_{r}' (\widehat{P}_{1} \Lambda_2 - \Lambda_2)F_{t} 
+ \widehat{\Pi}_r^{-1/2} ( \widebar{Q}'_r Q - \widehat{\Pi}_r^{1/2} \Pi_2^{-1/2} ) \Pi_2^{1/2} F_{t} 
+ \widehat{\Pi}_{r}^{-1/2}\widebar{Q}_{r}' \widehat{P}_{1} \varepsilon_{t}.
\label{align_xsxsxsx1}
\end{align}
We consider the three summands in \eqref{align_xsxsxsx1} separately.
For the first one, note that
\begin{align*}
\frac{1}{T} \sum_{t=1}^{T/2} \| \widehat{\Pi}_r^{-1/2}\widebar{Q}_{r}' (\widehat{P}_{1} \Lambda_2 - \Lambda_2)F_{t} \|_{F}^2
&\leq
\| \widehat{\Pi}_r^{-1/2}\widebar{Q}_{r}' (\widehat{P}_{1} \Lambda_2 - \Lambda_2)\|^2 \frac{1}{T} \| F \|^2_{F}
\\&\leq
d \| \widehat{\Pi}_r^{-1/2} \|^2 \| \widebar{Q}_{r}' \|^2 
\frac{1}{d} \| \widehat{P}_{1} \Lambda_2 - \Lambda_2 \|^2 \frac{1}{T} \| F \|^2_{F}
\\&=
\mathcal{O}_{\prob} \mleft(\frac{1}{d^2}\mright) + \mathcal{O}_{\prob}\mleft(\frac{1}{T}\mright),
\end{align*}
since $\frac{1}{T} \| F \|^2_{F} \leq  \frac{r}{T} \| F \|^2 = \mathcal{O}_{\prob}(1)$ by \ref{cond:s1} in the proof of Proposition \ref{prop2.1}  
and $\widehat{\Pi}_{r}^{-1/2} = \frac{1}{\sqrt{d}} \mleft( \frac{1}{d} \widehat{\Pi}_{r} \mright)^{-1/2} = \mathcal{O}_{\prob}\mleft(\frac{1}{\sqrt{d}}\mright)$ by Corollary \ref{cor:resultsforPI}. Furthermore, by \eqref{al:qqppqqpp01},
\begin{align*}
\| \widehat{P}_{1} \Lambda_2 - \Lambda_2 \|
=
\sqrt{d} \mleft(\mathcal{O}_{\prob} \mleft(\frac{1}{d}\mright) + \mathcal{O}_{\prob}\mleft(\frac{1}{\sqrt{T}}\mright) \mright).
\end{align*}
For the second and third summands in \eqref{align_xsxsxsx1}, we do the same calculations as in the proof of Proposition 2 in \cite{doz2011two}. That is, for the second summand, 
\begin{align*}
\frac{1}{T} \sum_{t=1}^{T/2} \|
\widehat{\Pi}_r^{-1/2} ( \widebar{Q}'_r Q - \widehat{\Pi}_r^{1/2} \Pi_2^{-1/2} ) \Pi_2^{1/2} F_{t} \|^2_{F}
&\leq
\| \widehat{\Pi}_r^{-1/2} ( \widebar{Q}'_r Q - \widehat{\Pi}_r^{1/2} \Pi_2^{-1/2} ) \Pi_2^{1/2} \|^2 \frac{1}{T} \| F \|^2_{F} 
\\&= 
\mathcal{O}_{\prob}\mleft(\frac{1}{d^2}\mright) + \mathcal{O}_{\prob}\mleft(\frac{1}{T}\mright) 
\end{align*}
by Corollaries \ref{cor:resultsforPI}, \ref{cor2.2} and $\frac{1}{T} \| F \|^2_{F} \leq  \frac{r}{T} \| F \|^2 = \mathcal{O}_{\prob}(1)$ by \ref{cond:s1} in the proof of Proposition \ref{prop2.1}.

For the third summand, let $Q_{\perp}$ be a $d\times (d-r)$ matrix whose columns consist of vectors forming an orthonormal basis of the orthogonal space of $Q$. Then $I_{d}=QQ'+Q_{\perp}Q_{\perp}'$ and we can write
\begin{align}
\frac{1}{T} \sum_{t=1}^{T/2} \|
\widehat{\Pi}_{r}^{-1/2}\widebar{Q}_{r}' \widehat{P}_{1} \varepsilon_{t} \|^2_{F}
&=
\frac{2}{T} \sum_{t=1}^{T/2} \|
\widehat{\Pi}_{r}^{-1/2}\widebar{Q}_{r}' Q Q' \widehat{P}_{1} \varepsilon_{t} \|^2_{F}
+
\frac{2}{T} \sum_{t=1}^{T/2} \|
\widehat{\Pi}_{r}^{-1/2}\widebar{Q}_{r}' Q_{\bot} Q'_{\bot} \widehat{P}_{1} \varepsilon_{t} \|^2_{F}
\label{al:hdhdhdhdhdhdh}
\\&=
\mathcal{O}_{\prob}\mleft(\frac{1}{d^2}\mright) + \mathcal{O}_{\prob}\mleft(\frac{1}{T}\mright).
\label{al:hdhdhdhdhdhdh1}
\end{align}
To show \eqref{al:hdhdhdhdhdhdh1}, note for the first summand in \eqref{al:hdhdhdhdhdhdh},
 \begin{align}
\frac{1}{T} \sum_{t=1}^{T/2} \| Q' \widehat{P}_{1} \varepsilon_{t} \|^2_{F} 
&\leq
\frac{2}{T} \sum_{t=1}^{T/2} \| Q' (\widehat{P}_{1} - P_1) \varepsilon_{t} \|^2_{F} 
+
\frac{2}{T} \sum_{t=1}^{T/2} \| Q' P_1 \varepsilon_{t} \|^2_{F}
\nonumber
\\
&=
\frac{1}{2T} \sum_{t=1}^{T/2} \| Q' ( \widebar{Q}^{1} \widebar{Q}^{1'} - Q^{1} Q^{1'} ) \varepsilon_{t} \|^2_{F} 
+
\frac{2}{T} \sum_{t=1}^{T/2} \| Q' P_1 \varepsilon_{t} \|^2_{F}
\label{al_new1_al1}
\\
&\leq
\| \widebar{Q}^{1} \widebar{Q}^{1'} - Q^{1} Q^{1'} \|^2 \frac{1}{T} \sum_{t=1}^{T/2} \| \varepsilon_{t} \|^2_{F} 
+
\frac{2}{T} \sum_{t=1}^{T/2} \| Q' P_1 \varepsilon_{t} \|^2_{F}
\nonumber
\\
&\leq
\| \widebar{Q}^{1} \widebar{Q}^{1'} - Q^{1} Q^{1'} \|^2 \mathcal{O}_{\prob}(d)
+
\mathcal{O}_{\prob}(1)
\label{al_new1_al2}
\\
&\leq
\mleft( \| (\widebar{Q}^{1} - Q^{1'} ) \widebar{Q}^{1'} \|^2 + \| Q^{1'} ( \widebar{Q}^{1'} - Q^{1'}) \|^2 \mright) \mathcal{O}_{\prob}(d)
+
\mathcal{O}_{\prob}(1)
\nonumber
\\
&=
\mleft( \mathcal{O}_{\prob}\mleft( \frac{1}{d} \mright) + \mathcal{O}_{\prob} \mleft( \frac{1}{\sqrt{T}} \mright) \mright) \mathcal{O}_{\prob}(d)
+
\mathcal{O}_{\prob}(1)
=
\mathcal{O}_{\prob}(1),
\label{al_new1_al3}
\end{align}
where \eqref{al_new1_al1} uses the definitions of $\widehat{P}_{1}$ and $P_1$, \eqref{al_new1_al2} is explained in more detail below and \eqref{al_new1_al3} follows by \ref{Doz2}.
For \eqref{al_new1_al2}, we get
 \begin{align}
\frac{1}{T} \sum_{t=1}^{T/2} \E \| Q' P_1 \varepsilon_{t} \|^2_{F} 
&=
\frac{1}{T} \sum_{t=1}^{T/2} \E \varepsilon_{t}' P_{1}' Q Q' P_1 \varepsilon_{t}
=
\frac{1}{T} \sum_{t=1}^{T/2} \E \tr (Q' P_1 \varepsilon_{t} \varepsilon_{t}' P_{1}' Q )
\nonumber
\\&=
\frac{1}{T} \sum_{t=1}^{T/2} \tr (Q' P_1 \Sigma_{\varepsilon} P_{1}' Q )
\nonumber
\\&\leq
\frac{1}{T} \sum_{t=1}^{T/2} r \| Q' P_1 \Sigma_{\varepsilon} P_{1}' Q \|
\leq
\frac{r}{2} \| P_1 \Sigma_{\varepsilon} P_{1}' \|
\nonumber
\\&\leq 
\frac{r}{2} \| P_1 \|^2 \| \Sigma_{\varepsilon} \|
=
\frac{r}{2} \| \Sigma_{\varepsilon} \|
=
\mathcal{O}(1),
\label{al:wewewewewewe}
\end{align}
due to \eqref{eq:operatornormprojection}. The asymptotics in \eqref{al:wewewewewewe} follow by Assumption \ref{ass:C4}. 
Then, it suffices to show that 
\begin{equation*}
\|
\widehat{\Pi}_{r}^{-1/2}\widebar{Q}_{r}' Q
\|^2
=
\mathcal{O}_{\prob}\mleft(\frac{1}{d^2}\mright),
\end{equation*}
which holds due to Corollary \ref{cor2.2} and $\widehat{\Pi}_{r}^{-1/2} = \frac{1}{\sqrt{d}} \mleft( \frac{1}{d} \widehat{\Pi}_{r} \mright)^{-1/2} = \mathcal{O}_{\prob}\mleft(\frac{1}{\sqrt{d}}\mright)$ by Corollary \ref{cor:resultsforPI}. A similar argument applies to show that $\frac{1}{T} \sum_{t=1}^{T/2} \| \varepsilon_{t} \|^2_{F} = \mathcal{O}_{\prob}(d)$ as also stated in \eqref{al_new1_al2}.

For the second summand in \eqref{al:hdhdhdhdhdhdh},
 \begin{align*}
\frac{1}{T} \sum_{t=1}^{T/2} 
\| Q'_{\bot} \widehat{P}_{1} \varepsilon_{t} \|^2_{F}
&\leq
\frac{1}{T} \sum_{t=1}^{T/2} 
\| Q'_{\bot} (\widehat{P}_{1} - P_1 ) \varepsilon_{t} \|^2_{F}
+
\frac{1}{T} \sum_{t=1}^{T/2} 
\| Q'_{\bot} P_1 \varepsilon_{t} \|^2_{F}
= \mathcal{O}(d),
\end{align*}
where the first summand can be treated similarly to \eqref{al_new1_al3} and for the second summand, we get
 \begin{align*}
\frac{1}{T} \sum_{t=1}^{T/2} 
\E \| Q'_{\bot} P_1 \varepsilon_{t} \|^2_{F}
&=
\frac{1}{T} \sum_{t=1}^{T/2} 
\E \varepsilon_{t}' P_{1}' Q_{\bot} Q'_{\bot} P_1 \varepsilon_{t}
=
\frac{1}{T} \sum_{t=1}^{T/2} 
\E \tr (Q'_{\bot} P_1 \varepsilon_{t} \varepsilon_{t}' P_{1}' Q_{\bot} )
\\&=
\frac{1}{T} \sum_{t=1}^{T/2} 
\tr (Q'_{\bot} P_1 \Sigma_{\varepsilon} P_{1}' Q_{\bot} )
\leq
\frac{d-r}{2} \| P_1 \Sigma_{\varepsilon} P_{1}' \|
\\&\leq 
\frac{d-r}{2} \| \Sigma_{\varepsilon} \|
= \mathcal{O}(d)
\end{align*}
following the arguments in \eqref{al:wewewewewewe} and by Assumption \ref{ass:C4}. 
Then, it suffices to show that
\begin{equation*}
\|
\widehat{\Pi}_{r}^{-1/2}\widebar{Q}_{r}' Q_{\bot} \|^2
=
\mathcal{O}_{\prob}\mleft(\frac{1}{d^3}\mright) + \mathcal{O}_{\prob}\mleft(\frac{1}{dT}\mright),
\end{equation*}
which holds since $\widebar{Q}_{r}' Q_{\bot} = \mathcal{O}_{\prob}\mleft(\frac{1}{d}\mright) + \mathcal{O}_{\prob}\mleft(\frac{1}{\sqrt{T}}\mright)$ by Corollary \ref{cor2.3}
and $\frac{1}{\sqrt{d}} \mleft( \frac{1}{d} \widehat{\Pi}_{r} \mright)^{-1/2} = \mathcal{O}_{\prob}\mleft(\frac{1}{\sqrt{d}}\mright)$ by Corollary \ref{cor:resultsforPI}. 
\end{proof}

\section{Proofs of results in Section \ref{sec:Mainresultshypothesis}} \label{app:D}
We state here the detailed proofs of our main results which are also rephrased in Appendix \ref{se:appB}.
 \begin{proof}[Proof of Proposition \ref{prop1}]
By Assumption \ref{ass:9}(ii), it suffices to prove that
\begin{equation*}
|W(\widehat{F}) - W(FH_{0})| = o_{\prob}(1).
\end{equation*}
For that, we follow the proof of Theorem 3(i) in \cite{han2015tests}. That is, 
\begin{align}
&\|W(\widehat{F}) - W(FH_{0})\|
\nonumber
\\&\leq
\mleft\|V(\widehat{F})' \mleft( \Omega^{-1}(\widehat{F}) - \Omega^{-1}(FH_{0}) \mright) V(\widehat{F}) \mright\| 
+
\mleft\| \mleft( V(\widehat{F}) - V(FH_{0}) \mright)' \Omega^{-1}(FH_{0}) V(\widehat{F}) \mright\|
\nonumber
\\&\hspace{1cm}+
\mleft\| V(FH_{0})' \Omega^{-1}(FH_0) \mleft( V(\widehat{F}) - V(FH_{0}) \mright) \mright\|
\nonumber
\\&\leq
\mleft\|V(\widehat{F}) \mright\|^2 \mleft\| \Omega^{-1}(\widehat{F}) - \Omega^{-1}(FH_{0}) \mright\| 
+
\mleft\| V(\widehat{F}) - V(FH_{0}) \mright\| \mleft\| \Omega^{-1}(FH_{0}) \mright\| \mleft\| V(\widehat{F}) \mright\|
\nonumber
\\&\hspace{1cm}+
\mleft\| V(FH_{0}) \mright\| \mleft\| \Omega^{-1}(FH_0) \mright\| \mleft\| V(\widehat{F}) - V(FH_{0}) \mright\| 
=
o_{\prob}(1).
\label{al:proof_prop31}
\end{align}
In \ref{item:Q1}--\ref{item:Q3} below we consider the different quantities in \eqref{al:proof_prop31} separately to show the claimed asymptotic behavior. In addition, Propositions \ref{prop2} and \ref{prop3} are needed.
\begin{enumerate}[label=\textbf{Q.\arabic*.},ref=Q.\arabic*, align=left]
\item \label{item:Q1}
Note that
\begin{align}
\mleft\| \Omega^{-1}(\widehat{F}) - \Omega^{-1}(FH_{0}) \mright\| 
&=
\mleft\| \Omega^{-1}(\widehat{F}) \mleft( \Omega(\widehat{F}) - \Omega(FH_{0}) \mright) \Omega^{-1}(FH_{0}) \mright\| 
\nonumber
\\&\leq
\| \Omega^{-1}(\widehat{F}) \| \mleft\| \Omega(\widehat{F}) - \Omega(FH_{0}) \mright\| \| \Omega^{-1}(FH_{0}) \|
\nonumber
\\&=
\| \Omega^{-1}(\widehat{F}) \| \| \Omega^{-1}(FH_{0}) \| o_{\prob}(1)
\label{al:proofprop3.1al:1}
\\&=
o_{\prob}(1),
\label{al:proofprop3.1al:1.1}
\end{align}
where \eqref{al:proofprop3.1al:1} follows by Proposition \ref{prop3} and \eqref{al:proofprop3.1al:1.1} is due to \ref{item:Q3} below.
\item \label{item:Q2}
We get further that
\begin{align*}
\mleft\|V(\widehat{F}) \mright\| 
\leq 
\mleft\| V(\widehat{F}) - V(FH_{0}) \mright\| + \| V(FH_{0}) \|
= o_{\prob}(1) + \| V(FH_{0}) \|
= \mathcal{O}_{\prob}(1),
\end{align*}
where the first equality follows by Proposition \ref{prop2}. For the behavior of $\mleft\| V(FH_0) \mright\|$,
\begin{align*}
\mleft\| V(FH_0) \mright\|
&=
\mleft\|
\vechop \mleft( \frac{1}{\sqrt{T}} \sum_{t=1}^{T/2} H'_0 F_{t} F_{t}' H_0 - \frac{1}{\sqrt{T}} \sum_{t=T/2 + 1}^{T} H'_0 F_{t} F_{t}' H_0 \mright)
\mright\|
\\&\leq
\sqrt{r}
\mleft\|
\frac{1}{\sqrt{T}} \sum_{t=1}^{T/2} F_{t} F_{t}' - \frac{1}{\sqrt{T}} \sum_{t=T/2 + 1}^{T} F_{t} F_{t}'
\mright\| \| H_0 \|^2
=
\mathcal{O}_{\prob}(1)
\end{align*}
since $\| \vechop(A) \| \leq \| \vechop(A) \|_{F} \leq \| A \|_{F} \leq \sqrt{r} \| A \|$ for any matrix $A \in \RR^{r \times r}$ and by Assumption \ref{ass:2}(i).
\item \label{item:Q3}
We first consider $\mleft\| \Omega^{-1}(FH_0) \mright\|$ and then $\mleft\| \Omega^{-1}(\widehat{F}) \mright\|$. Note that $\lambda_{\min}(\Omega) > 0$ since $\Omega$ in \eqref{eq:Omega} is positive definite by Assumption \ref{ass:9}(i). Then, 
\begin{align}
\mleft\| \Omega^{-1}(FH_0) \mright\| 
&=
\lambda_{\max}\mleft(  \Omega^{-1}(FH_0) \mright)
= 
\lambda_{\min}(\Omega(FH_0))
\nonumber
\\&\leq
\| \Omega(FH_0) - \Omega \| + \lambda_{\min}(\Omega)
= o_{\prob}(1) + \lambda_{\min}(\Omega),
\label{al:Q3_2}
\end{align}
where \eqref{al:Q3_2} is due to Weyl's Theorem (Theorem 4.3.1 in \cite{horn2012matrix}) and Assumption \ref{ass:9}(i).

Moving on to $\mleft\| \Omega^{-1}(\widehat{F}) \mright\|$, we first note that
\begin{align*}
\mleft\| \Omega(\widehat{F}) - \Omega \mright\|
\leq
\| \Omega(\widehat{F}) - \Omega(FH_0) \| + \| \Omega(FH_0) - \Omega \| 
= o_{\prob}(1)
\end{align*}
by Proposition \ref{prop3} and Assumption \ref{ass:9}(i).
We can then show that
\begin{align*}
\mleft\| \Omega^{-1}(\widehat{F}) \mright\| 
&\leq 
| \lambda_{\min}(\Omega(\widehat{F})) - \lambda_{\min}(\Omega) | + \lambda_{\min}(\Omega)
\\&\leq
\| \Omega(\widehat{F}) - \Omega \| + \lambda_{\min}(\Omega)
= o_{\prob}(1) + \lambda_{\min}(\Omega).
\end{align*}
\end{enumerate}
\end{proof}

\begin{proof}[Proof of Proposition \ref{prop2}]
We follow the arguments of the proof of Theorem 1 in \cite{han2015tests}. The proof requires our results \eqref{eq:key1} (Proposition \ref{prop2.1}), \eqref{eq:key2} (Proposition \ref{prop2.1.0}) and \eqref{eq:key3} (Lemma \ref{le:convergenceH12toH}).
First, note that
\begin{align}
\| V(\widehat{F}) - V(FH_{0}) \| 
&=
\| V(\widehat{F}) - V(\widetilde{F} \widetilde{H}) + V(\widetilde{F} \widetilde{H}) - V(FH_{0}) \| 
\nonumber
\\&\leq
\| V(\widehat{F}) - V(\widetilde{F} \widetilde{H}) \| + \| V(\widetilde{F} \widetilde{H}) - V(FH_{0}) \|.
\label{al:V-V:tri}
\end{align}
We consider the two summands in \eqref{al:V-V:tri} separately and prove that they are both $o_{\prob}(1)$. 

For the first term in \eqref{al:V-V:tri}, we get 
\begin{align}
&
\| V(\widehat{F}) - V( \widetilde{F} \widetilde{H}) \|
\nonumber
\nonumber
\\&= \frac{1}{\sqrt{T}} \mleft\| \vechop \mleft( 
\sum_{t=1}^{T/2} \widehat{F}_{t} \widehat{F}_{t}' - 
\sum_{t=1}^{T/2} H'_1 F_{t} F_{t}' H_1 +
\sum_{t=T/2 + 1}^{T} H'_2 F_{t} F_{t}' H_2 -
\sum_{t=T/2 + 1}^{T} \widehat{F}_{t} \widehat{F}_{t}' \mright) \mright\|
\nonumber
\\&\leq \frac{1}{\sqrt{T}} \mleft\| \vechop \mleft( 
\sum_{t=1}^{T/2} \widehat{F}_{t} \widehat{F}_{t}' - 
\sum_{t=1}^{T/2} H'_1 F_{t} F_{t}' H_1 \mright) \mright\|
+
\frac{1}{\sqrt{T}} \mleft\| \vechop \mleft( 
\sum_{t=T/2 + 1}^{T} H'_2 F_{t} F_{t}' H_2 -
\sum_{t=T/2 + 1}^{T} \widehat{F}_{t} \widehat{F}_{t}' \mright) \mright\|.
\label{al:V-V:tri2}
\end{align}
The two summands in \eqref{al:V-V:tri2} can both be treated similarly. Therefore, we focus on the first summand. Using the same arguments as on page 36 in \cite{han2015tests_WP}, we get
\begin{align}
&
\frac{1}{\sqrt{T}} \mleft\| \sum_{t=1}^{T/2} 
\vechop \mleft( 
\widehat{F}_{t} \widehat{F}_{t}' - 
H'_1 F_{t} F_{t}' H_1 \mright) \mright\|
\nonumber
\\&=
\frac{1}{\sqrt{T}} \mleft\| \sum_{t=1}^{T/2} 
\vechop \mleft( 
\widehat{F}_{t} ( \widehat{F}_{t}' - F_{t}' H_1 )
+
( \widehat{F}_{t} - H'_1 F_t) F_{t}' H_1 \mright) \mright\|
\nonumber
\\&=
\frac{1}{\sqrt{T}} \mleft\| \sum_{t=1}^{T/2} 
\vechop \mleft( 
(\widehat{F}_{t} - H'_1 F_t) ( \widehat{F}_{t}' - F_{t}' H_1 )
+
H'_1 F_t ( \widehat{F}_{t}' - F'_t H_1)
+ 
( \widehat{F}_{t} - H'_1 F_t) F_t' H_1 \mright) \mright\|
\nonumber
\\&\leq
\frac{\sqrt{r}}{\sqrt{T}} \mleft\| \sum_{t=1}^{T/2} 
(\widehat{F}_{t} - H'_1 F_t) ( \widehat{F}_{t}' - F_{t}' H_1 )
\mright\|
+
\frac{2\sqrt{r}}{\sqrt{T}} \mleft\| \sum_{t=1}^{T/2} 
H'_1 F_t ( \widehat{F}_{t}' - F'_t H_1)
\mright\|
\nonumber
\\&\leq
\frac{\sqrt{r}}{\sqrt{T}} \sum_{t=1}^{T/2} 
\mleft\| \widehat{F}_{t} - H'_1 F_t \mright\|^2
+
\frac{2\sqrt{r}}{\sqrt{T}} \mleft\| \sum_{t=1}^{T/2} 
F_t ( \widehat{F}_{t}' - F'_t H_1)
\mright\| \| H_1 \|
\label{pr:theorem1_al1}
\\&= 
\sqrt{T}
\mathcal{O}_{\prob}\mleft(\frac{1}{\delta_{dT}^2}\mright)
= o_{\prob}(1),
\label{pr:theorem1_al2}
\end{align}
since by assumption $\sqrt{T}/d \to \infty$.
The asymptotic \eqref{pr:theorem1_al2} follows by 
Propositions \ref{prop2.1} and \ref{prop2.1.0} as well as Lemma \ref{le:convergenceH12toH}.

For the second summand in \eqref{al:V-V:tri}, we get
\begin{align}
&
\| V(\widetilde{F} \widetilde{H}) - V(FH_{0}) \|
\nonumber
\\&= \frac{1}{\sqrt{T}} \mleft\| \vechop \mleft( 
\sum_{t=1}^{T/2} H'_1 F_{t} F_{t}' H_1 - 
\sum_{t=1}^{T/2} H'_0 F_{t} F_{t}' H_{0} +
\sum_{t=T/2+ 1}^{T} H'_0 F_{t} F_{t}' H_{0} -
\sum_{t=T/2 + 1}^{T} H'_2 F_{t} F_{t}' H_2  
\mright) \mright\|
\nonumber
\\&\leq \frac{1}{\sqrt{T}} \mleft\| 
\sum_{t=1}^{T/2} 
\vechop \mleft( H'_1 F_{t} F_{t}' H_1 - 
H'_0 F_{t} F_{t}' H_{0}
\mright) \mright\|
+
\frac{1}{\sqrt{T}} \mleft\|
\sum_{t=T/2+ 1}^{T} 
\vechop \mleft( H'_0 F_{t} F_{t}' H_{0} -
H'_2 F_{t} F_{t}' H_2  
\mright) \mright\|.
\label{eq:numberbla}
\end{align}
Focusing on the first summand in \eqref{eq:numberbla}, we can infer
\begin{align}
&
\frac{1}{\sqrt{T}} \mleft\| 
\sum_{t=1}^{T/2} 
\vechop \mleft( H'_1 F_{t} F_{t}' H_1 - 
H'_0 F_{t} F_{t}' H_{0}
\mright) \mright\|
\nonumber
\\&\leq
\frac{\sqrt{r}}{\sqrt{T}} \sum_{t=1}^{T/2} 
\mleft\| H'_1F_{t} - H'_0 F_t \mright\|^2
+
\frac{2\sqrt{r}}{\sqrt{T}} \mleft\| \sum_{t=1}^{T/2} 
F_t ( F_{t}'H_1 - F'_t H_0)
\mright\| \| H_0 \|
\label{pr:theorem1_al300}
\\&\leq
\frac{\sqrt{r}}{\sqrt{T}} \sum_{t=1}^{T/2} 
\mleft\| F_{t} \mright\|^2 \| H_1 - H_0 \|^2
+
\frac{2\sqrt{r}}{\sqrt{T}} \mleft\| \sum_{t=1}^{T/2} 
F_t F_{t}' 
\mright\| \| H_1 - H_0 \| \| H_0 \|
\nonumber
\\&=
\mathcal{O}_{\prob}\mleft(\sqrt{T}\mright) \mathcal{O}_{\prob}\mleft(\frac{1}{\delta_{dT}^2}\mright)
+
\mathcal{O}_{\prob}(1) \mathcal{O}_{\prob}\mleft(\frac{1}{\delta_{dT}}\mright)
=
\mathcal{O}_{\prob}\mleft(\frac{1}{\delta_{dT}}\mright),
\label{pr:theorem1_al4}
\end{align}
where \eqref{pr:theorem1_al300} follows by the same arguments as \eqref{pr:theorem1_al1} above.
The asymptotics \eqref{pr:theorem1_al4} then follow by Assumption \ref{ass:2}(i) and Lemma \ref{le:convergenceH12toH}.
\end{proof}

\begin{proof}[Proof of Proposition \ref{prop3}]
The proof follows that of Theorem 2 in \cite{han2015tests}. Note that
\begin{align}
\| \Omega(\widehat{F}) - \Omega(FH_{0}) \|
&\leq
\| \Omega(\widehat{F}) - \Omega(\widetilde{F} \widetilde{H}) \| +
\| \Omega(\widetilde{F} \widetilde{H}) - \Omega(FH_{0}) \|
=
\mathcal{O}_{\prob}\mleft(\frac{T^{\frac{1}{3}}}{\delta_{dT}}\mright).
\label{al:prooftheorem2_prop3.3_al1}
\end{align}
Consider the first summand in \eqref{al:prooftheorem2_prop3.3_al1}. We get
\begin{align}
&
\| \Omega(\widehat{F}) - \Omega(\widetilde{F} \widetilde{H}) \|
\nonumber
\\&\leq
\mleft\| \widehat{\Gamma}_{0}(\widehat{F}) - \widehat{\Gamma}_{0}(\widetilde{F} \widetilde{H}) \mright\|
\nonumber
\\&\hspace{1cm}+
\mleft\| \sum_{j=1}^{T-1} \kappa \mleft( \frac{j}{b_{T}} \mright) \Big( \widehat{\Gamma}_{j}(\widehat{F}) + \widehat{\Gamma}_{j}'(\widehat{F}) \Big)
-
\sum_{j=1}^{T-1} \kappa \mleft( \frac{j}{b_{T}} \mright) \Big( \widehat{\Gamma}_{j}(\widetilde{F} \widetilde{H}) + \widehat{\Gamma}_{j}'(\widetilde{F} \widetilde{H}) \Big) \mright\|
\nonumber
\\&\leq
\mleft\| \widehat{\Gamma}_{0}(\widehat{F}) - \widehat{\Gamma}_{0}(\widetilde{F} \widetilde{H}) \mright\|
+
2 \sum_{j=1}^{T-1} \mleft| \kappa \mleft( \frac{j}{b_{T}} \mright) \mright| \Big\| \widehat{\Gamma}_{j}(\widehat{F}) - \widehat{\Gamma}_{j}(\widetilde{F} \widetilde{H}) \Big\|
\nonumber
\\&\leq
\mleft\| \widehat{\Gamma}_{0}(\widehat{F}) - \widehat{\Gamma}_{0}(\widetilde{F} \widetilde{H}) \mright\|
+
2 \sum_{j=1}^{b_{T}} \Big\| \widehat{\Gamma}_{j}(\widehat{F}) - \widehat{\Gamma}_{j}(\widetilde{F} \widetilde{H}) \Big\|
=
\mathcal{O}_{\prob}\mleft(\frac{T^{\frac{1}{3}}}{\delta_{dT}}\mright),
\label{al:gammagamma_al2}
\end{align}
where \eqref{al:gammagamma_al2} follows since the Bartlett kernel is supported on the interval $[-1,1]$. The asymptotic behavior \eqref{al:gammagamma_al2} then follows by \eqref{lemma8_al1} in Lemma \ref{le:Gammas} and using Assumption \ref{ass:10}. We can infer further that the term is $o_{\prob}(1)$ if $\frac{T^{\frac{2}{3}}}{d} \to 0$ as $d,T \to \infty$.

The result for the second summand in \eqref{al:prooftheorem2_prop3.3_al1} can then be inferred by similar arguments but using \eqref{lemma8_al2} in Lemma \ref{le:Gammas}.
\end{proof}

\section{Auxiliary results and their proofs} \label{app:H}
We present here some auxiliary results used in the proofs of Propositions \ref{prop1}--\ref{prop3}. 
Note that our results focus on the case when the change-point is known, i.e. when $\pi=1/2$ in \cite{han2015tests_WP} and we consider only the long-run variance estimator based on the Bartlett kernel. 
The next lemma is analogous to Lemmas 7 and 8 in \cite{han2015tests_WP}. 
\begin{lemma} \label{le:Gammas}
Under Assumptions \ref{ass:1}--\ref{ass:3}, \ref{ass:C1}--\ref{ass:C3},
\begin{align}
\sup_{0\leq j \leq T-1} \Big\| \widehat{\Gamma}_{j}(\widehat{F}) - \widehat{\Gamma}_{j}( \widetilde{F}\widetilde{H} ) \Big\|
 = \mathcal{O}_{\prob}\mleft( \frac{1}{\delta_{dT}} \mright)
\label{lemma8_al1}
\end{align}
and
\begin{align} \label{lemma8_al2}
\sup_{0\leq j \leq T-1}  \Big\| \widehat{\Gamma}_{j}(\widetilde{F}\widetilde{H}) - \widehat{\Gamma}_{j}(FH_{0}) \Big\|
 = \mathcal{O}_{\prob}\mleft( \frac{1}{\delta_{dT}} \mright).
\end{align}
\end{lemma}

\begin{proof}
Recall from \eqref{eq:def:Gamma(G)} that
\begin{align*}
\widehat{\Gamma}_{j}(F H_0) 
&= \frac{1}{T} \sum_{t=1+j}^{T} \vechop (H'_{0} F_{t} F_{t}' H_{0} - I_{r}) \vechop (H'_{0} F_{t-j} F_{t-j}' H_{0} - I_{r})'.
\end{align*} 
We further note that
\begin{align}
\widehat{\Gamma}_{j}( \widetilde{F} \widetilde{H}) 
&=
\frac{1}{T} \sum_{t=1+j}^{T/2} \vechop (H'_{1} F_{t} F_{t}' H_{1} - I_{r}) \vechop (H'_{1} F_{t-j} F_{t-j}' H_{1} - I_{r})'
\nonumber
\\&\hspace{1cm}+
\frac{1}{T} \sum_{t=1+j+T/2}^{T} \vechop (H'_{2} F_{t} F_{t}' H_{2} - I_{r}) \vechop (H'_{2} F_{t-j} F_{t-j}' H_{2} - I_{r})'
\nonumber
\\&\hspace{2cm}+
\frac{1}{T} \sum_{t=T/2+1}^{T/2+j} \vechop (H'_{2} F_{t} F_{t}' H_{2} - I_{r}) \vechop (H'_{1} F_{t-j} F_{t-j}' H_{1} - I_{r})'
\nonumber
\\&=: \widehat{\Gamma}_{1,j}(\widetilde{F} \widetilde{H}) + \widehat{\Gamma}_{2,j}(\widetilde{F} \widetilde{H}) + \widehat{\Gamma}_{3,j}(\widetilde{F} \widetilde{H}).
\label{def:Gamma1,2,3}
\end{align} 
Similarly, split $\widehat{\Gamma}_{j}( \widehat{F} ) $ into $\widehat{\Gamma}_{1,j}(\widehat{F} ) + \widehat{\Gamma}_{2,j}(\widehat{F} ) + \widehat{\Gamma}_{3,j}(\widehat{F} )$ according to the summations $\sum_{t=1+j}^{T/2}, \sum_{t=1+j+T/2}^{T}$ and $\sum_{t=T/2+1}^{T/2+j}$.

\textit{Proof of \eqref{lemma8_al1}:}
Using \eqref{def:Gamma1,2,3}, we can write
\begin{align}
\Big\| \widehat{\Gamma}_{j}(\widehat{F}) - \widehat{\Gamma}_{j}(\widetilde{F}\widetilde{H}) \Big\|
\leq 
\Big\| \widehat{\Gamma}_{1,j}(\widehat{F}) - \widehat{\Gamma}_{1,j}(\widetilde{F}\widetilde{H}) \Big\|
+
\Big\| \widehat{\Gamma}_{2,j}(\widehat{F}) - \widehat{\Gamma}_{2,j}(\widetilde{F}\widetilde{H}) \Big\|
+
\Big\| \widehat{\Gamma}_{3,j}(\widehat{F}) - \widehat{\Gamma}_{3,j}(\widetilde{F}\widetilde{H}) \Big\|.
\nonumber
\end{align}
For the proofs, we focus on the first summand. The other two follow by similar considerations.
Following the calculations on page 43 in \cite{han2015tests_WP}, we get
\begin{align}
&
\Big\| \widehat{\Gamma}_{1,j}(\widehat{F}) - \widehat{\Gamma}_{1,j}(\widetilde{F}\widetilde{H}) \Big\|
\nonumber
\\&\leq
\frac{\sqrt{r}}{T} \sum_{t=1+j}^{T/2} \| \widehat{F}_t \widehat{F}_t' \| \mleft\| \widehat{F}_{t-j} \widehat{F}'_{t-j} - H'_1 F_{t-j} F'_{t-j} H_1 \mright\|
+
\frac{\sqrt{r}}{T} \sum_{t=1+j}^{T/2} \mleft\| \widehat{F}_{t} \widehat{F}'_{t} - H'_1 F_{t} F'_{t} H_1 \mright\| \| H'_1 F_{t-j} F_{t-j}' H_1 \|
\nonumber
\\&\hspace{1cm}
+
\frac{r}{T} \sum_{t=1+j}^{T/2}  \mleft\| \widehat{F}_{t-j} \widehat{F}'_{t-j} - H'_1 F_{t-j} F'_{t-j} H_1 \mright\|
+
\frac{r}{T} \sum_{t=1+j}^{T/2} \mleft\| \widehat{F}_{t} \widehat{F}'_{t} - H'_1 F_{t} F'_{t} H_1 \mright\|.
\label{al:le:Gammas_al1}
\end{align}
All four summands in \eqref{al:le:Gammas_al1} can be treated similarly. Therefore, we focus on the first one and increase $T/2$ to $T$:
\begin{align}
&
\frac{1}{T} \sum_{t=1+j}^{T} \| \widehat{F}_t \widehat{F}_t' \| \mleft\| \widehat{F}_{t-j} \widehat{F}'_{t-j} - H'_1 F_{t-j} F'_{t-j} H_1 \mright\|
\nonumber
\\&\leq
\frac{1}{T} \sum_{t=1+j}^{T} \| \widehat{F}_t \widehat{F}_t' \| 
\mleft(
\mleft\| \widehat{F}_{t-j} ( \widehat{F}'_{t-j} - F'_{t-j} H_1) \mright\|
+
\mleft\| ( \widehat{F}_{t-j} - H'_1 F_{t-j} ) F'_{t-j}H_1 \mright\|
\mright)
\nonumber
\\&\leq
\mleft( \frac{1}{T} \sum_{t=1+j}^{T} \| \widehat{F}_t \|^4 \mright)^{\frac{1}{2}} 
\Bigg(
\Bigg( \frac{1}{T} \sum_{t=1+j}^{T} \| \widehat{F}_{t-j} \|^4 \frac{1}{T} \sum_{t=1}^{T} \| \widehat{F}'_{t} - F'_{t}H_1 \|^4 \Bigg)^{\frac{1}{4}}
\nonumber
\\&\hspace{1cm}+
\Bigg( \frac{1}{T} \sum_{t=1}^{T} \| \widehat{F}'_{t} - F'_{t}H_1 \|^4 \frac{1}{T} \sum_{t=1+j}^{T} \| F_{t-j} \|^4 \| H_1 \|^4 \Bigg)^{\frac{1}{4}} 
\Bigg)
 = \mathcal{O}_{\prob}\mleft( \frac{1}{\delta_{dT}} \mright),
 \label{al:le:Gammas_al2}
\end{align}
where \eqref{al:le:Gammas_al2} is due to Lemmas \ref{le:fourthmoments} and \ref{le:convergenceH12toH}.

\textit{Proof of \eqref{lemma8_al2}:}
Using \eqref{def:Gamma1,2,3}, we can write
\begin{align}
\Big\| \widehat{\Gamma}_{j}(\widetilde{F}\widetilde{H}) - \widehat{\Gamma}_{j}(FH_0) \Big\|
&\leq 
\Big\| \widehat{\Gamma}_{1,j}(\widetilde{F}\widetilde{H}) - \widehat{\Gamma}_{1,j}(FH_0) \Big\|
+
\Big\| \widehat{\Gamma}_{2,j}(\widetilde{F}\widetilde{H}) - \widehat{\Gamma}_{2,j}(FH_0) \Big\|
\nonumber
\\& \hspace{1cm}+
\Big\| \widehat{\Gamma}_{3,j}(\widetilde{F}\widetilde{H}) - \widehat{\Gamma}_{3,j}(FH_0) \Big\|.
\nonumber
\end{align}
We focus on the first summand. The other two can be dealt with similarly. As in \eqref{al:le:Gammas_al2}, we get
\begin{align}
&
\Big\| \widehat{\Gamma}_{1,j}(\widetilde{F}\widetilde{H}) - \widehat{\Gamma}_{1,j}(FH_0) \Big\|
\nonumber
\\&\leq
\frac{\sqrt{r}}{T}
\sum_{t=j+1}^{T/2}
\Big\| H'_1 F_t F_t' H_1 \Big\| \Big\| H'_1 F_{t-j} F_{t-j}' H_1 - H'_0 F_{t-j} F_{t-j}' H_0 \Big\|
\nonumber
\\&\hspace{1cm}+
\frac{\sqrt{r}}{T}
\sum_{t=j+1}^{T/2}
\Big\| H'_1 F_{t} F_{t}' H_1 - H'_0 F_{t} F_{t}' H_0 \Big\| \Big\| H'_0 F_{t-j} F_{t-j}' H_0 \Big\|.
\nonumber
\\&\hspace{1cm}+
\frac{r}{T}
\sum_{t=j+1}^{T/2}
\Big\| H'_1 F_{t-j} F_{t-j}' H_1 - H'_0 F_{t-j} F_{t-j}' H_0 \Big\|
+
\frac{r}{T}
\sum_{t=j+1}^{T/2}
\Big\| H'_1 F_{t} F_{t}' H_1 - H'_0 F_{t} F_{t}' H_0 \Big\|
\label{al:le:Gammas_al3}
\end{align}
All four summands in \eqref{al:le:Gammas_al3} can be treated similarly. For the first one, we get 
\begin{align}
&
\frac{1}{T}
\sum_{t=j+1}^{T/2}
\Big\| H'_1 F_t F_t' H_1 \Big\| \Big\| H'_1 F_{t-j} F_{t-j}' H_1 - H'_0 F_{t-j} F_{t-j}' H_0 \Big\|
\nonumber
\\&\leq
\mleft( \frac{1}{T} \sum_{t=1+j}^{T/2} \| F_t'H_1 \|^4 \mright)^{\frac{1}{2}} 
\mleft( \frac{1}{T} \sum_{t=1+j}^{T/2} \| H'_1 F_{t-j} F'_{t-j} (H_1-H_{0}) + (H_1-H_{0})'F_{t-j}F_{t-j}'H_{0} \|^2 \mright)^{\frac{1}{2}}
\nonumber
\\&\leq
\mleft( \frac{1}{T} \sum_{t=1+j}^{T/2} \| F_t'H_1 \|^4 \mright)^{\frac{1}{2}} 
\mleft( (\| H_1 \|^2 + \| H_{0} \|^2 ) \frac{1}{T} \sum_{t=1}^{T} \| F_{t} \|^4 \mright)^{\frac{1}{2}} \| H_1-H_{0}\|
 = \mathcal{O}_{\prob}\mleft( \frac{1}{\delta_{dT}} \mright),
\label{al:le:Gammas_al4}
\end{align}
where \eqref{al:le:Gammas_al4} is due to Lemmas \ref{le:fourthmoments} and \ref{le:convergenceH12toH}.
\end{proof}

The following lemma is analogous to Lemma 5 in \cite{han2015tests_WP}. 
\begin{lemma} \label{le:fourthmoments}
Under Assumptions \ref{ass:1}--\ref{ass:3}, \ref{ass:C1}--\ref{ass:C3},
\begin{align} \label{le:fourthmoments_le:1}
\frac{1}{T} \sum_{t=1}^{T/2} \| \widehat{F}_{t} - H'_1 F_{t} \|^4 = \mathcal{O}_{\prob}\mleft( \frac{1}{\delta^4_{dT}} \mright),
\hspace{0.2cm}
\frac{1}{T} \sum_{t=T/2+1}^{T} \| \widehat{F}_{t} - H'_2 F_{t} \|^4 = \mathcal{O}_{\prob}\mleft( \frac{1}{\delta^4_{dT}} \mright)
\end{align}
and
\begin{align} \label{le:fourthmoments_le:2}
\frac{1}{T} \sum_{t=1}^{T} \| \widehat{F}_{t} \|^4 = \mathcal{O}_{\prob}(1).
\end{align}
\end{lemma}

\begin{proof}
We first prove \eqref{le:fourthmoments_le:1} and then infer \eqref{le:fourthmoments_le:2}.

\textit{Proof of \eqref{le:fourthmoments_le:1}:}
Recall from \eqref{align:both_begin} the following relationship
\begin{align*}
\widehat{F} - \widetilde{F} \widetilde{H}
=
\frac{1}{dT} 
( YY' - \widetilde{F} \Lambda_{P} \widetilde{F}' ) (\widehat{F} - F ) \widehat{V}^{-1}_{r} +
\frac{1}{dT} 
( YY' - \widetilde{F} \Lambda_{P} \widetilde{F}' ) F \widehat{V}^{-1}_{r}.
\end{align*}
Then, with $e_{t}$ denoting the $t$th unit vector in $\RR^{T}$, 
\begin{align}
\frac{1}{T} \sum_{t=1}^{T/2} \| \widehat{F}_{t} - H'_1 F_{t} \|^4
&=
\frac{1}{T} \sum_{t=1}^{T/2} \| e'_t(\widehat{F} - \widetilde{F} \widetilde{H} ) \|^4
\nonumber
\\&\leq
c \frac{1}{T} \sum_{t=1}^{T/2} \| \frac{1}{dT} e'_t \mleft( YY' - \widetilde{F} \Lambda_{P} \widetilde{F}' \mright) (\widehat{F} - F) \widehat{V}^{-1}_{r} \|^4
\nonumber
\\&\hspace{1cm}+
c \frac{1}{T} \sum_{t=1}^{T/2} \| \frac{1}{dT} e'_t \mleft( YY' - \widetilde{F} \Lambda_{P} \widetilde{F}' \mright) F \widehat{V}^{-1}_{r}\|^4
\nonumber
\\&\leq
c \frac{1}{T} \frac{1}{d^4T^2} \sum_{t=1}^{T/2} \| e'_t \mleft( YY' - \widetilde{F} \Lambda_{P} \widetilde{F}' \mright) \|^4
\frac{1}{T^2} \| \widehat{F} - F\|^4 \| \widehat{V}^{-1}_{r} \|^4
\nonumber
\\&\hspace{1cm}+
c \frac{1}{T} \frac{1}{d^4T^4} \sum_{t=1}^{T/2} \| e'_t \mleft( YY' - \widetilde{F} \Lambda_{P} \widetilde{F}' \mright) F \|^4
\| \widehat{V}^{-1}_{r}\|^4
\nonumber
\\&\leq
c \frac{1}{T} \frac{1}{d^4T^2} \sum_{t=1}^{T/2} \| e'_t \mleft( YY' - \widetilde{F} \Lambda_{P} \widetilde{F}' \mright) \|^4
\mathcal{O}_{\prob}\mleft(\frac{1}{\delta_{dT}^4}\mright) \mathcal{O}_{\prob}(1)
\nonumber
\\&\hspace{1cm}+
c \frac{1}{T} \frac{1}{d^4T^4} \sum_{t=1}^{T/2} \| e'_t \mleft( YY' - \widetilde{F} \Lambda_{P} \widetilde{F}' \mright)F \|^4
\mathcal{O}_{\prob}(1),
\label{al:fourthmoments_al:0}
\end{align}
where \eqref{al:fourthmoments_al:0} is due to Proposition \ref{prop2.2} and Corollary \ref{cor:resultsforPI}\ref{cor:resultsforPI_item2} (together with Assumption \ref{ass:C1}).
For the two remaining quantities in \eqref{al:fourthmoments_al:0}, we get, using \eqref{eq:block_representation} and the fact that $e_t$ is considered with $t=1,\dots,T/2$, 
\begin{align}
&
\frac{1}{T} \frac{1}{d^4T^2} \sum_{t=1}^{T/2} \| e'_t \mleft( YY' - \widetilde{F} \Lambda_{P} \widetilde{F}' \mright) \|^4
\nonumber
\\&\leq
c \frac{1}{T} \frac{1}{d^4T^2} \sum_{t=1}^{T/2} \mleft( \| e'_t A \|^4 + \| e'_t C \|^4 \mright)
=
\mathcal{O}_{\prob}(1),
\label{al:fourthmoments_al:011}
\end{align}
with $e_t$ now being viewed as the $t$th unit vector in $\RR^{T/2}$.
We focus on the second summand in \eqref{al:fourthmoments_al:011}; the first one can be treated similarly. We have
\begin{align}
\| e'_t C \|^4 
&=
\|
e'_t F^b \Lambda_2' \widehat{P}_{1}' \widehat{P}_{2} \varepsilon^{a'}
+
e'_t \varepsilon^{b} \widehat{P}_{1}' \widehat{P}_{2} \Lambda_2 F^{a'}
+
e'_t \varepsilon^{b} \widehat{P}_{1}' \widehat{P}_{2} \varepsilon^{a'} 
\|^4.
\label{al:fourthmoments_al:0112}
\end{align}
For the summands in \eqref{al:fourthmoments_al:0112}, we get
\begin{align}
\frac{1}{T} \frac{1}{d^4T^2} \sum_{t=1}^{T/2} \| e'_t F^b \Lambda_2' \widehat{P}_{1}' \widehat{P}_{2} \varepsilon^{a'} \|^4
&=
\frac{1}{T} \sum_{t=1}^{T/2} \| e'_t F \|^4 \frac{1}{d^4 T^2} \| \Lambda_2' \widehat{P}_{1}' \widehat{P}_{2} \varepsilon^{a'} \|^4
\nonumber
\\&=
\frac{1}{T} \sum_{t=1}^{T/2} \| e'_t F \|^4 \frac{1}{d^4 T^2} \mathcal{O}_{\prob}\mleft( \frac{1}{\delta^4_{dT}} \mright)
\label{al:fourthmoments_al:012a}
\\&=
\mathcal{O}_{\prob}\mleft( \frac{1}{\delta^4_{dT}} \mright) = \mathcal{O}_{\prob}(1),
\label{al:fourthmoments_al:012b}
\end{align}
where \eqref{al:fourthmoments_al:012a} can be proved similarly to \eqref{al:poutdcvbhjhgfvdkfnek} and \eqref{al:fourthmoments_al:012b} follows by \ref{item:R1} below.

For the second summand in \eqref{al:fourthmoments_al:0112}, note that
\begin{align}
\frac{1}{T} \frac{1}{d^4T^2} \sum_{t=1}^{T/2} \| e'_t \varepsilon^{b} \widehat{P}_{1}' \widehat{P}_{2} \Lambda_2 F^{a'} \|^4
&\leq
\frac{1}{d^4T} \sum_{t=1}^{T/2} \| e'_t \varepsilon^{b} \widehat{P}_{1}' \widehat{P}_{2} \Lambda_2 \|^4 \frac{1}{T^2} \| F \|^4
\nonumber
\\&\leq
\frac{1}{d^4T} \sum_{t=1}^{T/2} \| e'_t \varepsilon^{b} \widehat{P}_{1}' \widehat{P}_{2} \Lambda_2 \|^4 
\mathcal{O}_{\prob}(1)
\label{al:fourthmoments_al:013a0000}
\\&\leq
\mathcal{O}_{\prob}\mleft( \frac{1}{\delta^4_{dT}} \mright) = \mathcal{O}_{\prob}(1),
\label{al:fourthmoments_al:013a}
\end{align}
where \eqref{al:fourthmoments_al:013a0000} is due to \ref{cond:s1} in the proof of Proposition \ref{prop2.1} and 
\eqref{al:fourthmoments_al:013a} follows since 
\begin{align}
\frac{1}{d^4T} \sum_{t=1}^{T} \| e'_t \varepsilon^{b} \widehat{P}_{1}' \widehat{P}_{2} \Lambda_2 \|^4
&\leq
\frac{8}{d^4T} \sum_{t=1}^{T} \| e'_t \varepsilon^{b} \Lambda_2 \|^4
+
\frac{8}{d^4T} \sum_{t=1}^{T} \| e'_t \varepsilon^{b}( \widehat{P}_{1}' \widehat{P}_{2} \Lambda_2 - \Lambda_2) \|^4
\nonumber
\\&=
\mathcal{O}_{\prob}\mleft(\frac{1}{d^2}\mright)
+
\frac{8}{d^4T} \sum_{t=1}^{T} \| e'_t \varepsilon^{b} \|^4 \| \widehat{P}_{1}' \widehat{P}_{2} \Lambda_2 - \Lambda_2 \|^4
\label{al:app:qq1000}
\\&=
\mathcal{O}_{\prob}\mleft(\frac{1}{d^2}\mright) + 
\frac{1}{d^4T} \mathcal{O}_{\prob}(d^2T) d^2 \mleft( \mathcal{O}_{\prob}\mleft( \frac{1}{d^4} \mright) + \mathcal{O}_{\prob}\mleft( \frac{1}{T^2} \mright) \mright) 
\label{al:app:qq1}
\\&=\mathcal{O}_{\prob}\mleft(\frac{1}{\delta_{dT}^4}\mright),
\label{al:app:qq1aaaa}
\end{align}
where \eqref{al:app:qq1000} is due to Assumption \ref{ass:3}(ii), \eqref{al:app:qq1} follows from \ref{item:R3} below and a bound similar to \eqref{eq:crucial_addedlemma} with the first factor $d$ replaced by $\sqrt{d}$.

For the third summand in \eqref{al:fourthmoments_al:0112}, we get
\begin{align}
\frac{1}{T} \frac{1}{d^4T^2} \sum_{t=1}^{T/2} \| e'_t \varepsilon^{b} \widehat{P}_{1}' \widehat{P}_{2} \varepsilon^{a'}  \|^4
&\leq
\frac{1}{Td^2} \sum_{t=1}^{T/2} \| e'_t \varepsilon \|^4 \frac{1}{d^2T^3} \|\varepsilon \|^4 \| \widehat{P}_{1}' \widehat{P}_{2} \|^4
\nonumber
\\&\leq
\mathcal{O}_{\prob}(1) \frac{1}{d^2T^3} \mleft( \mathcal{O}_{\prob}\mleft( d\sqrt{T} \mright) + \mathcal{O}(T) \mright)^2
\label{al:fourthmoments_al:013aaa}
\\&\leq
\mathcal{O}_{\prob}(1),
\label{al:fourthmoments_al:012aaa}
\end{align}
where \eqref{al:fourthmoments_al:013aaa} is due to \ref{cond:s2} in the proof of Proposition \ref{prop2.1} and \eqref{al:fourthmoments_al:012aaa} follows by \ref{item:R3}.

For the second summand in \eqref{al:fourthmoments_al:0}, by using \eqref{eq:block_representation}, 
\begin{align}
&
\frac{1}{T} \frac{1}{d^4T^4} \sum_{t=1}^{T/2} \| e'_t \mleft( YY' - \widetilde{F} \Lambda_{P} \widetilde{F}' \mright)F \|^4
\nonumber
\\&\leq
\frac{1}{T} \frac{1}{d^4T^4} \sum_{t=1}^{T/2}
\mleft\|
e'_t
\begin{pmatrix}
AF^b + CF^a \\
C'F^b + BF^a
\end{pmatrix}
\mright\|^4
\nonumber
\\&\leq
\frac{1}{T} \frac{1}{d^4T^4} \sum_{t=1}^{T/2}
\| e'_t AF^b + e'_t CF^a \|^4 
\nonumber
\\&\leq
c \frac{1}{T} \frac{1}{d^4T^4} \sum_{t=1}^{T/2}
\mleft(
\| e'_t AF^b \|^4 +\| e'_t CF^a \|^4 \mright)
=
\mathcal{O}_{\prob}\mleft(\frac{1}{\delta_{dT}^4}\mright).
\label{al:fourthmoments_al:1}
\end{align}
We consider only one of the summands in \eqref{al:fourthmoments_al:1}.
Recall from \eqref{eq:wqwqwqw12345} that
\begin{align}
AF^b
&=
F^b \Lambda_2' \widehat{P}_{1}' \widehat{P}_{1} \varepsilon^{b'} F^b 
+
\varepsilon^{b} \widehat{P}_{1}' \widehat{P}_{1} \Lambda_2 F^{b'} F^b 
+
\varepsilon^{b} \widehat{P}_{1}' \widehat{P}_{1} \varepsilon^{b'} F^b.
\label{app:eq:wqwqwqw12345}
\end{align}
We consider the three summands in \eqref{app:eq:wqwqwqw12345} separately. For the first one,
\begin{align}
\frac{1}{T} \frac{1}{d^4T^4} \sum_{t=1}^{T} \| e'_t F^b \Lambda_2' \widehat{P}_{1}' \widehat{P}_{1} \varepsilon^{b'} F^b \|^4
&\leq
\frac{1}{T} \sum_{t=1}^{T} \| e'_t F \|^4 \| \frac{1}{d^4T^4} \| \Lambda_2' \widehat{P}_{1}' \widehat{P}_{1} \varepsilon^{b'} F^b \|^4
\nonumber
\\&=
\frac{1}{T} \sum_{t=1}^{T} \| e'_t F \|^4  \mathcal{O}_{\prob}\mleft(\frac{1}{\delta_{dT}^8}\mright)
\label{al:fourthmoments_al:2}
\\&=
\mathcal{O}_{\prob}(1)  \mathcal{O}_{\prob}\mleft(\frac{1}{\delta_{dT}^8}\mright)
=
\mathcal{O}_{\prob}\mleft(\frac{1}{\delta_{dT}^4}\mright),
\label{al:fourthmoments_al:3}
\end{align}
where \eqref{al:fourthmoments_al:2} follows by \eqref{eq:popoqwqwqpopqwqwAA} and \eqref{al:fourthmoments_al:3} is due to \ref{item:R1} below.

For the second summand in \eqref{app:eq:wqwqwqw12345}, note that
\begin{align}
\frac{1}{T} \frac{1}{d^4T^4} \sum_{t=1}^{T} \| e'_t \varepsilon^{b} \widehat{P}_{1}' \widehat{P}_{1} \Lambda_2 F^{b'} F^{b} \|^4
&\leq
\frac{1}{d^4T} \sum_{t=1}^{T} \| e'_t \varepsilon^{b} \widehat{P}_{1}' \widehat{P}_{1} \Lambda_2 \|^4 \frac{1}{T^4} \| F\|^8
\nonumber
\\&=
\frac{1}{d^4T} \sum_{t=1}^{T} \| e'_t \varepsilon^{b} \widehat{P}_{1}' \widehat{P}_{1} \Lambda_2 \|^4 \mathcal{O}_{\prob}(1)
\label{al:fourthmoments_al:22}
\\&=
\mathcal{O}_{\prob}\mleft(\frac{1}{d^2}\mright)=
\mathcal{O}_{\prob}\mleft(\frac{1}{\delta_{dT}^4}\mright),
\label{al:fourthmoments_al:33} 
\end{align}
where \eqref{al:fourthmoments_al:22} is due to \ref{cond:s1} in the proof of Proposition \ref{prop2.1} and \eqref{al:fourthmoments_al:33} follows similarly to  \eqref{al:app:qq1aaaa} above.

Finally, the third summand in \eqref{app:eq:wqwqwqw12345} can be bounded as
\begin{align}
\frac{1}{T} \frac{1}{d^4T^4} \sum_{t=1}^{T} \| e'_t  \varepsilon^{b} \widehat{P}_{1}' \widehat{P}_{1} \varepsilon^{b'} F^b \|^4
&\leq
\frac{1}{d^2 T^3} \sum_{t=1}^{T} \| e'_t  \varepsilon \|^4 \frac{1}{d^2T^2} \| \varepsilon^{b'} F^{b} \|^4 \| \widehat{P}_{1}' \widehat{P}_{1} \|^4
\nonumber
\\&\leq
\frac{1}{d^2T^3} \sum_{t=1}^{T} \| e'_t  \varepsilon \|^4 \mathcal{O}_{\prob}(1)
\label{al:fourthmoments_al:222}
\\&=
\mathcal{O}_{\prob}\mleft(\frac{1}{T^2}\mright)
=\mathcal{O}_{\prob}\mleft(\frac{1}{\delta_{dT}^4}\mright),
\label{al:fourthmoments_al:333}
\end{align}
where the asymptotics of $(\varepsilon^{b'} F^b)/T$ in \eqref{al:fourthmoments_al:222} follow by the same calculations as done on p.\ 199 in \cite{doz2011two} as part of the proof of Lemma 2(i) under Assumptions  \ref{ass:2} and \ref{ass:C4}. Furthermore, \eqref{al:fourthmoments_al:333} follows by \ref{item:R3} below under Assumption \ref{ass:3}(iii).

\begin{enumerate}[label=\textbf{R.\arabic*.},ref=R.\arabic*, align=left]
\item \label{item:R1} 
Under Assumption \ref{ass:2}, $\frac{1}{T} \sum_{t=1}^{T} \| e'_t F \|^4 = \mathcal{O}_{\prob}(1)$ since
\begin{align}
&
\frac{1}{T} \sum_{t=1}^{T} \E \| e'_t F \|^4
\nonumber
\\&=
\frac{1}{T} \sum_{t=1}^{T} \E \mleft\| \sum_{k=0}^{\infty}
C_{k} a_{t-k} \mright\|^4
\nonumber
\\&\leq
\frac{1}{T} \sum_{t=1}^{T} \E \mleft( \sum_{k=0}^{\infty}
\| C_{k} \| \| a_{t-k} \| \mright)^4
\nonumber
\\&\leq
\frac{1}{T} \sum_{t=1}^{T} \sum_{k_1,k_2,k_3,k_4=0}^{\infty}
\| C_{k_1} \| \| C_{k_2} \| \| C_{k_3} \| \| C_{k_4} \| \E \mleft( \| a_{t-k_1} \| \| a_{t-k_2} \| \| a_{t-k_3} \| \| a_{t-k_4} \| \mright)
\nonumber
\\&\leq
\frac{1}{T} \sum_{t=1}^{T} \sum_{k_1,k_2,k_3,k_4=0}^{\infty}
\| C_{k_1} \| \| C_{k_2} \| \| C_{k_3} \| \| C_{k_4} \| \mleft( \E \| a_{t-k_1} \|^4 \E \| a_{t-k_2} \|^4 \E \| a_{t-k_3} \|^4 \E \| a_{t-k_4} \|^4 \mright)^{\frac{1}{4}}
\nonumber
\\&\leq 
c \mleft(\sum_{k=0}^{\infty} \| C_{k} \|\mright)^4
\label{al:dfdfdfdfdf}
\end{align}
for some constant $c$, where \eqref{al:dfdfdfdfdf} follows using Cauchy-Schwarz inequality
\begin{align*}
\E \| a_{t-k} \|^4 &= \E \mleft( \sum_{i=1}^{r} |a_{i,t-k}|^2 \mright)^2= \sum_{i_1,i_2=1}^{r} \E ( |a_{i_1,t-k}|^2 |a_{i_2,t-k}|^2) 
\\&\leq 
\sum_{i_1,i_2=1}^{r} ( \E ( |a_{i_1,t-k}|^4 ) \E ( |a_{i_2,t-k}|^4 ))^{\frac{1}{2}} < \infty 
\end{align*}
by Assumption \ref{ass:2}(i) and stationarity of $\{a_t\}$.
\item \label{item:R3}
Under Assumption \ref{ass:3}(iii), it holds that $\sum_{t=1}^{T} \| e'_t \varepsilon \|^4 = \mathcal{O}_{\prob}(d^2T)$ since
\begin{align}
\sum_{t=1}^{T} \E \| e'_t \varepsilon \|^4
&=
\E \sum_{t=1}^{T} \mleft( \sum_{u=1}^{d} \varepsilon^2_{t,u} \mright)^2
\nonumber
\\&=
\sum_{t=1}^{T} \sum_{u_1,u_2=1}^{d} \E ( \varepsilon^2_{t,u_1} \varepsilon^2_{t,u_2} )
\nonumber
\\&\leq
\sum_{t=1}^{T} \sum_{u_1,u_2=1}^{d} (\E ( \varepsilon^4_{t,u_1} ) \E( \varepsilon^4_{t,u_2} ))^{\frac{1}{2}}
\label{al:fourthmoments_al:R_3_al1} 
\\&\leq
d^2 T M,
\label{al:fourthmoments_al:R_3_al2} 
\end{align}
where \eqref{al:fourthmoments_al:R_3_al1} uses Cauchy-Schwarz inequality and \eqref{al:fourthmoments_al:R_3_al2} is due to Assumption \ref{ass:3}(iii).
\end{enumerate}

\textit{Proof of \eqref{le:fourthmoments_le:2}:}
Note that
\begin{align}
\frac{1}{T} \sum_{t=1}^{T} \| \widehat{F}_{t} \|^4
&=
\frac{1}{T} \sum_{t=1}^{T/2} \| \widehat{F}_{t} \|^4 + \frac{1}{T} \sum_{t=T/2 + 1}^{T} \| \widehat{F}_{t} \|^4
\nonumber
\\&=
\frac{1}{T} \sum_{t=1}^{T/2} \| \widehat{F}_{t} - H'_1 F_{t} + H'_1 F_{t} \|^4 + 
\frac{1}{T} \sum_{t=T/2 + 1}^{T} \| \widehat{F}_{t} - H'_2 F_{t} + H'_2 F_{t} \|^4
\nonumber
\\&\leq
\frac{8}{T} \sum_{t=1}^{T/2} \| \widehat{F}_{t} - H'_1 F_{t} \|^4 + \frac{8}{T} \sum_{t=1}^{T/2} \| H'_1 F_{t} \|^4
\nonumber
\\&\hspace{1cm}+
\frac{8}{T} \sum_{t=T/2 + 1}^{T} \| \widehat{F}_{t} - H'_2 F_{t} \|^4 + \frac{8}{T} \sum_{t=T/2 + 1}^{T} \| H'_2 F_{t} \|^4
=
o_{\prob}(1) + \mathcal{O}_{\prob}(1),
\label{al:fourthmoments_al:second_al1} 
\end{align}
where \eqref{al:fourthmoments_al:second_al1} is due to \eqref{le:fourthmoments_le:1}, Lemma \ref{le:convergenceHtoH0} and \ref{item:R1}.
\end{proof}

\section{Analysis under the alternative hypothesis} \label{app:C}
This appendix provides proofs of the results under the alternative hypothesis as stated in Section \ref{sec:alternative}.

Recall from \eqref{eq:with_change} the transformed series $Y$ and write it more compactly by concatenating over time as
\begin{align*}
\begin{pmatrix}
Y^{b} \\
Y^{a}
\end{pmatrix}
= 
\begin{pmatrix}
F^{b} \Lambda_{2}' \widehat{P}_{1}' \\
F^{a} \Lambda_{2}' \widehat{P}_{2}'
\end{pmatrix}
+
\begin{pmatrix}
\widetilde{\varepsilon}^{b} \\
\widetilde{\varepsilon}^{a}
\end{pmatrix}
:= 
\begin{pmatrix}
F^{b} \widetilde{\Theta}_{1}' \\
F^{a} \widetilde{\Theta}_{2}'
\end{pmatrix}
+
\begin{pmatrix}
\widetilde{\varepsilon}^{b} \\
\widetilde{\varepsilon}^{a}
\end{pmatrix}.
\end{align*}
Recall also the population model from \eqref{eq:with_change_popu} and write it similarly as
\begin{align} \label{al:true_model_A}
\begin{pmatrix}
Z^{b} \\
Z^{a}
 \end{pmatrix}
 &= 
 \begin{pmatrix}
F^{b} \Lambda_{2}' P_{1} \\
F^{a} \Lambda_{2}'
 \end{pmatrix}
 +
\begin{pmatrix}
\eta^{b} \\
\eta^{a}
 \end{pmatrix}
= 
 \begin{pmatrix}
F^{b} \Theta_{1}' \\
F^{a} \Theta_{2}'
 \end{pmatrix}
 +
\begin{pmatrix}
\eta^{b} \\
\eta^{a}
 \end{pmatrix}
=  
 \begin{pmatrix}
G^{b} B' \\
G^{a} C'
 \end{pmatrix}
\Theta'
  +
\begin{pmatrix}
\eta^{b} \\
\eta^{a}
 \end{pmatrix}
=: 
G \Theta' + \eta.
 \end{align}
To specify $\Theta$ in the population model further, we need to distinguish between the two types of breaks as introduced in \eqref{eq:type1BC} and \eqref{eq:type2BC} and briefly recalled next.

\textit{TA 1:} 
Suppose $\col(\Lambda_1) \cap \col(\Lambda_2) = \{0\}$ and $\rank( \Lambda_1' \Lambda_2)=r$. Then, according to Case \ref{item1a}, the alternative model admits a change of Type \ref{item:Type1} since $\col( P_{1} \Lambda_2) \cap \col(\Lambda_2) = \{0\}$ and we get
\begin{equation*}
\Theta = \left( \Theta_1, \Theta_2 \right) = \frac{1}{\sqrt{2}} \left( P_{1} \Lambda_2, \Lambda_2 \right).
\end{equation*}

\textit{TA 2:} 
Suppose $\col(\Lambda_1) \cap \col(\Lambda_2) \neq \{0\}$, $\col(\Lambda_1) \neq \col(\Lambda_2)$ and $\rank( \Lambda_1' \Lambda_2)=r$. Then, according to Case \ref{item2a}, the alternative model admits a change of Type \ref{item:Type3} since $\col( P_{1} \Lambda_2) \cap \col(\Lambda_2) \neq \{0\}$ and $\col( P_{1} \Lambda_2) \neq \col(\Lambda_2)$.
Suppose further that $\Lambda_{2} = (\Lambda_1 \Phi , \Pi_2)$ with $\Phi \in \RR^{r \times r_1}$, and $\Pi_2 \in \RR^{d \times r_2}$ with $r_2 = r-r_1$ being linearly independent but not orthogonal to $\Lambda_1$. Then, 
$P_{1} \Lambda_{2} = (\Lambda_1 \Phi , P_{1} \Pi_2)$ and some (but not all) columns of $\Theta_1 = P_{1} \Lambda_{2}$ and $\Theta_2 = \Lambda_{2}$ are linearly independent. Then, 
\begin{equation*}
\Theta = \mleft( \Lambda_1, \frac{1}{\sqrt{2}} \Pi_2, \frac{1}{\sqrt{2}} P_{1} \Pi_2 \mright).
\end{equation*}

The PCA estimators for the alternative models can be derived as in \eqref{eq:PCA1}, that is
\begin{align} \label{eq:PCA_A}
\widehat{G} = \sqrt{T} \widehat{Q}_{\npf},
\hspace{0.2cm} 
\widehat{\Theta}' = \frac{1}{T} \widehat{G}' Y,
\end{align}
where $\widehat{Q}_{\npf} $ are the eigenvectors corresponding to the $\npf$ largest eigenvalues of the $T \times T$ matrix $YY'$.
Furthermore, let $\widehat{\Pi}_{\npf}$ and $\widehat{V}_{\npf}$ denote respectively the diagonal matrices with the $\npf$ largest eigenvalues of $\frac{1}{T} YY'$ and $\frac{1}{dT} YY'$. We can also define
\begin{equation} \label{eq:PCA2_A}
\widebar{G} = \frac{1}{d} Y \widebar{\Theta}, \hspace{0.2cm} \widebar{\Theta} = \sqrt{d} \widebar{Q}_{\npf}
\end{equation}
with $\widebar{Q}_{\npf}$ being the $\npf$ eigenvectors corresponding to the $\npf$ largest eigenvalues of the $d \times d$ matrix $Y'Y$. 

On the population level as around \eqref{eq:def_Q}, we consider the eigendecomposition 
\begin{equation*}
\Theta' \Theta = R \Pi_{\Theta} R',
\end{equation*}
where the diagonal matrix $\Pi_\Theta$ consists of the eigenvalues in decreasing order and the orthogonal matrix $R$ consists of the corresponding eigenvectors. Choose the $d \times \npf$ matrix 
\begin{equation} \label{eq:def_Q_A}
Q = \Theta R \Pi_{\Theta}^{-\frac{1}{2}} 
\end{equation}
such that $\Theta \Theta' = Q \Pi_\Theta Q'$ and $Q'Q = I_{\npf}$. 

We follow the same line of proofs as under the null hypothesis. We first provide the proof for the behavior of our test statistic under the alternative hypothesis in Appendix \ref{se:mainresultsproofs_A}. We then prove auxiliary results in Appendix \ref{se:proofs_auxiliary_results_A}. Finally, Appendix \ref{se:proofs_consistency_PCA_A} states and proves consistency results for the PCA estimators under the alternative models \eqref{eq:type1BC} and \eqref{eq:type2BC}.

\subsection{Proof of result in Section \ref{sec:alternative}} \label{se:mainresultsproofs_A}

\begin{proof}[Proof of Proposition  \ref{propAlternative0}]

\textit{Proof of (i):} The result follows by Propositions \ref{propAlternative1} and \ref{propAlternative2} below and the same arguments as in the proof of Theorem 4 in \cite{han2015tests}.
Write 
\begin{align}
\frac{1}{T} \sum_{t=1}^{T/2} \widehat{G}_{t} \widehat{G}_{t}' - \frac{1}{T} \sum_{t=T/2 + 1}^{T} \widehat{G}_{t} \widehat{G}_{t}'
\nonumber
&=
\mleft( \frac{1}{T} \sum_{t=1}^{T/2} J' G_{t} G_{t}' J - \frac{1}{T} \sum_{t=T/2 + 1}^{T} J' G_{t} G_{t}' J \mright)
\nonumber \\
&\hspace{0.2cm}+
\frac{1}{T} \sum_{t=1}^{T/2} \mleft( \widehat{G}_{t} \widehat{G}_{t}' - J' G_{t} G_{t}' J \mright)
-
\frac{1}{T} \sum_{t=T/2 + 1}^{T} \mleft( \widehat{G}_{t} \widehat{G}_{t}' - J' G_{t} G_{t}' J \mright).
\label{eq:prop41_al1}
\end{align}
The second and third summands in \eqref{eq:prop41_al1} can be treated similarly, for instance,
\begin{align}
&
\frac{1}{T} \sum_{t=1}^{T/2} \mleft( \widehat{G}_{t} \widehat{G}_{t}' - J' G_{t} G_{t}' J \mright)
\nonumber
\\&=
\frac{1}{T} \sum_{t=1}^{T/2} \mleft( 
(\widehat{G}_{t} - J' G_{t}) (\widehat{G}'_{t} - G'_{t}J) + 
( \widehat{G}_{t} - J'G_{t}) G_{t}'J + 
J G_{t}' (\widehat{G}_{t}' - G'_{t} J )
\mright)
=
\mathcal{O}_{\prob}\mleft(\frac{1}{\delta_{dT}^2}\mright)
\nonumber
\end{align}
by Propositions \ref{propAlternative1} and \ref{propAlternative2}.

The asymptotic behavior of the first summand in \eqref{eq:prop41_al1} depends on the type of break. We get
\begin{align*}
\frac{1}{T} \sum_{t=1}^{T/2} J' G_{t} G_{t}' J - \frac{1}{T} \sum_{t=T/2 + 1}^{T} J' G_{t} G_{t}' J
\overset{\prob}{\to}
J_0'(D_1 - D_2)J_0
\end{align*}
with $D_1$, $D_2$ as in \eqref{eq:type1_pop_cov} and \eqref{eq:type2_pop_cov} and $J_0$ as the limit of $J$; see Lemma \ref{le:convergenceJtoJ0} below.

\textit{Proof of (ii):} We refer to the proof of Proposition \ref{propAlternative1} for a discussion on the assumptions. The bottom line is that all assumptions are satisfied for the alternative model and we can adopt the proof of Proposition \ref{prop3}.

Given all assumptions of Proposition \ref{prop3} are satisfied for the alternative model, we can infer
\begin{equation} \label{eq:am_omega}
\| \Omega(\widehat{G}) - \Omega(GJ_{0}) \| \overset{\prob}{\to} 0.
\end{equation}
Then,
\begin{align}
\| \Omega^{-1}(\widehat{G}) - \Omega^{-1}(GJ_{0}) \| 
&\leq
\| \Omega^{-1}(\widehat{G}) \| \mleft\| \Omega(\widehat{G}) - \Omega(GJ_{0}) \mright\| \| \Omega^{-1}(GJ_{0}) \|
\overset{\prob}{\to} 0
\label{eq:am_omega_inv}
\end{align}
by \eqref{eq:am_omega} and the same arguments as in \ref{item:Q3} (see \eqref{al:Q3_2}).

For the test statistic as defined in \eqref{eq:def:Wald}, we get
\begin{align}
W(\widehat{G})
&=
V(\widehat{G})' \Omega^{-1}(\widehat{G}) V(\widehat{G}) \nonumber
\\&=
\frac{T}{b_{T/2}} \frac{1}{\sqrt{T}} V(\widehat{G})' b_{T/2} \Omega^{-1}(\widehat{G}) \frac{1}{\sqrt{T}} V(\widehat{G}) \nonumber
\\&=
\frac{T}{b_{T/2}} \mleft( \vechop(\mathcal{C}) + o_{\prob}(1) \mright)' b_{T/2} \mleft( \Omega^{-1}(G J_0) + o_{\prob}(1) \mright) \mleft( \vechop(\mathcal{C}) + o_{\prob}(1) \mright)
\to \infty,
\label{eq:last_test_am}
\end{align}
where \eqref{eq:last_test_am} is due to \eqref{eq:am_omega_inv} and Assumption \ref{ass:P2_main}(i).
\end{proof}

\subsection{Proofs of auxiliary results under alternative hypothesis} \label{se:proofs_auxiliary_results_A}

We state below Propositions \ref{propAlternative1} and \ref{propAlternative2} which are analogous to Propositions \ref{prop2.1} and \ref{prop2.1.0} under the null hypothesis. For the statements, we need to introduce further notation including the crucial matrix $\widetilde{J}$ which is analogous to $\widetilde{H}$ in \eqref{eq:rep_FtildeHtilde} under the null hypothesis.

Similarly to $\widehat{F}^b, \widehat{F}^a$, set $\widehat{G}^b = \widehat{G}'_{1:T/2}$ and $\widehat{G}^b = \widehat{G}'_{(T/2+1):T}$ with $\widehat{G}$ as in \eqref{eq:PCA_A}. Recall, in particular, that $\widehat{G}$ is a $T \times \npf$ matrix and $\npf$ differs depending on the type of alternative.
We introduce the matrices
\begin{equation*} \label{def:J1J2}
\begin{aligned}
J_1 &:= \frac{1}{dT} (\Lambda_2' \widehat{P}_{1}' \widehat{P}_{1} \Lambda_2 F^{b'} \widehat{G}^{b} 
+ 
\Lambda_2' \widehat{P}_{1}' \widehat{P}_{2} \Lambda_2 F^{a'} \widehat{G}^{a})
\widehat{V}^{-1}_{\npf},
\\
J_2 &:= \frac{1}{dT} (\Lambda_2' \widehat{P}_{2}' \widehat{P}_{2} \Lambda_2 F^{b'} \widehat{G}^{b} 
+ 
\Lambda_2' \widehat{P}_{2}' \widehat{P}_{1} \Lambda_2 F^{a'} \widehat{G}^{a})
\widehat{V}^{-1}_{\npf}
\end{aligned}
\end{equation*}
such that
\begin{align}
\widetilde{G} \widetilde{J}
&:=
\begin{pmatrix}
F^b & 0
\\
0 & F^a
\end{pmatrix}
\begin{pmatrix}
J_1
\\
J_2
\end{pmatrix}
=
\begin{pmatrix}
F^b J_1
\\
F^a J_2
\end{pmatrix}
\nonumber
\\&=
\frac{1}{dT}
\widetilde{G}
\begin{pmatrix}
\Lambda_2' \widehat{P}_{1}' \widehat{P}_{1} \Lambda_2
&
\Lambda_2' \widehat{P}_{1}' \widehat{P}_{2} \Lambda_2
\\
\Lambda_2' \widehat{P}_{2}' \widehat{P}_{1} \Lambda_2
&
\Lambda_2' \widehat{P}_{2}' \widehat{P}_{2} \Lambda_2
\end{pmatrix}
\widetilde{G}'
\widehat{G} \widehat{V}^{-1}_{\npf}
\nonumber
\\&=:
\frac{1}{dT}
\widetilde{G} \Lambda_{P} \widetilde{G}'\widehat{G} \widehat{V}^{-1}_{\npf}.
\label{eq:rep_FtildeHtilde_A}
\end{align}
We write further 
\begin{equation*}
J = (\Theta' \Theta/d) (G' \widehat{G} /T) \widehat{V}^{-1}_{\npf}
\end{equation*}
with $\Theta$ as in \eqref{al:true_model_A}. Again, $\Theta$ differs depending on the respective type of alternative. Note also that $\widetilde{G}$ is the same as $\widetilde{F}$ in \eqref{eq:rep_FtildeHtilde}, and so is $\Lambda_{P}$.

The following propositions are analogous to Lemma 10(i) and (ii) in \cite{han2015tests}. 
 \begin{proposition}\label{propAlternative1}
Under Assumptions \ref{ass:1}--\ref{ass:2}, \ref{ass:C1}--\ref{ass:C3} and \ref{ass:P1_main},
\begin{equation*}
\frac{1}{T} \| \widehat{G} - G J \|_{F}^2 = 
\mathcal{O}_{\prob}\mleft(\frac{1}{\delta_{dT}^2}\mright), \text{ as } d,T\to\infty.
\end{equation*}
\end{proposition}

\begin{proof}[Proof of Proposition \ref{propAlternative1}]
The proof is analogous to the proof of Proposition \ref{prop2.1} but for the model \eqref{al:true_model_A} under the alternative. In the proof of Proposition \ref{prop2.1}, it is sufficient to impose assumptions on the original models \eqref{eq:model_unified}. The main difference between the proofs under the null hypothesis and the alternative is the following. Under the null hypothesis, the true model we estimate is the same as the one estimated through the transformed series. In contrast, under the alternative, we estimate the parameters in \eqref{al:true_model_A}, which are different from those in the original series $X^2$, the one we chose to transform. 
For this reason, the only additional assumption needed is Assumption \ref{ass:P1_main}. Assumption \ref{ass:P1_main} imposes conditions directly on the loadings of the population model \eqref{al:true_model_A}. 

We provide here the beginning of the proof and will then refer to the proof of Proposition \ref{prop2.1}. Note first that
\begin{align} \label{al:triangleGGtildeJJtilde}
\frac{1}{T} \| \widehat{G} - G J \|_{F}^2
\leq 
\frac{2}{T} \| \widehat{G} - \widetilde{G} \widetilde{J} \|_{F}^2 + \frac{2}{T} \| \widetilde{G} \widetilde{J} - GJ \|_{F}^2.
\end{align}
We consider the two summands in \eqref{al:triangleGGtildeJJtilde} separately. 
For the first summand in \eqref{al:triangleGGtildeJJtilde}, recall the transformed series $Y_{1:T} = \mleft( \widehat{P}_{1} X^{2}_{1:T/2}, \widehat{P}_{2} X^{2}_{ (T/2 +1) :T} \mright)$ from \eqref{eq:YT2def} as well as $Y':=Y_{1:T}$ from \eqref{eq:defofY}.
Due to the PCA estimators \eqref{eq:PCA_A}, we get $\frac{1}{dT} YY' \widehat{G} = \widehat{G} \widehat{V}_{\npf}$ such that 
$\frac{1}{dT} YY' \widehat{G} \widehat{V}^{-1}_{\npf} = \widehat{G} $.
Then, given \eqref{eq:rep_FtildeHtilde_A},
\begin{align}
\widehat{G} - \widetilde{G} \widetilde{J}
&=
\frac{1}{dT} YY' \widehat{G} \widehat{V}^{-1}_{\npf} - \frac{1}{dT} \widetilde{G} \Lambda_{P} \widetilde{G}' \widehat{G} \widehat{V}^{-1}_{\npf}
\nonumber
\\&=
\frac{1}{dT} 
( YY' - \widetilde{G} \Lambda_{P} \widetilde{G}' ) (\widehat{G} - G ) \widehat{V}^{-1}_{\npf} +
\frac{1}{dT} 
( YY' - \widetilde{G} \Lambda_{P} \widetilde{G}' ) G \widehat{V}^{-1}_{\npf}.
\label{align:both_begin_A}
\end{align}
As noted above, with $\widehat{F}$ in \eqref{eq:rep_FtildeHtilde},
\begin{align} \label{eq:block_representation_A}
YY' - \widetilde{G} \Lambda_{P} \widetilde{G}'
=
YY' - \widetilde{F} \Lambda_{P} \widetilde{F}'.
\end{align}
It was argued in the proof of Proposition \ref{prop2.1} that 
\begin{equation} \label{eq:G120.5}
\| \frac{1}{dT} (YY' - \widetilde{F} \Lambda_{P} \widetilde{F}') \| = \mathcal{O}_{\prob}(1).
\end{equation}
After applying the Frobenius norm and triangle inequality, we consider the two summands in \eqref{align:both_begin_A}
separately. 
For the first summand in \eqref{align:both_begin_A}, we get by \eqref{eq:block_representation_A}, 
\begin{align}
&\frac{1}{T} \| 
\frac{1}{dT} ( YY' - \widetilde{G} \Lambda_{P} \widetilde{G}' ) (\widehat{G} - G) \widehat{V}^{-1}_{\npf} \|_{F}^2
\nonumber
\\&\leq
\| \frac{1}{dT} ( YY' - \widetilde{F} \Lambda_{P} \widetilde{F}') \|^2 
\| \widehat{V}^{-1}_{\npf} \|^2
\frac{1}{T} \| \widehat{G} - G \|_{F}^2
\label{align:XX1_A}
\\&=
\mathcal{O}_{\prob}(1)
\mathcal{O}_{\prob}(1)
\mathcal{O}_{\prob}\mleft(\frac{1}{\delta^2_{dT}}\mright)
=
\mathcal{O}_{\prob}\mleft(\frac{1}{\delta^2_{dT}}\mright),
\label{align:XX4_A}
\end{align}
where \eqref{align:XX1_A} follows by \eqref{eq:block_representation_A} and \ref{item:M0} in Appendix \ref{se:matrixnorminequalities}, and    
\eqref{align:XX4_A} follows by \eqref{eq:G120.5}, Proposition \ref{prop2.2_A} below and Corollary \ref{cor:resultsforPI_A}.
We also omit the proof for the second summand in \eqref{align:both_begin_A} and refer to the proof of Proposition \ref{prop2.1}. 

It is left to consider the second summand in \eqref{al:triangleGGtildeJJtilde}.
Recall that $J = (\Theta' \Theta/d) (G' \widehat{G} /T) \widehat{V}^{-1}_{\npf}$. Then, with further explanations given below, we get
\begin{align}
&
\frac{1}{T} \| \widetilde{G} \widetilde{J} - GJ \|_{F}^2
\nonumber
\\&=
\frac{1}{T} \frac{1}{(dT)^2}
\mleft\| 
\widetilde{G} \Lambda_{P} \widetilde{G}' \widehat{G} \widehat{V}^{-1}_{\npf} -
G \Theta' \Theta G' \widehat{G} \widehat{V}^{-1}_{\npf}
\mright\|_{F}^2
\nonumber
\\&=
\frac{1}{T} \frac{1}{(dT)^2}
\mleft\| 
\widetilde{G}
\begin{pmatrix}
\Lambda_2' \widehat{P}_{1}' \widehat{P}_{1} \Lambda_2
&
\Lambda_2' \widehat{P}_{1}' \widehat{P}_{2} \Lambda_2
\\
\Lambda_2' \widehat{P}_{2}' \widehat{P}_{1} \Lambda_2
&
\Lambda_2' \widehat{P}_{2}' \widehat{P}_{2} \Lambda_2
\end{pmatrix}
\widetilde{G}'
\widehat{G} \widehat{V}^{-1}_{\npf}
 -
G \Theta' \Theta G' \widehat{G} \widehat{V}^{-1}_{\npf}
\mright\|_F^2
\nonumber
\\&\leq
\frac{1}{(dT)^2}
\mleft\| 
\begin{pmatrix}
F^b \Lambda_2' \widehat{P}_{1}' \widehat{P}_{1} \Lambda_2 F^{a'}
&
F^b \Lambda_2' \widehat{P}_{1}' \widehat{P}_{2} \Lambda_2 F^{b'}
\\
F^a \Lambda_2' \widehat{P}_{2}' \widehat{P}_{1} \Lambda_2 F^{a'}
&
F^a \Lambda_2' \widehat{P}_{2}' \widehat{P}_{2} \Lambda_2 F^{b'}
\end{pmatrix}
-
G \Theta' \Theta G' 
\mright\|^2
\frac{1}{T} \| \widehat{G} \|_{F}^2 \| \widehat{V}^{-1}_{\npf} \|^2
\label{al:proofH1_al2_A00}
\\&= 
\mathcal{O}_{\prob}\mleft(\frac{1}{\delta^2_{dT}}\mright).
\label{al:proofH1_al2_A}
\end{align}
For \eqref{al:proofH1_al2_A}, note first that
\begin{align} \label{al:proofH1_al3_A}
\begin{pmatrix}
F^b \Lambda_2' P_{1}' P_{1} \Lambda_2 F^{a'}
&
F^b \Lambda_2' P_{1}' P_{2} \Lambda_2 F^{b'}
\\
F^a \Lambda_2' P_{2}' P_{1} \Lambda_2 F^{a'}
&
F^a \Lambda_2' P_{2}' P_{2} \Lambda_2 F^{b'}
\end{pmatrix}
=
\begin{pmatrix}
F^b \Lambda_2' P_{1}' 
\\
F^a \Lambda_2'
\end{pmatrix}
\begin{pmatrix}
P_{1} \Lambda_2 F^{a'} & \Lambda_2 F^{b'}
\end{pmatrix}
= G \Theta' \Theta G',
\end{align}
where we used \eqref{al:true_model_A}. Combining \eqref{al:proofH1_al2_A00}, \eqref{al:proofH1_al3_A} with \ref{item:M1}, it is left to show the asymptotic behavior for three different summands.
We focus on one of them, for example, 
\begin{align}
&
\frac{1}{dT}
\mleft\| 
F^b \Lambda_2' \widehat{P}_{1}' \widehat{P}_{1} \Lambda_2 F^{a'}
-
F^b \Lambda_2' P_{1}' P_{1} \Lambda_2 F^{a'}
\mright\|
\nonumber
\\&\leq
\frac{1}{d}
\mleft\| 
\Lambda_2' \widehat{P}_{1}' \widehat{P}_{1} \Lambda_2
-
\Lambda_2' P_{1}' P_{1} \Lambda_2 \mright\| 
\frac{1}{\sqrt{T}} \| F^{a'} \| \frac{1}{\sqrt{T}} \| F^{b'} \|
\nonumber
\\&=
\mathcal{O}_{\prob} \mleft(\frac{1}{d}\mright) + \mathcal{O}_{\prob}\mleft(\frac{1}{\sqrt{T}}\mright),
\label{al:proofH1_al3}
\end{align}
where the asymptotic of $\frac{1}{\sqrt{T}} \| F^{a'} \|$ is due to \ref{cond:s1} in the proof of Proposition \ref{prop2.1}. The behavior of the estimated projection matrices in \eqref{al:proofH1_al3} follows under similar considerations as in \eqref{eq:crucial_addedlemma}.

The asymptotics of $\frac{1}{\sqrt{T}} \| \widehat{G} \|_F$ in \eqref{al:proofH1_al2_A00} is due to Proposition \ref{prop2.2_A} and \ref{cond:s1} in the proof of Proposition \ref{prop2.1}.
\end{proof}

\begin{proposition}\label{propAlternative2}
Under \ref{ass:1}--\ref{ass:3}(i), \ref{ass:C1}--\ref{ass:C3} and \ref{ass:P1_main},
\begin{equation*}
\frac{1}{T^2} \| ( \widehat{G} - GJ)'G \|^2_{F} = \mathcal{O}_{\prob}\mleft(\frac{1}{\delta_{dT}^4}\mright), \text{ as } d,T\to\infty.
\end{equation*}
\end{proposition}

\begin{proof}[Proof of Proposition \ref{propAlternative2}]
The proof is analogous to the proof of Proposition \ref{prop2.1.0} but for the models under the alternative hypothesis in Section \ref{sec:alternative}.
We refer to the proof of Proposition \ref{propAlternative1} above for the necessary adjustments of the proof of Proposition \ref{prop2.1.0}.
\end{proof}

The following is analogous to Lemma \ref{le:convergenceHtoH0}. We omit the proof since it follows similar lines as the proof of Lemma \ref{le:convergenceHtoH0}.
 
\begin{lemma} \label{le:convergenceJtoJ0}
Under Assumptions \ref{ass:1}--\ref{ass:2}, \ref{ass:C1}--\ref{ass:C4} and \ref{ass:P1_main},
\begin{equation*}
J - J_0 = \mathcal{O}_{\prob}\mleft(\frac{1}{\delta_{dT}}\mright)
\hspace{0.2cm}
\text{ and }
\hspace{0.2cm}
\| J_0 \| = \mathcal{O}(1),
\text{ as } d,T\to\infty,
\end{equation*}
where, with $\Theta \Theta' = Q \Pi_{\Theta} Q'$ and $ \widetilde{\Pi}$ as in Assumption \ref{ass:P1_main},  
\begin{equation*} \label{eq:def:J0}
J_0 = Q \widetilde{\Pi} Q' \Sigma_{G} \widetilde{\Pi}^{-1}.
\end{equation*}
\end{lemma}

\subsection{Results for consistency of PCA estimators under alternative hypothesis} \label{se:proofs_consistency_PCA_A}

This section is analogous to Appendix \ref{se:consistency_PCA_NH} but under the alternative hypothesis. We focus on presenting analogues of Lemma \ref{lem2.1} and Proposition \ref{prop2.2}. The remaining statements in Appendix \ref{se:consistency_PCA_NH} (Lemma \ref{lem2.2} and Corollaries \ref{cor:resultsforPI}--\ref{cor2.3}) are also stated for the alternative models but their proofs are omitted since they follow along the same lines as under the null hypothesis.

\begin{lemma}\label{lem2.1_A}
Set $\widehat{\Sigma}_{Y} = \frac{1}{T} Y'Y$. Then, under Assumptions \ref{ass:1}--\ref{ass:2}, \ref{ass:C1}--\ref{ass:C4} and \ref{ass:P1_main}, 
\begin{equation*}
\frac{1}{d} \| \widehat{\Sigma}_{Y} - \Theta \Theta' \| = \mathcal{O}_{\prob}\mleft( \frac{1}{d} \mright) + \mathcal{O}_{\prob} \mleft( \frac{1}{\sqrt{T}} \mright).
\end{equation*}
\end{lemma}

\begin{proof}
We follow again the proof of Lemma 2 in \cite{doz2011two}. Consider
\begin{equation} \label{pr:eq1_A}
\frac{1}{d} \| \widehat{\Sigma}_{Y} - \Theta \Theta' \|
\leq
\frac{1}{d} \| \widehat{\Sigma}_{Y} - \Sigma \|
+
\frac{1}{d} \| \Sigma - \Theta \Theta' \|
\end{equation}
with
\begin{align} \label{eq:Sigma-sep_A}
\Sigma 
=
\begin{cases}
\frac{1}{2} \Theta_1 \Theta_1' + \frac{1}{2} \Theta_2 \Theta_2' + 
\Sigma_{\eta},
& \hspace{0.2cm}
\text{if \textit{TA 1}}, \\
\Lambda_1 \Phi \Phi' \Lambda'_1 + \frac{1}{2} P_{1} \Pi_2 \Pi_2' P_{1}' + \frac{1}{2} \Pi_2 \Pi'_2 + 
\Sigma_{\eta},
& \hspace{0.2cm}
\text{if \textit{TA 2}}.
\end{cases}
\end{align}
where 
\begin{equation*}
\Sigma_{\eta} = 
P_{1}
\Sigma_{\varepsilon}
P_{1}'
+
P_{2}
\Sigma_{\varepsilon}
P_{2}'.
\end{equation*}
For the rest of the proof we focus on \textit{TA 2} since it is more involved. \textit{TA 1} follows by similar arguments. 

We consider the two summands in \eqref{pr:eq1_A} separately. For the second summand, we get
\begin{align} \label{eq:sigmaetaineq}
\frac{1}{d} \| \Sigma - \Theta \Theta' \|
=
\frac{1}{d} \mleft\| \Sigma_{\eta} \mright\|
\leq
\frac{1}{d} \| P_{1} \|^2 \| \Sigma_{\varepsilon} \| + \frac{1}{d} \| P_{2} \|^2 \| \Sigma_{\varepsilon} \|
\leq
\frac{2}{d} \| \Sigma_{\varepsilon} \|
\leq
O\mleft(\frac{1}{d} \mright)
\end{align}
due to Assumption \ref{ass:C4} and our observation \eqref{eq:operatornormprojection}.
For the first summand in \eqref{pr:eq1_A}, we separate the series according to the transformation \eqref{eq:YT2def} as
\begin{align}
\| \widehat{\Sigma}_{Y} - \Sigma \| \nonumber
&=
\mleft\| \frac{1}{2} \widehat{P}_{1} \frac{1}{T/2} \sum_{t=1}^{T/2} X^2_{t} X^{2'}_{t} \widehat{P}_{1}'
+ 
\frac{1}{2} \widehat{P}_{2} \frac{1}{T/2} \sum_{t=T/2+1}^{T} X^2_{t} X^{2'}_{t} \widehat{P}_{2}'
- 
\Sigma \mright\| \nonumber
\\&\leq
\frac{1}{2} 
\mleft\| \widehat{P}_{1} \frac{1}{T/2} \sum_{t=1}^{T/2} X^2_{t} X_{t}^{2'} \widehat{P}_{1}' - 
\Big( \Lambda_1 \Phi \Phi' \Lambda'_1 + P_{1} \Pi_2 \Pi_2' P_{1}' + \Sigma_{\eta} \Big)
\mright\|
\nonumber
\\&\hspace{1cm}+
\frac{1}{2} \mleft\| 
\widehat{P}_{2} \frac{1}{T/2} \sum_{t=T/2+1}^{T} X^2_{t} X^{2'}_{t} \widehat{P}_{2}' - 
\Big( \Lambda_1 \Phi \Phi' \Lambda'_1 + \Pi_2 \Pi'_2 + \Sigma_{\eta} \Big) 
\mright\|, \label{eq:sec:proofs:al1}
\end{align}
where we used \eqref{eq:Sigma-sep_A}.
We consider the two summands in \eqref{eq:sec:proofs:al1} separately. For the first summand in \eqref{eq:sec:proofs:al1}, with explanations given below,
\begin{align}
&
\mleft\| \widehat{P}_{1} \frac{1}{T/2} \sum_{t=1}^{T/2} X^2_{t} X_{t}^{2'} \widehat{P}_{1}' - 
\Big( \Lambda_1 \Phi \Phi' \Lambda'_1 + P_{1} \Pi_2 \Pi_2' P_{1}' + \Sigma_{\eta} \Big)
\mright\|
\nonumber
\\&\leq
\mleft\| \widehat{P}_{1} \Big( \widehat{\Sigma}_{2} - \Lambda_{2} \Lambda'_{2} \Big) \widehat{P}_{1}' \mright\|
+
\mleft\| \widehat{P}_{1}
\Lambda_{2} \Lambda'_{2} \widehat{P}_{1}' - 
\Big( \Lambda_1 \Phi \Phi' \Lambda'_1 + P_{1} \Pi_2 \Pi_2' P_{1}' + \Sigma_{\eta} \Big)
\mright\|
\nonumber
\\&\leq
\mleft\| \widehat{P}_{1} \Big( \widehat{\Sigma}_{2} - \Lambda_{2} \Lambda'_{2} \Big) \widehat{P}_{1}' \mright\|
+
\mleft\| \widehat{P}_{1} \Lambda_{2} \Lambda_{2}' \widehat{P}_{1}' - P_{1} \Lambda_{2} \Lambda_{2}' P_{1}' 
\mright\|
\nonumber
\\&\hspace{1cm}+
\mleft\| P_{1} \Lambda_{2} \Lambda_{2}' P_{1}'  - (\Lambda_1 \Phi \Phi' \Lambda'_1 + P_{1} \Pi_2 \Pi_2' P_{1}')
\mright\|
+ 
\| \Sigma_{\eta} \| 
\nonumber
\\&\leq
\| \widehat{P}_{1} \|^2 \| \widehat{\Sigma}_{2} - \Lambda_{2} \Lambda'_{2} \|
+
\mleft\| \widehat{P}_{1} \Lambda_{2} \Lambda_{2}' \widehat{P}_{1}' - P_{1} \Lambda_{2} \Lambda_{2}' P_{1}' 
\mright\|
+
\| \Sigma_{\eta} \| 
\label{al:SigmaY:eq1}
\\&\leq
\| \widehat{\Sigma}_{2} - \Lambda_{2} \Lambda'_{2} \|
+
\mleft\| \widehat{P}_{1} \Lambda_{2} \Lambda_{2}' \widehat{P}_{1}'- P_{1} \Lambda_{2} \Lambda_{2}' P_{1}' 
\mright\|
+
\| \Sigma_{\eta} \| 
\label{al:SigmaY:eq2}
\\&
=
d \mathcal{O}_{\prob}\Big( \frac{1}{d} \Big) + d \mathcal{O}_{\prob}\Big( \frac{1}{\sqrt{T}} \Big)
+\mathcal{O}(1),
\label{al:SigmaY:eq3}
\end{align}
where \eqref{al:SigmaY:eq1} uses the submultiplicativity of the spectral norm and $P_{1} \Lambda_{2} \Lambda_{2}' P_{1}'  = \Lambda_1 \Phi \Phi' \Lambda'_1 + P_{1} \Pi_2 \Pi_2' P_{1}'$, since $P_{1} \Lambda_{2} = (\Lambda_1\Phi, P_{1} \Pi_2)$.
Furthermore, 
\eqref{al:SigmaY:eq2} is due to \eqref{eq:operatornormprojection} and similarly for their estimated counterparts $\| \widehat{P}_{k} \| = 1 $, $k=1,2$.
Finally, \eqref{al:SigmaY:eq3} is due to \ref{Doz1} in Appendix \ref{se:appB}, similar arguments as in Lemma \ref{le:difference_projections} and $\| \Sigma_{\eta} \| = \mathcal{O}(1)$ by the same arguments as in \eqref{eq:sigmaetaineq}. The second summand in \eqref{eq:sec:proofs:al1} can be handled by analogous considerations.
\end{proof}

\begin{corollary} \label{cor:resultsforPI_A}
Set $\widehat{\Pi}_{\npf} = d\widehat{V}_{\npf}$ and recall that $\widehat{V}_{\npf}$ is a diagonal matrix consisting of the $\npf$ largest eigenvalues of $\frac{1}{dT} YY'$. Recall also the diagonal matrix $\Pi_{\Theta}$ from Assumption \ref{ass:P1_main}.
Under Assumptions \ref{ass:1}--\ref{ass:2}, \ref{ass:C1}--\ref{ass:C3} and \ref{ass:P1_main},
\begin{enumerate}[label=(\roman*)]
\item $\frac{1}{d} \|\widehat{\Pi}_{\npf} - \Pi_\Theta \| = 
\mathcal{O}_{\prob}\mleft(\frac{1}{d}\mright) + \mathcal{O}_{\prob}\mleft(\frac{1}{\sqrt{T}}\mright)$,
\item $d \|\widehat{\Pi}_{\npf}^{-1} - \Pi^{-1}_\Theta \| = 
\mathcal{O}_{\prob} \mleft(\frac{1}{d}\mright) + \mathcal{O}_{\prob}\mleft(\frac{1}{\sqrt{T}}\mright)$,
\item $\Pi_\Theta \widehat{\Pi}_{\npf}^{-1}-I_{\npf} = \mathcal{O}_{\prob} \mleft(\frac{1}{d}\mright) + \mathcal{O}_{\prob}\mleft(\frac{1}{\sqrt{T}}\mright)$.
\end{enumerate}
\end{corollary}

\begin{lemma}\label{lem2.2_A}
Let $\widehat{A}=(\widehat{a}_{ij})_{i,j=1,\ldots,\npf}=\widebar{Q}_{\npf}'Q$ with $Q$ defined in \eqref{eq:def_Q_A} and $\widebar{Q}_{\npf}$ defined in \eqref{eq:PCA_A}. Under Assumptions \ref{ass:1}--\ref{ass:2}, \ref{ass:C1}--\ref{ass:C4} and \ref{ass:P1_main},
$\widehat{a}_{ij} = \mathcal{O}_{\prob}\mleft(\frac{1}{d}\mright) + \mathcal{O}_{\prob}\mleft(\frac{1}{\sqrt{T}}\mright)$, $i\neq j$,
$\widehat{a}_{ii}^2 = 1 +\mathcal{O}_{\prob}\mleft(\frac{1}{d}\mright)+\mathcal{O}_{\prob}\mleft(\frac{1}{\sqrt{T}}\mright)$, $i=1,\ldots,\npf$.
\end{lemma}

\begin{corollary}\label{cor2.2_A}
Under Assumptions \ref{ass:1}--\ref{ass:2}, \ref{ass:C1}--\ref{ass:C4} and \ref{ass:P1_main}, one can take $\widebar{Q}_{\npf}$, such that
\begin{displaymath}
\widebar{Q}_{\npf}'Q = I_{\npf} + \mathcal{O}_{\prob}\mleft(\frac{1}{d}\mright) + \mathcal{O}_{\prob}\mleft(\frac{1}{\sqrt{T}}\mright),\quad 
\|\widebar{Q}_{\npf}-Q\|^2 = \mathcal{O}_{\prob}\mleft(\frac{1}{d}\mright) + \mathcal{O}_{\prob}\mleft(\frac{1}{\sqrt{T}}\mright).
\end{displaymath}
\end{corollary}

\begin{proposition}\label{prop2.2_A}
Under Assumptions \ref{ass:1}--\ref{ass:2}, \ref{ass:C1}--\ref{ass:C4} and \ref{ass:P1_main},
\begin{equation*}
\frac{1}{T} \| \widehat{G} - G \|_{F}^2 
= \frac{1}{T} \sum_{t=1}^{T} \| \widehat{G}_t-G_{t} \|^2_{F}
= \mathcal{O}_{\prob}\mleft(\frac{1}{d}\mright) + \mathcal{O}_{\prob}\mleft(\frac{1}{T}\mright), 
\text{ as } d,T\to\infty.
\end{equation*}
\end{proposition}

\begin{proof}
Using the representations \eqref{eq:PCA_A}, \eqref{eq:PCA2_A} and the definition \eqref{eq:YT2def},
\begin{equation*}
\widehat{G}_t = 
\widehat{V}_{\npf}^{-\frac{1}{2}} \widebar{G}_t = 
\widehat{V}_{\npf}^{-\frac{1}{2}} \frac{1}{d} \widebar{\Lambda}' Y_t = 
\widehat{V}_{\npf}^{-\frac{1}{2}} \frac{1}{\sqrt{d}} \widebar{Q}'_{\npf} Y_t = 
\begin{cases}
\widehat{\Pi}_{\npf}^{-1/2} \widebar{Q}_{\npf}' \widehat{P}_{1} X^2_{t},
\hspace{0.2cm}
&\text{ for }
\hspace{0.2cm}
t = 1, \dots, T/2,
\\
\widehat{\Pi}_{\npf}^{-1/2} \widebar{Q}_{\npf}' \widehat{P}_{2} X^2_{t},
\hspace{0.2cm}
&\text{ for }
\hspace{0.2cm}
t = T/2+1, \dots, T.
\end{cases}
\end{equation*}
We consider only the case $t = 1, \dots, T/2$. Then,
\begin{align}
&
\widehat{G}_t-G_{t} 
\nonumber
\\&= 
\widehat{\Pi}_{\npf}^{-1/2} \widebar{Q}_{\npf}' \widehat{P}_{1} X^2_{t} - G_{t}
\nonumber
\\&= 
\widehat{\Pi}_{\npf}^{-1/2}\widebar{Q}_{\npf}' \widehat{P}_{1} ( \Lambda_2 F_{t} + \varepsilon_{t}) - G_{t}
\nonumber
\\&= 
\widehat{\Pi}_{\npf}^{-1/2}\widebar{Q}_{\npf}' ( \widehat{P}_{1} \Lambda_2 - P_{1} \Lambda_2)F_{t} + 
\widehat{\Pi}_{\npf}^{-1/2}\widebar{Q}_{\npf}' P_{1} \Lambda_2 F_{t} - G_t +
\widehat{\Pi}_{\npf}^{-1/2}\widebar{Q}_{\npf}' \widehat{P}_{1} \varepsilon_{t} 
\nonumber
\\&= 
\widehat{\Pi}_{\npf}^{-1/2}\widebar{Q}_{\npf}' ( \widehat{P}_{1} \Lambda_2 - P_{1} \Lambda_2)F_{t} + 
(\widehat{\Pi}_{\npf}^{-1/2}\widebar{Q}_{\npf}' \Theta - I_{\npf} ) G_t +
\widehat{\Pi}_{\npf}^{-1/2}\widebar{Q}_{\npf}' \widehat{P}_{1} \varepsilon_{t} 
\label{align_xsxsxsx0_A}
\\&= 
\widehat{\Pi}_{\npf}^{-1/2}\widebar{Q}_{r}' (\widehat{P}_{1} \Lambda_2 - P_{1} \Lambda_2)F_{t} 
+ \widehat{\Pi}_{\npf}^{-1/2} ( \widebar{Q}'_{\npf} Q - \widehat{\Pi}_{\npf}^{1/2} \Pi_{\Theta}^{-1/2} ) \Pi_{\Theta}^{1/2} G_{t} 
+ \widehat{\Pi}_{\npf}^{-1/2}\widebar{Q}_{\npf}' \widehat{P}_{1} \varepsilon_{t},
\label{align_xsxsxsx1_A}
\end{align}
where \eqref{align_xsxsxsx0_A} follows by \eqref{al:true_model_A}.
We consider the three summands in \eqref{align_xsxsxsx1_A} separately.
For the first one, note that
\begin{align*}
\frac{1}{T} \sum_{t=1}^{T/2} \| \widehat{\Pi}_{\npf}^{-1/2}\widebar{Q}_{\npf}'  (\widehat{P}_{1} \Lambda_2 - P_{1} \Lambda_2)F_{t} \|_{F}^2
&\leq
\| \widehat{\Pi}_{\npf}^{-1/2}\widebar{Q}_{\npf}' (\widehat{P}_{1} \Lambda_2 - P_{1} \Lambda_2)\|^2 \frac{1}{T} \| F \|^2_{F}
\\&\leq
d \| \widehat{\Pi}_{\npf}^{-1/2} \|^2 \| \widebar{Q}_{\npf}' \|^2 
\frac{1}{d} \| (\widehat{P}_{1} \Lambda_2 - P_{1} \Lambda_2)\|^2 \frac{1}{T} \| F \|^2_{F}
\\&=
\mathcal{O}_{\prob} \mleft(\frac{1}{d^2}\mright) + \mathcal{O}_{\prob}\mleft(\frac{1}{T}\mright),
\end{align*}
since $\frac{1}{T} \| F \|^2_{F} \leq  \frac{r}{T} \| F \|^2 = \mathcal{O}_{\prob}(1)$ by \ref{cond:s1} in the proof of Proposition \ref{prop2.1}, $\widehat{\Pi}_{\npf}^{-1/2} = \frac{1}{\sqrt{d}} \mleft( \frac{1}{d} \widehat{\Pi}_{\npf} \mright)^{-1/2} = \mathcal{O}_{\prob}\mleft(\frac{1}{\sqrt{d}}\mright)$ by Corollary \ref{cor:resultsforPI_A} and the asymptotic of $\frac{1}{d} \| \widehat{P}_{1} \Lambda_2 - P_{1} \Lambda_2 \|^2 $ can be proved as in \eqref{al:qqppqqpp01}. 

We omit the remainder of the proof concerning the other two terms in \eqref{align_xsxsxsx1_A} since it follows the exact same arguments as the proof of Proposition \ref{prop2.2}.
\end{proof}

\section{Matrix norm inequalities} \label{se:matrixnorminequalities}
We collect here some results on matrix norms that are used throughout the paper.
\begin{enumerate}[label=\textbf{M.\arabic*.},ref=M.\arabic*, align=left]
\item 
\label{item:M0}
$\|AB\|_{F} \leq \| A \| \| B \|_{F}$ for matrices $A,B$; see Theorem 1 in \cite{Fangmatrices1994}. 
\item 
\label{item:M1}
For a general block matrix (not necessarily positive semidefinite), we know that
\begin{align}
\mleft\|
\begin{pmatrix}
A & C \\
C' & B
\end{pmatrix}
\mright\|^2
&=
\lambda_{\max}
\begin{pmatrix}
AA' + CC' & AC + CB' \\
C'A' + BC' & BB' + CC'
\end{pmatrix}
\nonumber
\\&=
\mleft\|
\begin{pmatrix}
AA' + CC' & AC + CB' \\
C'A' + BC' & BB' + CC'
\end{pmatrix}
\mright\|
\nonumber
\\&\leq 
\| AA' + CC' \| + \| BB' + CC' \|
\label{eq:matrix0011}
\\&\leq 
\|A\|^2 + \| B \|^2 + 2 \|C\|^2,
\label{eq:block_matrix_inequality}
\end{align}
where \eqref{eq:matrix0011} is due to the unitary invariance of the operator norm and Lemma 1.1. in \cite{bourin2012decomposition}. For \eqref{eq:block_matrix_inequality}, see Weyl's Theorem (Theorem 4.3.1 in \cite{horn2012matrix}). 
\item 
\label{item:M3}
Let $A,B$ be Hermitian and suppose that $B$ is positive semidefinite. Then,
$\lambda_{\max}(A) \leq \lambda_{\max}(A + B)$ by Corollary 4.3.12 in \cite{horn2012matrix}.
\item 
\label{item:M4}
For matrices $A,B$,
\begin{align*}
\mleft\|
\begin{pmatrix}
A \\
B
\end{pmatrix}
\mright\|^2
&=
\lambda_{\max}
\begin{pmatrix}
A'A + B'B 
\end{pmatrix}
= \| A'A + B'B \|
\leq \|A\|^2 + \| B \|^2
\end{align*}
again by Weyl's Theorem (Theorem 4.3.1 in \cite{horn2012matrix}). 
\end{enumerate}




\flushleft
\begin{tabular}{lp{1 in}l}
Marie-Christine D\"uker & & Vladas Pipiras\\
Dept.\ of Mathematics & & Dept.\ of Statistics and Operations Research\\
Technical University of Munich & & UNC Chapel Hill\\
Parkring 11-13/II & & CB\#3260, Hanes Hall\\
85748 Garching b. München & & Chapel Hill, NC 27599, USA\\
{\it marie.dueker@tum.de} & & {\it pipiras@email.unc.edu}\\
\end{tabular}

\end{document}